\documentclass[a4paper,11pt]{article} 

\usepackage[T1]{fontenc}
\usepackage[top=1in,left=1in,right=1in,bottom=1in]{geometry}
\usepackage{ragged2e}
\usepackage{lmodern}
\usepackage{multicol}
\usepackage{amsmath}
\usepackage{amsthm}
\usepackage{amsfonts}
\usepackage{thmtools}
\usepackage{mathtools}
\usepackage{amssymb}
\usepackage{bm}
\usepackage{dsfont}
\usepackage{enumitem}
\usepackage{graphicx}
\usepackage[dvipsnames]{xcolor}
\usepackage[bottom]{footmisc}
\usepackage{hyperref}
\usepackage[nodisplayskipstretch]{setspace}
\usepackage{subcaption}
\usepackage{algorithm}
\usepackage[beginLComment=//~,endLComment=~]{algpseudocodex}
\usepackage{tabularx}
\usepackage{booktabs}
\usepackage{multirow}
\usepackage{makecell}
\usepackage{xurl}
\usepackage{wasysym}
\usepackage{caption}
\usepackage{orcidlink} 

\singlespacing

\allowdisplaybreaks

\newcommand{\transp}{^\top}
\renewcommand{\P}{\mathbb{P}}
\newcommand{\E}{\mathbb{E}}
\newcommand{\Var}{\textnormal{Var}}

\newcommand{\Int}{\mathbb{Z}}
\newcommand{\Real}{\mathbb{R}}

\newcommand{\N}{\mathbb{N}}

\newcommand{\1}[1]{\mathds{1}\{#1\}}

\newcommand{\abs}[1]{\lvert#1\rvert}
\newcommand{\Abs}[1]{\left\lvert\,#1\,\right\rvert}

\newcommand{\tempo}[2]{{#1}\vert{#2}}
\let\otextsc\textsc
\renewcommand{\textsc}[1]{\textnormal{\otextsc{#1}}} %
\newcommand{\ul}[1]{\underline{#1}}

\newcommand{\vertiii}[1]{{%
    \left\vert\kern-0.25ex\left\vert\kern-0.25ex\left\vert #1 
    \right\vert\kern-0.25ex\right\vert\kern-0.25ex\right\vert
}}

\definecolor{MATblue}{HTML}{0072BD}
\definecolor{MATgreen}{HTML}{77AC30}
\definecolor{MATorange}{HTML}{D95319}
\definecolor{MATred}{HTML}{A2142F}
\definecolor{MATyellow}{HTML}{EDB120}
\definecolor{MATcyan}{HTML}{4DBEEE}
\definecolor{MATpurple}{HTML}{7E2F8E}

\hypersetup{
	colorlinks=true,
	linkcolor=MATred,
	citecolor=MATblue,
	urlcolor=MATblue
}

\setlength{\parindent}{2em}
\setlength{\parskip}{0em}

\addtolength{\skip\footins}{0pc plus 5pt}

\newcommand{\raisedrule}[2][0em]{\leavevmode\leaders\hbox{\rule[#1]{1pt}{#2}}\hfill\kern0pt}

\theoremstyle{plain}
\declaretheorem[name=Definition,numberwithin=section]{definition}
\theoremstyle{plain}
\declaretheorem[name=Theorem,numberwithin=section]{theorem}
\declaretheorem[name=Corollary,numberwithin=section]{corollary}
\declaretheorem[name=Proposition,numberwithin=section]{proposition}
\declaretheorem[name=Lemma,numberwithin=section]{lemma}
\makeatletter
\renewenvironment{proof}[1][\proofname]{\par
  \pushQED{\qed}%
  \normalfont \topsep6\p@\@plus6\p@\relax
  \trivlist
  \item[\hskip\labelsep\bfseries
    #1\@addpunct{.}]\ignorespaces
}{%
  \popQED\endtrivlist\@endpefalse
}
\makeatother
\theoremstyle{definition}
\declaretheorem[name=Assumption]{assumption}

\declaretheorem[name=Remark,numberwithin=section]{remark}

\usepackage{natbib}
\bibliographystyle{apalike}

\begin{document}

\title{\normalfont%
    From Many Models, One: Macroeconomic Forecasting with Reservoir Ensembles
}
\author{%
    Giovanni Ballarin%
    \thanks{Corresponding author -- University of St.~Gallen, Division of Mathematics and Statistics, Rosenbergstrasse 22, 9000 St.~Gallen, Switzerland. {\texttt{Giovanni.Ballarin@unisg.ch}}}
    \and
    Lyudmila Grigoryeva%
    \thanks{University of St.~Gallen, Division of Mathematics and Statistics, Rosenbergstrasse 22, 9000 St.~Gallen, Switzerland. {\texttt{Lyudmila.Grigoryeva@unisg.ch}}. Honorary Associate Professor. University of Warwick, Department of Statistics, Coventry, CV4 7AL, United Kingdom. {\texttt{Lyudmila.Grigoryeva@warwick.ac.uk}}}
    \and
    Yui Ching Li%
    \thanks{Bank for International Settlements, Centralbahnplatz 2, 4002 Basel, Switzerland. %
    The views expressed in this article are solely those of the authors and do not necessarily reflect those of the Bank for International Settlements.}%
}
\maketitle

\makeatletter
\let\ofootnote\footnote
\def\footnote{\@ifstar\footnote@star\footnote@nostar}
 \def\footnote@star#1{{\let\thefootnote\relax\footnotetext{#1}}}
\def\footnote@nostar{\ofootnote}
 \footnote*{%
    YCL and LG acknowledge the financial support of the FoKo of the University of St.~Gallen (Project Nr.~1022324, ``Machine Learning Techniques for Macroeconomic Forecasting and Pricing of Financial Derivatives''). GB thanks the Great Minds Postdoctoral Fellowship program of the University of St.~Gallen, which made this research possible. 
    Python code accompanying this manuscript is available at the online repository \url{https://github.com/RCEconModelling/Ensemble-MFESN}
 }%
\makeatother

\begin{abstract}
	Model combination is a powerful approach for achieving superior performance compared to selecting a single model. We study both theoretically and empirically the effectiveness of ensembles of Multi-Frequency Echo State Networks (MFESNs), which have been shown to achieve state-of-the-art macroeconomic time series forecasting results \citep{ballarin2022reservoir}.
	The Hedge and Follow-the-Leader schemes are discussed, and their online learning guarantees are extended to settings with dependent data.
	In empirical applications, the proposed Ensemble Echo State Networks demonstrate significantly improved predictive performance relative to individual MFESN models.
\end{abstract}

\noindent\textit{Keywords:}
echo state networks, 
model combination, 
online learning, 
time series prediction

\maketitle
\newpage

\section{Introduction}
\label{section:intro}

Many mathematical, statistical, and econometric tools have been developed to solve the problem of predicting stochastic and deterministic processes. 
In empirical applications, however, it is often difficult to identify which model, or even which model class, is most appropriate for a given task.
Models that perform well on average may fail during particular periods. 
This situation is common in forecasting settings, for example, involving macroeconomic time series. Structural breaks, financial crises, and policy regime shifts can significantly affect predictive accuracy.
Since forecasting performance can vary substantially over time, it is natural, when model updating is infeasible or unsuitable, to revise the choice of model(s) as new information becomes available.

The model combination approach formalizes the use of multiple prediction schemes. The forecaster relies on a finite set, referred to as an \textit{ensemble}, of distinct models or algorithms, termed \textit{experts}, to predict an outcome of interest. Experts in an ensemble are evaluated and reweighted over time based on their predictive losses.  Ensemble methods can hence adapt to changing time series dynamics by favoring different models across periods. The objective is to construct data-driven weights that aggregate individual expert forecasts to minimize cumulative loss over time. A well-designed combination scheme is expected to select the best-performing models, achieving performance close to the minimum loss in hindsight.

Forecast combination is a long-standing idea in econometrics and time series analysis \citep{Bates1969,Winkler1983,clemen1989combining,grangerInvitedReviewCombining1989}\footnote{Boosting, bagging, and random forest methods, which explicitly combine many learners into one predictor to improve performance, have also become increasingly popular \citep{Athey2019,atheyMachineLearningMethods2019}.}, with a large literature documenting its empirical effectiveness in a variety of macroeconomic and financial forecasting contexts. Subsequent contributions have studied both linear and nonlinear combination schemes, as well as the role of time variation in forecast weights (see \citealp{timmermann2006forecast}, for a comprehensive survey). More recent work has emphasized adaptive and data-driven approaches to forecast combination in high-dimensional and potentially unstable environments, including settings with structural change and model misspecification (e.g., \cite{hansenLeastsquaresForecastAveraging2008,chengForecastingFactoraugmentedRegression2015,kimMiningBigData2018,DenReijer2019,bolhuisDeusExMachina2020,fulekyMacroeconomicForecastingEra2020,masiniMachineLearningAdvances2023}), as well as considered Bayesian~\citep{WANG20231518} and density combination schemes~\citep{wallisCombiningForecastsForty2011}.

The literature on online combination schemes and ensembles is rich. In the standard framework of \textit{prediction with expert advice} \citep{Littlestone1994,Freund1997,vovk1995game,Cesa-Bianchi2006}, or its variant known as the \textit{Hedge setting}, much of the work focuses on developing combination algorithms that enjoy guarantees on regret, defined as the difference between the cumulative loss of the forecaster and that of the best among $K$ experts. 
Prominent examples include {\it Follow-the-Leader} (FTL), which assigns uniform weights at each round to the model(s) with the smallest cumulative loss so far; it is known to achieve an $O(\log T)$ regret for strongly convex losses in stochastic settings (with $T$ a time horizon), while potentially performing poorly under worst-case data \citep{vanervenAdaptiveHedge2011}.
The {\it Hedge} algorithm \citep{Freund1997}, also called the exponentially weighted forecaster, provides a smoothed alternative to FTL and enjoys an anytime worst-case $O(\sqrt{T \log K})$ regret bound under an optimal learning rate. In stochastic environments, allowing the learning rate of Hedge to be data-adaptive yields regret comparable to FTL, motivating schemes such as \textit{Adaptive Hedge} \citep{vanervenAdaptiveHedge2011,rooijFollowLeaderIf2014}.
Significant effort has been devoted to develop strategies that combine the best properties of these approaches (see, for example, \cite{koolenLearningLearningRate2014,wintenbergerOptimalLearningBernstein2017,mourtadaOptimalityHedgeAlgorithm2019,itoBestofBestWorlds2024,wintenbergerStochasticOnlineConvex2024}).

In this paper, we bridge the literature on online ensemble learning (see, e.g., \citealp{shalev_online}) with econometric forecast combination \citep{timmermann2006forecast} in a general time series setup.
From a theoretical perspective, we study the properties of Follow-the-Leader  and {\it Decreasing Hedge} (DecHedge) algorithms in the stochastic setting for i.i.d.~(in time) and $\varphi$-mixing losses, building upon the developments in \cite{mourtadaOptimalityHedgeAlgorithm2019} and \cite{rooijFollowLeaderIf2014}, respectively. For both algorithms, we derive  Hoeffding- and Bernstein-type regret bounds that characterize their cumulative performance relative to the best expert in hindsight. While the dependence on the number of experts is mild (logarithmic), the gap between the least expected loss and the runner-up plays a key role in our bounds.

On the empirical side, we construct our ensembles using \textit{Multi-Frequency Echo State Network} (MFESN) models introduced in \cite{ballarin2022reservoir}, which have been shown to outperform state-of-the-art mixed-frequency forecasting methods such as MIxed DAta Sampling (MIDAS) \citep{Ghysels2007} and Dynamic Factor Models (DFMs) \citep{geweke1977dynamic,sargent1977business}.
The MFESN models are based on Echo State Networks (ESNs), a class of recurrent-type neural networks with randomly sampled state parameters \citep{jaeger2001,maass2,lukosevicius}.  
This model family has been actively studied in terms of approximation, generalization, and memory properties~\citep{RC6,RC7,RC8,RC10,RC12,RC23}. 
Using quarterly U.S. GDP growth and a set of $33$ monthly and daily financial and macroeconomic variables in \cite{ballarin2022reservoir}, we show that MFESN ensembles substantially reduce mean-squared forecasting errors. Additionally, allowing for heterogeneity in random parameter draws and state leak rates across the ensemble yields up to $40\%$ improvements relative to baseline MFESNs. These results set new benchmarks in the prediction of U.S. GDP growth with multi-frequency regressors. Finally, we report the experts' weight evolution in our forecasting exercise, showing that a handful of models receive most of the weight.

The remainder of the paper is organized as follows. Section~\ref{section:preliminaries} introduces the ensemble forecasting framework and formalizes the online learning setting for combination strategies. Section~\ref{section:expert_ensembles} reviews popular combination strategies and presents theoretical results, deriving regret bounds for Follow-the-Leader and decreasing Hedge under i.i.d.~and $\varphi$-mixing losses. Section~\ref{section:ensemble_esn} describes the Multi-Frequency Echo State Network architecture and the construction of our ensemble variants, EN-MFESN-RP and EN-MFESN-$\alpha$RP. Section~\ref{section:application} presents an application of these ensembles in the mixed-frequency U.S. GDP growth forecasting setting of \cite{ballarin2022reservoir} and comparing them to state-of-the-art methods. Section~\ref{section:conclusion} concludes.

\paragraph{Notation.}
For $n \in \N^+$, we define $[n] := \{1, \ldots, n\}$. Given a vector ${v} \in \mathbb{R}^n$ (or $\bm{v} \in \mathbb{R}^n$), we denote its entries by $v_i$, $i \in [n]$. For a vector ${v} \in \Real^n$, we write ${v} \geq 0$ whenever ${v}_i\geq 0$ for all $i\in [n]$. The symbol $\bm{1}_n \in \Real^n$ stands for the vector of ones. All random variables are defined on a fixed probability space $(\Omega, \mathcal{F}, \mathbb{P})$ and follow context-specific notation: Quantities traditionally denoted by Latin or Greek lowercase letters (for example, $\ell$,~$\varepsilon$,~$\zeta$) remain lowercase even when random. For all other random variables, we use uppercase letters.

\section{Preliminaries}
\label{section:preliminaries}

In this section, we introduce the general econometric setting of interest, which involves constructing time series predictions by combining (or selecting) models from a set of experts.

\subsection{General Setup}
Let $\{Y_t\}_{t \in \Int}$, $Y_t \in \Real$, be the target time series, and $\{Z_t\}_{t \in \Int}$, $Z_t \in \Real^{d}$, a vector of regressors, which may include lagged values of $Y_t$. 
We consider the problem of predicting the target series $Y_{t+h}$ at the forecasting horizon $h \in \mathbb{N}^+$.
Let $(\Omega,\mathcal F,\mathbb P)$ be a fixed probability space on which all the random variables are defined. 
For each $t \in[T]$, let $Z_t$ be $\mathcal{F}_{t}$-measurable with $\mathcal{F}_t:=\sigma\left(Z_t\right)$. Under relatively weak conditions of the joint law of $Y_t$ and $Z_t$ the $h$-step-ahead minimum mean square error (MSE) predictor of $Y_{t+h}$ at time $t$, with $h \in[H]$, is the conditional expectation function (CEF) %
given by
\begin{equation}\label{eq:mse_pred_problem_general}
	\E[Y_{t+h} \,\vert\, Z_t] 
	:= 
	\arg\inf_{g \in \mathcal{G}}  \E\left[  \| Y_{t+h} - g(Z_t)\|_2 ^2 \:\big\vert\, \mathcal{F}_t\right], 
\end{equation}
where $\mathcal{G}$ is the class of $\mathcal{F}_t$-measurable functions.

The core challenge arising when solving~\eqref{eq:mse_pred_problem_general}, even in the one-step-ahead prediction setting ($h=1$), is that minimizing MSE over (possibly infinite-dimensional) $\mathcal G$ class of functions can be infeasible or computationally and statistically demanding. This is typically addressed by structural assumptions on the joint data-generating process $(Y_t,Z_t)$. For example, standard choices include linear models~\citep{BrocDavisYellowBook}, semi- or nonparametric classes of sufficient regularity~\citep{stoneOptimalGlobalRates1982,Li2009,Tsybakov2009}, or a reproducing kernel Hilbert space setting~\citep{Steinwart2008,scholkopf2002learning,berlinet2011reproducing}.
In this work, we study a standard online aggregation setting in which multiple models (``experts'') are combined, each potentially excelling under different finite sample realizations, but always assuming that a best-performing model exists (see Section~\ref{section:expert_ensembles}).
The focus of the paper is on studying schemes to weight experts in order to produce a combined forecast that over time aligns with the optimal expert.

\subsection{Prediction with Experts}

The concept of \textit{prediction with experts} goes back to the works of \cite{Littlestone1994,foster1991prediction,foster1993randomization,vovk1995game,cesa1997use}. 
A forecaster is interested in obtaining, at each time step $t \in [T]$, a prediction $\widehat{Y}_{t+h}$, for a horizon $h\in [H]$, based on $K \geq 1$ experts indexed by $k \in [K]$, each providing their own forecast $\widehat{Y}_{t+h}^{(k)}$. The forecaster constructs a convex combination of $K$ expert predictions using a weight vector $\bm{\omega}_{t,h} := (\omega^{(1)}_{t,h}, \ldots, \omega^{(K)}_{t,h})^\top \in \Real^K$ such that
\begin{equation*}
	\bm{\omega}_{t,h} \geq 0
	\quad\text{and}\quad
	\bm{\omega}_{t,h}^\top \bm{1}_K =  1 \enspace 
	\text{ for all }
	t \in [T] \text{ and for all }
	h \in [H],
\end{equation*}
and producing a forecast \begin{equation}
	\label{predictor}
	\widehat{Y}_{t+h}:=\sum_{k=1}^K \omega_{t, h}^{(k)} \widehat{Y}_{t+h}^{(k)}.
\end{equation} We will also use $\Delta^{K-1}:=\{\bm{\omega} \in \mathbb{R}_{+}^K: \bm{\omega}^\top \bm{1}_K=1\}$ to indicate the real simplex.

The accuracy of each forecast is assessed using a loss function $\ell: \mathbb{R} \times \mathbb{R} \rightarrow \mathbb{R}_{+}$. We define, for each expert $k \in [K]$ and each $t \in [T]$, the associated loss at $h$-horizon forecasting task of predicting ${Y}_{t+h}$, $h \in [H]$, as 
\begin{equation*}
	\ell^{(k)}_{t,h} := \ell({Y}_{t+h}, \widehat{Y}_{t+h}^{(k)}),
\end{equation*} 
the vector containing the losses of all $K$ experts as $\bm{\ell}_{t,h} := (\ell^{(1)}_{t,h}, \ldots, \ell^{(K)}_{t,h})^\top$
and the forecaster's loss $\overline{\ell}_{t,h} := \bm{\omega}_{t,h}^\top \bm{\ell}_{t,h}$. We also define the cumulative loss of expert $k$ and of the forecaster, incurred by time $t$ 
when forecasting $h$ steps ahead, as %
\begin{equation}
	L^{(k)}_{t,h} := \sum_{\tau=1}^{t-h} \ell^{(k)}_{\tau,h}, \quad {\rm and}\quad \overline{L}_{t,h} := \sum_{\tau=1}^{t-h} \overline{\ell}_{\tau,h},
\end{equation}
respectively, with the convention that $L^{(k)}_{t,h}=0$ and $\overline{L}_{t,h}=0$ for all $t\leq h$, and construct a vector $\bm{L}_{t,h} := (L^{(1)}_{t,h}, \ldots, L^{(K)}_{t,h})^\top$ of cumulative losses of all $K$ experts. 

The natural objective of online learning procedures is to minimize the loss $\overline{L}_{t,h}$ experienced by the forecaster when combining $K$ experts' predictions.  Assuming that $[K]$ does not change over time, one can introduce cumulative regret at $t \in [T]$ for horizon $h\in [H]$ as
\begin{equation}
	\overline{R}_{t,h} := 
	\overline{L}_{t,h} - \min_{k \in [K]} L^{(k)}_{t,h} ,
\end{equation}
which, for a given forecasting horizon $h$, is the difference between the forecaster's cumulative loss and that of the best expert in hindsight. 
Equivalently, set $\ell^{*}_{t,h} := \min_{k \in [K]} \ell^{(k)}_{t,h}$, so that the instantaneous prediction regret is $r_{t,h} := \overline{\ell}_{t,h} - \ell^{*}_{t,h}$, while the $k$th model regret is $r^{(k)}_{t,h} := \ell^{(k)}_{t,h} - \ell^{*}_{t,h}$. Notice that the optimal expert at time $t$, indexed by $k^\star_{t,h} \in  \arg\min_{k \in [K]} \ell^{(k)}_{t,h}$, by definition yields $r^{(k^\star_{t,h})}_{t,h} = 0$.
Therefore, the goal of an online combination scheme consists in minimizing cumulative regret $\overline{R}_{t,h}$ at time $t\in [T]$ for horizon $h\in [H]$, or, equivalently, reliably selecting, over the prediction window, the best possible candidate among the available $K$ experts. One may also consider an adversarial setting, where losses are defined via $\ell(y, \hat{y})$ and experts' $\widehat{Y}_{t+h}^{(k)}$ are fixed before the environment acts. Equivalently, the environment chooses $Y_{t+h}$, to induce
$
\boldsymbol{\ell}_{t, h}
$, such that the regret is maximized, leading to a worst-case scenario.

Throughout the paper, unless stated otherwise, we impose the following assumptions on expert and forecaster losses for all $h\in [H]$.

\begin{assumption}[Bounded losses]\label{assumption:losses_0-1}
	Losses $\ell^{(k)}_{t,h} \in [0,1]$ almost surely for all $k \in [K]$, $t \in \mathbb{N}$.%
\end{assumption}

\begin{assumption}[Temporal dependence]\label{assumption:losses_iid_or_mixing} 
	Losses $\{ \boldsymbol{\ell}_{t,h} \}_{t}$, $\boldsymbol{\ell}_{t,h}\in \mathbb{R}^K$, satisfy one of the following:
	\begin{description}
		\item[(i)] $\{ \ell^{(k)}_{t,h} \}_{t}$ is i.i.d.~for every $k\in [K]$;
		\item[(ii)] $\{ \ell^{(k)}_{t,h} \}_{t}$, for every $k\in [K]$, is strictly stationary and $\varphi$-mixing with  coefficients $\{\varphi^{(k)}_n\}_{n\ge1}$ depending only on the lag $n$;%
		\item[(iii)] The  vector-valued process $\{ \boldsymbol{\ell}_{t,h} \}_{t}$ is strictly stationary and $\varphi$-mixing with coefficients $\{\varphi_n\}_{n\ge1}$ depending only on the lag $n$.%
	\end{description}
\end{assumption}
The definition of $\varphi$-mixing dependence in (ii) and (iii) is in the sense of 
Definition~\ref{def:rio_uniform_mixing} in the Appendix, following \cite{rioAsymptoticTheoryWeakly2017}.
The dependence conditions are imposed on losses (hence, data and experts) \emph{over time}, but no independence is assumed \emph{across} the $K$ losses.

\begin{remark}\label{remark:weak_stationarity} 
	Under Assumption~\ref{assumption:losses_0-1} and Assumption~\ref{assumption:losses_iid_or_mixing}, for all $k \in [K]$, $\{ \ell^{(k)}_{t,h} \}_t$ is a stationary process with time-invariant mean $\mu_{k,h} = \E[\ell^{(k)}_{t,h}]$ and variance $v_{k,h} = {\rm Var}[\ell^{(k)}_{t,h}]$.
\end{remark}

\begin{assumption}[Uniform summability]\label{ass:uniform summability}
	Losses $\{ \boldsymbol{\ell}_{t,h} \}_{t}$, $\boldsymbol{\ell}_{t,h}\in \mathbb{R}^K$, satisfy one of the following:
	\begin{description}
		\item[(i)] There exists a finite constant $C_{1,\varphi}>0$ such that
		$
		\max _{k \in[K]} \sum_{n=1}^{\infty} \varphi_n^{(k)} < C_{1,\varphi},
		$ with $\{\varphi^{(k)}_n\}_{n\ge1}$ coefficients of $\{ \ell^{(k)}_{t,h} \}_{t}$, for every $k\in [K]$.
		\item[(ii)]  There exists a finite constant $\widetilde{C}_{1,\varphi}>0$ such that
		$
		\sum_{n=1}^{\infty} \varphi_n <\widetilde{C}_{1,\varphi} $ with $\{\varphi_n\}_{n\ge1}$ coefficients of vector-valued $\{ \boldsymbol{\ell}_{t,h} \}_{t}$.
	\end{description}
\end{assumption}

\begin{assumption}[Uniform root-summability]\label{ass:uniform summability roots}
	Losses $\{ \boldsymbol{\ell}_{t,h} \}_{t}$, $\boldsymbol{\ell}_{t,h}\in \mathbb{R}^K$, satisfy one of the following:
	\begin{description}
		\item[(i)] There exists a finite constant $C_{2,\varphi}>0$ such that
		$
		\max _{k \in[K]} \sum_{n=1}^{\infty} \sqrt{\varphi_n^{(k)}} < C_{2,\varphi},
		$ with $\{\varphi^{(k)}_n\}_{n\ge1}$ coefficients of $\{ \ell^{(k)}_{t,h} \}_{t}$, for every $k\in [K]$.
		\item[(ii)]  There exists a finite constant $\widetilde{C}_{2,\varphi}>0$ such that
		$
		\sum_{n=1}^{\infty} \sqrt{\varphi_n} <\widetilde{C}_{2,\varphi} $ with $\{\varphi_n\}_{n\ge1}$ coefficients of vector-valued $\{ \boldsymbol{\ell}_{t,h} \}_{t}$.
	\end{description}
\end{assumption}

Assumption~\ref{assumption:losses_0-1} is standard in online learning and, up to a fixed rescaling, requires losses to be bounded.
Assumption~\ref{assumption:losses_iid_or_mixing} imposes conditions on the temporal dependence of loss vectors, as the prediction algorithm evolves. 
While much of the literature focuses on the i.i.d. case, mixing conditions such as Assumption~\ref{assumption:losses_iid_or_mixing}(ii) are empirically more relevant in time-series applications. 
Notice also that for any $k\in [K]$, since $0 \leq \varphi_n^{(k)} \leq 1$, Assumption~\ref{ass:uniform summability roots}(i) implies Assumption~\ref{ass:uniform summability}(i) and Assumption~\ref{ass:uniform summability roots}(ii) implies Assumption~\ref{ass:uniform summability}(ii). Moreover, in each of these assumptions, part (ii) implies part (i). All assumptions are imposed on the \emph{loss processes} $\{\ell^{(k)}_{t,h}\}_t$, rather than on the joint data process $(Y_t,Z_t)$. If experts are fixed and $(Y_t,Z_t)$ is mixing, the induced loss processes inherit this property under measurable maps. Even under mild nonstationarity (e.g., local stationarity), these losses often remain weakly dependent, making uniform mixing a plausible assumption in practice.

In the remainder of the paper, we focus on a single forecasting horizon $h\in \mathbb{N}$ and hence drop the $h$ subscript where convenient. %
Our results extend directly to the multi-horizon case by analyzing cumulative regret separately for each horizon $h \in[H]$.

\section{Expert Ensembles}
\label{section:expert_ensembles}

We now turn to an in-depth discussion of the Follow-the-Leader and Hedge online learning methods. 
Table~\ref{table:comb_schemes_summary} provides an overview of these schemes, as well as classical combination approaches based on averaging and rolling mean-squared-error weighting (see Appendix~\ref{appendix:Direct Combination Methods} for more details).
In Section~\ref{subsection:online_learning_guarantees}, we theoretically study the stochastic setting with and without temporal dependence, and derive regret guarantees for both FTL and decreasing Hedge.
\begin{table}[t!]
	\centering
	\setlength\tabcolsep{0pt}
	\setlength\extrarowheight{1pt}
	\linespread{1.2}\selectfont\centering
	\footnotesize
	\begin{tabular*}{\textwidth}{@{\extracolsep{\fill}}*{5}{c}}
		\toprule
		\small Type &\small Scheme & \small Weighting, $\omega^{(k)}_{t}$ & \small Specification & \small Reference \\
		\toprule
		\multirow{4}{*}{\rotatebox{90}{Direct}} &
		Simple Averaging & $\frac{1}{K}$ & --- & \multirow{4}{*}{\makecell{ \scriptsize\cite{Bates1969} \\ \scriptsize
				\cite{timmermann2006forecast} }} \\
		&\multicolumn{3}{l}{\raisedrule[0.2em]{0.2pt}} \\
		&Rolling MSE & $\frac{  \left({\mathrm{MSE}}^{(k)}_{t,r}+\varepsilon   \right)^{-1}} {\sum_{j=1}^K   \left({\mathrm{MSE}}^{(j)}_{t,r}+\varepsilon   \right)^{-1}} $ & \makecell{ fixed constant $\varepsilon>0$\\ rolling window length $r$} & \\[10pt]
		\midrule
		&Follow-the-Leader & $\frac{ 1 }{| \textsc{ftl}_t |} \1{k \in \textsc{ftl}_t}$ & $\textsc{ftl}_t = \arg\min_{j \in [K]} L^{(j)}_{t-1}$ & \scriptsize\cite{rooijFollowLeaderIf2014} \\[3pt]
		\midrule
		\multirow{9}{*}{\rotatebox{90}{Hedge}} &Constant & \multirow{9}{*}{ \makecell{$\frac{ \exp\left( -\eta_t L^{(k)}_{t-1} \right) }{ \sum_{j=1}^K \exp\left( -\eta_t L^{(j)}_{t-1} \right) }$\\[10pt] with learning rate $\eta_t$  } } & $\eta_t=\eta \propto \sqrt{\frac{\log K}{T}}$ & \multirow{6}{*}{ \begin{tabular}[t]{@{}c@{}} \scriptsize\cite{Littlestone1994}\\  \scriptsize\cite{Cesa-Bianchi2006}\\ \scriptsize\cite{chernovPredictionExpertAdvice2010} \end{tabular} } \\
		&\multicolumn{1}{l}{\raisedrule[0.2em]{0.2pt}} & & \multicolumn{1}{l}{\raisedrule[0.2em]{0.2pt}} & \\
		&Doubling & & \makecell{ $\eta_t \propto \sqrt{\frac{\log K}{2^{r-1}}}$\\[5pt]  doubling round $r=\left\lceil\log _2(t+1)\right\rceil$} & \\[4pt]
		&\multicolumn{1}{l}{\raisedrule[0.2em]{0.2pt}} & & \multicolumn{1}{l}{\raisedrule[0.2em]{0.2pt}} & \\
		&Decreasing & & $\eta_t \propto \sqrt{\frac{\log K}{t}}$ & \\
		&\multicolumn{1}{l}{\raisedrule[0.2em]{0.2pt}} & & \multicolumn{2}{l}{\raisedrule[0.2em]{0.2pt}} \\
		&Adaptive & & \makecell{ $\eta_{t} = \frac{\log K}{\nabla_{t-1}}$\\[5pt] cumulative mixability gap $\nabla_{t-1}$ } & \makecell{\scriptsize\cite{vanervenAdaptiveHedge2011}\\ \scriptsize\cite{rooijFollowLeaderIf2014} } \\
		\bottomrule
	\end{tabular*}
	\caption{%
		Overview of online ensemble learning and combination schemes considered.
	}
	\label{table:comb_schemes_summary}
\end{table}

\subsection{Follow-the-Leader}
\label{subsection:Follow-the-Leader}
An intuitive first step to construct expert weights is the so-called {\it Follow-the-Leader} (FTL) strategy: At each round, uniform weights are assigned to the (subset of) model(s) with the smallest cumulative loss so far.
Formally, let $\textsc{ftl}_t \in \arg\min_{k \in [K]} L^{(k)}_{t-1}$  be the subset of leader-experts at time $t$, that is, the experts with the least cumulative loss up to $t-1$. The Follow-the-Leader weights are defined as %
\begin{equation}
	\omega^{(k)}_{\textsc{ftl},t} 
	=
	\frac{ 1 }{| \textsc{ftl}_t |} \1{k \in \textsc{ftl}_t} ,
\end{equation}
with the convention that $\omega^{(k)}_{\textsc{ftl},1} = 1/K$.
FTL is known to perform well in several settings, particularly in the case of i.i.d. losses. However, it has the downside of being sensitive to adversarially-generated losses, as it overreacts to short-term fluctuations. \cite{rooijFollowLeaderIf2014} derive an upper bound on the cumulative regret incurred by a forecaster using the FTL strategy due to the variability of the losses and the number of leader changes.

Let $s_t := \max_{k \in [K]} \ell^{(k)}_{t} - \min_{k \in [K]} \ell^{(k)}_{t}$ and $S_t := \max\{s_1, \ldots, s_t\}$ denote the instantaneous maximal loss differential and maximum loss differential over $t$ periods, respectively.  Define $c_t=1$ if there exists an expert $k'$ with $k' \in \mathrm{FTL}_{t-1}$ but $k' \notin \mathrm{FTL}_{t}$ (i.e., the leader set changes at $t$), and $c_t=0$ otherwise. We can write $c_t=\1{ \textsc{ftl}_{t-1} \not= \textsc{ftl}_{t} }$. Let then $C_t := \sum_{\tau=1}^t c_\tau$ be the total number of times the leader set changes up to time $t\in [T]$.

\begin{lemma}[\citealt{rooijFollowLeaderIf2014}, Lemma 10]
	\label{lemma:rooij_ftl_lemma}
	The cumulative FTL regret satisfies
	\begin{equation*}
		\overline{R}_{\textsc{ftl},T}
		\leq
		S_T C_T .
	\end{equation*}
\end{lemma}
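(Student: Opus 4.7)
The plan is to prove the lemma via the classical Be-the-Leader (BTL) reduction. First, I introduce a comparator forecaster that at each round $t$ plays the uniform distribution over the \emph{post-round} leader set $\textsc{ftl}_{t+1}$, with incurred loss
$\overline{\ell}^{\textrm{btl}}_t := |\textsc{ftl}_{t+1}|^{-1} \sum_{k \in \textsc{ftl}_{t+1}} \ell^{(k)}_t$.
A short induction on $t$ establishes the BTL inequality $\sum_{\tau=1}^t \overline{\ell}^{\textrm{btl}}_\tau \leq M_t$, where $M_t := \min_k L^{(k)}_t$. The inductive step relies on the fact that every $k \in \textsc{ftl}_{t+1}$ has $L^{(k)}_t = M_t$ and $L^{(k)}_{t-1} \geq M_{t-1}$, so $\ell^{(k)}_t = M_t - L^{(k)}_{t-1} \leq M_t - M_{t-1}$; averaging over $\textsc{ftl}_{t+1}$ preserves this bound and the partial sums telescope.

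Applied at time $T$, the BTL inequality yields
\begin{equation*}
    \overline{R}_{\textsc{ftl},T}
    = \overline{L}_{\textsc{ftl},T} - M_T
    \leq \sum_{t=1}^T \bigl( \overline{\ell}_{\textsc{ftl},t} - \overline{\ell}^{\textrm{btl}}_t \bigr).
\end{equation*}
Each summand is the difference of two convex combinations of the loss vector $\boldsymbol{\ell}_t$ (supported on $\textsc{ftl}_t$ and $\textsc{ftl}_{t+1}$, respectively), hence lies in $[-s_t, s_t]$. The key observation is that whenever the leader set is stable, that is $\textsc{ftl}_t = \textsc{ftl}_{t+1}$, the two distributions coincide and the summand vanishes, so only leader-change rounds contribute, each by at most $s_t \leq S_T$. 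Counting such rounds by $C_T$ then delivers the bound $\overline{R}_{\textsc{ftl},T} \leq S_T C_T$.

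The main obstacle is bookkeeping rather than anything substantive. Specifically, the statement's indicator $c_t = \1{\textsc{ftl}_{t-1} \neq \textsc{ftl}_t}$ compares the pre-round leader sets at consecutive rounds, whereas the BTL telescoping naturally compares $\textsc{ftl}_t$ with $\textsc{ftl}_{t+1}$, so matching the two requires a one-step reindexing and careful treatment of the boundary (using the uniform initialization $\omega_{\textsc{ftl},1}^{(k)} = 1/K$ to handle the $t=1$ term). A second delicate point is to verify that using the \emph{uniform} distribution over $\textsc{ftl}_{t+1}$---rather than an arbitrary single element---still gives the per-round BTL inequality $\overline{\ell}^{\textrm{btl}}_t \leq M_t - M_{t-1}$; this alignment with \textsc{ftl}'s tie-breaking convention is precisely what makes the summand vanish on stable rounds and keeps the final constant at $C_T$ rather than a larger count.
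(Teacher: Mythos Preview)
The paper does not prove this lemma; it is quoted without proof from \cite{rooijFollowLeaderIf2014} (their Lemma~10) and used as a black box in the subsequent regret analysis. Your Be-the-Leader reduction is precisely the standard argument underlying that result, and all three steps---the telescoping BTL inequality, the bound $\lvert\overline{\ell}_{\textsc{ftl},t} - \overline{\ell}^{\textrm{btl}}_t\rvert \leq s_t$ with exact vanishing on stable rounds, and the leader-change count---are correct. Your decision to let BTL play the \emph{uniform} distribution over $\textsc{ftl}_{t+1}$ (rather than an arbitrary element) is exactly what is needed so that the summand vanishes identically whenever $\textsc{ftl}_t = \textsc{ftl}_{t+1}$ under the paper's tie-breaking convention, and your inductive step for the BTL inequality (each $k \in \textsc{ftl}_{t+1}$ satisfies $\ell^{(k)}_t = M_t - L^{(k)}_{t-1} \leq M_t - M_{t-1}$, then average) goes through cleanly. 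The index shift you flag is real but harmless: with $\textsc{ftl}_1 = [K]$ one has $c_1 = 0$, and whether the resulting count lands on $C_T$ or $C_{T+1}$ depends only on which side of the paper's $h$-shifted cumulative-loss convention absorbs the boundary term; it has no effect on any downstream use of the lemma.
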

Lemma~\ref{lemma:rooij_ftl_lemma} shows that FTL is effective whenever leader changes are infrequent, and the loss range is small. Under i.i.d.~losses whose expectations are well separated, frequent leader changes are unlikely~\citep{rooijFollowLeaderIf2014}. In Section~\ref{subsection:online_learning_guarantees}, we provide a sharper result in a stochastic setting, while also allowing for dependence.

\subsection{Hedge and Adaptive Hedge}

Even though FTL relies on identifying a single best expert for prediction, the underlying setup can be generalized to non-sparse weighting vectors. The overarching framework is that of the {\it Hedge} family of algorithms \citep{Littlestone1994,Freund1997,Freund1999,cesa1997use}, also known as \textit{exponentially weighted average} forecasters.
In Hedge, experts are weighted according to
\begin{equation}\label{eq:hedge_base}
	\omega^{(k)}_{\textsc{hdg},t} 
	=
	\frac{ \exp( -\eta_t L^{(k)}_{t-1} ) }{ \sum_{j=1}^K \exp( -\eta_t L^{(j)}_{t-1} ) } ,
\end{equation}
where $\{\eta_t\}_{t=1}^T$, with $\eta_t > 0$ for all $t\in [T]$, is the sequence of the so-called learning rates \citep{Cesa-Bianchi2006}. We use the standard convention $\omega^{(k)}_{\textsc{HDG},1} = 1/K$. Variations of Hedge propose different sequences of learning rates aimed at minimizing the cumulative regret of the forecaster. 
A natural way to interpret \eqref{eq:hedge_base} is to note that it corresponds to the gradient of a scaled LogSumExp function of the negative cumulative losses
\begin{equation*}
	\omega^{(k)}_{\textsc{hdg},t}
	=
	\frac{\partial \mathcal{L} (\bm{L}_{t-1}, \eta_t)}{\partial L^{(k)}_{t-1}}, \enspace k\in [K], \enspace \text{with}\enspace
	\mathcal{L} (\bm{L}_{t-1},\eta_t )
	:=
	-\frac{1}{\eta_t} \log\left( \sum_{k=1}^K \exp( -\eta_t L^{(k)}_{t-1} ) \right),
\end{equation*}
where $\eta_t>0$ acts as a smoothing parameter. 
Large  values of $\eta_t$ make the distribution over experts concentrate around the best expert(s) (recovering FTL), while small values yield nearly uniform weights. %
Equivalently, Hedge arises as the solution to an entropy-regularized linear optimization problem over the simplex, with Shannon entropy acting as a regularizer (see \citealp{boyd2004convex}). From this perspective, Hedge can be viewed as a smoothed version of FTL, with $\eta_t$ controlling the per-round smoothing of the loss.

\paragraph{Constant Hedge.}
Setting $\eta_t:=\eta>0$ yields a constant-rate forecaster with a regret bound
\begin{equation}\label{eq:hedge_regret_bound}
	\overline{R}_T
	\le  
	\frac{\log K}{\eta}   +   \frac{\eta S_T^2 T}{8},
\end{equation}
where $S_T$ is the maximal per-round loss range, which follows from Hoeffding's lemma. Optimizing \eqref{eq:hedge_regret_bound} with respect to $\eta$ gives
\begin{equation*}
	\eta^\star = \sqrt{\frac{8\log K}{S_T^2 T}},
\end{equation*}
leading to $\overline{R}_T \le S_T \sqrt{T\log K /2}$.
If losses take values in $[0,1]$, then $S_T\le 1$ and optimal $\eta^\star=\sqrt{8\log K/T}$~\citep{Littlestone1994}.
A constant rate allows for the standard closed-form online update via \emph{instantaneous} losses:
\begin{equation*}
	\omega^{(k)}_{\textsc{hdg},t}
	=
	\begin{cases}
		1/K , & \text{if } t = 1 , \\[2pt]
		\dfrac{
			\omega^{(k)}_{\textsc{hdg},t-1} \exp( -\eta \ell^{(k)}_{t-1} )
		}{
			\sum_{j=1}^K \omega^{(j)}_{\textsc{hdg},t-1} \exp( -\eta \ell^{(j)}_{t-1} )
		} , & \text{otherwise}. 
	\end{cases}
\end{equation*} 
In practice, $\eta$ is a \textit{learning hyperparameter}: $\eta\to\infty$ recovers FTL; $\eta=0$ yields no learning. The downside of a constant $\eta$ is its lack of adaptivity over arbitrarily long learning periods.

\paragraph{Decreasing Hedge (DecHedge).}
{\it Decreasing Hedge} is obtained by setting $\eta_t = c_0 \sqrt{\log K/t}$ for a constant $c_0 > 0$ \citep{auer2016anAlgorithm}. While this version of Hedge generally requires tuning $c_0$ for a specific learning run (as it is also the case for constant Hedge), \cite{mourtadaOptimalityHedgeAlgorithm2019} prove that the choice $c_0 = 2$ is worst-case optimal, without requiring further assumptions on the loss sequences.
Decreasing Hedge, too, achieves a worst-case regret bound of $\mathcal{O}(\sqrt{T\log K })$~\citep{chernovPredictionExpertAdvice2010}.
However, in stochastic settings where losses are well-behaved, more refined guarantees are possible. In fact, the generic constant $c_0 = 2$ may be conservative when the environment is not adversarial. 

\paragraph{Adaptive Hedge (AdaHedge).}
{\it AdaHedge} dynamically tunes $\eta_t$ based on past performance, and neither horizon nor additional parameters are used \citep{vanervenAdaptiveHedge2011}.

Let $\bm{\omega}_t\in\Delta^{K-1}$ be the weight vector at time $t$, let $\bm{\ell}_t$ be the loss vector of $K$ experts, and let the forecaster's loss be $\overline{\ell}_t$. The \emph{mix loss} at time $t$ for a learning rate $\eta_t>0$ is
\begin{equation*}
	\overline{m}_t 
	:= 
	-\frac{1}{\eta_t} \log  \left( \bm{\omega}_{t}\transp \exp({-\eta_t \bm{\ell}_{t}})   \right),
	\qquad
	\overline{M}_t := \sum_{s=1}^t \overline{m}_s,
\end{equation*}
(see, e.g., \citealp{Cesa-Bianchi2006}, Chapter 2). The approximation error $\delta_t := \overline{\ell}_t-\overline{m}_t$ is the \emph{mixability gap}, with cumulative gap $\nabla_t := \sum_{s=1}^t \delta_s$.
Consequently, the regret of the forecaster can be decomposed as
\begin{equation*}
	\overline{R}_{t} 
	= \overline{L}_{t} - \min_{k \in [K]} L^{(k)}_{t} 
	= \overline{M}_{t} - \min_{k \in [K]} L^{(k)}_{t} + \nabla_{t} , 
\end{equation*}
where \(\overline{M}_{t} - \min_{k \in [K]} L^{(k)}_{t}\) is the regret incurred under the mix loss \citep{rooijFollowLeaderIf2014}.
Then, by \eqref{eq:hedge_regret_bound},
\begin{equation}
	\overline{M}_{t} - \min_{k \in [K]} L^{(k)}_{t} + \nabla_{t}
	\leq
	\frac{\log K }{\eta} + \frac{\eta S_t^2 t}{8} .
\end{equation}
AdaHedge chooses the learning rate at time $t$ by balancing the complexity term ${\log K }/{\eta}$ with the observed cumulative gap $\nabla_{t-1}$,\footnote{At time $t$, while the forecaster makes their decision and the current loss is not realized yet, only the $t-1$ cumulative mixability gap is available.} leading to:
\begin{equation*}
	\eta_{\textsc{ah},t}
	:=
	\frac{\log K}{\nabla_{t-1}} ,
\end{equation*}
with the convention that $\eta_{\textsc{ah},t}=\infty$ if $\nabla_{t-1}=0$.
The scheme is thus parameter-free and independent of quantities that are available only at terminal time $T$.

\begin{lemma}[\citealt{rooijFollowLeaderIf2014}, Corollary 9]
	The AdaHedge regret satisfies
	\begin{equation*}
		\overline{R}_{\textsc{ah},T} 
		\leq
		\sqrt{ \sum_{t=1}^T s_t^2 \log K } + S_T\left(\frac{4}{3}\log K  + 2 \right) .
	\end{equation*}
\end{lemma}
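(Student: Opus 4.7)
The plan is to exploit the regret decomposition $\overline{R}_{\textsc{ah},T} = (\overline{M}_T - \min_{k \in [K]} L^{(k)}_T) + \nabla_T$ already stated in the text, and to control each summand separately before recombining. The first term, the mix-loss regret, is handled via monotonicity of the AdaHedge learning rate sequence; the second, the cumulative mixability gap, is tamed by turning per-round Hoeffding estimates into a quadratic recursion on $\nabla_t^2$.

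For the cumulative mixability gap $\nabla_T$, the decisive per-round inequality is Hoeffding's bound on the exponentially weighted loss vector, $\delta_t \leq \eta_t s_t^2 / 8$. Substituting the AdaHedge rate $\eta_t = (\log K)/\nabla_{t-1}$ gives $\delta_t \nabla_{t-1} \leq (\log K)\, s_t^2 / 8$. Expanding the square then yields
\begin{equation*}
    \nabla_t^2 - \nabla_{t-1}^2 = 2 \delta_t \nabla_{t-1} + \delta_t^2 \leq \frac{\log K}{4}\, s_t^2 + \delta_t^2 .
\end{equation*}
Telescoping from $t = 1$ to $T$ (with $\nabla_0 = 0$) and using the crude bound $\delta_t^2 \leq S_T \delta_t$, so that $\sum_t \delta_t^2 \leq S_T \nabla_T$, produces the quadratic inequality
\begin{equation*}
    \nabla_T^2 \leq \frac{\log K}{4} \sum_{t=1}^T s_t^2 + S_T \nabla_T ,
\end{equation*}
which solves to $\nabla_T \leq \tfrac{1}{2} \sqrt{\sum_{t=1}^T s_t^2 \log K} + S_T$.

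For the mix-loss term, since $\nabla_t$ is non-decreasing and $\eta_t$ is therefore non-increasing, a telescoping argument for Hedge with time-varying rates yields $\overline{M}_T - \min_k L^{(k)}_T \leq (\log K)/\eta_{T+1} = \nabla_T$, modulo a boundary correction for the initial rounds where $\nabla_0 = 0$ forces $\eta_1 = \infty$ and the weights are uniform by convention. Careful bookkeeping of this transition, together with a per-round comparison between $\eta_{t-1}$ and $\eta_t$, contributes the additive $(4/3) S_T \log K$ term of the stated bound. Combining with the bound on $\nabla_T$ gives $\overline{R}_{\textsc{ah},T} \leq 2 \nabla_T + (4/3) S_T \log K \leq \sqrt{\sum_t s_t^2 \log K} + 2 S_T + (4/3) S_T \log K$, which is the advertised inequality.

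The main obstacle I anticipate is tracking the constants in the mix-loss regret bound under the time-varying AdaHedge rate, especially through the initial round where $\eta_1 = \infty$. The $(4/3) \log K$ prefactor on $S_T$ is not slack: it reflects how rapidly AdaHedge transitions from an FTL-like scheme to bounded-rate exponential weighting as soon as a nonzero mixability gap accrues. Getting this constant sharply requires a more delicate interpolation between Hoeffding's per-round bound on $\delta_t$ and the monotone decrease of $\eta_t$, rather than the clean telescoping that was enough to control the cumulative gap.
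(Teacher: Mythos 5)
First, note that the paper does not prove this lemma at all: it is imported verbatim from \citet{rooijFollowLeaderIf2014} (their Corollary~9), so your attempt should be judged against that source's argument rather than anything in the present manuscript. Measured that way, your core computation is correct and actually proves a slightly \emph{stronger} bound. The decomposition $\overline{R}_{\textsc{ah},T} = (\overline{M}_T - \min_{k} L^{(k)}_T) + \nabla_T$, the per-round Hoeffding bound $\delta_t \leq \eta_t s_t^2/8$ combined with $\eta_t = \log(K)/\nabla_{t-1}$ to give $2\delta_t \nabla_{t-1} \leq s_t^2 \log(K)/4$ (with the product vanishing automatically in rounds where $\nabla_{t-1}=0$ and $\eta_t = \infty$), the telescoping of $\nabla_t^2$, and the crude estimate $\delta_t^2 \leq S_T \delta_t$ all go through, yielding $\nabla_T \leq \tfrac12\sqrt{\sum_{t=1}^T s_t^2 \log K} + S_T$ as you state. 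Moreover, the mix-loss step needs none of the "careful bookkeeping" you anticipate: for non-increasing rates, monotonicity in $\eta$ of $\Phi_t(\eta) := -\eta^{-1}\log\big(K^{-1}\sum_k e^{-\eta L^{(k)}_t}\big)$ together with $\Phi_0 \equiv 0$ telescopes exactly to $\overline{M}_T - \min_k L^{(k)}_T \leq \log(K)/\eta_T = \nabla_{T-1} \leq \nabla_T$, with the initial $\eta_1 = \infty$ rounds absorbed by the convention $\log(K)/\infty = 0$. Hence your own argument already gives $\overline{R}_{\textsc{ah},T} \leq 2\nabla_T \leq \sqrt{\sum_{t=1}^T s_t^2 \log K} + 2 S_T$, which implies the stated inequality since $\tfrac43 S_T \log K \geq 0$.

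Where you go wrong is in the diagnosis of the $\tfrac43 \log K$ term: your claim that it is "not slack" and arises from the transition out of the infinite initial learning rate is incorrect on both counts. In the Hoeffding route you chose, no such term ever appears (your final chain inserts it gratuitously), and in \citet{rooijFollowLeaderIf2014} it originates elsewhere entirely: they first prove a variance-based regret bound via a Bernstein-type control of the mixability gap, roughly $\delta_t \nabla_{t-1} \leq \tfrac12 v_t \log K + \tfrac13 s_t \delta_t \log K$ with $v_t$ the variance of the loss under the current weights, which telescopes to $\overline{R}_{\textsc{ah},T} \leq 2\sqrt{V_T \log K} + S_T\big(\tfrac43 \log K + 2\big)$; Corollary~9 then coarsens $v_t \leq s_t^2/4$. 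The $\tfrac43 \log K$ is thus the price paid for the variance-adaptive main term $2\sqrt{V_T \log K}$, which is sharper than $\sqrt{\sum_t s_t^2 \log K}$ in low-variance regimes. Once you give up that adaptivity, as your Hoeffding argument does, the term is pure slack -- your proof is a genuinely simpler (if slightly less informative) route to the same inequality.
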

If $s_t\le S$ for all $t$, this gives $\overline{R}_{\textsc{ah},T}\le S\sqrt{T \log K}+S  \left(\frac{4}{3}\ln K+2\right)$, matching the $\mathcal{O}(S\sqrt{T\ln K})$ worst-case rate.  
Additionally, the leading term depends on the empirical variability $\sum_{t=1}^T s_t^2$ rather than directly on $T$, so the bound tightens in low-variance regimes.

Finally, for completeness, Appendix~\ref{appendix:Doubling-trick Hedge} also discusses the doubling-trick Hedge algorithm.

\subsection{Online Learning Guarantees under Dependence}
\label{subsection:online_learning_guarantees}

Following \cite{luoAchievingAllNo2015} and \cite{mourtadaOptimalityHedgeAlgorithm2019}, one can see that the complexity of the online forecasting problem for a forecaster following a certain scheme can be generally understood in terms of the gaps between the expected loss of the best expert and the expected losses of other experts. Let the per-expert gap be defined as
\begin{equation*}
	\Delta_k := \mu_k -  \mu_{k^\star} \geq 0 , \enspace k\in[K],
\end{equation*}
where $\mu_k = \E[ \ell^{(k)}_t ]$ and $k^\star \in \arg\min_{k \in [K]} \mu_k$, and let 
\begin{equation*} 
	\Delta := \min_{k \neq k^\star} \Delta_k= \min_{k \neq k^\star} \mu_k -  \mu_{k^\star}
\end{equation*}
denote the so-called \textit{sub-optimality gap} between the expected loss of the best expert with respect to the runner-up.

\subsubsection{Follow-the-Leader}

Our first result is a set of regret guarantees for FTL online combination under both independence and mixing assumptions.
We apply Hoeffding-type and Bernstein-type bounds to showcase the different kinds of guarantees that can be obtained, which depend on two constants that quantify the dependence of losses over time, and that are necessary ingredients in our proofs. 
For the sake of space, we do not include here an additional bound obtained using a Bernstein inequality
which can be found in Appendix~\ref{appendix:proofs_ftl_regret}.
We emphasize that, even though the presented bounds assume a constant-in-time second moment for losses, they remain valid if one uses time-dependent variances or their uniform-in-time bound.

\begin{theorem}\label{theorem:ftl_regret_mixing}
	Let Assumptions~\ref{assumption:losses_0-1}-\ref{assumption:losses_iid_or_mixing} hold, 
	and introduce $v_{\max} := \max_{k\in[K]} v_k$, where $v_k:=\Var(\ell^{(k)}_t)$, $k\in [K]$. 
	Further assume that there are almost surely no ties in the FTL weights for all $t\ge1$. 
	Then, the expected regret of FTL is finite,
	\begin{equation*}
		\E\big[\overline{R}_{\textsc{ftl},T}\big]
		\leq
		\min \big\{ R^H_{\textsc{ftl}}, R^B_{\textsc{ftl}} \big\}
	\end{equation*}
	with the following cases:
	\begin{description}
		\item[(i)] If Assumption~\ref{assumption:losses_iid_or_mixing}(i) holds, then
		\begin{align*}
			R^H_{\textsc{ftl}} = 2 + \frac{2 \log K + 4}{\Delta^2} , 
			\quad
			R^B_{\textsc{ftl}} = 2 + \frac{(8v_{\max} + \tfrac{4}{3}\Delta)\left(\log(2K)+2\right)}{\Delta^2}.
		\end{align*}
		\item[(ii)] If Assumption~\ref{assumption:losses_iid_or_mixing}(ii) and Assumption~\ref{ass:uniform summability roots}(i) hold, then
		\begin{align*}
			R^H_{\textsc{ftl}} &= 3 + \frac{8 \overline{\theta}^{H}_{\max} (\log K + 4)}{\Delta^2} , 
			\quad 
			\text{with} \enspace \overline{\theta}^{H}_{\max}=\max_{k \in [K]} \sup_{t \geq 1} \left( 1 + 4\sum_{n=1}^{t-1}\varphi^{(k)}_n \right)\\
			R^B_{\textsc{ftl}} &= 2 +  \frac{8 \overline{\theta}^{B}_{\max} (8 v_{\max} + \Delta)\left(\log(K)+2 \right)}{\Delta^2} 
			\quad 
			\text{with} \enspace \overline{\theta}^{B}_{\max}=\max_{k \in [K]} \sup_{t \geq 1} \left( 1+\sum_{n=1}^{t}\sqrt{\varphi^{(k)}_n} \right)^2 .
		\end{align*}
	\end{description}
\end{theorem}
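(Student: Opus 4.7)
The plan is to reduce FTL regret to counting rounds where the leader is suboptimal, and then estimate that count via concentration. The starting point is Lemma~\ref{lemma:rooij_ftl_lemma}, which gives the pointwise bound $\overline{R}_{\textsc{ftl},T} \leq S_T C_T$. Since losses lie in $[0,1]$ by Assumption~\ref{assumption:losses_0-1}, $S_T \leq 1$ almost surely, so $\E[\overline{R}_{\textsc{ftl},T}] \leq \E[C_T]$. Under the no-ties hypothesis, $\textsc{ftl}_t$ is a singleton, denoted $\{k_t\}$, and a change at round $t$ requires $k_{t-1} \neq k_t$, forcing at least one of the two consecutive leaders to differ from $k^\star$. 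A direct counting step yields $C_T \leq 1 + 2\sum_{t=1}^T \1{k_t \neq k^\star}$, so that a union bound over the suboptimal experts gives
\begin{equation*}
    \E[\overline{R}_{\textsc{ftl},T}]
    \leq
    1 + 2 \sum_{t=1}^T \sum_{k \neq k^\star} \P\!\left(L^{(k)}_{t-1} \leq L^{(k^\star)}_{t-1}\right).
\end{equation*}

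The key estimate is the pair-comparison probability, which I would bound by splitting at the midpoint of the gap. Since $\E[L^{(k)}_{t-1} - L^{(k^\star)}_{t-1}] \geq (t-1)\Delta$,
\begin{equation*}
    \P\!\left(L^{(k)}_{t-1} \leq L^{(k^\star)}_{t-1}\right)
    \leq
    \P\!\left(L^{(k)}_{t-1} - (t-1)\mu_k \leq -\tfrac{(t-1)\Delta}{2}\right)
    +
    \P\!\left(L^{(k^\star)}_{t-1} - (t-1)\mu_{k^\star} \geq \tfrac{(t-1)\Delta}{2}\right).
\end{equation*}
In case~(i) each term is a one-sided deviation for a bounded i.i.d.\ sum: Hoeffding gives decay $\exp(-(t-1)\Delta^2/2)$ (using only boundedness), while Bernstein gives $\exp(-(t-1)\Delta^2/(8 v_{\max} + \tfrac{4}{3}\Delta))$ (using the uniform variance bound). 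With the union bound over $K-1$ experts, I would then split the sum over $t$ at $t_0$ chosen so that $K\exp(-\beta t_0) \leq 1$, i.e.\ $t_0 \asymp \log K/\beta$; the summands are trivially bounded by $1$ for $t \leq t_0$, and by the geometric tail for $t > t_0$. This converts the $K$ in the union bound into a $\log K$ in the numerator, yielding the $R^H_{\textsc{ftl}}$ and $R^B_{\textsc{ftl}}$ of (i). Both bounds hold simultaneously, so taking their minimum is legitimate.

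Case~(ii) follows the same outline but replaces Hoeffding and Bernstein with their $\varphi$-mixing analogues for bounded stationary sequences from \cite{rioAsymptoticTheoryWeakly2017}. These introduce variance-inflation factors of the form $1 + 4\sum_n \varphi_n^{(k)}$ (Hoeffding) and $(1 + \sum_n \sqrt{\varphi_n^{(k)}})^2$ (Bernstein), which are uniformly finite under Assumptions~\ref{ass:uniform summability}(i) and \ref{ass:uniform summability roots}(i) and coincide with $\overline{\theta}^H_{\max}$ and $\overline{\theta}^B_{\max}$. Crucially, since Assumption~\ref{assumption:losses_iid_or_mixing}(ii) imposes $\varphi$-mixing only \emph{per expert}, never jointly across experts, the midpoint split is essential: each of the two deviation events is a functional of a single trajectory $\{\ell_t^{(k)}\}$ or $\{\ell_t^{(k^\star)}\}$, so marginal mixing suffices. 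Feeding the inflated decay rates $\beta \asymp \Delta^2/\overline{\theta}^H_{\max}$ and $\beta \asymp \Delta^2/(\overline{\theta}^B_{\max}(v_{\max}+\Delta))$ through the same splitting-at-$t_0$ procedure delivers the two expressions in~(ii).

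The main technical obstacle is locating (or establishing) the appropriate mixing Hoeffding- and Bernstein-type inequalities with explicit, dimensionally compatible constants that cleanly produce the specific inflation shapes $1+4\sum\varphi_n$ and $(1+\sum\sqrt{\varphi_n})^2$ while retaining the clean $\Delta^2$ structure in the exponent. A related subtlety is that the marginal-mixing assumption forces the midpoint split and therefore a factor-of-two loss in the effective decay rate, which is absorbed into the leading constant $8\overline{\theta}_{\max}$ in the stated bounds; tracking this loss carefully — together with the boundary $+1$ from $c_1$ and the factor $2$ from the leader-change counting — is what pins down the numerical constants $3$, $4$, etc.\ that appear in the theorem.
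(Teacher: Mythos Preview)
Your proposal is correct and follows essentially the same approach as the paper: reduce to $\E[C_T]$ via Lemma~\ref{lemma:rooij_ftl_lemma} and $S_T\le 1$, bound leader-change indicators by indicators of $\{\textsc{ftl}_t\neq\{k^\star\}\}$, control the latter by per-expert concentration after a midpoint split, and absorb the union-bound factor $K$ via a warm-up cut at $t_0\asymp\log K/\beta$. The paper organizes this into separate Hoeffding- and Bernstein-type propositions (using Rio's Corollary~2.1 and Samson's Theorem~3 for the mixing inequalities, which produce exactly the inflation factors $1+4\sum\varphi_n$ and $(1+\sum\sqrt{\varphi_n})^2$ you anticipate), but the structure and the role of the per-expert mixing assumption are as you describe.
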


\begin{remark}
	The $\Delta^2$ denominator in the bounds of Theorem~\ref{theorem:ftl_regret_mixing} can, in fact, be sharpened in the i.i.d. setting (see Remark~\ref{remark:ftl_bound_iid_sharp} in the Appendix). This sharpening, however, hinges strictly on the independence of the loss process across time periods. An alternative approach that we do not pursue here would be to assume an upper bound on the time correlation weights and losses, as suggested by \cite{gasparinConformalOnlineModel2025}.
\end{remark}

\subsubsection{Bounds for Hedge}

\cite{mourtadaOptimalityHedgeAlgorithm2019} showed that, in the stochastic setting with i.i.d.~losses, decreasing Hedge naturally adapts to the hardness of the combination problem, whereas constant and doubling trick Hedge do not. 

Let $\chi^{(k)}_t := (\ell^{(k)}_t - \ell^{(k^\star)}_t )$ for each $k \in [K]$, $k \not= k^\star$. The process $\{ \chi^{(k)}_t \}_{t \in \Int}$ is the excess loss of expert $k$ relative to the best expert $k^\star$.

\begin{theorem}\label{theorem:hedge_regret_mixing}
	Let Assumptions~\ref{assumption:losses_0-1}-\ref{assumption:losses_iid_or_mixing} hold. Let $\eta_t = 2\sqrt{\log(K)/t}$ for $t \in [T]$ be the learning rate for decreasing Hedge. Let  $k^\star\in [K]$ be the expert with the smallest expected loss. Define $\widetilde{v}_{\max}:=\max_{k\neq k^\star} \Var( \chi^{(k)}_t )$.
	Then, the expected regret of decreasing Hedge is finite,
	\begin{equation*}
		\E\big[\overline{R}_{\textsc{hdg},T}\big]
		\leq
		\min \big\{ R^H_{\textsc{hdg}}, R^B_{\textsc{hdg}} \big\}
	\end{equation*}
	with the following cases:
	\begin{description}
		\item[(i)] If Assumption~\ref{assumption:losses_iid_or_mixing}(i) holds, then
		\begin{align*}
			R^H_{\textsc{hdg}} = \frac{4 \Delta \log(K) + 25}{\Delta^2} , 
			\quad
			R^B_{\textsc{hdg}} = 1 + \sqrt{\log K} 
			+ \frac{ 4 \sqrt{\frac{2}{3}} \Delta \log K + 8 (\widetilde{v}_{\max} + \frac{1}{3} \Delta) + 16 }{\Delta^2} .
		\end{align*}
		\item[(ii)] If Assumption~\ref{assumption:losses_iid_or_mixing}(ii) and Assumption~\ref{ass:uniform summability}(ii) hold, then
		\begin{align*}
			R^H_{\textsc{hdg}} &= 2 + \frac{(1 + 3 \overline{\rho}^{H}_{\max})(\Delta \log(K) + 16)}{\Delta^2},
		\end{align*}
		with        $\overline{\rho}^{H}_{\max} = \max_{k \neq  k^*} \sup_{t \geq 1} \left( 1 + \sum_{n=1}^t \varphi_n ( \{ \chi^{(k)}_t \}_{t\in \mathbb{Z}} ) \right)$ and 
		\begin{align*} 
			R^B_{\textsc{hdg}} &=   1 + \sqrt{\log K} 
			+ \frac{ 4 \sqrt{5} \overline{\rho}_{\max}^{B} \Delta \log K + 16 (\overline{\rho}_{\max}^{B})^2 (4 \widetilde{v}_{\max} + \Delta) + 16 }{\Delta^2}, 
		\end{align*}
		with 
		$\overline{\rho}_{\max}^{B} = \max_{k \neq  k^\star} \sup_{t \geq 1} \allowbreak \left(1 + \sum_{n=1}^t \sqrt{\varphi_n \big(\{ \chi^{(k)}_t \}_{t\in \mathbb{Z}} \big) }\right)$.
	\end{description}
\end{theorem}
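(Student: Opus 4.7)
The plan is to follow the template of \cite{mourtadaOptimalityHedgeAlgorithm2019} for decreasing Hedge in the i.i.d.\ setting and to upgrade every concentration input to its $\varphi$-mixing counterpart via classical inequalities for weakly dependent sequences as in \cite{rioAsymptoticTheoryWeakly2017}. The starting point is the pathwise decomposition
\begin{equation*}
    \overline{R}_{\textsc{hdg},T}
    = \underbrace{\bigl(\overline{L}_T - L_T^{(k^\star)}\bigr)}_{A_T}
    + \underbrace{\bigl(L_T^{(k^\star)} - \min_{k\in [K]} L_T^{(k)}\bigr)}_{B_T},
\end{equation*}
where $B_T \geq 0$ pathwise and $A_T$ has nonnegative expectation since $k^\star$ is the best expert in expectation. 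Each of the four sub-bounds $R_{\textsc{hdg}}^H$ and $R_{\textsc{hdg}}^B$ in (i)--(ii) will come from pairing a Hoeffding-type or Bernstein-type control for $A_T$ with the matching tail bound for $B_T$.

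\textbf{Bounding $\E[A_T]$.} Write $A_T = \sum_{t=1}^T \sum_{k\neq k^\star}\omega_{\textsc{hdg},t}^{(k)} \chi_t^{(k)}$ and use the elementary lower bound $\sum_j \exp(-\eta_t L_{t-1}^{(j)}) \geq \exp(-\eta_t L_{t-1}^{(k^\star)})$ to obtain the a.s.\ inequality $\omega_{\textsc{hdg},t}^{(k)} \leq \exp\!\bigl(-\eta_t \sum_{s=1}^{t-1}\chi_s^{(k)}\bigr)$. I would then split each summand on the ``typical'' event $E_{t,k}=\{\sum_{s<t}\chi_s^{(k)} \geq (t-1)\Delta_k/2\}$: on $E_{t,k}$ the weight is bounded by $\exp(-\eta_t (t-1)\Delta/2)$, and with $\eta_t = 2\sqrt{\log K/t}$ the corresponding tail is summable in $t$ and contributes a term of order $(\log K)/\Delta^2$ after summing over $k\neq k^\star$; on $E_{t,k}^c$, bound $|\omega_t^{(k)}\chi_t^{(k)}|\leq 1$ and pay the probability $\P(E_{t,k}^c)$ by concentration of the partial sum $\sum_{s<t}(\chi_s^{(k)}-\Delta_k)$. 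In the i.i.d.\ case, standard Hoeffding or Bernstein bounds give the $\log(K)/\Delta^{2}$ and $\widetilde{v}_{\max}\log(K)/\Delta^2$ contributions. In the $\varphi$-mixing case, a Hoeffding-type inequality for $\varphi$-mixing sequences produces the variance proxy $(t-1)\bigl(1+3\sum_{n\leq t}\varphi_n\bigr)\leq (t-1)\overline{\rho}^H_{\max}$, and the Bernstein refinement using $|\Cov(\chi_0^{(k)},\chi_n^{(k)})|\lesssim \sqrt{\varphi_n}\,\widetilde{v}_{\max}$ yields a variance proxy $\widetilde{v}_{\max}(t-1)(\overline{\rho}^B_{\max})^{2}$; integration of the resulting Gaussian/sub-exponential tails against the weight decay is a routine calculation delivering the advertised $\overline{\rho}^H_{\max}(\Delta \log K)/\Delta^2$ and $(\overline{\rho}^B_{\max})^2(\widetilde{v}_{\max}+\Delta)/\Delta^2$ terms.

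\textbf{Bounding $\E[B_T]$.} Since $B_T = \max_{k\neq k^\star}\bigl(L_T^{(k^\star)}-L_T^{(k)}\bigr)_+$ and each term inside the max has mean $-T\Delta_k \leq -T\Delta$, apply the standard maximal representation
\begin{equation*}
    \E[B_T] \leq \int_0^\infty \sum_{k\neq k^\star}\P\!\bigl(L_T^{(k^\star)}-L_T^{(k)} > u\bigr) du,
\end{equation*}
and control each tail via the same Hoeffding/Bernstein inequality for $\varphi$-mixing sums already used for $E_{t,k}^c$. After optimizing over $u$ and summing over $k$, this produces the $O(\log K/\Delta^2)$ contribution in the Hoeffding case and the extra additive $1+\sqrt{\log K}$ in the Bernstein case that reflects the sub-exponential tail correction.

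\textbf{Main obstacle.} The delicate step is that $\omega_{\textsc{hdg},t}^{(k)}$ and $\chi_t^{(k)}$ are both functions of the same loss process and hence dependent, even in the i.i.d.\ setting across time. The resolution is to replace $\omega_t^{(k)}$ by its deterministic upper bound $\exp\!\bigl(-\eta_t\sum_{s<t}\chi_s^{(k)}\bigr)$, which reduces the whole problem to concentration of partial sums of $\chi_s^{(k)}$ and therefore decouples the weight from the current excess loss. Extending from the i.i.d.\ case to the $\varphi$-mixing case then boils down to using the sharp variance and MGF inequalities of \cite{rioAsymptoticTheoryWeakly2017}; the principal bookkeeping challenge is tracking which summability constant ($\sum_n \varphi_n$ in the Hoeffding case, $\sum_n\sqrt{\varphi_n}$ in the Bernstein case) feeds into which step, to ensure the constants $\overline{\rho}^H_{\max}$ and $\overline{\rho}^B_{\max}$ line up correctly between the $A_T$ and $B_T$ analyses and ultimately yield the stated bounds.
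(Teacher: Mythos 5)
Your core mechanics are the same as the paper's: the weight bound $\omega^{(k)}_{\textsc{hdg},t}\leq \exp\big(-\eta_t\sum_{s<t}\chi^{(k)}_s\big)$, the split on the event $\{\sum_{s<t}\chi^{(k)}_s\geq (t-1)\Delta_k/2\}$, and Hoeffding/Bernstein concentration for the excess-loss process under i.i.d.\ and $\varphi$-mixing are exactly the ingredients of Lemma~\ref{lemma:hedge_concentr_lemma}, Lemma~\ref{lem:hedge-bernstein} and Propositions~\ref{proposition:hedge_hoeffding_regret_bound}--\ref{proposition:hedge_bernstein_regret_bound}. However, there is a genuine gap in your bookkeeping: you omit the warm-up time split, and without it the claimed $\log K$ dependence does not follow. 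On your ``typical'' event the weight bound gives, per expert, $\sum_{t\geq 1}\exp\big(-\Delta\sqrt{(t-1)\log K/2}\big)\lesssim 1/(\Delta^2\log K)$ by Lemma~\ref{lemma:exp_series_limit}, so summing over $k\neq k^\star$ yields a term of order $K/(\Delta^2\log K)$ — linear in $K$, not the ``order $(\log K)/\Delta^2$'' you assert; the complement events likewise contribute $(K-1)\sum_t e^{-c_\Delta t}\sim K/\Delta^2$. Your $B_T$ tail integral has the same defect: uniformly in $T$ it is of order $(K-1)e^{-cT\Delta^2}/\Delta$, which is maximized at moderate $T\sim \Delta^{-2}$ and is again linear in $K$. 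The paper's resolution, following \cite{mourtadaOptimalityHedgeAlgorithm2019}, is to split time at $t_0\asymp \log K/\Delta^2$ (respectively $\overline{\rho}\log K/\Delta^2$ and $\overline{\rho}_{\max}^2\nu^*\log K/\Delta^2$ in the mixing and Bernstein cases): for $t\leq t_0$ one invokes the \emph{anytime worst-case} bound $\overline{R}_{\textsc{hdg},t_0}\leq\sqrt{t_0\log K}$ (Proposition~\ref{proposition:regret_hedge}), and for $t>t_0$ the choice of $t_0$ gives $(K-1)e^{-t_0 c_\Delta}\leq 1$ (and $(K-1)e^{-\Delta\sqrt{(t-1)\log K/8}}\leq 1$), which absorbs the union-bound factor before summing the series. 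This warm-up term is precisely the origin of the leading pieces $4\sqrt{2/3}\,\Delta\log K/\Delta^2$, $4\sqrt{5}\,\overline{\rho}^{B}_{\max}\Delta\log K/\Delta^2$ and the additive $1+\sqrt{\log K}$ in the statement — not, as you suggest, a ``sub-exponential tail correction'' in the $B_T$ analysis. Since constant-in-$T$ bounds with logarithmic dependence on $K$ are the whole point of the theorem, this omission is essential rather than cosmetic.

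A secondary structural remark: the paper never uses your $A_T+B_T$ decomposition. It bounds the regret directly via $\overline{R}_{\textsc{hdg},T}=\overline{R}_{\textsc{hdg},t_0}+\sum_{t>t_0}r_t$ and $\E[r_t]\leq\sum_{k\neq k^\star}\E\big[\omega^{(k)}_{\textsc{hdg},t}\big]$, so no separate best-expert-gap term ever needs to be controlled — the worst-case bound on $[1,t_0]$ covers the true regret pathwise. Your decomposition could in principle be repaired, but both the $A_T$ and the $B_T$ analyses would need the same warm-up device, after which you would essentially have reproduced the paper's argument.
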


An additional but more complex bound can be constructed using a different Bernstein inequality, which, for clarity of exposition, we derive in Appendix~\ref{appendix:proofs_ftl_regret}.
Again, our Hedge regret bounds are valid for time-dependent variance or for time-independent bounds thereof.

\subsection{Which Ensemble Method is Best for Forecasting?}
\label{subsection:what_is_best?}

As shown in Section~\ref{subsection:online_learning_guarantees}, both Follow-the-Leader and decreasing Hedge algorithms admit constant regret upper bounds, for any terminal time $T$ and number of experts $K$. These guarantees are valid for an arbitrary choice of models-experts. 
This naturally leads to the question: What, in general, would be the \textit{best} method for a time series forecasting setting?

The choice of a combination method is non-trivial. Theoretical upper bounds for cumulative regret depend on the sub-optimality gap $\Delta$, which depends on the unknown population mean of losses. Moreover, some methods, such as non-adaptive Hedge, typically require tuning of their learning rate, which could be problematic in practice.
However, since in our non-adversarial stochastic setting where mixing dependence conditions are mildly different from independence, we can rely on numerical evidence from the literature to inform our selection.
According to the numerical results in
\cite{mourtadaOptimalityHedgeAlgorithm2019}, FTL, decreasing Hedge, and AdaHedge all demonstrate strong performance, both for large $\Delta$-gap scenarios and ``hard'' stochastic zero-gap ($\Delta =0$) settings, considerably outperforming the constant and doubling-trick Hedge. 
However, the simulations by \cite{vanervenAdaptiveHedge2011} show that constant and decreasing Hedge perform similarly in practice.
In Section~\ref{section:application}, we implement constant Hedge as a computationally simple baseline exponential combination method, decreasing Hedge with theoretically optimal scaling ($c_0 = 2$) and AdaHedge, which does not depend on any tuning parameter.

\begin{remark}
	In practice, one departs from the standard online learning setting in two relevant ways. First, the forecaster's loss  $\overline{\ell}_t := \bm{\omega}_t^\top \bm{\ell}_t$ is a convex combination of expert losses, whereas the performance is often measured using the \textit{loss of the combined forecast}, $\ell(\widehat{Y}_{t+h}, Y_{t+h})$. If $\ell : \Real \times \Real \to \Real$ is a convex map, then $\ell(\widehat{Y}_{t+h}, Y_{t+h}) \leq \overline{\ell}_t$, hence FTL and Hedge regret bounds remain informative.
	Second, we have adopted the standard assumption that losses are bounded, $\sup_{t \geq 1} \max_{k \in [K]} \abs{\ell^{(k)}_t} \leq 1$. In applications, this requirement is typically mild, since, unless one works with heavy-tailed data, an appropriate normalization and a thresholding upper bound can be applied to unbounded losses, such as MSE or Huber loss.
\end{remark}

\section{Ensemble Echo State Networks}
\label{section:ensemble_esn}

\subsection{Echo State Networks in Brief}
\label{subsec:Echo State Networks in Brief}

Echo State Networks (ESN) are a particular family of recurrent neural networks with randomly sampled connectivity weights. 
ESN models are, in general, nonlinear state-space systems that, in the forecasting setting, are defined by the following equations:
\begin{align}
	{X}_{t} &= \alpha X_{t-1} + (1-\alpha)\sigma(A X_{t-1}+ C Z_{t} + {\zeta}), \label{eq:esn_state} \\ 
	Y_{t+1} &= b + W^{\top} X_{t} + \epsilon_{t+1}, \label{eq:esn_obs}
\end{align}
where $A \in \Real^{D \times D}$ is the reservoir matrix, $C \in \Real^{D \times d}$ is the input matrix, ${\zeta} \in \Real^D$ is the input shift, $\alpha \in [0, 1)$ is the leak rate, and ${\theta} := (b, W\transp)\transp \in \Real^{D+1}$ with $W \in \mathbb{R}^D$ denotes the readout coefficients. The map $\sigma : \Real \rightarrow \Real$ is an activation function applied elementwise and we assume that $\sigma$ is the hyperbolic tangent in the rest of the paper.
We refer to $A$, $C$, and ${\zeta}$ as state parameters. These parameters are \textit{randomly drawn} and \textit{kept fixed} throughout, while $\theta$ is the only set of parameters that need to be estimated.

The randomly drawn parameters $A$, $C$, and ${\zeta}$ in \eqref{eq:esn_state} are constructed by first sampling $\widetilde{A}$, $\widetilde{C}$, and $\widetilde{{\zeta}}$ from appropriately chosen distributions, and then normalizing as follows:
\begin{align}\label{eq:esn_hyper_normalized}
	\overline{A} = {\widetilde{A}}/{\rho(\widetilde{A})}, \quad \overline{C} = {\widetilde{C}}/{\|\widetilde{C}\|}, \quad \overline{{\zeta}} = {\widetilde{{\zeta}}}/{\|\widetilde{{\zeta}}\|},
\end{align}
where $\rho(\widetilde{A})$ denotes the spectral radius of $\widetilde{A}$. We then define $A = \rho \overline{A}$, $C = \gamma \overline{C}$, and ${\zeta} = \varsigma \overline{{\zeta}}$, which allows to write the state equation \eqref{eq:esn_state} as 
\begin{equation}\label{eq:esn_state_new}
	X_{t} = \alpha X_{t-1} + (1-\alpha)\sigma( \rho \overline{A}X_{t-1} + \gamma \overline{C}Z_{t} + \varsigma \overline{{\zeta}} ) .
\end{equation}
The tuple ${\varphi}:= (\alpha, \rho, \gamma, \varsigma)$ is referred to as the hyperparameters of the ESN. Specifically, $\alpha \in [0, 1)$ is the so-called leak rate, $\rho \in \mathbb{R}^+$ denotes the spectral radius of the reservoir matrix $A$, $\gamma \in \mathbb{R}^+$ is the input scaling, and $\varsigma \in \mathbb{R}^+$ corresponds to the shift scaling. These hyperparameters play a crucial role in defining the properties of the state map in \eqref{eq:esn_state_new}. Although there is no clear consensus on how to best select $\varphi$, in this work we construct ensembles of ESN models that not only differ in the draws of the state parameters $A$, $C$, and $\zeta$, but also in the state hyperparameters. This allows us to explore a wide range of different reservoir architectures, and thus sidestep the problem of model tuning.\footnote{\cite{ballarin2022reservoir} propose a general data-driven algorithm to tune $\varphi$. However, the resulting hyperparameter optimization problem is computationally complex and highly non-convex.}

As is common in the ESN literature, we consider the ridge regression estimator for $\theta$. More precisely, let
$X := (X_1, \ldots, X_{T-1})^\top \in \mathbb{R}^{(T-1)\times D}$ 
and $Y := (Y_2, \ldots, Y_T)^\top \in \mathbb{R}^{T-1}$.
Define
$
M= \big(\mathbb{I}_{T-1} - \frac{1}{T-1} \mathbf{1}_{T-1} \mathbf{1}_{T-1}^\top\big)$, $ \widetilde{Y} = M Y$, and $
\widetilde{X} = M X. 
$
Then the ridge regression estimator for $\theta$ is given by
\begin{align}
	\widehat{W}_\lambda &= \arg\min_{W \in \mathbb{R}^D} \{\left \| \widetilde{Y}  -\widetilde{X} W\right \|_2^2 + \lambda \left \| W \right \|_2^2 \}
	= \big( \widetilde{X}^{\top}\widetilde{X} + \lambda \mathbb{I}_D \big)^{-1} \widetilde{X}^{\top} \widetilde{Y},\label{w_est}\\
	\widehat{b}_{\lambda} &= \frac{1}{T-1} {\bf{1}}_{T-1} ^\top \left(Y  -  X \widehat{W}_\lambda\right),\label{b_est}
\end{align}
where $\lambda \in \mathbb{R}^+$ is the regularization hyperparameter. In practice, to select $\lambda$ we rely on time-series-adapted cross-validation (CV) (see, e.g., \citealp{bergmeir2012use,bergmeir2018,Hyndman2013}). 

\subsection{Multi-Frequency ESNs}
\label{subsec:Multi-Frequency ESNs}
\cite{ballarin2022reservoir} introduced a class of Echo State Network models designed for time series observed at multiple sampling  frequencies. These models are collectively referred to as {\it Multi-Frequency ESNs} (MFESNs). 

The first category of MFESNs, known as {\it Single-reservoir MFESNs} (S-MFESNs), enables the  forecasting of lower-frequency targets by mapping high-frequency reservoir states to low-frequency outputs through a state alignment scheme. In an aligned S-MFESN, the forecast is generated based on the most recent state corresponding to the reference low-frequency time index $t$.
The second category, termed {\it Multi-reservoir MFESNs} (M-MFESNs), introduces multiple state equations, each associated with a distinct sampling frequency in the input data. For example, quarterly and monthly variables can be incorporated jointly as regressors. By employing multiple reservoirs evolving at their respective frequencies, these models flexibly capture the dynamics of different groups of input variables sampled at common frequencies.

We now briefly introduce the M-MFESN framework following \cite{ballarin2022reservoir}. For brevity, we do not discuss S-MFESN models explicitly, as they are empirically dominated by multi-reservoir MFESNs (see Section~\ref{section:application}). Suppose there are $Q$ groups of input time series with observations $\{ Z^{(q)}_{t,\tempo{s}{\kappa_q}} \}_{t,s}$,  where $Z^{(q)}_{t,\tempo{s}{\kappa_q}} \in \Real^{d_q}$, $\kappa_q\in \mathbb{N}$, $q \in [Q]$, $t \in [T]$, and $s \in \{0, \ldots, \kappa_q-1\}$, sampled at common frequencies $\{\kappa_1, \ldots, \kappa_Q\}$. Using the \textit{tempo} indexing of \cite{ballarin2022reservoir}, the high-frequency observations occur at fixed fractional sub-intervals relative to the reference low-frequency index $t$. For each frequency $\kappa_q $, the fractional index   $s=0$ indicates alignment with $t$, so that $(t,\tempo{0}{\kappa_q}) \equiv t$, and advancing $\kappa_q$ sub-intervals corresponds to one unit of low-frequency time, that is  $(t,\tempo{\kappa_q}{\kappa_q}) \equiv (t+1,\tempo{0}{\kappa_q}) \equiv t+1$.
For each input series group $q$, we define a frequency-specific $D_q$-dimensional reservoir state equation
\begin{equation}
	X^{(q)}_{t,\tempo{s}{\kappa_q}} 
	= 
	\alpha_q X^{(q)}_{t,\tempo{s-1}{\kappa_q}} 
	+ (1-\alpha_q) \sigma(\rho_q \overline{A}_q X^{(q)}_{t,\tempo{s-1}{\kappa_q}} + \gamma_q \overline{C}_q Z^{(q)}_{t,\tempo{s}{\kappa_q}} + \varsigma_q \overline{\zeta}_q) , \label{eq:m-mfesn_state}
\end{equation}
where $A_q$, $C_q$, and $\zeta_q$ are group-specific normalized reservoir parameters, and ${\varphi}_q:= (\alpha_q, \rho_q, \gamma_q, \varsigma_q)$ is the corresponding hyperparameter vector of the reservoir.
To construct the forecast at the reference time $t$, the states are aligned by stacking the frequency-specific states evaluated at $s=0$, that is,  $X_{t,Q} := (X^{(1)\top}_{t,\tempo{0}{\kappa_1}}, \ldots, X^{(Q)\top}_{t,\tempo{0}{\kappa_Q}} )\transp \in \Real^{D_Q}$, $D_Q := \sum_{q=1}^Q D_q$. The M-MFESN prediction equation is given by
\begin{align}
	Y_{t+1} 
	&= b_Q + W_Q\transp X_{t,Q} + \epsilon_{t+1} 
	= b_Q + \sum_{q=1}^Q W_q\transp X^{(q)}_{t,\tempo{0}{\kappa_q}} + \epsilon_{t+1} .
\end{align}
The readout $W_Q$ defines a linear map from the nonlinear, frequency-specific states to $Y_{t+1}$.

\subsection{Ensemble (Multi-Frequency) ESNs}
\label{subsection:esn_mfesns}

An important concern when working with models initialized or inherently constructed with random weights is the impact that such randomness may have on model performance.
A recent and growing body of work on the  
``lottery ticket'' hypothesis \citep{frankleLotteryTicketHypothesis2018,malachProvingLotteryTicket2020}, argues, informally, that a key component of the empirical success of complex neural networks is ``being lucky'', that is, drawing an initial configuration of weights that is favorable \citep{maSanityChecksLottery2021,sreenivasanRareGemsFinding2022}. Related research has focused explicitly on designing weight distributions that are guaranteed to be a priori advantageous for initialization (see, e.g.,   \citealt{zhaoZerOInitializationInitializing2022} and \citealt{bolagerSamplingWeightsDeep2023}).
Echo State Networks, although proven to satisfy universal approximation and generalization guarantees~\citep{RC7,RC10,RC12}, also depend critically on the properties of randomly drawn reservoir parameters. 
Once used for forecasting such systems may offer very different forecasting accuracy. It is hence suggestive to construct {\it ensembles of ESNs} (EN-ESN). In our multifrequency setting we directly extend this idea to {\it ensembles of MFESNs} (EN-MFESN), with single- or multi-reservoir architectures. 

Following the notation introduced in Section~\ref{subsec:Multi-Frequency ESNs}, consider an ensemble of $K \geq 1$ distinct MFESN models-experts of multiple reservoirs (M-MFESNs), each used to construct the corresponding expert prediction out of $Q$ groups of input series sampled at distinct frequencies as $\widehat{Y}^{(k)}_{t+1} = \widehat{b}^{(k)}_{Q, \lambda} + \widehat{W}^{(k)\top}_Q X^{(k)}_{t,Q}$, $k\in [K]$, with $\widehat{W}^{(k)\top}_Q$ and $\widehat{b}^{(k)}_{Q, \lambda}$ estimated as in \eqref{w_est}-\eqref{b_est}. The ensemble forecast is constructed according to \eqref{predictor} as
\begin{equation}
	\widehat{Y}_{t+1}
	=
	\sum_{k=1}^K \omega^{(k)}_{t} \widehat{Y}^{(k)}_{t+1}
	=
	\widehat{b}_K
	+
	\sum_{k=1}^K \omega^{(k)}_{t} \widehat{W}^{(k)\top}_Q X^{(k)}_{t,Q}
	, \enspace {\rm with} \enspace \widehat{b}_K := \sum_{k=1}^K \omega^{(k)}_{t} \widehat{b}^{(k)}_{Q,\lambda}.
\end{equation}

\paragraph{ESN Ensembles over Random Parameters (EN-MFESN-RP).}
Since, as discussed above, the primary source of diversity MFESN instances is the randomness of the state parameters, the first ensemble class we  consider is obtained by re-drawing the state parameters for each $k \in [K]$, as in \eqref{eq:m-mfesn_state}, while keeping all other reservoir hyperparameters $\varphi_q=(\alpha_q, \rho_q, \gamma_q, \varsigma_q)$ fixed (for single- and multi-reservoir cases). 
This type of reservoir ensemble, which we call {\it MFESN-RP ensemble} (EN-MFESN-RP), aims to span the family of models that differ only in terms of the sampled state parameters. As a result, only moderate variation can be expected between members in terms of autoregressive state dynamics, input scaling, and possible activation function saturation. 

\paragraph{ESN Ensembles over Leak Rate and Random Parameters (EN-MFESN-$\alpha$RP).}
To further explore the potential gains of the ensemble reservoir approach, we consider varying both the random state parameters and the leak rate $\alpha$. We term this approach as {\it MFESN-$\alpha$RP ensemble} (EN-MFESN-$\alpha$RP). Different leak rates induce different memory characteristics (see \citealp{RC15}), which can strongly affect predictive performance.
Combining random reservoir weights with varying leak rates increases ensemble diversity and enhances the ability to capture temporal dependencies at different time scales, hence improving accuracy and robustness.
Finally, since there is no known theoretically grounded method of tuning $\alpha$, an EN-MFESN-$\alpha$RP provides a practical alternative: Rather than selecting a single leak rate, the ensemble adaptively identifies effective values in an online, data-driven manner. Our empirical analysis confirms that the leak rate can have a substantial impact on the MFESN forecasting accuracy even when all other architectural choices remain unchanged.

\section{GDP Forecasting with EN-MFESNs}
\label{section:application}

In this section, we evaluate and compare the one-step-ahead forecasting performance of our proposed EN-MFESN against individual MFESN models and state-of-the-art approaches. To ensure a fair comparison, we use the same empirical experimental setup as in \cite{ballarin2022reservoir}: We use the same data, preprocessing procedures, and benchmark specifications. %

\subsection{Data}
We conduct our analysis for the original dataset used in \cite{ballarin2022reservoir}, which contains two groups of predictors: \textit{small-MD}, comprising 9 predictors, and \textit{medium-MD}, consisting of 33 predictors. The data is collected at different sampling frequencies, while the target variable is U.S. quarterly GDP growth. In this paper, we focus on a more challenging medium-MD data sample, while the results for the small-MD dataset can be made available by request. For more in-depth information on specific data sources, data preprocessing and availability, we refer the reader to Appendix~\ref{appendix:data} and Section 4.1 in \cite{ballarin2022reservoir}.

\subsection{Models and Ensembles}
\label{subsec:Models and Ensembles}

\paragraph{Benchmarks.}
The first benchmark we consider is the in-sample mean of U.S. GDP growth computed over the estimation sample, which serves a natural baseline for comparing mean squared forecasting errors.
We also include an AR(1) model, standard in univariate time series analysis and macroeconomic forecasting \citep{Stock2002, baiForecastingEconomicTime2008}.
Our analysis demonstrates that AR(1) can outperform na\"ive forecast and poorly calibrated ensembles, making it a non-trivial reference for evaluating multi-frequency nonlinear state-space models.
Additionally, the two dynamic factor models (DFMs) from \cite{ballarin2022reservoir} are also included. DFMs are widely applied in  macroeconomic analysis to recover low-dimensional common components from high-dimensional time series \citep{stock2016dynamic,Fuleky2020,barigozziDynamicFactorModels2024}.
\paragraph{MFESNs.}
Two classes of ESN-based forecasting models are considered. The first set comprises single-frequency ESN models (S-MFESNs) with reservoir sizes  $D=30$ and $D=120$, respectively. The sparsity degree of state matrices is set to $10/D$. 
The second group of models includes multi-frequency ESNs (M-MFESNs) with separate reservoirs for monthly and daily data input series, allowing to capture the dynamics across both temporal resolutions.
For each model class, two variants of MFESNs are considered: ``A'' and ``B'' types, which differ, e.g., in the state hyperparameters used.
Our implementation of MFESNs follows \cite{ballarin2022reservoir}  in the choice of architecture, reservoir sizes, activation functions, hyperparameters, time-series cross validation for the regularization parameter, and other relevant model-specific details. 
A summary of our model designs is provided in Appendix~\ref{appendix:models_summary}.%
\paragraph{Ensembles.}
As outlined in Section~\ref{subsection:esn_mfesns}, we consider MFESN ensembles with state coefficients resampling (EN-MFESN-RP), as well as ensembles that additionally vary the leak rate value (EN-MFESN-$\alpha$RP).
For each class and type of multi-frequency ESN model (single- or multi-reservoir, type A or B), we generate $K$ variants of the ``baseline'' specification by independently resampling the random reservoir weight matrices. When varying the leak rate, we select $\alpha$ from a finite grid, $\alpha \in \{0.1, 0.3, 0.5, 0.7, 0.9\}$. To ensure comparability between ensemble specifications, a total ensemble size of $K=1000$ models is maintained, allocating an equal number of models to each $\alpha$ value. In the specific case of M-MFESN ensembles, we use the same leak rate for both monthly and daily frequency state equations.
We apply the direct combination strategies of simple averaging (SA) and rolling MSE (rollMSE) (see Appendix~\ref{appendix:Direct Combination Methods}) as well as Constant Hedge (Hedge), Decreasing Hedge (DecHedge), Follow-the-Leader (FTL), and the Adaptive Hedge (AdaHedge) of \cite{rooijFollowLeaderIf2014}  introduced in Section~\ref{section:expert_ensembles}.\footnote{We provide pseudo-code algorithms for FTL and Hedge schemes in Appendix~\ref{appendix:algorithms}.}

\begin{figure}[t!]
	\centering
	\begin{subfigure}[b]{0.45\textwidth}
		\includegraphics[width=\textwidth]{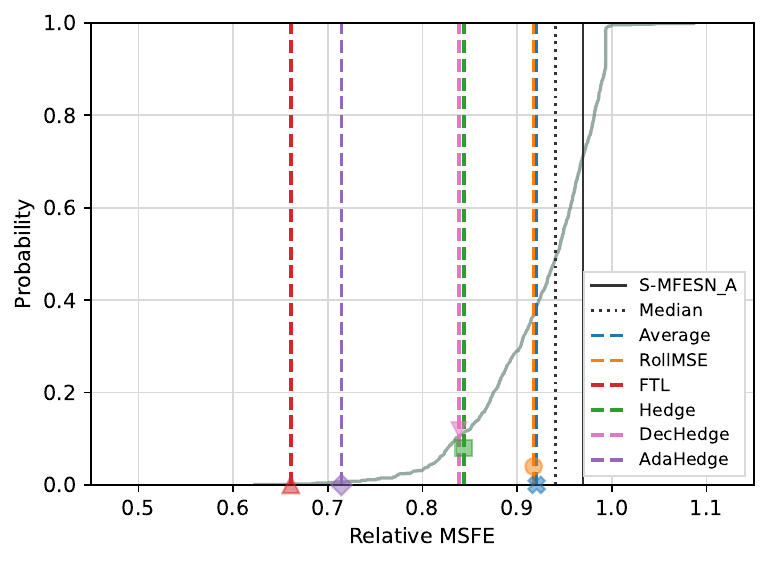}
		\caption{S-MFESN A}
	\end{subfigure}
	\hfill
	\begin{subfigure}[b]{0.45\textwidth}
		\includegraphics[width=\textwidth]{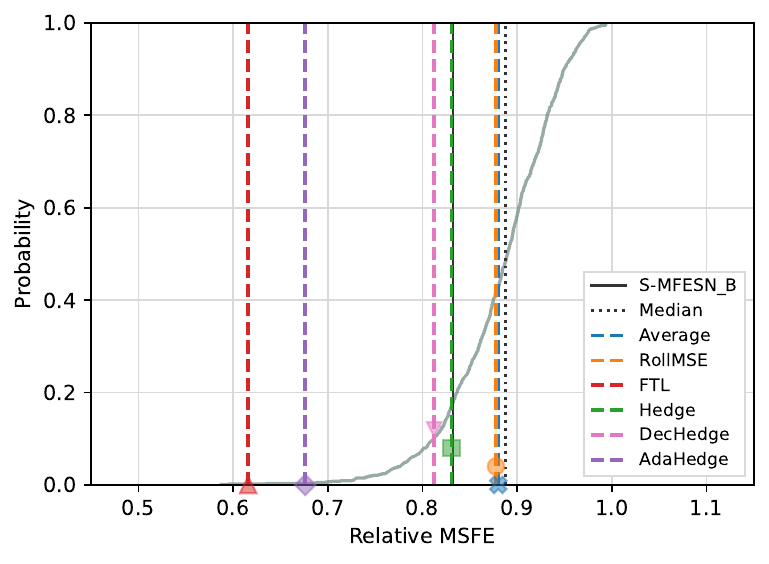}
		\caption{S-MFESN B}
	\end{subfigure}
	
	\vspace{0.5cm} %
	
	\begin{subfigure}[b]{0.45\textwidth}
		\includegraphics[width=\textwidth]{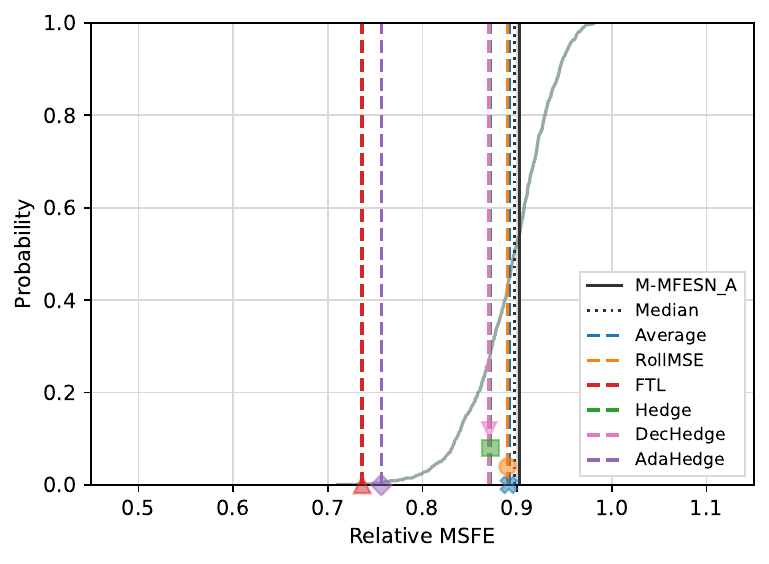}
		\caption{M-MFESN A}
	\end{subfigure}
	\hfill
	\begin{subfigure}[b]{0.45\textwidth}
		\includegraphics[width=\textwidth]{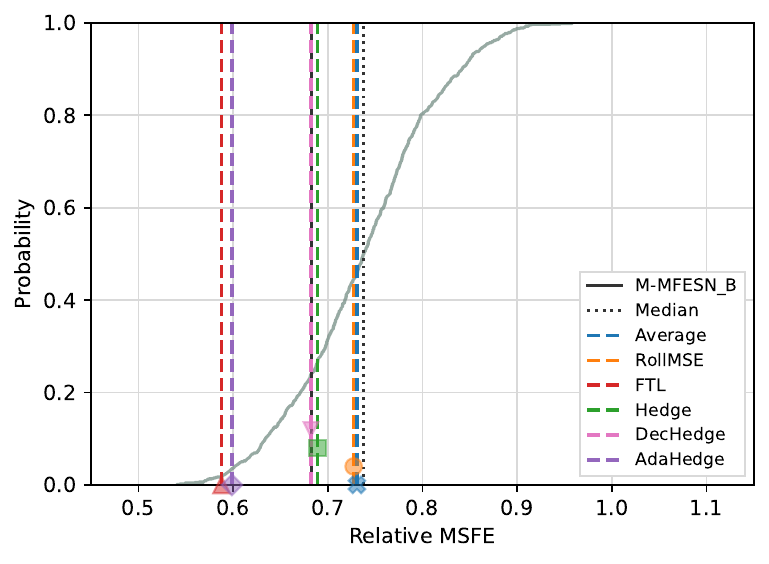}
		\caption{M-MFESN B}
	\end{subfigure}
	\caption{EN-MFESN-RP ensembles constructed out of S-MFESN A (a), S-MFESN B (b), M-MFESN A (c), and M-MFESN B (d) models. Plots display MSFE relative to AR(1) baseline for ensemble aggregation methods (colored dashed lines), the median ensemble performance (black dotted line), specific draws of S-MFESN A/B and M-MFESN A/B from \cite{ballarin2022reservoir} (black solid line), and empirical CDF of the relative MSFE across individual ensemble models (gray curve). Colored markers are included to distinguish overlapping lines.}
	\label{fig:plot_ecdf_msfe_medium_seed}
\end{figure}

\begin{figure}[ht!]
	\centering
	\begin{subfigure}[b]{0.45\textwidth}
		\includegraphics[width=\textwidth]{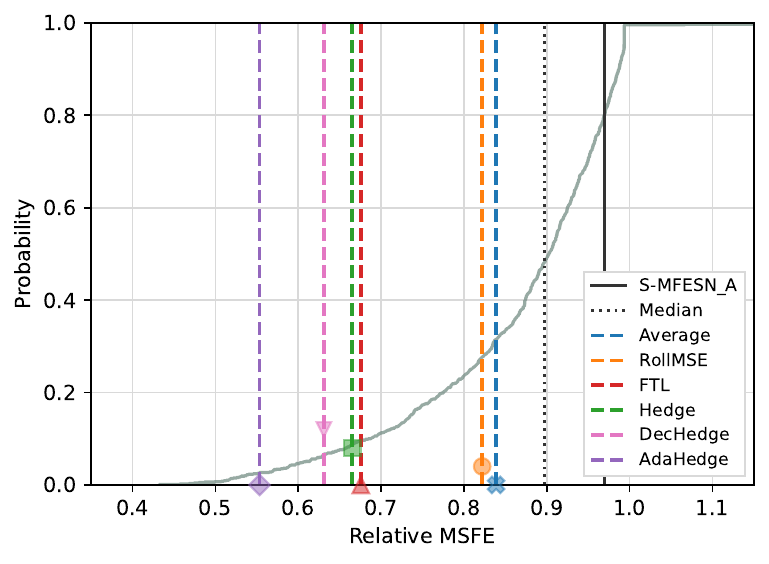}
		\caption{S-MFESN A}
	\end{subfigure}
	\hfill
	\begin{subfigure}[b]{0.45\textwidth}
		\includegraphics[width=\textwidth]{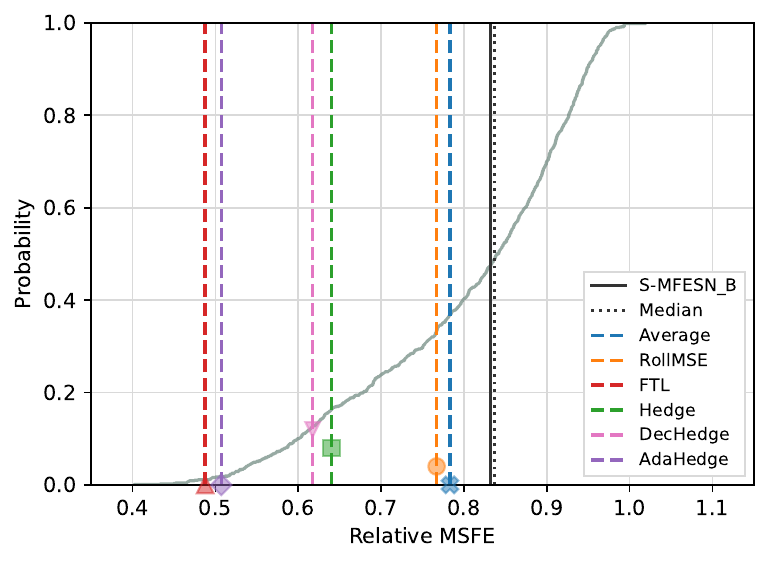}
		\caption{S-MFESN B}
	\end{subfigure}
	
	\vspace{0.5cm} %
	
	\begin{subfigure}[b]{0.45\textwidth}
		\includegraphics[width=\textwidth]{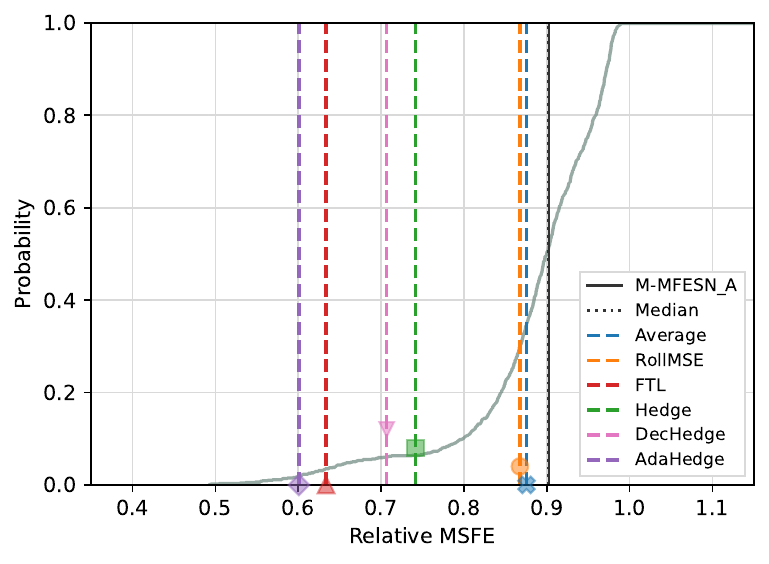}
		\caption{M-MFESN A}
	\end{subfigure}
	\hfill
	\begin{subfigure}[b]{0.45\textwidth}
		\includegraphics[width=\textwidth]{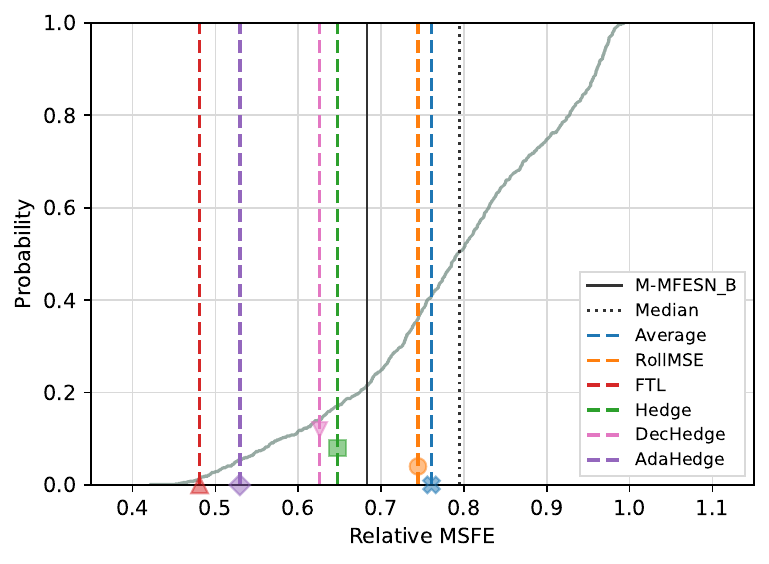}
		\caption{M-MFESN B}
	\end{subfigure}
	\caption{EN-MFESN-$\alpha$RP ensembles constructed out of S-MFESN A (a), S-MFESN B (b), M-MFESN A (c), and M-MFESN B (d) models. Plots display MSFE relative to AR(1) baseline for ensemble aggregation methods (colored dashed lines), the median ensemble performance (black dotted line), specific draws of S-MFESN A/B and M-MFESN A/B from \cite{ballarin2022reservoir} (black solid line), and ECDF of the relative MSFE across individual ensemble models (gray curve). Colored markers are included to distinguish overlapping lines.}
	\label{fig:plot_ecdf_msfe_medium_seed_leak}
\end{figure}

\subsection{Forecasting Performance}
\label{section:forecast_perf}

We begin by analyzing forecasting performance separately for ensembles without and with leak rate variation, namely EN-MFESN-RP and EN-MFESN-$\alpha$RP ensembles. Our evaluation metric is relative mean-squared forecasting error (MSFE) with respect to AR(1).
For each ensemble, we report the empirical cumulative distribution function (ECDF) of relative MSE across all MFESN experts in the ensemble, together with vertical lines indicating the median of the relative errors, and the relative MSFEs of the baseline models and combination schemes. 

Figure~\ref{fig:plot_ecdf_msfe_medium_seed} presents the results for EN-MFESN-RP ensembles and reveals substantial sensitivity to the random draw of reservoir state coefficients. In particular, individual S-MFESN A/B and M-MFESN A/B model draws taken from \citealt{ballarin2022reservoir} display heterogeneous performance: While B-type single- and multi-reservoir MFESNs perform favorably, A-type models often exhibit higher forecasting errors, in some cases exceeding the median relative MSE with respect to the AR(1) benchmark. 
Simple averaging and constant Hedge follow each other closely across panels, primarily due to the choice of a small learning rate.
In contrast, both FTL and AdaHedge deliver consistent, sizable, and robust performance gains. It is important to notice that Follow-the-Leader often to models in the extreme left tail of the ECDF, indicating near-optimal performance within the ensemble.
Although FTL dominates AdaHedge throughout, the difference is higher for S-MFESN models than for  M-MFESNs, consistent with their greater flexibility and baseline accuracy of the latter.

The results for EN-MFESN-$\alpha$RP ensembles, displayed in Figure~\ref{fig:plot_ecdf_msfe_medium_seed_leak}, broadly align with those obtained with EN-MFESN-RP ensembles, while exhibiting further performance gains due to the increased model richness induced by leak-rate variation. 
Once again, one can see that individual MFESN draws depend significantly on properties of one incidental sampling instance: For example, M-MFESN B has relative MSFE below the median, whereas S-MFESN A remains noticeably above it. Importantly, the left tail of the ECDF changes substantially in some cases (see, e.g., Figure~\ref{fig:plot_ecdf_msfe_medium_seed_leak}(c)), indicating that high-performing models (previously unavailable in EN-MFESN-RP) may emerge when leak rate $\alpha$ varies. A further distinction is that no single aggregation method dominates across all EN-MFESN-$\alpha$RP ensembles: FTL and AdaHedge alternate in delivering superior results depending on the MFESN type, as illustrated by the contrast between panels (a)–(c) and (b)–(d) in Figure~\ref{fig:plot_ecdf_msfe_medium_seed_leak}. This suggests that introducing leak-rate variability fundamentally changes ensemble composition and, in turn, the relative performance of combination schemes.

\begin{table}[t!]
	\centering
	\setlength\tabcolsep{0pt}
	\setlength\extrarowheight{1pt}
	\linespread{1.2}\selectfont\centering
	\begin{tabular*}{\textwidth}{@{\extracolsep{\fill}}*{9}{c}}
		& & & \multicolumn{5}{c}{Ensemble} \\
		\cline{4-9}
		Model & Baseline & Median & Average & RollMSE & FTL & Hedge & DecHedge & AdaHedge \\
		\toprule
		Mean & 1.000 & -- & -- & -- & -- & -- & -- & -- \\
		AR(1) & 0.758 & -- & -- & -- & -- & -- & -- & -- \\
		DFM A & 0.835 & -- & -- & -- & -- & -- & -- & -- \\
		DFM B & 1.093 & -- & -- & -- & -- & -- & -- & -- \\[2pt]
		\multicolumn{9}{l}{\footnotesize%
			EN-MFESN-RP (random parameters resampling)~ 
			\raisedrule[0.2em]{0.2pt}} \\[2pt]
		S-MFESN A & 0.970 & 0.940 & 0.921 & 0.917 & \bf 0.661 & 0.844 & 0.838  & \ul{ 0.714 } \\[-6pt]
		& -- & \small \textit{-3.04\%} & \small \textit{-5.07\%} & \small \textit{-5.40\%} & \small \bf \textit{-31.85\%} & \small \textit{-13.00\%} & \small \textit{-13.54\%} & \small \ul{ \textit{-26.34\%} } \\
		S-MFESN B & 0.832 & 0.888 & 0.880 & 0.878 & \bf 0.616 & 0.831 & 0.812 & \ul{ 0.676 } \\[-6pt]
		& -- & \small \textit{+6.68\%} & \small \textit{+5.74\%} & \small \textit{+5.44\%} & \small \bf \textit{-26.01\%} & \small \textit{-0.21\%} & \small \textit{-2.40\%} & \small \ul{ \textit{-18.76\%} } \\
		M-MFESN A & 0.903 & 0.897 & 0.891 & 0.890 & \bf 0.736 & 0.872 & 0.870 & \ul{ 0.756 } \\[-6pt]
		& -- & \small \textit{-0.58\%} & \small \textit{-1.28\%} & \small \textit{-1.42\%} & \small \bf \textit{-18.43\%} & \small \textit{-3.41\%} & \small \textit{-3.57\%} & \small \ul{ \textit{-16.20\%} } \\
		M-MFESN B & 0.683 & 0.737 & 0.731 & 0.727 & \bf 0.588 & 0.689 & 0.682 & \ul{ 0.599 } \\[-6pt]
		& -- & \small \textit{+7.96\%} & \small \textit{+7.00\%} & \small \textit{+6.49\%} & \small \bf \textit{-13.95\%} & \small \textit{+0.91\%} & \small \textit{-0.13\%} & \small \ul{ \textit{-12.35\%} } \\
		\multicolumn{9}{l}{\footnotesize%
			EN-MFESN-$\alpha$RP (random parameters resampling \& varying leak rates)~ \raisedrule[0.2em]{0.2pt}} \\[2pt]
		S-MFESN A & 0.970 & 0.897 & 0.839 & 0.822 & 0.676 & 0.665 & \ul{ 0.631 } & \bf 0.553 \\[-6pt]
		& -- & \small \textit{-7.46\%} & \small \textit{-13.50\%} & \small \textit{-15.23\%} & \small \textit{-30.34\%} & \small \textit{-31.42\%} & \small \ul{ \textit{-34.94\%} } & \small \bf \textit{-42.95\%} \\
		S-MFESN B & 0.832 & 0.837 & 0.783 & 0.767 & \bf 0.488 & 0.640 & 0.617 & \ul{ 0.507 } \\[-6pt]
		& -- & \small \textit{+0.56\%} & \small \textit{-5.91\%} & \small \textit{-7.84\%} & \small \bf \textit{-41.41\%} & \small \textit{-23.10\%} & \small \textit{-25.83\%} & \small \ul{ \textit{-39.08\%} } \\
		M-MFESN A & 0.903 & 0.902 & 0.876 & 0.868 & \ul{ 0.634 } & 0.742 & 0.707 & \bf 0.601 \\[-6pt]
		& -- & \small \textit{-0.10\%} & \small \textit{-3.01\%} & \small \textit{-3.90\%} & \small \ul{ \textit{-29.81\%} } & \small \textit{-17.84\%} & \small \textit{-21.71\%} & \small \bf \textit{-33.44\%} \\
		M-MFESN B & 0.683 & 0.794 & 0.761 & 0.745 & \bf 0.481 & 0.647 & 0.626 & \ul{ 0.529 } \\[-6pt]
		& -- & \small \textit{+16.32\%} & \small \textit{+11.42\%} & \small \textit{+9.02\%} & \small \bf \textit{-29.60\%} & \small \textit{-5.20\%} & \small \textit{-8.37\%} & \small \ul{ \textit{-22.47\%} } \\
		\bottomrule
	\end{tabular*}
	\caption{%
		Relative MSFE of quarterly U.S. GDP growth predictions with respect to the in-sample mean. Baselines are benchmarks and models in \cite{ballarin2022reservoir}. 
		Ensemble size is $K=1000$ for each MFESN specification.
		Performance changes in percentage with respect to baseline are shown in italic. Best performing combinations are displayed in bold.
	}
	\label{table:msfe_medium_results}
\end{table}

We summarize our results in Table~\ref{table:msfe_medium_results}, where benchmark forecasts are included for reference. For both the median ensemble error and each combination method, we report the relative MSFE (with respect to the in-sample mean) together with percentage improvements relative to the original MFESN model instance used in \cite{ballarin2022reservoir}.
Overall, both FTL and AdaHedge are robust in achieving significant gains over the original MFESNs, with FTL appearing as the most effective scheme. These results  agree well with Section~\ref{subsection:what_is_best?} which drew on the theoretical and simulation-based literature on online learning.
In the EN-MFESN-RP setting, online ensemble combination yields forecasting error reduction up to 31\% within a given MFESN class; for the best model type, M-MFESN B, FTL achieves reductions exceeding 41\% and 32\% MSFE over the in-sample mean and AR(1), respectively. 
Even higher gains arise in the EN-MFESN-$\alpha$RP setting. For S-MFESN A models, AdaHedge reduces MFSE by more than 42\%, while for M-MFESN B, FTL achieves MSFE reductions exceeding 51\% and 45\% relative to the in-sample mean and AR(1) benchmarks, respectively. We discuss the weight evolution of ensembles for FTL amd AdaHedge algorithms in Appendix~\ref{appendix:Combination Weights Dynamics}.

\subsection{Hyperparameter Sensitivity}
\label{section:hyperpars_sensitivity}

In this subsection we examine the effectiveness of different model groups at a more granular level than the aggregate analysis in Section~\ref{section:forecast_perf}. For the $\alpha$RP ensembles, the total budget of $K=1000$ models is partitioned into subsets of $200$ MFESN specifications, each corresponding to a distinct leak rate. This allows us to explore whether there is substantial variability in forecasting performance across these subsets.

Figure~\ref{fig:plot_ecdf_by_leak_msfe_medium_seed_leak} displays the overlaid ECDFs of EN-MFESN-$\alpha$RP ensemble models grouped by leak rate. The plots reveal pronounced variability across both the leak rates $\alpha$ and model types. 
For S-MFESNs (types A and B), a leak rate of $\alpha = 0.7$ consistently dominates, whereas both low and high values ($\alpha = 0.1$ and $\alpha = 0.9$) perform considerably worse than the ensemble as a whole.
The behavior of M-MFESN ensembles is more nuanced: For both A- and B-type models, $\alpha=0.9$ produces the worst results by a wide margin, while low leak rates ($\alpha \in \{0.1, 0.3\}$) degrade the high-performing  experts and shrink the upper tail, compressing the distribution around the median. In contrast, $\alpha = 0.5$ and $\alpha = 0.7$ yield the best overall results. The first value uniformly dominates the total $\alpha$RP ECDF, while the second  produces a pronounced rise up to approximately the 3$^\text{rd}$ decile, indicating that a substantial fraction of models populate the lower MSFE tail of the error distribution with this leak rate value.

\begin{figure}[t]
	\centering
	\begin{subfigure}[b]{0.45\textwidth}
		\includegraphics[width=\textwidth]{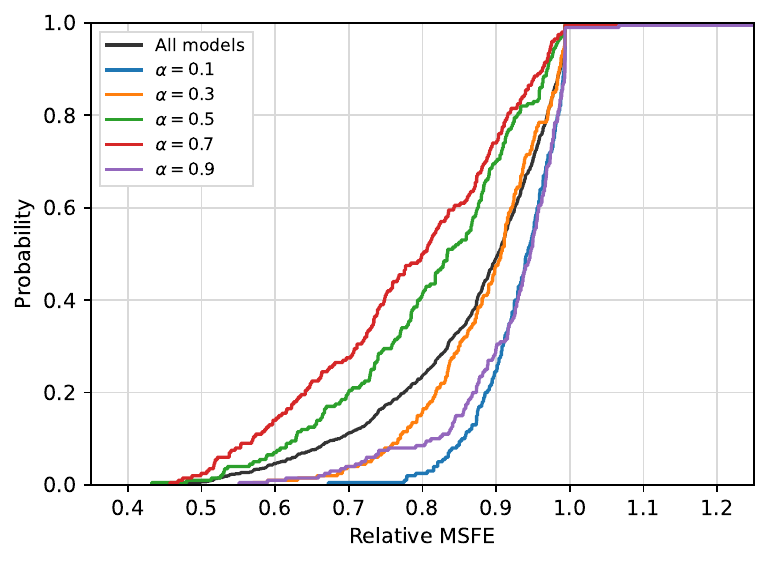}
		\caption{S-MFESN A}
	\end{subfigure}
	\hfill
	\begin{subfigure}[b]{0.45\textwidth}
		\includegraphics[width=\textwidth]{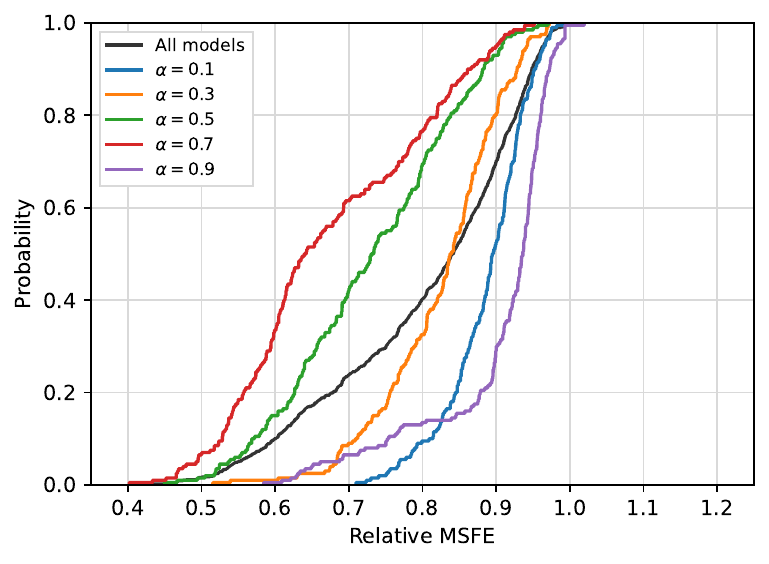}
		\caption{S-MFESN B}
	\end{subfigure}
	
	\vspace{0.5cm} %
	
	\begin{subfigure}[b]{0.45\textwidth}
		\includegraphics[width=\textwidth]{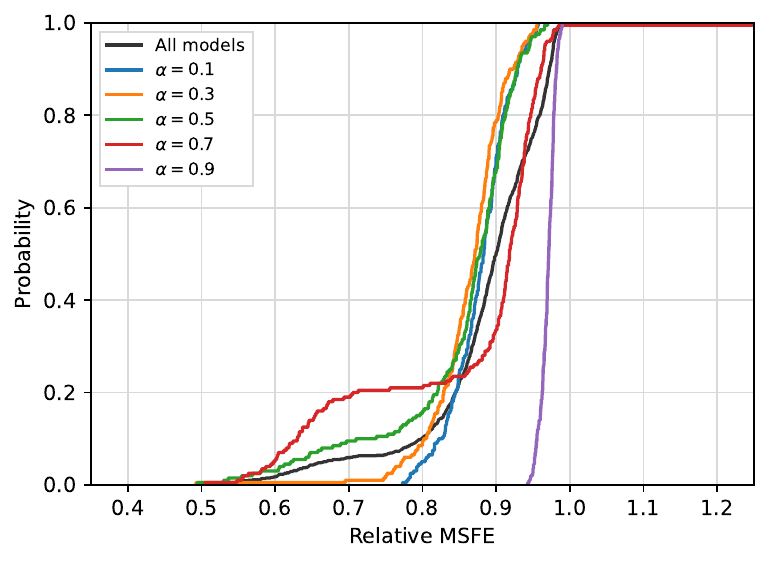}
		\caption{M-MFESN A}
	\end{subfigure}
	\hfill
	\begin{subfigure}[b]{0.45\textwidth}
		\includegraphics[width=\textwidth]{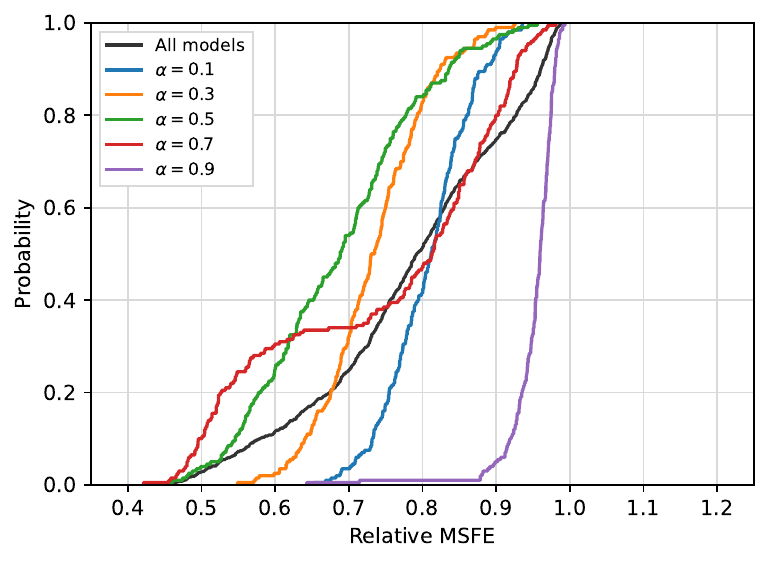}
		\caption{M-MFESN B}
	\end{subfigure}
	\caption{EN-MFESN-$\alpha$RP ensemble: MSFE relative to AR(1). Plots provide the MSFE ECDF of the entire ensemble (black curve) and models grouped by leak rate $\alpha$ (colored).}
	\label{fig:plot_ecdf_by_leak_msfe_medium_seed_leak}
\end{figure}

These findings suggest revisiting the leak-rate choices in \cite{ballarin2022reservoir}. Across MFESN classes, the most effective leak rates differ substantially from the original specifications. For instance,  baseline S-MFESNs are constructed with $\alpha = 0.1$ (Table~\ref{tab:model_list}), while Figure~\ref{fig:plot_ecdf_by_leak_msfe_medium_seed_leak}(a)-(b) suggest the values around $0.7$ are a uniformly more robust choice. 
This insight, however, comes at a cost. Attempting to study multiple hyperparameters with a comparable level of precision, such as distinct monthly and daily frequency reservoir leak rates in M-MFESNs, would require the ensemble budget to grow exponentially. 
Model combination therefore offers a parsimonious and adaptive alternative, relying on a fixed-budget set of models rather than the construction of a single optimal specification. This adaptivity is particularly valuable for MFESNs, where randomness induces persistent model heterogeneity that cannot be corrected ex-post, a feature shared by many modern neural network architectures.

\section{Conclusion}
\label{section:conclusion}

The ensemble combination approach is a well-studied framework that enables a forecaster to incorporate the predictions of a pool of models into a single, adaptive forecast. However, these techniques remain under-explored in the fields of applied statistics and econometrics.

In this setting, our work makes contributions in both theoretical and empirical directions. First, we provide concentration bounds for stochastic ensemble losses under i.i.d.~and $\varphi$-mixing assumptions, exploiting Hoeffding- and Bernstein-type results. Then, we apply these results to Follow-the-Leader and decreasing Hedge methods, for which we derive finite regret bounds. 
We then consider an application where model combination allows for robust and substantial performance improvements against previous state-of-the-art results: Macroeconomic forecasting with mixed-frequency data using multi-frequency ESN models.
We find that the ensemble approach is extremely effective at sharpening the performance of all MFESN models, with Follow-the-Leader and AdaHedge schemes being the overall most effective schemes. Our empirical analysis also reveals that combinations can efficiently sidestep the hyperparameter tuning, which for (MF)ESN models is highly non-trivial.

\bibliography{GOLibrary.bib}

\begin{thebibliography}{}

\bibitem[Alquier et~al., 2013]{alquierPredictionTimeSeries2013}
Alquier, P., Li, X., and Wintenberger, O. (2013).
\newblock {Prediction of time series by statistical learning: General losses
  and fast rates}.
\newblock {\em Dependence Modeling}, 1(2013):65--93.

\bibitem[Athey, 2019]{Athey2019}
Athey, S. (2019).
\newblock {The Impact of Machine Learning on Economics}.
\newblock In {\em The Economics of Artificial Intelligence: An Agenda}, pages
  507--547. University of Chicago Press.

\bibitem[Athey and Imbens, 2019]{atheyMachineLearningMethods2019}
Athey, S. and Imbens, G.~W. (2019).
\newblock {Machine learning methods that economists should know about}.
\newblock {\em Annual Review of Economics}, 11(1):685--725.

\bibitem[Auer and Chiang, 2016]{auer2016anAlgorithm}
Auer, P. and Chiang, C.-K. (2016).
\newblock {An algorithm with nearly optimal pseudo-regret for both stochastic
  and adversarial bandits}.
\newblock In {\em PMLR}, pages 116--120. PMLR.

\bibitem[Bai and Ng, 2008]{baiForecastingEconomicTime2008}
Bai, J. and Ng, S. (2008).
\newblock {Forecasting Economic Time Series Using Targeted Predictors}.
\newblock {\em Journal of Econometrics}, 146(2):304--317.

\bibitem[Ballarin et~al., 2024a]{ballarin2022reservoir}
Ballarin, G., Dellaportas, P., Grigoryeva, L., Hirt, M., van Huellen, S., and
  Ortega, J.-P. (2024a).
\newblock {Reservoir Computing for macroeconomic forecasting with mixed
  frequency data}.
\newblock {\em International Journal of Forecasting}, 40:1206--1237.

\bibitem[Ballarin et~al., 2024b]{RC23}
Ballarin, G., Grigoryeva, L., and Ortega, J.-P. (2024b).
\newblock {Memory of recurrent networks: Do we compute it right?}
\newblock {\em Journal of Machine Learning Research}, 25(243):1--38.

\bibitem[Barigozzi and Hallin, 2024]{barigozziDynamicFactorModels2024}
Barigozzi, M. and Hallin, M. (2024).
\newblock {Dynamic Factor Models: A Genealogy}.
\newblock In {Ngoc Thach}, N., Trung, N.~D., Ha, D.~T., and Kreinovich, V.,
  editors, {\em Partial Identification in Econometrics and Related Topics},
  pages 3--24. Springer Nature Switzerland, Cham.

\bibitem[Bates and Granger, 1969]{Bates1969}
Bates, J.~M. and Granger, C. (1969).
\newblock {The combination of forecasts}.
\newblock {\em Operations Research}, 20(4):451--468.

\bibitem[Bergmeir and Benitez, 2012]{bergmeir2012use}
Bergmeir, C. and Benitez, J.~M. (2012).
\newblock {On the use of cross-validation for time series predictor
  evaluation}.
\newblock {\em Information Sciences}, 191:192--213.

\bibitem[Bergmeir et~al., 2018]{bergmeir2018}
Bergmeir, C., Hyndman, R.~J., and Koo, B. (2018).
\newblock {A note on the validity of cross-validation for evaluating
  autoregressive time series prediction}.
\newblock {\em Computational Statistics \& Data Analysis}, 120:70--83.

\bibitem[Berlinet and Thomas-Agnan, 2011]{berlinet2011reproducing}
Berlinet, A. and Thomas-Agnan, C. (2011).
\newblock {\em {Reproducing Kernel Hilbert Spaces in Probability and
  Statistics}}.
\newblock Springer Science \& Business Media.

\bibitem[Bolager et~al., 2023]{bolagerSamplingWeightsDeep2023}
Bolager, E.~L., Burak, I., Datar, C., Sun, Q., and Dietrich, F. (2023).
\newblock {Sampling weights of deep neural networks}.
\newblock {\em Advances in Neural Information Processing Systems},
  36:63075--63116.

\bibitem[Bolhuis and Rayner, 2020]{bolhuisDeusExMachina2020}
Bolhuis, M.~A. and Rayner, B. (2020).
\newblock {Deus Ex Machina? A framework for macro forecasting with machine
  learning}.
\newblock {\em IMF Working Papers}, 2020(045).

\bibitem[Boucheron et~al., 2013]{Boucheron2013}
Boucheron, S., Lugosi, G., and Massart, P. (2013).
\newblock {\em {Concentration Inequalities: A Nonasymptotic Theory of
  Independence}}.
\newblock Oxford University Press.

\bibitem[Boyd and Vandenberghe, 2004]{boyd2004convex}
Boyd, S.~P. and Vandenberghe, L. (2004).
\newblock {\em {Convex Optimization, Teil 1}}.
\newblock Cambridge University Press, Cambridge, England, UK.

\bibitem[Brockwell and Davis, 2006]{BrocDavisYellowBook}
Brockwell, P.~J. and Davis, R.~A. (2006).
\newblock {\em {Time Series: Theory and Methods}}.
\newblock Springer-Verlag.

\bibitem[Cesa-Bianchi et~al., 1997]{cesa1997use}
Cesa-Bianchi, N., Freund, Y., Haussler, D., Helmbold, D.~P., Schapire, R.~E.,
  and Warmuth, M.~K. (1997).
\newblock {How to use expert advice}.
\newblock {\em Journal of the ACM (JACM)}, 44(3):427--485.

\bibitem[Cesa-Bianchi and Lugosi, 2006]{Cesa-Bianchi2006}
Cesa-Bianchi, N. and Lugosi, G. (2006).
\newblock {\em {Prediction, Learning, and Games}}.
\newblock Cambridge University Press.

\bibitem[Cheng and Hansen, 2015]{chengForecastingFactoraugmentedRegression2015}
Cheng, X. and Hansen, B.~E. (2015).
\newblock {Forecasting with factor-augmented regression: A frequentist model
  averaging approach}.
\newblock {\em Journal of Econometrics}, 186(2):280--293.

\bibitem[Chernov and Zhdanov, 2010]{chernovPredictionExpertAdvice2010}
Chernov, A. and Zhdanov, F. (2010).
\newblock {Prediction with Expert Advice under Discounted Loss}.
\newblock In Hutchison, D., Kanade, T., Kittler, J., Kleinberg, J.~M., Mattern,
  F., Mitchell, J.~C., Naor, M., Nierstrasz, O., {Pandu Rangan}, C., Steffen,
  B., Sudan, M., Terzopoulos, D., Tygar, D., Vardi, M.~Y., Weikum, G., Hutter,
  M., Stephan, F., Vovk, V., and Zeugmann, T., editors, {\em Algorithmic
  Learning Theory}, volume 6331, pages 255--269. Springer Berlin Heidelberg,
  Berlin, Heidelberg.

\bibitem[Christmann and Steinwart, 2008]{Steinwart2008}
Christmann, A. and Steinwart, I. (2008).
\newblock {\em {Support Vector Machines}}.
\newblock Springer.

\bibitem[Clemen, 1989]{clemen1989combining}
Clemen, R.~T. (1989).
\newblock {Combining forecasts: A review and annotated bibliography}.
\newblock {\em International journal of forecasting}, 5(4):559--583.

\bibitem[de~Rooij et~al., 2014]{rooijFollowLeaderIf2014}
de~Rooij, S., van Erven, T., Gr{\"{u}}nwald, P.~D., and Koolen, W.~M. (2014).
\newblock {Follow the Leader if you can, Hedge if you must}.
\newblock {\em Journal of Machine Learning Research}, 15(37):1281--1316.

\bibitem[Dedecker et~al., 2007]{Dedecker2007a}
Dedecker, J., Doukhan, P., Lang, G., Le{\'{o}}n, J.~R., Louhichi, S., and
  Prieur, C. (2007).
\newblock {\em {Weak Dependence: With Examples and Applications}}.
\newblock Springer Science+Business Media.

\bibitem[den Reijer and Johansson, 2019]{DenReijer2019}
den Reijer, A. and Johansson, A. (2019).
\newblock {Nowcasting Swedish GDP with a large and unbalanced data set}.
\newblock {\em Empirical Economics}, 57(4):1351--1373.

\bibitem[Doz and Fuleky, 2020]{Fuleky2020}
Doz, C. and Fuleky, P. (2020).
\newblock {\em {Dynamic Factor Models}}.
\newblock Springer International Publishing.

\bibitem[Foster, 1991]{foster1991prediction}
Foster, D.~P. (1991).
\newblock {Prediction in the worst case}.
\newblock {\em The Annals of Statistics}, pages 1084--1090.

\bibitem[Foster and Vohra, 1993]{foster1993randomization}
Foster, D.~P. and Vohra, R.~V. (1993).
\newblock {A randomization rule for selecting forecasts}.
\newblock {\em Operations Research}, 41(4):704--709.

\bibitem[Frankle and Carbin, 2018]{frankleLotteryTicketHypothesis2018}
Frankle, J. and Carbin, M. (2018).
\newblock {The lottery ticket hypothesis: Finding sparse, trainable neural
  networks}.
\newblock In {\em International Conference on Learning Representations (ICLR)}.

\bibitem[Freund and Schapire, 1997]{Freund1997}
Freund, Y. and Schapire, R.~E. (1997).
\newblock {A decision-theoretic generalization of online learning and an
  application to boosting}.
\newblock {\em Journal of Computer and System Sciences}, 55(1):119--139.

\bibitem[Freund and Schapire, 1999]{Freund1999}
Freund, Y. and Schapire, R.~E. (1999).
\newblock {Adaptive game playing using multiplicative weights}.
\newblock {\em Games and Economic Behavior}, 29:79--103.

\bibitem[Fuleky, 2020]{fulekyMacroeconomicForecastingEra2020}
Fuleky, P., editor (2020).
\newblock {\em {Macroeconomic forecasting in the era of big data: Theory and
  practice}}, volume~52 of {\em Advanced {{Studies}} in {{Theoretical}} and
  {{Applied Econometrics}}}.
\newblock Springer International Publishing, Cham.

\bibitem[Gasparin and Ramdas, 2025]{gasparinConformalOnlineModel2025}
Gasparin, M. and Ramdas, A. (2025).
\newblock {Conformal Online Model Aggregation}.

\bibitem[Geweke, 1977]{geweke1977dynamic}
Geweke, J. (1977).
\newblock {The dynamic factor analysis of economic time series}.
\newblock {\em Latent variables in socio-economic models}.

\bibitem[Ghysels et~al., 2007]{Ghysels2007}
Ghysels, E., Sinko, A., and Valkanov, R. (2007).
\newblock {MIDAS regressions: Further results and new directions}.
\newblock {\em Econometric Reviews}, 26(1):53--90.

\bibitem[Gonon et~al., 2020a]{RC15}
Gonon, L., Grigoryeva, L., and Ortega, J.-P. (2020a).
\newblock {Memory and forecasting capacities of nonlinear recurrent networks}.
\newblock {\em Physica D}, 414(132721):1--13.

\bibitem[Gonon et~al., 2020b]{RC10}
Gonon, L., Grigoryeva, L., and Ortega, J.-P. (2020b).
\newblock {Risk bounds for reservoir computing}.
\newblock {\em Journal of Machine Learning Research}, 21(240):1--61.

\bibitem[Gonon et~al., 2023]{RC12}
Gonon, L., Grigoryeva, L., and Ortega, J.-P. (2023).
\newblock {Approximation bounds for random neural networks and reservoir
  systems}.
\newblock {\em The Annals of Applied Probability}, 33(1):28--69.

\bibitem[Gonon and Ortega, 2020]{RC8}
Gonon, L. and Ortega, J.-P. (2020).
\newblock {Reservoir computing universality with stochastic inputs}.
\newblock {\em IEEE Transactions on Neural Networks and Learning Systems},
  31(1):100--112.

\bibitem[Granger, 1989]{grangerInvitedReviewCombining1989}
Granger, C. W.~J. (1989).
\newblock {Invited review combining forecasts -- twenty years later}.
\newblock {\em Journal of Forecasting}, 8(3):167--173.

\bibitem[Grigoryeva and Ortega, 2018a]{RC7}
Grigoryeva, L. and Ortega, J.-P. (2018a).
\newblock {Echo state networks are universal}.
\newblock {\em Neural Networks}, 108:495--508.

\bibitem[Grigoryeva and Ortega, 2018b]{RC6}
Grigoryeva, L. and Ortega, J.-P. (2018b).
\newblock {Universal discrete-time reservoir computers with stochastic inputs
  and linear readouts using non-homogeneous state-affine systems}.
\newblock {\em Journal of Machine Learning Research}, 19(24):1--40.

\bibitem[Hang, 2015]{hangStatisticalLearningKernelbased2015}
Hang, H. (2015).
\newblock {\em {Statistical Learning of Kernel-Based Methods for Non-i.i.d.
  Observations}}.
\newblock PhD thesis, University of Stuttgart.

\bibitem[Hang and Steinwart, 2017]{hangBernsteintypeInequalityMixing2017}
Hang, H. and Steinwart, I. (2017).
\newblock {A Bernstein-type inequality for some mixing processes and dynamical
  systems with an application to learning}.
\newblock {\em The Annals of Statistics}, 45(2):708--743.

\bibitem[Hansen, 2008]{hansenLeastsquaresForecastAveraging2008}
Hansen, B.~E. (2008).
\newblock {Least-squares forecast averaging}.
\newblock {\em Journal of Econometrics}, 146(2):342--350.

\bibitem[Hyndman and Athanasopoulos, 2018]{Hyndman2013}
Hyndman, R.~J. and Athanasopoulos, G. (2018).
\newblock {\em {Forecasting: Principles and Practice}}.
\newblock OTexts.

\bibitem[Ito et~al., 2024]{itoBestofBestWorlds2024}
Ito, S., Tsuchiya, T., and Honda, J. (2024).
\newblock {Adaptive learning rate for Follow-the-Regularized-Leader:
  Competitive analysis and best-of-both-worlds}.
\newblock {\em Proceedings of Machine Learning Research}, 247.

\bibitem[Jaeger, 2010]{jaeger2001}
Jaeger, H. (2010).
\newblock {The `echo state' approach to analysing and training recurrent neural
  networks with an erratum note}.
\newblock Technical report, German National Research Center for Information
  Technology.

\bibitem[Kim and Swanson, 2018]{kimMiningBigData2018}
Kim, H.~H. and Swanson, N.~R. (2018).
\newblock {Mining big data using parsimonious factor, machine learning,
  variable selection and shrinkage methods}.
\newblock {\em International Journal of Forecasting}, 34(2):339--354.

\bibitem[Koolen et~al., 2014]{koolenLearningLearningRate2014}
Koolen, W.~M., van Erven, T., and Gr{\"{u}}nwald, P. (2014).
\newblock {Learning the learning rate for prediction with expert advice}.
\newblock {\em Advances in Neural Information Processing Systems}, 27.

\bibitem[Li and Racine, 2009]{Li2009}
Li, Q. and Racine, J. (2009).
\newblock {\em {Nonparametric Econometric Methods}}.
\newblock Number~25 in Advances in Econometrics. Emerald, Bingley, 1. ed
  edition.

\bibitem[Littlestone and Warmuth, 1994]{Littlestone1994}
Littlestone, N. and Warmuth, M.~K. (1994).
\newblock {The weighted majority algorithm}.
\newblock {\em Information and computation}, 108(2):212--261.

\bibitem[Luko{\v{s}}evi{\v{c}}ius and Jaeger, 2009]{lukosevicius}
Luko{\v{s}}evi{\v{c}}ius, M. and Jaeger, H. (2009).
\newblock {Reservoir computing approaches to recurrent neural network
  training}.
\newblock {\em Computer Science Review}, 3(3):127--149.

\bibitem[Luo and Schapire, 2015]{luoAchievingAllNo2015}
Luo, H. and Schapire, R.~E. (2015).
\newblock {Achieving all with no parameters: AdaNormalHedge}.
\newblock In {\em Proceedings of The 28th Conference on Learning Theory}, pages
  1286--1304. PMLR.

\bibitem[Ma et~al., 2021]{maSanityChecksLottery2021}
Ma, X., Yuan, G., Shen, X., Chen, T., Chen, X., Chen, X., Liu, N., Qin, M.,
  Liu, S., Wang, Z., and Wang, Y. (2021).
\newblock {Sanity checks for lottery tickets: Does your winning ticket really
  win the jackpot?}
\newblock In {\em Advances in Neural Information Processing Systems},
  volume~34, pages 12749--12760. Curran Associates, Inc.

\bibitem[Maass, 2011]{maass2}
Maass, W. (2011).
\newblock {Liquid state machines: Motivation, theory, and applications}.
\newblock In {Barry Cooper}, S.~S. and Sorbi, A., editors, {\em Computability
  In Context: Computation and Logic in the Real World}, chapter~8, pages
  275--296. World Scientific.

\bibitem[Malach et~al., 2020]{malachProvingLotteryTicket2020}
Malach, E., Yehudai, G., Shalev-Schwartz, S., and Shamir, O. (2020).
\newblock {Proving the lottery ticket hypothesis: Pruning is all you need}.
\newblock In {\em Proceedings of the 37th International Conference of Machine
  Learning}, pages 6682--6691. PMLR.

\bibitem[Masini et~al., 2023]{masiniMachineLearningAdvances2023}
Masini, R.~P., Medeiros, M.~C., and Mendes, E.~F. (2023).
\newblock {Machine learning advances for time series forecasting}.
\newblock {\em Journal of Economic Surveys}, 37(1):76--111.

\bibitem[McCracken and Ng, 2016]{McCracken2016}
McCracken, M.~W. and Ng, S. (2016).
\newblock {FRED-MD: A monthly database for macroeconomic research}.
\newblock {\em Journal of Business \& Economic Statistics}, 34(4):574--589.

\bibitem[McCracken and Ng, 2020]{McCracken2021}
McCracken, M.~W. and Ng, S. (2020).
\newblock {FRED-QD: A quarterly database for macroeconomic research}.
\newblock 103(1).

\bibitem[Mourtada and Ga{\"{i}}ffas,
  2019]{mourtadaOptimalityHedgeAlgorithm2019}
Mourtada, J. and Ga{\"{i}}ffas, S. (2019).
\newblock {On the optimality of the Hedge algorithm in the stochastic regime}.
\newblock {\em Journal of Machine Learning Research}, 20(83):1--28.

\bibitem[Rio, 2017]{rioAsymptoticTheoryWeakly2017}
Rio, E. (2017).
\newblock {\em {Asymptotic Theory of Weakly Dependent Random Processes}},
  volume~80 of {\em Probability {{Theory}} and {{Stochastic Modelling}}}.
\newblock Springer Berlin Heidelberg, Berlin, Heidelberg.

\bibitem[Samson, 2000]{samsonConcentrationMeasureInequalities2000}
Samson, P.-M. (2000).
\newblock {Concentration of measure inequalities for Markov chains and
  $\phi$-mixing processes}.
\newblock {\em The Annals of Probability}, 28(1):416--461.

\bibitem[Sargent and Sims, 1977]{sargent1977business}
Sargent, T.~J. and Sims, C.~A. (1977).
\newblock {Business cycle modeling without pretending to have too much a priori
  economic theory}.
\newblock {\em New methods in business cycle research}, 1:145--168.

\bibitem[Sch{\"{o}}lkopf et~al., 2002]{scholkopf2002learning}
Sch{\"{o}}lkopf, B., Smola, A.~J., Bach, F., and Others (2002).
\newblock {\em {Learning with kernels: support vector machines, regularization,
  optimization, and beyond}}.
\newblock MIT press.

\bibitem[Shalev-Shwartz, 2012]{shalev_online}
Shalev-Shwartz, S. (2012).
\newblock {Online learning and online convex optimization}.
\newblock {\em Found. Trends Mach. Learn.}, 4(2):107--194.

\bibitem[Sreenivasan et~al., 2022]{sreenivasanRareGemsFinding2022}
Sreenivasan, K., Sohn, J.-y., Yang, L., Grinde, M., Nagle, A., Wang, H., Xing,
  E., Lee, K., and Papailiopoulos, D. (2022).
\newblock {Rare gems: Finding lottery tickets at initialization}.
\newblock {\em Advances in Neural Information Processing Systems},
  35:14529--14540.

\bibitem[Stock and Watson, 1996]{Stock1996}
Stock, J.~H. and Watson, M.~W. (1996).
\newblock {Evidence on structural instability in macroeconomic time series
  relations}.
\newblock {\em Journal of Business \& Economic Statistics}, 14(1):11--30.

\bibitem[Stock and Watson, 2002]{Stock2002}
Stock, J.~H. and Watson, M.~W. (2002).
\newblock {Macroeconomic forecasting using diffusion indexes}.
\newblock {\em Journal of Business \& Economic Statistics}, 20(2):147--162.

\bibitem[Stock and Watson, 2016]{stock2016dynamic}
Stock, J.~H. and Watson, M.~W. (2016).
\newblock {Dynamic factor models, factor-augmented vector autoregressions, and
  structural vector autoregressions in macroeconomics}.
\newblock In {\em Handbook of macroeconomics}, volume~2, pages 415--525.
  Elsevier.

\bibitem[Stone, 1982]{stoneOptimalGlobalRates1982}
Stone, C.~J. (1982).
\newblock {Optimal global rates of convergence for nonparametric regression}.
\newblock {\em The Annals of Statistics}, 10(4).

\bibitem[Timmermann, 2006]{timmermann2006forecast}
Timmermann, A. (2006).
\newblock {Chapter 4 Forecast Combinations}.
\newblock In {\em Handbook of Economic Forecasting}, volume~1, pages 135--196.
  Elsevier, Walthm, MA, USA.

\bibitem[Tsybakov, 2009]{Tsybakov2009}
Tsybakov, A.~B. (2009).
\newblock {\em {Introduction to Nonparametric Estimation}}.
\newblock Springer Series in Statistics. Springer, New York ; London.

\bibitem[van Erven et~al., 2011]{vanervenAdaptiveHedge2011}
van Erven, T., Koolen, W.~M., de~Rooij, S., and Gr{\"{u}}nwald, P. (2011).
\newblock {Adaptive Hedge}.
\newblock In {\em Advances in Neural Information Processing Systems},
  volume~24. Curran Associates, Inc.

\bibitem[Vovk, 1995]{vovk1995game}
Vovk, V. (1995).
\newblock {A game of prediction with expert advice}.
\newblock In {\em Proceedings of the eighth annual conference on Computational
  learning theory}, pages 51--60.

\bibitem[Wallis, 2011]{wallisCombiningForecastsForty2011}
Wallis, K.~F. (2011).
\newblock {Combining forecasts -- forty years later}.
\newblock {\em Applied Financial Economics}, 21(1-2):33--41.

\bibitem[Wang et~al., 2023]{WANG20231518}
Wang, X., Hyndman, R.~J., Li, F., and Kang, Y. (2023).
\newblock {Forecast combinations: An over 50-year review}.
\newblock {\em International Journal of Forecasting}, 39(4):1518--1547.

\bibitem[Winkler and Makridakis, 1983]{Winkler1983}
Winkler, R.~L. and Makridakis, S. (1983).
\newblock {The Combination of Forecasts}.
\newblock {\em Journal of the Royal Statistical Society. Series A (General)},
  146(2):150--157.

\bibitem[Wintenberger, 2017]{wintenbergerOptimalLearningBernstein2017}
Wintenberger, O. (2017).
\newblock {Optimal learning with Bernstein online aggregation}.
\newblock {\em Machine Learning}, 106(1):119--141.

\bibitem[Wintenberger, 2024]{wintenbergerStochasticOnlineConvex2024}
Wintenberger, O. (2024).
\newblock {Stochastic online convex optimization. Application to probabilistic
  time series forecasting}.
\newblock {\em Electronic Journal of Statistics}, 18(1):429--464.

\bibitem[Zhao et~al., 2022]{zhaoZerOInitializationInitializing2022}
Zhao, J., Schaefer, F.~T., and Anandkumar, A. (2022).
\newblock {ZerO initialization: Initializing neural networks with only zeros
  and ones}.
\newblock {\em Transactions on Machine Learning Research}.

\end{thebibliography}

\newpage
\appendix

\counterwithin*{table}{section}
\renewcommand{\thetable}{\Alph{section}.\arabic{table}}
\counterwithin*{figure}{section}
\renewcommand{\thefigure}{\Alph{section}.\arabic{figure}}

\begin{center}
    {\LARGE Appendix}\\[0.5em]
    {\Large From Many Models, One: Macroeconomic Forecasting with Reservoir Ensembles}\\[1em]
    {\large%
        Giovanni Ballarin 
        \hspace{1em}
        Lyudmila Grigoryeva
        \hspace{1em}
        Yui Ching Li
    }\\[3em]

\end{center}

\section{Technical Lemmas}

\subsection{Basic Results}

\begin{proposition}[\citealp{chernovPredictionExpertAdvice2010}]\label{proposition:regret_hedge}
	Let $\eta_1, \eta_2, \ldots$ be a decreasing sequence of learning rates. The Hedge algorithm satisfies the following regret bound:
	\begin{equation}\label{eq:hedge_regret}
		\overline{R}_{\textsc{hdg},T}\leq \frac{1}{\eta_T}\log K +\frac{1}{8}\sum_{t=1}^{T}\eta_t.
	\end{equation}
	The choice of $\eta_t=2\sqrt{\log K/t}$ yields a regret bound of $\sqrt{T \log K}$ for every $T\geq 1$. 
\end{proposition}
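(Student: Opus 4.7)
The plan is to decompose the regret via the classical \emph{mix-loss} construction and then handle the two resulting pieces separately, with the main technical work going into a telescoping argument that respects the time-varying learning rate. Throughout I use the shorthand $Z_t(\eta):=\sum_{k}e^{-\eta L^{(k)}_t}$, $F_t(\eta):=-\eta^{-1}\log Z_t(\eta)$, and the auxiliary potential $\Psi_t(\eta):=F_t(\eta)+\eta^{-1}\log K=-\eta^{-1}\log\bigl(K^{-1}\sum_{k}e^{-\eta L^{(k)}_t}\bigr)$.

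First, I would define the per-round mix loss $\overline{m}_t:=-\eta_t^{-1}\log\sum_{k}\omega^{(k)}_{\textsc{hdg},t}\,e^{-\eta_t \ell^{(k)}_t}$ and split the regret as $\overline{R}_{\textsc{hdg},T}=(\overline{L}_T-\overline{M}_T)+\bigl(\overline{M}_T-\min_{k\in[K]}L^{(k)}_T\bigr)$ with $\overline{M}_T:=\sum_{t=1}^T\overline{m}_t$. The first piece, the cumulative \emph{mixability gap}, is controlled by a direct application of Hoeffding's lemma to the random variable $\ell^{(k)}_t\in[0,1]$ drawn under the probability distribution $\bm{\omega}_{\textsc{hdg},t}$. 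This yields $\overline{\ell}_t-\overline{m}_t\leq \eta_t/8$ pointwise, and summing gives $\overline{L}_T-\overline{M}_T\leq \tfrac{1}{8}\sum_{t=1}^T\eta_t$.

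Second, and this is the technical heart of the argument, I would establish $\overline{M}_T\leq \min_{k\in[K]}L^{(k)}_T+\eta_T^{-1}\log K$. Substituting the Hedge weights into $\overline{m}_t$ produces the identity $\overline{m}_t=\Psi_t(\eta_t)-\Psi_{t-1}(\eta_t)$. The decisive fact is that $\eta\mapsto\Psi_t(\eta)$ is non-increasing on $(0,\infty)$: rewriting $\Psi_t(\eta)=-\eta^{-1}\log\E_{\mathrm{unif}}\bigl[e^{-\eta L^{(k)}_t}\bigr]$ makes it clear that this is (up to a log) the $L^\eta$ norm of a positive random variable, which interpolates monotonically between the arithmetic mean of the $L^{(k)}_t$'s at $\eta\to 0^+$ and $\min_{k\in[K]}L^{(k)}_t$ as $\eta\to\infty$; monotonicity follows from either a direct differentiation or the standard generalized-mean inequality. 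Since $\eta_t\leq\eta_{t-1}$, monotonicity yields $\Psi_{t-1}(\eta_t)\geq \Psi_{t-1}(\eta_{t-1})$, so
\begin{equation*}
    \overline{m}_t=\Psi_t(\eta_t)-\Psi_{t-1}(\eta_t)\leq \Psi_t(\eta_t)-\Psi_{t-1}(\eta_{t-1}).
\end{equation*}
The right-hand side telescopes, and because $\Psi_0\equiv 0$, one obtains $\overline{M}_T\leq \Psi_T(\eta_T)\leq \min_{k\in[K]}L^{(k)}_T+\eta_T^{-1}\log K$, where the final inequality uses $Z_T(\eta_T)\geq e^{-\eta_T \min_{k}L^{(k)}_T}$.

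Combining the two bounds gives \eqref{eq:hedge_regret}. For the final claim, I would substitute $\eta_t=2\sqrt{\log K/t}$: the first term becomes $\eta_T^{-1}\log K=\tfrac{1}{2}\sqrt{T\log K}$, and the standard estimate $\sum_{t=1}^T t^{-1/2}\leq 2\sqrt T$ bounds the second term by $\tfrac{1}{8}\cdot 2\sqrt{\log K}\cdot 2\sqrt T=\tfrac{1}{2}\sqrt{T\log K}$, producing $\overline{R}_{\textsc{hdg},T}\leq \sqrt{T\log K}$. I expect the main obstacle to be isolating the correct potential $\Psi_t$ and cleanly verifying its monotonicity in $\eta$: a naive telescoping of $F_t(\eta_t)-F_{t-1}(\eta_t)$ does not close because $F_{t-1}(\eta_t)$ and $F_{t-1}(\eta_{t-1})$ are not directly comparable in the right direction. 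Absorbing the $\eta^{-1}\log K$ correction into the potential upfront is precisely what converts the decreasing-rate assumption into the usable monotonicity $\Psi_{t-1}(\eta_t)\geq\Psi_{t-1}(\eta_{t-1})$.
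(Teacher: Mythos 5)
Your proof is correct. The paper itself contains no proof of this proposition --- it is stated as an imported result, with a pointer to \citealp{chernovPredictionExpertAdvice2010} --- so there is nothing in-paper to compare against; your argument is the standard mix-loss derivation used in the literature the paper builds on (cf.\ \citealp{rooijFollowLeaderIf2014}, Lemma~1, and the regret decomposition the paper itself recalls in its AdaHedge discussion): Hoeffding's lemma applied under the weight distribution $\bm{\omega}_{\textsc{hdg},t}$ gives the per-round mixability gap $\overline{\ell}_t-\overline{m}_t\le\eta_t/8$ (valid since Assumption~\ref{assumption:losses_0-1} puts $\ell^{(k)}_t\in[0,1]$), and the potential $\Psi_t(\eta)=-\eta^{-1}\log\bigl(K^{-1}\sum_k e^{-\eta L^{(k)}_t}\bigr)$ with $\Psi_0\equiv 0$, whose monotonicity in $\eta$ follows from $L^\eta$-norm monotonicity under the uniform probability on $[K]$, is precisely the device that makes the telescoping close under $\eta_t\le\eta_{t-1}$. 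All the remaining steps check out as well: the identity $\overline{m}_t=\Psi_t(\eta_t)-\Psi_{t-1}(\eta_t)$, the bound $\Psi_T(\eta_T)\le\min_{k\in[K]}L^{(k)}_T+\eta_T^{-1}\log K$, and the final computation $\eta_T^{-1}\log K=\tfrac12\sqrt{T\log K}$ together with $\tfrac18\sum_{t=1}^T\eta_t\le\tfrac14\sqrt{\log K}\sum_{t=1}^T t^{-1/2}\le\tfrac12\sqrt{T\log K}$.
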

\begin{lemma}\label{lemma:elementary exp inequality} Let $a,b > 0$ be constants. Then  for any $x \geq 0$
	\begin{equation*}
		e^{-x/a}+e^{-x/b} \leq 2 e^{-{x}/{\max\{a,b\}}}.
	\end{equation*}
\end{lemma}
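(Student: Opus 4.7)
The plan is to reduce the two-term bound to a one-term monotonicity statement by exploiting the symmetry of the left-hand side in $a$ and $b$. Without loss of generality I would assume $a \geq b$, so that $\max\{a,b\} = a$, with the case $b \geq a$ being identical after swapping labels. Under this assumption the target inequality becomes $e^{-x/a} + e^{-x/b} \leq 2e^{-x/a}$, which is equivalent to the single inequality $e^{-x/b} \leq e^{-x/a}$.

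The remaining step relies only on monotonicity. Since $a \geq b > 0$, one has $1/a \leq 1/b$, and multiplying by $x \geq 0$ preserves the inequality, giving $-x/a \geq -x/b$. Applying the monotone increasing map $u \mapsto e^u$ yields $e^{-x/a} \geq e^{-x/b}$, and adding $e^{-x/a}$ to both sides completes the proof. There is really no technical obstacle here; the only thing worth highlighting is that boundary cases are automatic: at $x=0$ both sides equal $2$, and as $x \to \infty$ both sides decay to $0$ with the dominant rate on the right-hand side being the slower of the two, which is precisely why the $\max$ (rather than $\min$) of $\{a,b\}$ appears in the exponent.
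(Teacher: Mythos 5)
Your proof is correct: the WLOG reduction to $a \geq b$ followed by the monotonicity of $u \mapsto e^u$ applied to $-x/b \leq -x/a$ is exactly the canonical argument. The paper states this lemma without proof (treating it as elementary), so there is no divergence to report — your write-up simply makes explicit the reasoning the paper leaves implicit.
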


\begin{lemma}[\citealp{mourtadaOptimalityHedgeAlgorithm2019}, Lemma 13]\label{lemma:exp_series_limit}
	For every $\alpha > 0$,
	\begin{equation*}
		\sum_{t = 1}^{\infty} e^{-\alpha t}  \leq \frac{1}{\alpha} 
		\quad\text{and}\quad
		\sum_{t = 1}^{\infty} e^{-\alpha \sqrt{t}}  \leq \frac{2}{\alpha^2} .
	\end{equation*}
\end{lemma}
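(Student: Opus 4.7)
The plan is to bound each series by the corresponding improper integral via the standard integral-comparison test (each integrand is positive and monotonically decreasing in $t$), and then evaluate the integrals in closed form.

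For the first inequality I would observe that $t \mapsto e^{-\alpha t}$ is decreasing on $[0,\infty)$ for $\alpha>0$, so by comparing the sum with the integral on the intervals $[t-1,t]$,
\[
\sum_{t=1}^{\infty} e^{-\alpha t} \leq \int_{0}^{\infty} e^{-\alpha s}\, ds = \frac{1}{\alpha}.
\]
As an alternative presentation, one could use the closed form $\sum_{t=1}^{\infty} e^{-\alpha t} = 1/(e^{\alpha}-1)$ together with $e^{\alpha} \geq 1+\alpha$, which yields the same bound; I would mention this only parenthetically, since the integral comparison already delivers the result in one line.

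For the second inequality I would again use the integral comparison, noting that $t \mapsto e^{-\alpha \sqrt{t}}$ is decreasing on $[0,\infty)$, hence
\[
\sum_{t=1}^{\infty} e^{-\alpha \sqrt{t}} \leq \int_{0}^{\infty} e^{-\alpha \sqrt{s}}\, ds.
\]
Then I would perform the change of variables $u=\sqrt{s}$, so that $ds = 2u\,du$, giving
\[
\int_{0}^{\infty} e^{-\alpha \sqrt{s}}\, ds = 2\int_{0}^{\infty} u\, e^{-\alpha u}\, du = \frac{2}{\alpha^{2}},
\]
where the last equality is the standard Gamma integral $\int_{0}^{\infty} u e^{-\alpha u}\,du = 1/\alpha^{2}$.

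There is no substantive obstacle here; the whole argument is two applications of integral comparison plus an elementary substitution. The only minor care point is verifying monotonicity at $t=0$ for the second sum (so that the integral on $[0,\infty)$ genuinely dominates the series starting at $t=1$), which follows since $\frac{d}{dt} e^{-\alpha \sqrt{t}} = -\frac{\alpha}{2\sqrt{t}} e^{-\alpha \sqrt{t}} < 0$ for all $t>0$ and the integrand is bounded by $1$ near zero.
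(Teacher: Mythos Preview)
Your proof is correct. The paper does not supply its own argument for this lemma; it is stated as a basic technical result with attribution to \cite{mourtadaOptimalityHedgeAlgorithm2019} and used without proof. Your integral-comparison approach (with the substitution $u=\sqrt{s}$ for the second series) is exactly the standard way to establish these bounds, and nothing further is needed.
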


\begin{lemma}\label{lemma:logtail_t0}
	Let $b > 0$ and $\alpha > 0$ be constants. Then, there exists $\nu^* > 0$ such that, setting $t^* = \nu^* b$, it holds
	\begin{equation*}
		\frac{t^*}{(\log t^*)^\alpha} \geq b .
	\end{equation*}
\end{lemma}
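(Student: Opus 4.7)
The strategy is to show that the map $t \mapsto t/(\log t)^\alpha$ diverges to $+\infty$ as $t \to \infty$, and then pick $t^\star$ beyond the threshold where this ratio first exceeds $b$; the value $\nu^\star := t^\star/b$ is then well-defined and positive since $b>0$.

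First I would record the elementary inequality $\log t \le t^\varepsilon / \varepsilon$, valid for every $\varepsilon>0$ and every $t \ge 1$ (this follows, for instance, by differentiating $t^\varepsilon/\varepsilon - \log t$ and noting that it attains its minimum at $t = 1$ with nonnegative value $1/\varepsilon$). Specializing $\varepsilon = 1/(2\alpha)$ yields
\begin{equation*}
    (\log t)^\alpha \;\le\; (2\alpha)^\alpha\, t^{1/2}
    \qquad \text{for all } t \ge 1,
\end{equation*}
and hence $t/(\log t)^\alpha \ge t^{1/2}/(2\alpha)^\alpha \to \infty$ as $t \to \infty$.

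Consequently, the set $\mathcal{T} := \{t \ge e : t/(\log t)^\alpha \ge b\}$ is nonempty; choose any $t^\star \in \mathcal{T}$ (for instance, any $t^\star \ge \max\{e,\, (2\alpha)^{2\alpha}\, b^{2}\}$ works from the displayed bound above). Setting $\nu^\star := t^\star/b > 0$ gives $t^\star = \nu^\star b$ and $t^\star/(\log t^\star)^\alpha \ge b$, which is the claim.

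I do not anticipate a genuine obstacle here: the only minor care is to ensure $t^\star$ is taken in the region $t \ge e$ where $\log t \ge 1$ (so that the inequality is not vacuous or reversed by $(\log t)^\alpha$ behaving oddly near $1$), and to confirm that $\nu^\star$ defined as $t^\star/b$ is positive, which is automatic from $b>0$ and $t^\star>0$.
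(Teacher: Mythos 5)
Your proof is correct, but it takes a genuinely different route from the paper. The paper's proof substitutes $t=\nu b$, uses $\nu\ge b$ to bound the denominator via $(\log\nu+\log b)^\alpha\le(2\log\nu)^\alpha$, and then defines $\nu^*$ \emph{implicitly} as $\inf\{\nu\ge\max\{e^\alpha,b\}\,\vert\,\nu\ge(2\log\nu)^\alpha\}$, invoking continuity and the monotonicity of $g(t)=t/(\log t)^\alpha$ on $[e^\alpha,\infty)$ to conclude. You instead dominate the logarithm polynomially, $\log t\le t^{\varepsilon}/\varepsilon$ with $\varepsilon=1/(2\alpha)$, which gives the \emph{explicit} admissible threshold $t^\star=\max\{e,\,(2\alpha)^{2\alpha}b^2\}$ and hence $\nu^\star=t^\star/b$. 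Quantitatively the two constructions are comparable — both yield $t^\star=\mathcal{O}(b^2)$ up to $\alpha$-dependent constants, far from the optimal $b(\log b)^\alpha$ scaling, which is immaterial since the lemma only feeds finite warm-up times into the regret bounds. Your version has two small advantages: first, since your minorant $t^{1/2}/(2\alpha)^\alpha$ is itself increasing, the inequality $t/(\log t)^\alpha\ge b$ holds for \emph{all} $t\ge t^\star$ without appealing to monotonicity of $g$ — and that uniform form is what the downstream applications (the warm-up times $\widetilde{t}_0$ in the Bernstein-type regret bounds for FTL and Hedge) actually use; second, by insisting $t^\star\ge e$ you sidestep an edge case in the paper's argument, whose step $(\log\nu+\log b)^\alpha\le(2\log\nu)^\alpha$ implicitly requires $\log(\nu b)\ge 0$ and can fail for very small $b$ (harmless in the paper's applications, where $b=2\log(4K)/\widetilde{c}_\Delta>1$, but your argument needs no such caveat). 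The one thing the paper's formulation emphasizes that yours leaves implicit is the monotonicity of $g$ itself, but as noted your explicit minorant makes that unnecessary for the stated claim and its uses.
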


\begin{proof}[Proof of Lemma~\ref{lemma:logtail_t0}]
	Consider the function $g(t)= t/(\log(t))^\alpha$, $t>1$. For all $t \geq e^\alpha$  is strictly increasing and continuous.
	Now, letting $t = \nu b$ for some $\nu \geq b$, it clearly holds
	\begin{equation}
		\label{eq:tlogt}
		g(t)=\frac{ \nu b }{(\log \nu + \log b)^\alpha} 
		\geq
		\frac{ \nu b }{(2 \log \nu)^\alpha}
	\end{equation}
	since the denominator of the left-hand side is smaller than on the right-hand side by the lower bound assumed on $\nu$.
	Hence, we can define
	\begin{equation*}
		\nu^* 
		:= 
		\inf \left\{ \nu  \geq \max\{ e^\alpha, b \} \,\big\vert\, \nu \geq (2 \log \nu)^\alpha \right\}.
	\end{equation*}
	By continuity of $g(t)$, $\nu^*$ is also in the set, that is,
	\begin{equation*}
		\frac{ \nu^* }{(2 \log \nu^*)^\alpha} \geq 1
	\end{equation*}
	and hence
	\begin{equation*}
		\frac{ \nu^* b }{(2 \log \nu^*)^\alpha} \geq b .
	\end{equation*}
	Setting $t^* = \nu^* b$, and using \eqref{eq:tlogt}, we obtain $g(t^*)
	\geq 
	b $ as required.
\end{proof}

\subsection{Weak Dependence}
\label{appendix:weak_depend}

First, we recall the formal definition of $\varphi$-mixing, following \citep{rioAsymptoticTheoryWeakly2017}.

\begin{definition}[Uniform mixing coefficients, \citealp{rioAsymptoticTheoryWeakly2017}, Definition 2.1]\label{def:rio_uniform_mixing}
	Let $\{U_i\}_{i \in \Int}$ be a sequence of real-valued random variables. The uniform mixing coefficients of $\{U_i\}_{i \in \Int}$ are defined by
	\begin{equation*}
		\varphi_0 := 1
		\quad\text{and}\quad
		\varphi_n := \sup_{m \in \Int} \varphi(\mathcal{F}_m,  \sigma(U_{m+n})) 
		\text{ for any }
		n \in \mathbb{N}.
	\end{equation*}
	The sequence $\{U_i\}_{i \in \Int}$ is said to be \textit{uniformly mixing} if $\lim_{n\to \infty} \varphi_n = 0$.
\end{definition}

\paragraph{Hoeffding-type bounds.}

The book of \cite{rioAsymptoticTheoryWeakly2017} provides a generalization of the classical i.i.d.~version of Hoeffding's inequality to $\varphi$-mixing processes.

\begin{lemma}[Hoeffding inequality for uniformly mixing, \citealp{rioAsymptoticTheoryWeakly2017}, Corollary 2.1]\label{lemma:rio_hoeffding_uniform_mixing}
	Let $\{U_i\}_{i \geq 0}$ be a sequence of centered and real-valued bounded random variables such that $\abs{ U_i }^2 \leq M_i$. 
	Set $\theta_n := 1 + 4 \sum_{i=1}^{n-1} \varphi_i$. Then, for any positive $\xi$,
	\begin{equation*}
		\P\left(\left\lvert \sum_{i=1}^n U_i \right\rvert \geq \xi \right)
		\leq
		\sqrt{e} \exp\left( - \frac{\xi^2}{2 \theta_n \sum_{i=1}^n M_n} \right) .
	\end{equation*}
\end{lemma}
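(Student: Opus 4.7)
The plan is to combine the Cram\'er--Chernoff exponential moment method with a variance-type bound on the partial sum $S_n = \sum_{i=1}^n U_i$ that is sharpened by the uniform mixing structure.

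First, I would prove the pairwise covariance inequality $\abs{\Cov(U_i,U_j)} \leq 2 \varphi_{j-i} \norm{U_i}_\infty \norm{U_j}_\infty$ for $i<j$. This follows from Definition~\ref{def:rio_uniform_mixing}, which implies $\norm{\E[g(U_j)\mid \mathcal{F}_i] - \E[g(U_j)]}_\infty \leq 2 \norm{g}_\infty \varphi_{j-i}$ for any bounded measurable $g$; applying this with $g(u)=u$ and the tower property yields the covariance bound. Summing over pairs with $\norm{U_i}_\infty\leq\sqrt{M_i}$ and bounding $\sqrt{M_iM_j} \leq (M_i+M_j)/2$ by AM--GM yields $\Var(S_n) \leq \theta_n \sum_{i=1}^n M_i$, in which the diagonal gives the constant $1$ in $\theta_n$ while the symmetric off-diagonal terms collect, after re-indexing, the factor $4 \sum_{i=1}^{n-1} \varphi_i$.

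Second, I would apply the Chernoff--Cram\'er inequality $\P(S_n \geq \xi) \leq e^{-\lambda \xi}\E[e^{\lambda S_n}]$ for $\lambda>0$ and aim for a Gaussian-type moment generating function estimate $\E[e^{\lambda S_n}] \leq \sqrt{e}\,\exp\bigl(\lambda^2 \theta_n \sum_i M_i /2\bigr)$. A naive iteration based on the pointwise conditional defect $\abs{\E[e^{\lambda U_n}\mid\mathcal{F}_{n-1}] - \E[e^{\lambda U_n}]} \leq 2 \varphi_1 e^{\lambda \sqrt{M_n}}$ would accumulate multiplicative coupling errors growing with $n$, so I would instead follow the blocking/coupling construction of \cite{rioAsymptoticTheoryWeakly2017}, Chapter~2: replace the dependent sequence by a conditionally independent surrogate whose coupling defect is controlled by the $\varphi_i$. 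The surrogate sum satisfies the standard independent Hoeffding MGF bound, and the coupling defect contributes exactly the $\theta_n$ multiplier in the variance proxy together with the $\sqrt{e}$ correction that arises when integrating the defect over the Laplace parameter.

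Third, optimizing at $\lambda^\star = \xi/(\theta_n \sum_i M_i)$ in $e^{-\lambda\xi + \lambda^2 \theta_n \sum_i M_i/2}$ produces the claimed exponent $-\xi^2/(2 \theta_n \sum_i M_i)$. Applying the same argument to $\{-U_i\}$ (which is centered and $\varphi$-mixing with the same coefficients, since $\varphi_n$ is defined $\sigma$-algebraically) and a union bound over the two tails yields the two-sided statement. The main technical obstacle is the MGF control in the second step: the covariance-based variance bound is essentially algebraic, but transferring it into an exponential inequality under dependence requires the coupling construction, which is where both the precise $\theta_n$ factor and the $\sqrt{e}$ prefactor emerge; I would defer the bookkeeping of that constant to the final step of the argument.
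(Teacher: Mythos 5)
The first thing to note is that the paper contains no proof of this statement: Lemma~\ref{lemma:rio_hoeffding_uniform_mixing} is imported verbatim from \cite{rioAsymptoticTheoryWeakly2017} (Corollary~2.1) and used as a black box in Lemma~\ref{lemma:ftl_concentr_lemma}(ii) and downstream, so the honest comparison is between your sketch and Rio's own argument. Your first step is correct and matches the bookkeeping exactly: the uniform-mixing covariance bound $\abs{\Cov(U_i,U_j)}\le 2\varphi_{j-i}\norm{U_i}_\infty\norm{U_j}_\infty$, AM--GM, and re-indexing over lags give precisely $\Var(S_n)\le\theta_n\sum_{i=1}^n M_i$ with $\theta_n=1+4\sum_{i=1}^{n-1}\varphi_i$ (incidentally, your computation confirms that the $\sum_{i=1}^n M_n$ in the paper's display is a typo for $\sum_{i=1}^n M_i$). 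The gap is in your second step, which is where the entire content of the lemma lives: you concede that the naive conditional-MGF iteration fails and then defer to ``the blocking/coupling construction of Rio, Chapter~2'' --- that is, to the very source whose result is being proved --- asserting without argument that the coupling defect ``contributes exactly the $\theta_n$ multiplier together with the $\sqrt{e}$ correction''. As a blind proof this is circular: no mechanism for the exponential transfer is actually supplied, and the attribution of the constants to a coupling defect is not substantiated.

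Moreover, the constant-tracking in your plan demonstrably cannot close. Your target MGF bound $\E[e^{\lambda S_n}]\le\sqrt{e}\,\exp\big(\lambda^2\theta_n\sum_i M_i/2\big)$, followed by Chernoff and a union bound over the two tails, yields the two-sided prefactor $2\sqrt{e}$, not $\sqrt{e}$. This is not repairable within the Chernoff-plus-union paradigm: since $S_n$ is centered, $\E[e^{\lambda S_n}]\ge 1$ for every $\lambda$, so any MGF majorant of the form $c\,e^{\lambda^2 V/2}$ that is to produce the bound for \emph{all} $\xi>0$ forces $c\ge 1$ per tail (consider the optimizing $\lambda=\xi/V$ as $\xi\to0$), hence a two-sided prefactor of at least $2>\sqrt{e}$. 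The $\sqrt{e}$ in the statement is instead the signature of a moment-based derivation, which is how Rio proceeds: Gaussian-type bounds on the even algebraic moments of $S_n$ with variance proxy $\theta_n\sum_i M_i$, proved by induction from essentially the covariance inequality of your first step, followed by Markov's inequality applied to $S_n^{2p}$ --- which controls both tails simultaneously --- and optimization over the integer moment order $p$, where the $\sqrt{e}$ emerges from the integer rounding; coupling constructions for $\varphi$-mixing appear later in Rio's book and are not what drives this chapter. So your variance bound is the right opening move, but the proposal neither supplies nor correctly identifies the mechanism converting it into the stated inequality; honestly completed, your route would at best deliver the correct exponent with prefactor $2\sqrt{e}$, which would still suffice for the paper's downstream regret bounds up to constants but is strictly weaker than the lemma as stated.
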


\paragraph{Bernstein-type bounds.}

Bernstein's inequality provides a more general bound than Hoeffding's by taking into account the second moments of the random variables involved.

\begin{lemma}[Bernstein inequality for i.i.d., \citealp{Boucheron2013}, Corollary 2.11]\label{lemma:iid_bernstein}
	Let $\{U_i\}_{i \geq 0}$ be a sequence of centered and real-valued bounded random variables such that $\abs{ U_i } \leq M$ for all $i$. Let $\Var(U_i)=v<\infty$ for all $i$. Then for any $\xi>0$,
	\begin{equation*}
		\P\left( \sum_{i=1}^n U_i  \geq \xi \right)
		\leq
		\exp\left( - \frac{\xi^2}{2 n v +\frac{2}{3} M \xi} \right) .
	\end{equation*}
\end{lemma}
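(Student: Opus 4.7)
The plan is to apply the classical Chernoff–Cramér method. For any $\lambda > 0$, Markov's inequality applied to the monotone map $x \mapsto e^{\lambda x}$ gives
$$\P\Bigl(\sum_{i=1}^n U_i \geq \xi\Bigr) \leq e^{-\lambda \xi}\,\E\Bigl[\exp\Bigl(\lambda \sum_{i=1}^n U_i\Bigr)\Bigr] = e^{-\lambda \xi} \prod_{i=1}^n \E[e^{\lambda U_i}],$$
where the product decomposition uses independence (identical distribution is not essential, only boundedness of the common variance). The whole argument is thus reduced to bounding the single-variable log-MGF $\psi_U(\lambda) := \log \E[e^{\lambda U}]$ for $U$ centered with $\abs{U} \leq M$ and $\Var(U) \leq v$, and then optimizing over $\lambda$.

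For the MGF bound, I would expand $e^{\lambda U} = 1 + \lambda U + \sum_{k \geq 2}(\lambda U)^k/k!$, take expectations, use centering ($\E[U]=0$), and estimate the higher moments by $\E[\abs{U}^k] \leq M^{k-2} \E[U^2] \leq M^{k-2} v$ for $k \geq 2$. Summing the resulting series produces the Bennett-form bound $\E[e^{\lambda U}] \leq 1 + (v/M^2)(e^{\lambda M} - \lambda M - 1)$, and $\log(1+x) \leq x$ gives $\psi_U(\lambda) \leq (v/M^2)(e^{\lambda M} - \lambda M - 1)$. To pass to the tighter Bernstein form, I would then invoke the elementary inequality $e^x - x - 1 \leq x^2/(2(1-x/3))$ valid for $0 \leq x < 3$, which is proved by comparing $\sum_{k \geq 2} x^k/k!$ term-by-term against the geometric majorant $\sum_{k \geq 2} x^k/(2 \cdot 3^{k-2})$. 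Substituting $x = \lambda M$ yields
$$\psi_U(\lambda) \leq \frac{v \lambda^2 / 2}{1 - M\lambda/3}, \qquad 0 \leq \lambda < 3/M.$$

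Summing over $i \in \{1,\dots,n\}$ and inserting into the Chernoff bound, the exponent becomes $-\lambda \xi + n v \lambda^2/(2(1 - M\lambda/3))$. Rather than computing the exact minimizer, I would evaluate at the natural (suboptimal but clean) choice $\lambda^\star = \xi/(nv + M\xi/3)$, which lies in $[0, 3/M)$ and satisfies $1 - M\lambda^\star/3 = nv/(nv + M\xi/3)$. The two terms in the exponent then collapse to $-\xi^2/(nv + M\xi/3) + \xi^2/(2(nv + M\xi/3)) = -\xi^2/(2nv + (2/3)M\xi)$, exactly matching the claim. The only non-mechanical ingredients are the Bennett MGF estimate and the conversion $e^x - x - 1 \leq x^2/(2(1 - x/3))$: both are elementary series comparisons, but they constitute the essential content of Bernstein's inequality — what distinguishes it from Hoeffding by exploiting the variance information to yield the sharper bound when $v \ll M^2$. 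Everything else (Chernoff, independence factorization, the explicit choice of $\lambda^\star$) is routine algebra.
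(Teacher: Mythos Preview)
Your proof is correct and is precisely the standard Chernoff--Cram\'er derivation of Bernstein's inequality. The paper itself does not supply a proof of this lemma: it is stated as a known result with a citation to \cite{Boucheron2013}, Corollary~2.11, and used as a black box in the subsequent concentration arguments. What you have written is essentially the argument Boucheron, Lugosi, and Massart give (via Bennett's MGF bound followed by the $e^x - x - 1 \le x^2/(2(1-x/3))$ conversion and the explicit sub-optimal choice of $\lambda$), so there is nothing to compare---you have filled in what the paper deliberately left to the reference.

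One minor remark: the lemma statement in the paper does not spell out independence in the hypotheses (it is implicit in the ``i.i.d.'' of the title and the citation), and you rightly flag that the factorization step is where it enters; your observation that identical distribution is inessential as long as the variance and range bounds are uniform is also correct and consistent with how the result is stated in Boucheron et al.
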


\noindent
We recall here two generalizations of classical Bernstein which apply to the $\varphi$-mixing setting, due respectively to \cite{samsonConcentrationMeasureInequalities2000} and \cite{hangStatisticalLearningKernelbased2015,hangBernsteintypeInequalityMixing2017}.

\begin{lemma}[Bernstein inequality for $\varphi$-mixing]\label{lemma:samson_phi_mixing}
	Let $\{U_i\}_{i \geq 0}$ be a real-valued stationary process  such that $\abs{ U_i } \leq M$ for all $i$. Let $\varphi_i$ denote its $\varphi$-mixing coefficients and let $\Var(U_i)=v<\infty$ for all $i$. 
	Define $\theta^2_n := \big( 1 + \sum_{i=1}^{n} \sqrt{\varphi_i} \big)^2$ and 
	\begin{equation*}
		Z := \Abs{ \sum_{i=1}^n U_i }
	\end{equation*}
	Then for any $\xi>0$,
	\begin{equation}
		\label{eq:right_bound}
		\P\left( Z \geq \E[Z] + \xi \right)
		\leq
		\exp\left( 
		- \frac{1}{8 \theta^2_n} \min\left\{ \frac{\xi}{M},\ \frac{\xi^2}{4 nv} \right\}\right)\leq \exp\left( 
		- \frac{\xi^2}{4 \theta^2_n (4nv+M\xi)}
		\right) 
	\end{equation}
	and
	\begin{equation}
		\label{eq:left_bound}
		\P\left( Z \leq \E[Z] - \xi \right)
		\leq
		\exp\left( 
		- \frac{1}{8 \theta^2_n} \min\left\{ \frac{\xi}{M},\ \frac{\xi^2}{4 nv} \right\}
		\right) \leq \exp\left( 
		- \frac{\xi^2}{4 \theta^2_n (4nv+M\xi)}
		\right) .
	\end{equation}
\end{lemma}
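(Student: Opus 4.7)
The plan is to deduce both displayed inequalities from Samson's (2000) transportation-method concentration principle for $\varphi$-mixing sequences, and then obtain the second (weaker) form in each display by an elementary algebraic comparison. In contrast to the Hoeffding-type Lemma~\ref{lemma:rio_hoeffding_uniform_mixing}, the Bernstein-type refinement must exploit the variance $v$ of the individual $U_i$, which in the stationary $\varphi$-mixing setting requires a sharper tool than the martingale-increment approach used in \cite{rioAsymptoticTheoryWeakly2017}.

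The first step is to establish a quadratic transportation-cost inequality for the joint law $P_n$ of $(U_1,\ldots,U_n)$. Following Samson, I would introduce a coupling matrix $\Gamma=(\gamma_{ij})$ with entries controlled by $\sqrt{\varphi_{|i-j|}}$ and, via a block-conditioning argument on the filtration of the process, show that
\begin{equation*}
T_2(P_n, Q)^2 \,\leq\, 2\,\|\Gamma\|_{\mathrm{op}}^2\, \mathrm{KL}(Q \,\|\, P_n),
\end{equation*}
where $T_2$ is a Wasserstein-type transport distance adapted to the bounded support and $\|\Gamma\|_{\mathrm{op}}\leq \theta_n$. A Herbst-type argument then passes from this transport bound to concentration via Bobkov--Götze duality: any function $f(u_1,\ldots,u_n)$ that is $1$-Lipschitz with respect to the induced weighted Hamming metric concentrates around its mean with sub-Gaussian variance proxy of order $nM^2\theta_n^2$ in the Hoeffding regime, and with sub-exponential scale of order $M\theta_n^2$ once the variance $\Var(U_i)=v$ is incorporated through a second-moment (self-bounding) refinement of the dual inequality. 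Taking $f(u_1,\ldots,u_n)=\abs{\sum_i u_i}$, which is $1$-Lipschitz in each coordinate with amplitude at most $M$, and interpolating between the Gaussian and exponential tails yields the $\min\{\xi/M,\ \xi^2/(4nv)\}$ form with prefactor $1/(8\theta_n^2)$. Applying the same argument to $\pm(Z-\E[Z])$ gives both the upper- and lower-tail inequalities.

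The second, weaker inequality in each display then follows by the elementary bound $\min\{a,b\}\geq ab/(a+b)$ for $a,b>0$, applied with $a=\xi/M$ and $b=\xi^2/(4nv)$. This gives
\begin{equation*}
\min\left\{\frac{\xi}{M},\ \frac{\xi^2}{4nv}\right\}\ \geq\ \frac{\xi^2}{4nv+M\xi},
\end{equation*}
after which the exponent simplifies to a ratio whose denominator is $\theta_n^2(4nv+M\xi)$ (up to a numerical constant), and monotonicity of $\exp(-\cdot)$ yields the stated right-hand bound.

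The hard part will be the transportation inequality itself, and specifically obtaining the coefficient $\theta_n=1+\sum_{i=1}^n \sqrt{\varphi_i}$ rather than a Hoeffding-style $\sqrt{1+4\sum \varphi_i}$. Samson achieves the square-root dependence on the $\varphi$-mixing coefficients by coupling the conditional marginals of $P_n$ across successive blocks, whose total-variation decoupling cost is bounded by $\sqrt{\varphi_n}$, and by summing these costs along a martingale decomposition of $f$; controlling the operator norm of the resulting Toeplitz-like coupling matrix is the technically delicate step. By contrast, the passage from transport to concentration through Bobkov--Götze duality, and from the $\min$ form to the harmonic form, are routine.
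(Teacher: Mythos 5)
Your plan is, at bottom, the same mathematics as the paper's proof, but executed at a very different level: the paper does not re-derive the transportation-method concentration result at all. It invokes Theorem~3 of Samson (2000) directly, with the function $g$ there taken to be the identity, checks its hypotheses ($\abs{U_i}\le M$ gives boundedness after rescaling, and $\E\big[\sum_{i=1}^n U_i^2\big]\le nv$ supplies the variance condition), and reads off $\theta_n^2$ as the squared operator norm of Samson's dependence matrix $\Gamma$, citing the relevant discussion (pp.~420--425 of Samson, see also Alquier--Wintenberger) for the bound $\norm{\Gamma}\le 1+\sum_{i=1}^n\sqrt{\varphi_i}$. Everything you describe --- the $T_2$ inequality with constant proportional to $\norm{\Gamma}_{\mathrm{op}}^2$, Bobkov--G\"otze duality, the coupling whose decoupling cost is controlled by $\sqrt{\varphi}$ --- is precisely the content of the cited theorem, so your blind re-derivation is aimed at the right target; but since you leave the transportation inequality (which you correctly identify as the hard part) unexecuted, your write-up as it stands establishes strictly less than the paper's citation does, at far greater cost.

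Two concrete steps would also fail as written. First, the constant: from $\min\{a,b\}\ge ab/(a+b)$ with $a=\xi/M$ and $b=\xi^2/(4nv)$ you obtain the exponent $-\xi^2/\big(8\,\theta_n^2(4nv+M\xi)\big)$, which is weaker than the stated $-\xi^2/\big(4\,\theta_n^2(4nv+M\xi)\big)$; the paper uses the sharper harmonic-mean bound $\min\{a,b\}\ge 2ab/(a+b)$, which is exactly what produces the factor $4$, so your ``up to a numerical constant'' caveat conceals a genuine mismatch with the lemma as stated. Second, the lower tail \eqref{eq:left_bound} does not follow by ``applying the same argument to $\pm(Z-\E[Z])$'': the transport-to-concentration step for Lipschitz functionals of this type is one-sided, and $-\Abs{\sum_i u_i}$ is concave rather than convex, so the symmetric application is unavailable; Samson's Theorem~3 delivers the two tails by distinct arguments, and the paper inherits both directly from the citation. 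Both defects are repairable --- replace your elementary inequality by the harmonic-mean version, and invoke the two-sided form of Samson's theorem instead of a symmetry argument --- but as written the proposal neither attains the stated constants nor justifies the lower-tail bound.
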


\begin{proof}
	The first inequalities in \eqref{eq:right_bound} and \eqref{eq:left_bound} follow trivially from \cite{samsonConcentrationMeasureInequalities2000}, Theorem~3, where the function applied to the random variables, $g$, is the identity. 
	Then, $\abs{ U_i } \leq M$ trivially ensures the boundedness of $g(U_i)$ and 
	\begin{equation*}
		\E \left[ \sum_{i=1}^n g(U_i)^2 \right] \leq n v ,
	\end{equation*}
	according to the assumptions in p.~453 of \cite{samsonConcentrationMeasureInequalities2000}.
	The calculation of factor $\theta^2_n$, which is related to the operator norm of the dependence matrix $\Gamma$ in the theorem, also follows from the discussion in \cite{samsonConcentrationMeasureInequalities2000}, pp.~420-425 (see also \citealp{alquierPredictionTimeSeries2013}). The second inequalities  in \eqref{eq:right_bound} and \eqref{eq:left_bound} directly follow by noticing that for any $a,b\in \mathbb{R}$ it holds that $\min\{a,b\}\geq H(a,b)$ with $H(a,b)={2ab}/{(a+b)}$ the harmonic mean of $a$ and $b$, and hence
	\begin{align*}
		\min\left\{ \frac{\xi}{M},\ \frac{\xi^2}{4 nv} \right\}\geq \frac{2\xi^2}{4nv+M\xi}.
	\end{align*}
\end{proof}

\noindent
The next result we present holds for the wider class of $\mathcal{C}$-mixing processes, and can be immediately adapted to $\varphi$-mixing sequences (see \citealp{hangStatisticalLearningKernelbased2015}, Chapter 4 for a detailed discussion).
The price to pay for this generalization is, however, an additional logarithmic factor in the denominator of the exponential which is tied to the dependence structure of the process. 

\begin{theorem}[Bernstein inequality for geometrically $\varphi$-mixing, \citealp{hangStatisticalLearningKernelbased2015}, Theorem 4.7]\label{theorem:huang_C_mixing}
	Let $\{U_i\}_{i \geq 0}$ be a stationary, geometrically $\varphi$-mixing process with rate $(d_n)_{n\geq 0}$ of the form $d_n\le c \exp{(-b n ^\gamma)}$, $n\geq 1$, for some $b>0$, $c\ge 0$, and $\gamma>0$.
	Let $h:\mathbb{R}\to\mathbb R$ be bounded, measurable with $\|h\|_\infty\le M$, $\E[h] = 0$, and
	$\Var(h)\le \sigma^2$. Then there exists
	\begin{equation*}
		n_0   =   \max \big \{\min  \left\{m\ge 3 \:\vert\: m^2  \ge  {3232cM} \textnormal{ and } {m}{(\log m)^{-2/\gamma}}  \ge  4  \right\}, e^{3/b}  \big\}
	\end{equation*}
	such that, for all $n\ge n_0$ and all $\varepsilon>0$,
	\begin{equation*}
		\P  \left(\sum_{i=1}^n h(U_i) \ge \varepsilon\right)
		\le  
		2\exp  \left(
		-\frac{\varepsilon^2}{ 8(\log n)^{2/\gamma}  \left(\sigma^2+\varepsilon M/3  \right)}
		\right).
	\end{equation*}
\end{theorem}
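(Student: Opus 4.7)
The plan is to establish the Bernstein-type bound via a classical \emph{Bernstein blocks} coupling argument, tailored to the geometric $\varphi$-mixing rate. First I would partition $\{1,\ldots,n\}$ into $q$ ``big'' blocks $B_1,\ldots,B_q$ of common length $p$, separated by $q-1$ ``small'' gap blocks $S_1,\ldots,S_{q-1}$ of length $m$ (plus at most one boundary remainder), with $q(p+m)\le n$, and decompose
\begin{equation*}
    \sum_{i=1}^n h(U_i) \;=\; S_B \,+\, S_S,\qquad S_B:=\sum_{j=1}^q \sum_{i\in B_j} h(U_i),\qquad S_S:=\sum_{j=1}^{q-1}\sum_{i\in S_j} h(U_i) + \text{remainder}.
\end{equation*}
Because $\|h\|_\infty\le M$, the gap contribution is controlled deterministically by $|S_S|\le ((q-1)m+p)M$, so for a suitable choice of $m$ and $p$ it can be absorbed into the pre-factor, leaving $S_B$ as the only random quantity to concentrate.

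Next, I would invoke a standard $\varphi$-mixing coupling (Berbee/Viennet-type, cf.\ Theorem~5.1 in \citealp{rioAsymptoticTheoryWeakly2017}) to construct independent copies $\tilde B_1,\ldots,\tilde B_q$ with $\tilde B_j\stackrel{d}{=}B_j$ and $\P(B_j\ne\tilde B_j)\le \varphi_m$. The union bound and the geometric rate then give
\begin{equation*}
    \P\big(S_B \ne \tilde S_B\big) \;\le\; (q-1)\varphi_m \;\le\; qc\,e^{-bm^{\gamma}},\qquad \tilde S_B:=\sum_{j=1}^q \sum_{i\in\tilde B_j} h(U_i).
\end{equation*}
The block sums $Y_j:=\sum_{i\in\tilde B_j}h(U_i)$ are independent, each deterministically bounded by $pM$, with variance bounded, via the classical $\varphi$-mixing covariance inequality (Proposition~2.6 of \citealp{rioAsymptoticTheoryWeakly2017}), by $\Var(Y_j)\le p\sigma^2(1+2\sum_{k\ge 1}\varphi_k)$. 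Applying the i.i.d.\ Bernstein inequality (Lemma~\ref{lemma:iid_bernstein}) to $\sum_{j=1}^q Y_j$ yields an exponential bound of the form $\exp\bigl(-\varepsilon^2 / (2(qp\sigma^2\tilde C + pM\varepsilon/3))\bigr)$.

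Then I would optimise the block parameters. Choosing $m\asymp (\log n/b)^{1/\gamma}$ makes the coupling error $qc\,e^{-bm^{\gamma}}$ polynomially small, and setting $p\asymp m$ yields $qp\asymp n$ and $pM\asymp (\log n)^{1/\gamma}M$. The $(\log n)^{2/\gamma}$ factor in the denominator of the claimed exponent then arises from accumulating the within-block variance \emph{and} the effective per-block amplitude in the Bernstein denominator; the condition $n\ge n_0$ in the statement is precisely what is needed so that the chosen $m$ and $p$ are valid integers, so that the remainder block contributes negligibly, and so that the coupling plus gap error can be absorbed into the factor $2$ in front of the exponential. A final union bound over the three error sources (coupling, gap, Bernstein) gives the stated inequality.

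The main obstacle will be Step 3 of the optimisation: pinning down the block lengths $p,m$ and the constants in $n_0$ so that the \emph{exact} factor $(\log n)^{2/\gamma}$ appears as a simultaneous inflation of both the variance term $\sigma^2$ and the boundedness term $\varepsilon M/3$, without spurious additional logarithmic losses. A secondary subtlety is verifying that the covariance inequality used to bound $\Var(Y_j)$ is valid under the weaker $\mathcal{C}$-mixing framework of \cite{hangStatisticalLearningKernelbased2015} (which specialises to $\varphi$-mixing); alternatively one may bypass blocking entirely and use the direct exponential-moment bound for $\mathcal{C}$-mixing partial sums developed in Chapter~4 of that reference, which is the route the cited proof actually takes.
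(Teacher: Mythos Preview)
The paper does not prove this statement: it is quoted verbatim as an external result (Theorem~4.7 of \citealp{hangStatisticalLearningKernelbased2015}) and used as a black box in the proofs of Lemma~\ref{lem:ftl-bernstein}(iii) and Lemma~\ref{lem:hedge-bernstein}(iii). There is therefore no ``paper's own proof'' to compare your proposal against.

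For what it is worth, your blocking-plus-coupling sketch is the standard route to Bernstein inequalities under mixing and is close in spirit to the argument in the cited reference; you correctly anticipate at the end that Hang and Steinwart work in the broader $\mathcal{C}$-mixing framework and control exponential moments of partial sums directly rather than via a Berbee-type coupling, which is what allows the clean $(\log n)^{2/\gamma}$ inflation of both the variance and amplitude terms without extra logarithmic losses. If you were to pursue your version, the point you flag as the main obstacle (getting the \emph{same} $(\log n)^{2/\gamma}$ factor on both $\sigma^2$ and $\varepsilon M/3$) is genuine: a naive choice $p\asymp m\asymp (\log n)^{1/\gamma}$ gives $pM\asymp (\log n)^{1/\gamma}M$ in the Bernstein denominator, not $(\log n)^{2/\gamma}M$, so matching the stated constants would require a more careful balancing than the sketch indicates.
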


\newpage
\section{Proofs}

This section contains the proofs for the theoretical results in the paper, as well as all intermediate technical results needed to derive them. 
First, we present the main concentration results for the experts' losses; then, we obtain regret bounds for FTL and Hedge methods.

\subsection{Concentration Results for FTL}
\label{appendix:proofs_concentration_ftl}
We exploit the theory from Appendix~\ref{appendix:weak_depend} to construct concentration-type bounds for the cumulative losses of experts under both i.i.d.~and dependent settings.

\subsubsection{Hoeffding-type}

\begin{lemma}[Hoeffding-type]
	\label{lemma:ftl_concentr_lemma}
	Let Assumptions~\ref{assumption:losses_0-1}-\ref{assumption:losses_iid_or_mixing} hold and, without loss of generality, assume that the experts are ordered such that    %
	$
	\mu_1 < \mu_2 \leq \ldots \leq \mu_K 
	$, that is $\Delta = \mu_2-\mu_1>0$.
	\begin{description}
		\item[(i)] If Assumption~\ref{assumption:losses_iid_or_mixing}(i) holds, then
		\begin{equation*}
			\P\left( L^{(1)}_t > \min_{2 \leq k \leq K} L^{(k)}_t \right) 
			\leq
			\exp\left( - \frac{t \Delta^2}{2} \right) \left[ 2 + \sum_{k=3}^K \exp  \left( - 2 t (\mu_1 - \mu_k)(\mu_2 - \mu_k)   \right) \right] .
		\end{equation*}
		\item[(ii)] If Assumption~\ref{assumption:losses_iid_or_mixing}(ii) holds, then
		\begin{equation*}
			\P\left( L^{(1)}_t > \min_{2 \leq k \leq K} L^{(k)}_t \right) 
			\leq
			\sqrt{e}
			\sum_{k=1}^K 
			\exp\left( - \frac{t (\mu_1 - \mu_k + \Delta/2)^2}{ 2 \theta^{(k)}_t } \right),
		\end{equation*}
		where $\theta^{(k)}_t := 1 + 4\sum_{n=1}^{t-1}\varphi^{(k)}_n$.
	\end{description}
\end{lemma}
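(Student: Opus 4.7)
The plan is to reduce both parts to one-sided tail bounds for sums of centered losses and then invoke the appropriate Hoeffding-type inequality --- classical Hoeffding for (i), and Rio's uniform-mixing version (Lemma~\ref{lemma:rio_hoeffding_uniform_mixing}) for (ii). The key trick is to split at a \emph{common} midpoint $m := (\mu_1+\mu_2)/2 = \mu_1 + \Delta/2$, independent of $k$, rather than at individual midpoints $(\mu_1+\mu_k)/2$; this is what makes the $k=1$ term appear with multiplicity $1$ on the right-hand side of (ii).

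First I would note the deterministic inclusion
\begin{equation*}
\{L^{(1)}_t > \min_{2\le k\le K} L^{(k)}_t\} \subseteq \{L^{(1)}_t > tm\} \cup \bigcup_{k=2}^K \{L^{(k)}_t < tm\},
\end{equation*}
since if $L^{(1)}_t \le tm$ and $L^{(k)}_t \ge tm$ for every $k\ge 2$, then $L^{(1)}_t \le \min_{k\ge 2} L^{(k)}_t$ and the event on the left fails. A union bound together with centering then yields
\begin{equation*}
\P\!\left(L^{(1)}_t > \min_{k\ge 2} L^{(k)}_t\right) \le \P\!\left(L^{(1)}_t - t\mu_1 > \tfrac{t\Delta}{2}\right) + \sum_{k=2}^K \P\!\left(t\mu_k - L^{(k)}_t > t(\mu_k - \mu_1 - \tfrac{\Delta}{2})\right),
\end{equation*}
where for $k\ge 2$ the displacement $\mu_k - \mu_1 - \Delta/2 \ge \Delta/2$ is strictly positive.

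For Part~(i), each summand on the right is a one-sided Hoeffding tail for i.i.d.~variables in $[0,1]$: the first is bounded by $\exp(-t\Delta^2/2)$, and the $k$-th ($k\ge 2$) by $\exp(-2t(\mu_k - \mu_1 - \Delta/2)^2)$. The algebraic identity
\begin{equation*}
2(\mu_k - \mu_1 - \tfrac{\Delta}{2})^2 = \tfrac{\Delta^2}{2} + 2(\mu_k-\mu_1)(\mu_k-\mu_2) = \tfrac{\Delta^2}{2} + 2(\mu_1-\mu_k)(\mu_2-\mu_k)
\end{equation*}
collapses the $k=2$ term to $\exp(-t\Delta^2/2)$ and factors a common $\exp(-t\Delta^2/2)$ out of the $k\ge 3$ contributions, reproducing the stated bound with the leading constant ``$2$'' in the bracket accounting for the $L^{(1)}_t$ and $L^{(2)}_t$ tails.

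For Part~(ii), I apply Lemma~\ref{lemma:rio_hoeffding_uniform_mixing} to each centered sum with $U_i = \pm(\ell^{(k)}_i - \mu_k)$ and $M_i = 1$ (using $|\ell^{(k)}_i - \mu_k|^2 \le 1$), so that $\sum_{i=1}^t M_i = t$ and the mixing factor equals $\theta^{(k)}_t = 1 + 4\sum_{n=1}^{t-1}\varphi^{(k)}_n$. The upper-tail bound on $L^{(1)}_t$ yields the $k=1$ contribution $\sqrt{e}\exp(-t\Delta^2/(8\theta^{(1)}_t))$, which matches the form of the target since $(\mu_1-\mu_1+\Delta/2)^2 = \Delta^2/4$; the lower-tail bounds on $L^{(k)}_t$ for $k\ge 2$ supply the remaining terms, and the two groups consolidate into the single sum $\sqrt{e}\sum_{k=1}^K \exp(-t(\mu_1-\mu_k+\Delta/2)^2/(2\theta^{(k)}_t))$ as stated. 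The only real subtlety is the choice of the common midpoint $m$ so that each per-$k$ exponent aligns exactly with the claimed expression; no concentration tool beyond the two pre-stated Hoeffding inequalities is required.
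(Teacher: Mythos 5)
Your proposal is correct and follows essentially the same route as the paper's proof: the same split at the common midpoint $\delta = \mu_1 + \Delta/2$ (the paper phrases the decomposition via the event $\mathcal{A} = \{\min_{2\le k\le K} L^{(k)}_t \le t\delta\}$ and the law of total probability, which is logically identical to your deterministic inclusion and union bound), followed by the same applications of classical Hoeffding for part (i) and Lemma~\ref{lemma:rio_hoeffding_uniform_mixing} with $M_i = 1$ for part (ii), and the same algebraic identity reducing $2(\mu_1-\mu_k+\Delta/2)^2$ to $\Delta^2/2 + 2(\mu_1-\mu_k)(\mu_2-\mu_k)$.
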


\begin{proof}[Proof of Lemma~\ref{lemma:ftl_concentr_lemma}]
	First, choose some $\delta$ such that $\mu_1 < \delta < \mu_2$. For notational convenience, define $L^{*\setminus 1}_{t} := \min_{2 \leq k \leq K} L^{(k)}_t$ and event $\mathcal{A} := \{ L^{*\setminus 1}_{t} \leq t\delta \}$, whose complement is denoted by $\mathcal{A}^\complement$.
	
	By the law of total probability,
	\begin{align*}
		\P( L^{(1)}_t > L^{*\setminus 1}_{t} )
		& = \P\left( \{ L^{(1)}_t > L^{*\setminus 1}_{t} \} \cap \mathcal{A} \right) + \P\left( \{ L^{(1)}_t > L^{*\setminus 1}_{t} \} \cap \mathcal{A}^\complement \right) \\
		& \leq \P( \mathcal{A} ) + \P\left( \{ L^{(1)}_t > L^{*\setminus 1}_{t} \} \cap \mathcal{A}^\complement \right) .
	\end{align*}
	To upper bound the first term, since $K$ is finite and fixed, one can directly apply the union bound and obtain
	\begin{align*}
		\P\left( \min_{2 \leq k \leq K} L^{(k)}_t \leq t\delta \right)
		& \leq \sum_{k=2}^K \P\left( L^{(k)}_t \leq t\delta \right) = \sum_{k=2}^K \P\left( L^{(k)}_t - t \mu_k \leq t(\delta - \mu_k) \right) \\
		& = \sum_{k=2}^K \P\left( \sum_{s=1}^t (\ell^{(k)}_s - \mu_k) \leq t(\delta - \mu_k) \right) .
	\end{align*}
	
	\medskip
	
	\noindent \textbf{Case (i).} We first consider the case where Assumption~\ref{assumption:losses_iid_or_mixing}(i) holds.
	By Hoeffding's inequality, since $\ell^{(k)}_t - \mu_k \in [-\mu_k, 1-\mu_k]$, for $k \neq 1$ it holds
	\begin{align*}
		\P\left( \sum_{s=1}^t (\ell^{(k)}_s - \mu_k) \leq t m_k \right) & \leq \exp  \left( - 2 t m_k^2   \right) ,
	\end{align*}
	where $m_k = \delta - \mu_k$. Hence, 
	\begin{equation*}
		\P( \mathcal{A} ) 
		\leq
		\sum_{k=2}^K \exp \left( - 2 t (\delta - \mu_k)^2 \right) .
	\end{equation*}
	Since
	\begin{equation*}
		\P\left( \{ L^{(1)}_t > L^{*\setminus 1}_{t} \} \cap \{ L^{*\setminus 1}_{t} > t\delta \} \right) 
		\leq 
		\P\left( L^{(1)}_t > t\delta \right) ,
	\end{equation*}
	by applying Hoeffding's inequality again we obtain
	\begin{equation*}
		\P\left( L^{(1)}_t > t\delta \right) 
		\leq 
		\exp \left( - 2 t (\delta - \mu_1)^2 \right) .
	\end{equation*}
	Set now $\delta = (\mu_1 + \mu_2)/2 = \mu_1 + \Delta/2$, so that
	\begin{align*}
		\P \left( L^{(1)}_t > L^{*\setminus 1}_t \right)
		& \leq
		2 \exp \left( - t \Delta^2/2 \right) + \sum_{k=3}^K \exp \left( - 2 t (\mu_1 - \mu_k + \Delta/2 )^2 \right) \\
		& = 
		\exp \left( - t \Delta^2/2 \right) \left[ 2 + \sum_{k=3}^K \exp \left( - 2 t (\mu_1 - \mu_k)(\mu_1 - \mu_k + \Delta) \right) \right] .
	\end{align*}
	The final bound follows by using $\Delta = \mu_2 - \mu_1$.
	
	\medskip
	
	\noindent \textbf{Case (ii).} When working under mixing conditions, we can leverage known generalizations of Hoeffding's inequality to dependent data. Under Assumption~\ref{assumption:losses_iid_or_mixing}(ii), the Hoeffding-type bound of Lemma~\ref{lemma:rio_hoeffding_uniform_mixing} (see also  \citealp{rioAsymptoticTheoryWeakly2017}) in our setting states that, for any $\xi > 0$,
	\begin{equation*}
		\P\left( \Abs{ L^{(k)}_t - \mu_k } \geq \xi \right) 
		\leq 
		\sqrt{e} \exp\left( -\frac{ \xi^2 }{ 2 \theta^{(k)}_t t } \right) . 
	\end{equation*}
	We follow the same arguments as in the above proof for case {(i)}, and find that
	\begin{align*}
		\P\left( \sum_{s=1}^t (\ell^{(k)}_s - \mu_k) \leq t m_k \right) 
		& \leq \sqrt{e} \exp\left( - \frac{ t m_k^2 }{ 2 \theta^{(k)}_t } \right) 
	\end{align*}
	and
	\begin{equation*}
		\P\left( L^{(1)}_t > t\delta \right) 
		\leq 
		\sqrt{e} \exp\left( - \frac{t (\delta - \mu_1)^2}{ 2 \theta^{(1)}_t } \right) .
	\end{equation*}
	Combining these bounds and choosing once more $\delta = \mu_1 + \Delta/2$ concludes the proof.
\end{proof}

\begin{corollary}\label{corollary:simple_ftl_conc_bound}
	Let the conditions of Lemma~\ref{lemma:ftl_concentr_lemma} hold. Then the following bounds hold:
	\begin{description}
		\item[(i)] Under the conditions of case {(i)} of Lemma~\ref{lemma:ftl_concentr_lemma}
		\begin{equation}\label{eq:simple_ftl_conc_bound}
			\P \left( L^{(1)}_t > \min_{2 \leq k \leq K} L^{(k)}_t \right) 
			\leq
			K \exp\left(- t \frac{\Delta^2}{2}\right).
		\end{equation}
		\item[(ii)] Under the conditions of case {(ii)} of Lemma~\ref{lemma:ftl_concentr_lemma}, assume, additionally, that Assumption~\ref{ass:uniform summability}(i) holds and define $\overline{\theta}_{\max} = \max_{k \in [K]} \sup_{t \geq 1} \theta^{(k)}_t$. Then 
		\begin{equation}\label{eq:simple_ftl_mixing_bound}
			\P\left( L^{(1)}_t > \min_{2 \leq k \leq K} L^{(k)}_t \right) 
			\leq
			K\sqrt{e}
			\exp\left( - \frac{t \Delta^2}{ 8 \overline{\theta}_{\max} } \right).
		\end{equation}
	\end{description}
\end{corollary}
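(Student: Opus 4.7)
My plan is to derive both bounds as direct simplifications of the corresponding cases of Lemma~\ref{lemma:ftl_concentr_lemma}, using only the ordering $\mu_1 < \mu_2 \leq \ldots \leq \mu_K$ together with $\Delta = \mu_2 - \mu_1$. No genuinely new argument is required; the only ingredient is an appropriate uniform lower bound on the quadratic exponents that appear inside the sums.

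For part (i), I would start from the bound
\begin{equation*}
    \P\big(L^{(1)}_t > \min_{2 \leq k \leq K} L^{(k)}_t\big)
    \leq
    \exp\!\big(-t\Delta^2/2\big)\left[2 + \sum_{k=3}^K \exp\!\big(-2t(\mu_1 - \mu_k)(\mu_2 - \mu_k)\big)\right]
\end{equation*}
delivered by Lemma~\ref{lemma:ftl_concentr_lemma}(i) and observe that for every $k \geq 3$ both factors $\mu_1 - \mu_k$ and $\mu_2 - \mu_k$ are non-positive by the ordering of expected losses, so their product is non-negative and the corresponding exponential is at most $1$. The bracket is therefore bounded by $2 + (K-2) = K$, which gives \eqref{eq:simple_ftl_conc_bound}.

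For part (ii), I would begin from
\begin{equation*}
    \P\big(L^{(1)}_t > \min_{2 \leq k \leq K} L^{(k)}_t\big)
    \leq
    \sqrt{e}\sum_{k=1}^K \exp\!\left(-\frac{t(\mu_1 - \mu_k + \Delta/2)^2}{2\theta^{(k)}_t}\right)
\end{equation*}
and show, term by term, that $(\mu_1 - \mu_k + \Delta/2)^2 \geq \Delta^2/4$ for all $k \in [K]$. For $k=1$ and $k=2$ the square equals exactly $\Delta^2/4$; for $k \geq 3$ the ordering gives $\mu_1 - \mu_k + \Delta/2 \leq -\Delta/2$, so the square is at least $\Delta^2/4$. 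The uniform summability Assumption~\ref{ass:uniform summability}(i) then ensures that $\theta^{(k)}_t = 1 + 4\sum_{n=1}^{t-1}\varphi^{(k)}_n \leq \overline{\theta}_{\max} < \infty$ uniformly in $t$ and $k$, so replacing $(\mu_1 - \mu_k + \Delta/2)^2$ by $\Delta^2/4$ and $\theta^{(k)}_t$ by $\overline{\theta}_{\max}$ yields the common exponential factor $\exp(-t\Delta^2/(8\overline{\theta}_{\max}))$, and summing over the $K$ terms produces \eqref{eq:simple_ftl_mixing_bound}.

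There is no real obstacle here: the proof is a two-step bookkeeping exercise combining the ordering of the expected losses (to force the cross-terms to contribute at most $1$ in case (i), or to ensure $|\mu_1 - \mu_k + \Delta/2| \geq \Delta/2$ in case (ii)) with the uniform control of the mixing-adjusted variance proxy in case (ii). The only point that requires a line of comment is that $\overline{\theta}_{\max}$ is finite, which is precisely what Assumption~\ref{ass:uniform summability}(i) provides.
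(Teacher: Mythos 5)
Your proof is correct and follows essentially the same route as the paper's: part (i) bounds the bracket by $K$ via $(\mu_1-\mu_k)(\mu_2-\mu_k)\geq 0$ for $k\geq 3$, and part (ii) lower-bounds the exponent uniformly and replaces $\theta^{(k)}_t$ by $\overline{\theta}_{\max}$, finite under Assumption~\ref{ass:uniform summability}(i). If anything, your term-by-term verification that $(\mu_1-\mu_k+\Delta/2)^2 \geq \Delta^2/4$ (equality at $k=1,2$, and $\mu_1-\mu_k+\Delta/2 \leq -\Delta/2$ for $k\geq 3$) is stated slightly more carefully than the paper's corresponding remark.
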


\begin{proof}[Proof of Corollary~\ref{corollary:simple_ftl_conc_bound}]
	Part {(i)} is an immediate consequence of case {(i)} of Lemma~\ref{lemma:ftl_concentr_lemma} using that, since $\left(\mu_1-\mu_k\right)\left(\mu_2-\mu_k\right) \geq 0$ for all $k \geq 3$, then $\exp \left(-2 t\left(\mu_1-\mu_k\right)\left(\mu_2-\mu_k\right)\right) \leq 1$ for all $k \geq 3$, and hence $\left[ 2 + \sum_{k=3}^K \exp  \left( - 2 t (\mu_1 - \mu_k)(\mu_2 - \mu_k)   \right) \right]<K$.  Part {(ii)} is obtained directly from case {(ii)} of Lemma~\ref{lemma:ftl_concentr_lemma} by noticing that $-\Delta/2 \leq \mu_1-\mu_k + \Delta/2 \leq \Delta/2$ and $\theta_t^{(k)} \leq \bar{\theta}_{\max }$,  for all $k\in [K]$ and for all $t$, where by Assumption~\ref{ass:uniform summability}(i), $\bar{\theta}_{\max }\leq 1+4 C_{1,\varphi}<\infty$.
\end{proof}

\subsubsection{Bernstein-type}

\begin{lemma}[Bernstein-type]
	\label{lem:ftl-bernstein}
	Let Assumptions~\ref{assumption:losses_0-1}-\ref{assumption:losses_iid_or_mixing} hold and let $v_k := \Var(\ell^{(k)}_t)$, $k\in [K]$. 
	\begin{description}
		\item[(i)] If Assumption~\ref{assumption:losses_iid_or_mixing}(i) holds, then for all $t\ge1$,
		\begin{align}\label{part i bound}
			\P\left( L^{(k^\star)}_t > \min_{k \neq k^\star} L^{(k)}_t \right) 
			&\leq
			2\sum_{k\neq k^\star}
			\exp  \left(
			-\frac{t \Delta_k^2}{ 8 \max\{v_{k},v_{k^\star}\} + \tfrac{4}{3} \Delta_k}
			\right) .
		\end{align}
		\item[(ii)] If Assumption~\ref{assumption:losses_iid_or_mixing}(ii) holds, define ${\theta}^{(k)}_t := 1+\sum_{n=1}^{t}\sqrt{\varphi^{(k)}_n}$. Then for all $t\ge1$,
		\begin{align}\label{part ii bound}
			\P\left( L^{(k^\star)}_t > \min_{k \neq k^\star} L^{(k)}_t \right) 
			&\leq
			2 \sum_{k\neq k^\star}
			\exp \left(
			-\frac{t \Delta_k^2}{ 8 \max\{(\theta_t^{(k)})^2,(\theta_t^{(k^\star)})^2\} (8 \max\{v_{k^\star},v_{k}\} + \Delta_k)}
			\right).
		\end{align}
		\item[(iii)] If Assumption~\ref{assumption:losses_iid_or_mixing}{(ii)} holds and for each $k\in [K]$, $\{\ell_t^{(k)}\}_t$ is geometrically $\varphi$-mixing with rate
		$\varphi^{(k)}_n\le c_k \exp{(-b_k n^{\gamma_k})}$ with $b_k>0$, $c_k\ge0$, and $\gamma_k>0$. Define
		$
		\gamma_{\min}:=\min_{k\in[K]}\gamma_k$,
		$b_{\min}:=\min_{k\in[K]} b_k$,
		$c_{\max}:=\max_{k\in[K]} c_k
		$,
		and
		\begin{equation}\label{t0}
			t_0 
			:=
			\max \big\{\min \left\{m \geq 3 \:\vert\: m^2 \geq 3232\, c_{\max} 
			\textnormal{ and } 
			m (\log m)^{-2 / \gamma_{\min}} \geq 4\right\}, e^{3 / b_{\min}} \big\}.
		\end{equation}
		Then for all $t\geq t_0$
		\begin{align}
			\P & \left( L^{(k^\star)}_t > \min_{k \neq k^\star} L^{(k)}_t \right) \nonumber \\
			&\leq 4 \sum_{k\neq k^\star} \exp \left(-\frac{t \Delta_k^2}{32( \max   \left\{ (\log t)^{2/{\gamma_{k^\star}}}v_{k^\star}, (\log t)^{2/{\gamma_k}} v_k\right\}
				+\frac{1}{6} (\log t)^{2/\min\{\gamma_{k^\star},\gamma_k\}} \Delta_k)}\right) . \label{eq:bernstein_part-ii-bound}
		\end{align}
	\end{description}
\end{lemma}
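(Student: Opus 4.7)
The plan is to reduce the event $\{L^{(k^\star)}_t > \min_{k \neq k^\star} L^{(k)}_t\}$ to a union of pairwise comparisons, mirroring the Hoeffding-type proof of Lemma~\ref{lemma:ftl_concentr_lemma}, but replacing Hoeffding's inequality by a variance-sensitive Bernstein counterpart suited to each dependence regime. The first step is the union bound $\P(L^{(k^\star)}_t > \min_{k \neq k^\star} L^{(k)}_t) \leq \sum_{k\neq k^\star} \P(L^{(k^\star)}_t > L^{(k)}_t)$. For each pair $(k^\star, k)$, I would pick the midpoint threshold $\delta_k := \mu_{k^\star} + \Delta_k/2 = (\mu_{k^\star} + \mu_k)/2$, so that $\delta_k - \mu_{k^\star} = \mu_k - \delta_k = \Delta_k/2$, and split
\[
\P(L^{(k^\star)}_t > L^{(k)}_t) \leq \P(L^{(k^\star)}_t \geq t\delta_k) + \P(L^{(k)}_t \leq t\delta_k).
\]
Each summand becomes a one-sided deviation of a centered partial sum from its mean with common margin $t\Delta_k/2$ and loss bound $M=1$.

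For part (i), I would apply the i.i.d.\ Bernstein inequality (Lemma~\ref{lemma:iid_bernstein}) to each tail with variance $v_{k^\star}$ and $v_k$ respectively; with $\xi = t\Delta_k/2$ this yields $\exp(-t\Delta_k^2/(8v + \tfrac{4}{3}\Delta_k))$ for $v \in \{v_{k^\star}, v_k\}$. Upper-bounding both exponents by $\max\{v_{k^\star}, v_k\}$ and summing the two tails gives the factor $2$ in \eqref{part i bound}. For part (ii), the same midpoint split applies, but the Bernstein inequality is replaced by the Samson $\varphi$-mixing version (Lemma~\ref{lemma:samson_phi_mixing}) in its harmonic-mean-combined form $\exp(-\xi^2/(4\theta^2(4nv + M\xi)))$; plugging in $\xi=t\Delta_k/2$ and unifying the two tails via $\max\{(\theta_t^{(k)})^2, (\theta_t^{(k^\star)})^2\}$ and $\max\{v_{k^\star}, v_k\}$ produces \eqref{part ii bound}. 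A minor technicality is that Samson's bound is phrased in terms of $Z=|\sum U_i|$ around $\E Z$; since $\E Z \leq \sqrt{\mathrm{Var}(\sum U_i)}$ is of order $\sqrt{t}$, it is absorbed into the $t\Delta_k/2$ margin up to constants for all $t$, leaving the exponential rate unaffected.

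For part (iii), I would invoke the Hang--Steinwart Bernstein inequality (Theorem~\ref{theorem:huang_C_mixing}) separately for experts $k^\star$ and $k$, each with its own triple $(b, c, \gamma)$. The key observation is that the admissibility threshold $n_0$ in Theorem~\ref{theorem:huang_C_mixing} is increasing in $c$ and decreasing in $b, \gamma$; therefore, defining $t_0$ via $\gamma_{\min}, b_{\min}, c_{\max}$ as in~\eqref{t0} ensures $t_0 \geq n_0^{(k)}$ uniformly in $k$, so that the inequality applies to every expert whenever $t \geq t_0$. Attaching the $(\log t)^{2/\gamma_{k^\star}}$ factor to the $v_{k^\star}$ tail, $(\log t)^{2/\gamma_k}$ to the $v_k$ tail, and using $(\log t)^{2/\min\{\gamma_{k^\star},\gamma_k\}}$ to dominate the two linear $M\xi/3$ contributions yields exactly the denominator in~\eqref{eq:bernstein_part-ii-bound}; the prefactor $4=2\cdot 2$ accounts for the factor $2$ already present in Hang's bound multiplied by the two-tail union.

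The main obstacle in parts (ii) and (iii) is the careful bookkeeping of the expert-specific variance and mixing parameters so that the two per-tail exponents can be unified through $\max$-type constants without degrading the $\Delta_k^2$ rate. In part (iii) the additional difficulty is showing that a single $t_0$ suffices uniformly for all $k$, which is precisely the role played by the worst-case choice of $\gamma_{\min}, b_{\min}, c_{\max}$ in~\eqref{t0}; any other choice would yield expert-dependent admissibility thresholds and would prevent the final union bound from being stated cleanly for a single range $t \geq t_0$.
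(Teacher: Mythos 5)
Your proposal is correct and follows essentially the same route as the paper's proof: a union bound over $k\neq k^\star$, the symmetric midpoint split of each pairwise event into two one-sided tails of the centered sums $S_t^{(k)}=L_t^{(k)}-t\mu_k$ with margin $t\Delta_k/2$, then the i.i.d.\ Bernstein inequality, Samson's $\varphi$-mixing inequality, and the Hang--Steinwart geometric-mixing inequality in the respective regimes, with the two tails unified via $e^{-x/a}+e^{-x/b}\le 2e^{-x/\max\{a,b\}}$ and, in case (iii), the worst-case parameters $\gamma_{\min}$, $b_{\min}$, $c_{\max}$ guaranteeing a single admissibility threshold $t_0$ uniformly over experts. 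Even your side remark about the $\E[Z]$ centering in Samson's bound is consistent with --- indeed slightly more explicit than --- the paper, which applies Lemma~\ref{lemma:samson_phi_mixing} directly as a one-sided bound at margin $\tfrac{1}{2}\Delta_k t$.
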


\begin{proof}[Proof of Lemma~\ref{lemma:ftl_concentr_lemma}]
	We start by defining an event of interest $\mathcal{A}:=\{ L_t^{(k^\star)} > L_t^{(k)}\}$, $k\in [K]$. Notice that $ L_t^{(k^\star)} - L_t^{(k)} \geq 0$ and hence for any $t>0$ it holds that 
	\begin{align}\label{step1}
		L_t^{(k^\star)} - L_t^{(k)} + t\Delta_k  - t\Delta_k = (L_t^{(k^\star)} - t\mu_{k^\star}) - (L_t^{(k)} - t\mu_{k}) \geq t\Delta_k.
	\end{align} 
	Next, notice that, by Remark~\ref{remark:weak_stationarity}, one can define $S_t^{(k)}:=L_t^{(k)} - t\mu_{k} = \sum_{\tau=1}^t\left(\ell_\tau^{(k)}-\mu_{k}\right)$ for all $t\geq 1$ and $k\in [K]$, which allows us to write \eqref{step1} as
	\begin{align*}
		\mathcal{A} = \{S_t^{(k^\star)} - S_t^{(k)} \geq t\Delta_k\} \subseteq \{ S_t^{(k^\star)} \geq \tfrac{1}{2} t\Delta_k \} \cup  \{ -S_t^{(k)} \geq \tfrac{1}{2} t\Delta_k \}
	\end{align*}
	and hence
	\begin{align}\label{eq:PA}
		\P(\mathcal{A}) \leq \P( S_t^{(k^\star)} \geq \tfrac{1}{2} t\Delta_k ) +  \P( S_t^{(k)} \leq -\tfrac{1}{2} t\Delta_k ).
	\end{align}
	
	\noindent \textbf{Case (i).}
	We first consider the case where Assumption~\ref{assumption:losses_iid_or_mixing}(i) holds. Together with Assumption~\ref{assumption:losses_0-1}, it implies that $\{S_t^{(k)}\}_t$ are sums of bounded zero-mean random variables with variances $\{v^{(k)}\}_k$. We use the one-sided Bernstein inequality (see Lemma~\ref{lemma:iid_bernstein}) and obtain 
	\begin{align*}
		\P( S_t^{(k^\star)} \geq \tfrac{1}{2} t\Delta_k ) \leq \exp  \left(
		-\frac{t \Delta_k^2}{ 8 v_{k^\star} + \tfrac{4}{3} \Delta_k} 
		\right) \enspace \text{and}\enspace \P( S_t^{(k)} \leq -\tfrac{1}{2} t\Delta_k ) \leq \exp  \left(
		-\frac{t \Delta_k^2}{ 8 v_{k} + \tfrac{4}{3} \Delta_k} 
		\right).
	\end{align*}
	Substituting these bounds in \eqref{eq:PA} and using the elementary inequality in Lemma~\ref{lemma:elementary exp inequality}, one obtains that 
	\begin{align*}
		\P(\mathcal{A})  
		\leq 
		2 \exp \left(
		-\frac{t \Delta_k^2}{ 8 \max\{v_{k^\star},v_{k}\} + \tfrac{4}{3} \Delta_k}
		\right),
	\end{align*}
	which, by the union bound, yields the inequality in \eqref{part i bound}. 
	
	\medskip
	
	\noindent \textbf{Case (ii).} 
	We proceed analogously to the proof of case {(i)}, using a one-sided Bernstein inequality in Lemma~\ref{lemma:samson_phi_mixing} for sums of $\varphi$-mixing random variables in \eqref{eq:PA}. Using that, by Assumption~\ref{assumption:losses_0-1}, $M=1$ and choosing $\varepsilon=\frac{1}{2} \Delta_k t$ in \eqref{eq:right_bound} and \eqref{eq:left_bound} yields the following bounds \begin{align*}
		\P( S_t^{(k^\star)} \geq \tfrac{1}{2} t\Delta_k ) & \leq \exp \left(
		-\frac{t \Delta_k^2}{ 8 (\theta_t^{(k^\star)})^2 (8 v_{k^\star} + \Delta_k)} 
		\right), \\
		\P( S_t^{(k)} \leq -\tfrac{1}{2} t\Delta_k ) & \leq \exp \left(
		-\frac{t \Delta_k^2}{ 8 (\theta_t^{(k)})^2 (8 v_{k} + \Delta_k)} 
		\right).
	\end{align*}
	Using these inequalities in \eqref{eq:PA} together with the elementary result in Lemma~\ref{lemma:elementary exp inequality} results in 
	\begin{align*}
		\P(\mathcal{A})  
		\leq 
		2 \exp \left(
		-\frac{t \Delta_k^2}{ 8 \max\{(\theta_t^{(k)})^2,(\theta_t^{(k^\star)})^2\} (8 \max\{v_{k^\star},v_{k}\} + \Delta_k)}
		\right),
	\end{align*}
	which, by the union bound, yields the inequality in \eqref{part ii bound} for all $t\geq 1$.
	
	\medskip
	
	\noindent \textbf{Case (iii).} 
	The proof is analogous to the proof of case {(i)}, but now using a one-sided Bernstein inequality for sums of geometrically $\varphi$-mixing random variables in \eqref{eq:PA}, see Theorem~\ref{theorem:huang_C_mixing} in \cite{hangStatisticalLearningKernelbased2015,hangBernsteintypeInequalityMixing2017}. Taking $\varepsilon=\frac{1}{2} \Delta_k t$ yields, for all $t\geq t_0$ with $t_0$ as in \eqref{t0}, obtained as the largest among all the experts, the following one-sided bounds:
	\begin{align*}
		\P \left(S_t^{(k^\star)}\geq \tfrac{t\Delta_k}{2}  \right)
		&\le
		2\exp  \left(
		-\frac{t^2 \Delta_k^2}{ 32 (\log t)^{2/\gamma_{k^\star}} (v_{k^\star} + \tfrac{1}{6} t \Delta_k)}\right) \le
		2\exp  \left(
		-\frac{t \Delta_k^2}{ 32 (\log t)^{2/\gamma_{k^\star}} (v_{k^\star} + \tfrac{1}{6} \Delta_k)}\right),\\
		\P  \left(S_t^{(k)}\leq -\tfrac{t\Delta_k}{2}  \right)
		&\le
		2\exp  \left(
		-\frac{t^2 \Delta_k^2}{ 32(\log t)^{2/\gamma_{k}} (v_k + \tfrac{1}{6} t \Delta_k)}
		\right) \le
		2\exp  \left(
		-\frac{t \Delta_k^2}{ 32(\log t)^{2/\gamma_{k}} (v_k + \tfrac{1}{6}\Delta_k)}
		\right),
	\end{align*}
	where we used that $M=1$ by Assumption~\ref{assumption:losses_0-1}, and that ${t^2}/(a+b t)>t/(a+b)$ for any $a,b,t>0$.
	Substituting these bounds in \eqref{eq:PA} and using the fact that $\exp(-\tfrac{x}{a})+\exp(-\tfrac{x}{b}) \leq 2\exp(-\tfrac{x}{\max\{a,b\}})$ for $x \geq 0$ and $a,b>0$, one obtains that 
	\begin{align*}
		\P(\mathcal{A})  \leq 4  \exp \left(-\frac{t \Delta_k^2}{32( \max   \left\{ (\log t)^{2/{\gamma_{k^\star}}}v_{k^\star}, (\log t)^{2/{\gamma_k}} v_k\right\}
			+\frac{1}{6} (\log t)^{2/\min\{\gamma_{k^\star},\gamma_k\}} \Delta_k)}\right),    
	\end{align*}
	which, by the union bound, yields \eqref{eq:bernstein_part-ii-bound}. 
\end{proof}
\begin{corollary}\label{corollary:simple_ftl_conc_bound_bernstein}
	Let the conditions of Lemma~\ref{lem:ftl-bernstein} hold and let $v_{\max}:=\max_{k\in[K]} v_k$. Then the following bounds hold:
	\begin{description}
		\item[(i)] Under the conditions of case {(i)} of Lemma~\ref{lem:ftl-bernstein}
		\begin{equation}\label{eq:simple_ftl_conc_bound_bernstein}
			\P \left( L^{(1)}_t > \min_{2 \leq k \leq K} L^{(k)}_t \right) 
			\leq 2 (K-1) 
			\exp  \left(
			-\frac{t \Delta^2}{8 v_{\max} +\tfrac{4}{3} \Delta}
			\right).
		\end{equation}
		\item[(ii)] Under the conditions of case {(ii)} of Lemma~\ref{lem:ftl-bernstein}, assume, additionally, that Assumption~\ref{ass:uniform summability roots}(i) holds and define $\overline{\theta}_{\max}^2 = \max_{k \in [K]} \sup_{t \geq 1} (\theta^{(k)}_t)^2$. Then 
		\begin{equation}\label{eq:simple_ftl_mixing_bound_bernstein}
			\P\left( L^{(1)}_t > \min_{2 \leq k \leq K} L^{(k)}_t \right) 
			\leq {2} (K-1) 
			\exp  \left(
			-\frac{t \Delta^2}{8 \overline{\theta}_{\max}^2} (8 v_{\max} + \Delta)
			\right).
		\end{equation}
		\item[(iii)] Under the conditions of case {(iii)} of Lemma~\ref{lem:ftl-bernstein},
		\begin{equation}\label{eq:simple_ftl_mixing_bound_bernstein_iii}
			\P\left( L^{(1)}_t > \min_{2 \leq k \leq K} L^{(k)}_t \right) 
			\leq 4 (K-1) 
			\exp  \left(
			-\frac{t (\log t)^{-2 / \gamma_{\min}} \Delta^2}{ 32(v_{\max} +\tfrac{1}{6} \Delta)}
			\right).
		\end{equation}
		
	\end{description}
\end{corollary}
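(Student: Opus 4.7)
The plan is to derive each of (i)--(iii) from the corresponding inequality of Lemma~\ref{lem:ftl-bernstein} by two uniform-in-$k$ simplifications of the per-summand exponent, followed by summing $K-1$ identical upper bounds. Throughout, I exploit three elementary majorizations: $\max\{v_{k^\star},v_k\} \leq v_{\max}$; the pointwise bound $\max\{(\theta_t^{(k)})^2,(\theta_t^{(k^\star)})^2\} \leq \overline{\theta}_{\max}^2$, with finiteness ensured under Assumption~\ref{ass:uniform summability roots}(i) by $\overline{\theta}_{\max}^2 \leq (1+C_{2,\varphi})^2 < \infty$; and, for part (iii), the inequality $(\log t)^{2/\gamma_k} \leq (\log t)^{2/\gamma_{\min}}$ valid for $t \geq t_0 \geq e$ since $\log t \geq 1$ and $\gamma_k \geq \gamma_{\min}$.

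The only real point requiring care is the joint substitution of $\Delta_k$ by $\Delta$ in both the numerator $\Delta_k^2$ and the $+\Delta_k$ term inside the denominator of each exponent. Since $\Delta_k \geq \Delta$ by definition of $\Delta = \min_{k\neq k^\star}\Delta_k$, numerator and denominator pull in opposite directions. They reconcile via the elementary monotonicity fact
\begin{equation*}
    x \mapsto \frac{x^2}{a+bx} \text{ is nondecreasing on } (0,\infty) \text{ for any } a,b>0,
\end{equation*}
since its derivative $x(2a+bx)/(a+bx)^2$ is positive. Evaluating this at $x=\Delta$ and $x=\Delta_k$ with the appropriate constants $(a,b)$ yields $\Delta_k^2/(a+b\Delta_k) \geq \Delta^2/(a+b\Delta)$ uniformly in $k$, which is the substitution needed to reach the stated bounds.

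Concretely, for part (i), I apply this substitution to \eqref{part i bound} with $(a,b) = (8v_{\max}, 4/3)$ and collect $K-1$ identical copies, giving \eqref{eq:simple_ftl_conc_bound_bernstein}. For part (ii), I first bound the $\theta$-factors by $\overline{\theta}_{\max}^2$ and the variance factor by $v_{\max}$ in \eqref{part ii bound}, then apply the monotonicity with $(a,b) = (64\,\overline{\theta}_{\max}^2 v_{\max},\, 8\,\overline{\theta}_{\max}^2)$, and sum to obtain \eqref{eq:simple_ftl_mixing_bound_bernstein}. For part (iii), I factor $(\log t)^{2/\gamma_{\min}}$ out of the full denominator of \eqref{eq:bernstein_part-ii-bound} using the logarithmic bounds above, apply the same monotonicity with $(a,b) = ((\log t)^{2/\gamma_{\min}} v_{\max}, \tfrac{1}{6}(\log t)^{2/\gamma_{\min}})$, and sum $K-1$ identical terms to reach \eqref{eq:simple_ftl_mixing_bound_bernstein_iii}. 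The constant prefactors $2,\,2,\,4$ carry over unchanged from Lemma~\ref{lem:ftl-bernstein}, and no assumption beyond those already invoked in the lemma (plus Assumption~\ref{ass:uniform summability roots}(i) for part (ii)) is needed.
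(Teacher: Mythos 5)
Your proposal is correct and follows essentially the same route as the paper's proof: bound the variance, $\theta$-, and logarithmic factors uniformly by $v_{\max}$, $\overline{\theta}_{\max}^2$, and $(\log t)^{2/\gamma_{\min}}$, then replace $\Delta_k$ by $\Delta$ via the monotonicity of $x \mapsto x^2/(a+bx)$ (which the paper states tersely as the exponent being decreasing in $\Delta_k$), and sum the $K-1$ identical terms. Your explicit derivative check of the monotonicity and the observation that $t_0 \ge 3 > e$ guarantees $\log t \ge 1$ in part (iii) simply spell out steps the paper leaves implicit, so no gap remains.
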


\begin{proof}[Proof of Corollary~\ref{corollary:simple_ftl_conc_bound_bernstein}]
	Part {(i)} is an immediate consequence of case {(i)} of Lemma~\ref{lem:ftl-bernstein}. Using the elementary inequality in Lemma~\ref{lemma:elementary exp inequality}, and noticing that the exponent of the exponential is a decreasing function in $\Delta_k$ yields the required result. Part {(ii)} directly follows from case {(i)} of Lemma~\ref{lem:ftl-bernstein}, using again that the exponent of the exponential is a decreasing function in $\Delta_k$ as well as in $(\theta^{(k)}_t)^2$. Additionally, by Assumption~\ref{ass:uniform summability roots}(i),  $\overline{\theta}_{\max }^2=\max _{k \in[K]} \sup _{t \geq 1}\left(\theta_t^{(k)}\right)^2 \leq\left(1+C_{2, \varphi}\right)^2<\infty$. Part {(iii)} directly follows from case {(iii)} of Lemma~\ref{lem:ftl-bernstein} by noticing that it holds that $\Delta_k\ge\Delta$ and
	$$ \max\{(\log t)^{2/\gamma_{k^\star}}v_{k^\star},(\log t)^{2/\gamma_k}v_k\}
	\le (\log t)^{2/\gamma_{\min}} \max_{k\in[K]} v_k \quad {\rm for}\enspace {\rm all}\enspace k\in [K].$$
\end{proof}

\subsubsection{Combined Bound}

The following proposition provides upper bounds for the probability that the expert with the lowest expected loss commits a cumulative error that is larger than that of the runner-up expert. We combine both Hoeffding-type and Bernstein-type results to showcase the different kinds of guarantees that can be obtained.
Note that, although the presented bounds assume constant-in-time second moment of losses, they remain valid if one uses time-dependent variances or their uniform-in-time bound. 

\begin{proposition}\label{proposition:main_concentration}
	Let Assumptions~\ref{assumption:losses_0-1}-\ref{assumption:losses_iid_or_mixing} hold and, without loss of generality, assume that the experts are ordered such that    %
	$
	\mu_1 < \mu_2 \leq \ldots \leq \mu_K 
	$, that is $\Delta = \mu_2-\mu_1>0$. 
	Additionally, let $v_k:=\Var(\ell^{(k)}_t)$, $k\in [K]$. Then for all $t\geq t_0$, 
	\begin{equation*}
		\P\left( L^{(1)}_t > \min_{2 \leq k \leq K} L^{(k)}_t \right) 
		\leq
		\min\left\{ H(t), B(t) \right\}
	\end{equation*}
	with the following cases:
	\begin{description}
		\item[(i)] If the dependence conditions of Lemma~\ref{lem:ftl-bernstein}(i) hold, then then
		\begin{align*}
			H(t)&=
			\exp\left( - \frac{t \Delta^2}{2} \right) \left[ 2 + \sum_{k=3}^K \exp \left( - 2 t (\mu_1 - \mu_k)(\mu_2 - \mu_k) \right) \right], \\B(t) &= 2\sum_{k= 2}^K
			\exp \left(
			-\frac{t \Delta_k^2}{ 8 \max\{v_{1},v_k\} + \tfrac{4}{3} \Delta_k}
			\right) ,
		\end{align*}
		where $t_0 = 1$.
		\item[(ii)] If the dependence conditions of Lemma~\ref{lem:ftl-bernstein}(ii) hold, then
		\begin{align*}
			H(t)&=
			\sqrt{e}
			\sum_{k=1}^K 
			\exp\left( - \frac{t (\mu_1 - \mu_k + \Delta/2)^2}{ 2 \theta^{(k)}_t } \right), \enspace \text{with} \enspace \theta^{(k)}_t = 1 + 4\sum_{n=1}^{t-1}\varphi^{(k)}_n, \\B(t) &= 
			{2}\sum_{k=2}^{K}
			\exp  \left(
			-\frac{t \Delta_k^2}{ 8 \, \max\{(\widetilde{\theta}^{(1)}_t)^2,\widetilde{\theta}^{(k)}_t)^2\} (8 \, \max\{v_{1},v_{k}\} +  \Delta_k)}
			\right) , \enspace \text{with} \enspace \widetilde{\theta}^{(k)}_t=1+\sum_{n=1}^{t}\sqrt{\varphi^{(k)}_n}.
		\end{align*}
		\item[(iii)] If the dependence conditions of Lemma~\ref{lem:ftl-bernstein}(iii) hold, then
		\begin{align*}
			H(t)&=
			\sqrt{e}
			\sum_{k=1}^K 
			\exp\left( - \frac{t (\mu_1 - \mu_k + \Delta/2)^2}{ 2 \theta^{(k)}_t } \right), \enspace \text{with} \enspace \theta^{(k)}_t := 1 + 4\sum_{n=1}^{t-1}\varphi^{(k)}_n, \\B(t) &= 4 \sum_{k = 2}^K \exp \left(-\frac{t \Delta_k^2}{32( \max   \left\{ (\log t)^{2/{\gamma_{1}}}v_{1}, (\log t)^{2/{\gamma_k}} v_k\right\}
				+\frac{1}{6} (\log t)^{2/\min\{\gamma_{1},\gamma_k\}} \Delta_k)}\right) , 
		\end{align*}
		where $t_0$ is given explicitly in \eqref{t0}, Lemma~\ref{lem:ftl-bernstein}, and $\gamma_k > 0$ for all $k \in [K]$.
	\end{description}
\end{proposition}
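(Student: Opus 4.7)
The plan is to assemble the bound by invoking the two concentration lemmas proved earlier in Appendix~\ref{appendix:proofs_concentration_ftl}: Lemma~\ref{lemma:ftl_concentr_lemma} supplies the Hoeffding-type terms $H(t)$, while Lemma~\ref{lem:ftl-bernstein} supplies the Bernstein-type terms $B(t)$. Since the event whose probability is bounded is the same in both lemmas and they give valid upper bounds under the same (or implied) assumptions, I can simply take the pointwise minimum of the two bounds.

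First, I would reconcile notation. The ordering $\mu_1 < \mu_2 \leq \ldots \leq \mu_K$ makes expert $1$ the unique minimizer, so $k^\star = 1$, and the per-expert gaps appearing in Lemma~\ref{lem:ftl-bernstein} reduce to $\Delta_k = \mu_k - \mu_1$ for $k \neq 1$, with $\Delta = \Delta_2$. Under Assumption~\ref{assumption:losses_0-1}, expert-wise variances are bounded (for each $k$, $v_k \leq 1/4$), so the $v_k$ that appear in the Bernstein bounds coincide with those in Lemma~\ref{lem:ftl-bernstein}. For the mixing case, the symbols $\theta^{(k)}_t$ and $\widetilde{\theta}^{(k)}_t$ carry distinct definitions ($1+4\sum \varphi^{(k)}_n$ versus $1+\sum\sqrt{\varphi^{(k)}_n}$), coming respectively from the Hoeffding-type inequality of \cite{rioAsymptoticTheoryWeakly2017} (used in Lemma~\ref{lemma:ftl_concentr_lemma}(ii)) and the Bernstein-type inequality of \cite{samsonConcentrationMeasureInequalities2000} (used in Lemma~\ref{lem:ftl-bernstein}(ii)). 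They are not interchangeable, which is why the statement carries two different decorated symbols.

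Next I handle the three cases in turn. For case (i), under Assumption~\ref{assumption:losses_iid_or_mixing}(i), Lemma~\ref{lemma:ftl_concentr_lemma}(i) directly yields $H(t)$ and Lemma~\ref{lem:ftl-bernstein}(i) yields $B(t)$, both valid for all $t \geq 1 = t_0$. For case (ii), under Assumption~\ref{assumption:losses_iid_or_mixing}(ii), Lemma~\ref{lemma:ftl_concentr_lemma}(ii) gives the stated $H(t)$, while Lemma~\ref{lem:ftl-bernstein}(ii) with $k^\star = 1$ gives $B(t)$; again both hold for $t \geq 1$. For case (iii), the Bernstein term is provided by Lemma~\ref{lem:ftl-bernstein}(iii), which is exactly the statement shown and which requires $t \geq t_0$ with $t_0$ defined in \eqref{t0}. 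The Hoeffding term in case (iii) is the same $H(t)$ as in case (ii), which is legitimate because geometrically $\varphi$-mixing processes are in particular $\varphi$-mixing, so Lemma~\ref{lemma:ftl_concentr_lemma}(ii) applies verbatim. Thus, for all $t \geq t_0$, both bounds hold simultaneously on the probability of the event $\{L^{(1)}_t > \min_{2 \leq k \leq K} L^{(k)}_t\}$, and we may take the minimum.

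There is effectively no hard step here: the technical work was already absorbed into the concentration lemmas (in particular, the union bound over runners-up, the splitting $S^{(1)}_t - S^{(k)}_t \leq t\Delta_k$ into two one-sided deviations, and the application of \cite{samsonConcentrationMeasureInequalities2000} or \cite{hangStatisticalLearningKernelbased2015} in the mixing cases). The only points requiring care in the write-up are (a) verifying that the sub-optimality gap $\Delta$ in the statement matches the $\Delta_k$'s bound in Lemma~\ref{lem:ftl-bernstein}, (b) keeping the two different $\theta$-type constants distinct between the Hoeffding and Bernstein terms, and (c) tracking the validity regime $t \geq t_0$, which is trivial except in case (iii). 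Once these bookkeeping matters are handled, the proof concludes in a single line by applying $\P(A) \leq \min\{H(t), B(t)\}$ whenever $\P(A) \leq H(t)$ and $\P(A) \leq B(t)$.
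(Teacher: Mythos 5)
Your proposal is correct and follows essentially the same route as the paper's own proof: Proposition~\ref{proposition:main_concentration} is proved there precisely by taking $H(t)$ from Lemma~\ref{lemma:ftl_concentr_lemma} and $B(t)$ from Lemma~\ref{lem:ftl-bernstein} (reusing the $\varphi$-mixing Hoeffding bound in case (iii), exactly as you argue) and observing that both bounds hold simultaneously for $t \geq t_0$, so their minimum applies. Your additional bookkeeping remarks -- identifying $k^\star = 1$, keeping $\theta^{(k)}_t$ and $\widetilde{\theta}^{(k)}_t$ distinct, and tracking $t_0$ -- are consistent with, and if anything slightly more explicit than, the paper's write-up.
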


\begin{proof}[Proof of Proposition~\ref{proposition:main_concentration}]
	The assumptions allow us to apply both Lemma~\ref{lemma:ftl_concentr_lemma} and~\ref{lem:ftl-bernstein}.
	
	\noindent \textbf{Case (i).}
	Take $H(t)$ to be the Hoeffding-type bound from Lemma~\ref{lemma:ftl_concentr_lemma}(i) and $B(t)$ the Bernstein-type bound from Lemma~\ref{lem:ftl-bernstein}(i).
	
	\noindent \textbf{Case (ii).}
	Under the $\varphi$-mixing assumptions, take $H(t)$ to be the Hoeffding-type bound from Lemma~\ref{lemma:ftl_concentr_lemma}(ii) and $B(t)$ the Bernstein-type bound from Lemma~\ref{lem:ftl-bernstein}(ii).
	
	\noindent \textbf{Case (iii).}
	Under geometrically $\varphi$-mixing assumptions, similarly take $H(t)$ to be the bound from Lemma~\ref{lemma:ftl_concentr_lemma}(ii) and $B(t)$ from Lemma~\ref{lem:ftl-bernstein}(iii). 
	
	In either case, when $t \geq t_0$, we can control the probability of $L^{(1)}_t$ being greater than all other cumulative expert losses by the minimum of the two bounds, since both apply.
\end{proof}

\subsection{Concentration Results for Hedge}
\label{appendix:proofs_concentration_hedge}

We exploit the theory from Appendix~\ref{appendix:weak_depend} to construct concentration-type bounds for the cumulative losses of experts under both i.i.d.~and dependent settings.

\subsubsection{Hoeffding-type}
\begin{lemma}[Hoeffding-type]
	\label{lemma:hedge_concentr_lemma}
	Let Assumptions~\ref{assumption:losses_0-1}-\ref{assumption:losses_iid_or_mixing} hold. Let $\eta_t = 2\sqrt{\log(K)/t}$ for $t \in [T]$ be the learning rate for decreasing Hedge. Let  $k^\star\in [K]$ be the expert with the smallest expected loss among $K$ experts.  Then for $k\in [K]$ and any $t$
	\begin{description}
		\item[(i)] If Assumption~\ref{assumption:losses_iid_or_mixing}(i) holds, then
		\begin{align}\label{hedge_ass_1}
			\P\left( L_{t}^{(k)}-L_{t}^{(k^\star)}<\frac{\Delta_k t }{2} \right)
			&\leq
			\exp \left(-\frac{ \Delta_k ^2 t}{8 }\right) .
		\end{align}
		\item[(ii)] If Assumption~\ref{assumption:losses_iid_or_mixing}(iii) holds, then
		\begin{align}\label{hedge_ass_2}
			\P\left( L_{t}^{(k)}-L_{t}^{(k^\star)}<\frac{\Delta_k t }{2} \right)
			&\leq
			\sqrt{e}\exp \left(-\frac{ \Delta_k^2 t}{8 {\rho^{(k)}_t}}\right)  ,
		\end{align}
		where $\rho^{(k)}_t := 1 + \sum_{s=1}^t \varphi_s \big(\big\{( \ell^{(k)}_t -  \ell^{(k^\star)}_t )\big\}_{t \in \Int} \big) $.
	\end{description}
\end{lemma}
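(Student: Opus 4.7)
The plan is to recast the event of interest as a lower-tail deviation of a centered sum, and then apply a Hoeffding-type concentration inequality appropriate for the dependence regime. Setting the excess-loss process $\chi^{(k)}_s := \ell^{(k)}_s - \ell^{(k^\star)}_s$, I observe that $\E[\chi^{(k)}_s] = \mu_k - \mu_{k^\star} = \Delta_k$ and, by Assumption~\ref{assumption:losses_0-1}, $\chi^{(k)}_s \in [-1,1]$ almost surely. Since $L^{(k)}_t - L^{(k^\star)}_t = \sum_{s=1}^t \chi^{(k)}_s$, the target event $\{L^{(k)}_t - L^{(k^\star)}_t < \Delta_k t/2\}$ is equivalent to $\{ \sum_{s=1}^t (\chi^{(k)}_s - \Delta_k) < -\Delta_k t/2\}$, namely a lower-tail deviation of magnitude $\Delta_k t/2$ for the centered partial sum.

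For part (i), Assumption~\ref{assumption:losses_iid_or_mixing}(i) forces, for each fixed $s$, the pair $(\ell^{(k)}_s,\ell^{(k^\star)}_s)$ to be independent across $s$ (independence across experts is not required, only across time), so $\{\chi^{(k)}_s\}_s$ is i.i.d., centered after shifting by $\Delta_k$, and supported in $[-1,1]$. I would then invoke the classical one-sided Hoeffding inequality with range $2$, which yields
\begin{equation*}
    \P\left( \sum_{s=1}^t (\chi^{(k)}_s - \Delta_k) \leq -\tfrac{\Delta_k t}{2} \right)
    \leq
    \exp\left( -\frac{(\Delta_k t/2)^2}{2 t} \right)
    =
    \exp\left( -\frac{\Delta_k^2 t}{8} \right),
\end{equation*}
which is exactly the bound \eqref{hedge_ass_1}.

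For part (ii), Assumption~\ref{assumption:losses_iid_or_mixing}(iii) guarantees that the vector-valued loss process $\{\boldsymbol{\ell}_t\}_t$ is strictly stationary and $\varphi$-mixing. Since $\chi^{(k)}_s$ is a bounded measurable function of $(\ell^{(k)}_s,\ell^{(k^\star)}_s)$, the scalar process $\{\chi^{(k)}_s\}_s$ inherits strict stationarity and $\varphi$-mixing with coefficients $\varphi_s(\{\chi^{(k)}_t\}_{t\in\Int})$ bounded above by the $\varphi$-coefficients of the joint process (see \citealp{rioAsymptoticTheoryWeakly2017}, Section~2). I would then apply the uniformly-mixing Hoeffding inequality of Lemma~\ref{lemma:rio_hoeffding_uniform_mixing} to $U_s := \chi^{(k)}_s - \Delta_k$, giving a tail bound of the form $\sqrt{e}\exp\left(-\xi^2/(2\, \theta^{(k)}_t\, t \max_s M_s)\right)$, and plug in $\xi = \Delta_k t/2$.

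The main obstacle is calibrating constants so that the exponent takes the announced form $-\Delta_k^2 t/(8\rho^{(k)}_t)$ with $\rho^{(k)}_t = 1 + \sum_{s=1}^t \varphi_s(\{\chi^{(k)}_t\}_{t\in\Int})$, rather than the factor $1 + 4\sum_s \varphi_s$ that Lemma~\ref{lemma:rio_hoeffding_uniform_mixing} produces together with the per-term bound $|U_s|^2 \leq 4$. This is a bookkeeping step: the centering around $\Delta_k$ effectively halves the relevant Hoeffding range (so that $M_s$ can be taken to be $1$ rather than $4$ after the symmetrization implicit in Rio's proof), and the remaining numerical factor can be absorbed into the redefinition of $\rho^{(k)}_t$; the summability condition in Assumption~\ref{ass:uniform summability}(ii) ensures $\rho^{(k)}_t$ remains finite uniformly in $t$, so that the bound is informative for all horizons $t \geq 1$.
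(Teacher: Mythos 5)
Your proposal is correct and follows essentially the same route as the paper's proof: the paper centers the excess losses via $Z_s^{(k)} := -\ell_s^{(k)} + \ell_s^{(k^\star)} + \Delta_k$ (the sign-flip of your $\chi_s^{(k)} - \Delta_k$), recasts the event as $\{\sum_{s=1}^t Z_s^{(k)} > \Delta_k t/2\}$, and applies classical one-sided Hoeffding for case (i) — with arithmetic identical to yours — and Lemma~\ref{lemma:rio_hoeffding_uniform_mixing} for case (ii), after noting exactly as you do that the scalar difference process inherits $\varphi$-mixing from the vector-valued loss process. The constant-calibration obstacle you flag in case (ii) is genuine (one has $\lvert Z_s^{(k)}\rvert \le 1+\Delta_k \le 2$, and Rio's lemma carries the factor $1+4\sum_{s}\varphi_s$ rather than $1+\sum_s \varphi_s$), and your proposed fix of taking $M_s=1$ and ``absorbing'' factors into $\rho^{(k)}_t$ is not rigorous as stated since $\rho^{(k)}_t$ is fixed by the lemma — but the paper's own proof does not resolve this either, simply invoking the lemma and asserting the $\sqrt{e}\exp(-\Delta_k^2 t/(8\rho^{(k)}_t))$ form, so your argument matches the published one in both structure and level of rigor.
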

\begin{proof}[Proof of Lemma~\ref{lemma:hedge_concentr_lemma}]
	
	In order to derive the bound for $\P\left( L_{t}^{(k)}-L_{t}^{(k^\star)}<{\Delta_kt}/{2} \right)$, one constructs, for every $k\neq k^\star$, random variables $Z_t^{(k)}:=-\ell_t^{(k)} + \ell_t^{(k^\star)} + \Delta_k$,  $Z_t^{(k)}\in [-1+\Delta_k, 1+\Delta_k]\subset[-1,2]$, $\E[Z_t^{(k)}]=0$, and writes
	\begin{align}
		\label{eq:pz}
		\P\left( L_{t}^{(k)}-L_{t}^{(k^\star)}<\frac{\Delta_k t }{2}\right)&=\P\left( \sum_{s=1}^{t}Z_s^{(k)}>\frac{\Delta_k t}{2}\right).
	\end{align}
	We now apply the conditions in case (i) and case (ii) to conclude the proof.
	
	\medskip
	
	\noindent {\bf Case (i)}. Under Assumption~\ref{assumption:losses_iid_or_mixing}(i), for each $k\in [K]$, $\{Z_t^{(k)}\}_t$ are i.i.d.~and Hoeffding's inequality yields
	\begin{align*}
		\P\left( \sum_{s=1}^{t}Z_s^{(k)}>\frac{\Delta_k t}{2}\right)\leq \exp \left({-\frac{t}{2}\left(\frac{\Delta_k}{2}\right)^2}\right)\leq \exp\left(-\frac{\Delta_k^2 t}{ 8}\right),
	\end{align*}
	which is the inequality in \eqref{hedge_ass_1}. 
	
	\medskip
	
	\noindent  {\bf Case (ii)}. Under Assumption~\ref{assumption:losses_iid_or_mixing}(iii), for each $k\in [K]$, $\{Z_t^{(k)}\}_t$ is $\varphi$-mixing (see, for example, \cite{Dedecker2007a}) with the mixing coefficients bounded by the mixing coefficients $\{\varphi_n\}_{n\geq 1}$ of the $K$ experts' losses sequence. Then by Lemma~\ref{lemma:rio_hoeffding_uniform_mixing}  and definition of ${\rho}^{(k)}_t$, for all $t\geq 1$,
	\begin{align}\label{proof-eq:hedge_depend_hoeff_1}
		\P\left( \sum_{s=1}^{t-1}Z^{(k)}_s > \frac{\Delta_k (t-1)}{2} \right) \leq \sqrt{e} \exp\left( -  \frac{\Delta_k^2 (t-1)}{8 {\rho}^{(k)}_t} \right) ,
	\end{align}
	which yields \eqref{hedge_ass_2}. 
\end{proof}

\begin{remark}%
	\label{remark:hedge_concentr_uniform}
	Notice that given some bound 
	\begin{align*}
		\P\left(  L^{(k)}_t-L_{t}^{(k^\star)}<\frac{\Delta_k t }{2} \right)
		&\leq  a , %
	\end{align*}
	for all $k \not= k^*$, one can use that
	\begin{align*}
		\left\{\min _{k \neq k^{\star}} L_t^{(k)}-L_t^{\left(k^\star\right)} <\frac{\Delta t}{2}\right\} \subseteq \bigcup_{k \neq k^{\star}}\left\{L_t^{(k)}-L_t^{\left(k^\star\right)}<\frac{\Delta_k t}{2}\right\}
	\end{align*}
	to obtain 
	\begin{align*}
		\P\left( \min _{k \neq k^{\star}} L_t^{(k)}-L_t^{\left(k^\star\right)} <\frac{\Delta t}{2} \right) \leq  (K-1) a .
	\end{align*}
\end{remark}

\begin{corollary}\label{corollary:simple_hedge_conc_bound_hoeffding}
	Let the conditions of Lemma~\ref{lemma:hedge_concentr_lemma} hold.
	Then the following bounds hold:
	\begin{description}
		\item[(i)] Under the conditions of case (i) of Lemma~\ref{lemma:hedge_concentr_lemma}
		\begin{align}
			\P\left( L_{t}^{(k)}-L_{t}^{(k^\star)}<\frac{\Delta_k t }{2} \right)
			\leq  \exp \left(-\frac{ \Delta ^2 t}{8 }\right) .
		\end{align}
		\item[(ii)] Under the conditions of case (ii) of Lemma~\ref{lemma:hedge_concentr_lemma}, and, additionally, under Assumption~\ref{ass:uniform summability}(ii)
		\begin{align}
			\P\left( L_{t}^{(k)}-L_{t}^{(k^\star)}<\frac{\Delta_k t }{2} \right)
			\leq  \sqrt{e} \exp \left(-\frac{\Delta^2 t }{8 \overline{\rho}_{\max}}\right) ,
		\end{align}
		where $\overline{\rho}_{\max} = \max_{k \not= k^\star} \sup_{t \geq 1} \rho^{(k)}_t$.
	\end{description}
\end{corollary}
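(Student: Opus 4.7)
The plan is to obtain both parts as essentially immediate consequences of Lemma~\ref{lemma:hedge_concentr_lemma}, by replacing the per-expert gap $\Delta_k$ with the uniform sub-optimality gap $\Delta = \min_{k \neq k^\star} \Delta_k$ and, in part (ii), by replacing the time-dependent dependence factor $\rho^{(k)}_t$ with its uniform-in-$(k,t)$ bound $\overline{\rho}_{\max}$. Throughout, I will use that the map $x \mapsto \exp(-cx)$ is decreasing for $c>0$, so enlarging the quantity inside the exponent's denominator only weakens the bound.

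For part (i), I would start from Lemma~\ref{lemma:hedge_concentr_lemma}(i), which under Assumption~\ref{assumption:losses_iid_or_mixing}(i) gives
\begin{equation*}
    \P\left( L^{(k)}_{t} - L^{(k^\star)}_{t} < \frac{\Delta_k t}{2} \right)
    \leq
    \exp\left( -\frac{\Delta_k^2 t}{8} \right).
\end{equation*}
Since $\Delta_k \geq \Delta$ for every $k \neq k^\star$ by definition of the sub-optimality gap, $\Delta_k^2 \geq \Delta^2$, and monotonicity of the exponential delivers the stated bound.

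For part (ii), I would apply Lemma~\ref{lemma:hedge_concentr_lemma}(ii), which yields
\begin{equation*}
    \P\left( L^{(k)}_{t} - L^{(k^\star)}_{t} < \frac{\Delta_k t}{2} \right)
    \leq
    \sqrt{e}\,\exp\left( -\frac{\Delta_k^2 t}{8\,\rho^{(k)}_t} \right),
\end{equation*}
and again use $\Delta_k \geq \Delta$ in the numerator. To handle the denominator I would invoke Assumption~\ref{ass:uniform summability}(ii): since the $\varphi$-mixing coefficients of $\{\boldsymbol{\ell}_{t,h}\}_t$ dominate those of the excess-loss process $\{\ell^{(k)}_t - \ell^{(k^\star)}_t\}_t$ (a measurable transformation cannot increase mixing coefficients, see for example \cite{Dedecker2007a}), the series $\sum_{n \geq 1} \varphi_n(\{\ell^{(k)}_t - \ell^{(k^\star)}_t\}_{t \in \mathbb{Z}})$ is uniformly bounded across $k$ and $t$ by $\widetilde{C}_{1,\varphi}$, so $\overline{\rho}_{\max} \leq 1 + \widetilde{C}_{1,\varphi} < \infty$. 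Consequently $\rho^{(k)}_t \leq \overline{\rho}_{\max}$ for all $k \neq k^\star$ and all $t \geq 1$, and monotonicity of the exponential in its denominator gives the claimed bound.

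No real obstacle is anticipated; the only subtlety is the check that $\overline{\rho}_{\max}$ is finite, which is precisely where Assumption~\ref{ass:uniform summability}(ii) (rather than Assumption~\ref{ass:uniform summability}(i), which only bounds summability for individual experts' loss processes) is needed, because the excess-loss process involves two experts simultaneously and its mixing coefficients are naturally controlled by those of the joint loss vector $\{\boldsymbol{\ell}_{t,h}\}_t$.
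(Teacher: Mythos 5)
Your proposal is correct and matches the paper's own proof: part (i) uses $\Delta_k \geq \Delta$ together with monotonicity of the exponential in \eqref{hedge_ass_1}, and part (ii) upper bounds \eqref{hedge_ass_2} using $\rho^{(k)}_t \leq \overline{\rho}_{\max} \leq 1+\widetilde{C}_{1,\varphi} < \infty$ via Assumption~\ref{ass:uniform summability}(ii). Your additional remark that the mixing coefficients of the excess-loss process $\{\ell^{(k)}_t - \ell^{(k^\star)}_t\}_t$ are dominated by those of the joint vector $\{\boldsymbol{\ell}_{t,h}\}_t$ is exactly the observation the paper makes inside the proof of Lemma~\ref{lemma:hedge_concentr_lemma}(ii), so no gap remains.
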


\begin{proof}[Proof of Corollary~\ref{corollary:simple_hedge_conc_bound_hoeffding}]
	To obtain case (i), we simply note that $\Delta_k \geq \Delta$ for all $k\in [K]$ on the right-hand side of \eqref{hedge_ass_1}.
	Case (ii) similarly follows by using that by Assumption~\ref{ass:uniform summability}(ii), $\overline{\rho}_{\max}\leq 1+\widetilde{C}_{1,\phi}<\infty$, and upper bounding \eqref{hedge_ass_2}. 
\end{proof}

\subsubsection{Bernstein-type}

\begin{lemma}[Bernstein-type]
	\label{lem:hedge-bernstein}
	Let Assumptions~\ref{assumption:losses_0-1}-\ref{assumption:losses_iid_or_mixing} hold. Let $\eta_t = 2\sqrt{\log(K)/t}$ for $t \in [T]$ be the learning rate for decreasing Hedge. Let  $k^\star\in [K]$ be the expert with the smallest expected loss among $K$ experts. Define  $\widetilde{v}_k := \Var(\ell^{(k)}_t-\ell^{(k^\star)}_t)$, $k\in [K]$, $k\neq k^\star$.  Then
	\begin{description}
		\item[(i)] If Assumption~\ref{assumption:losses_iid_or_mixing}(i) holds, then
		\begin{align}\label{hedge_ass_bern_case_1}
			\P\left( L_{t}^{(k)}-L_{t}^{(k^\star)}<\frac{\Delta_k t }{2}\right)
			&\leq
			\exp \left(-\frac{t \Delta_k^2}{8 (\widetilde{v}_k + \frac{1}{3}\Delta_k)}\right) .
		\end{align}
		\item[(ii)] If Assumption~\ref{assumption:losses_iid_or_mixing}(iii) holds, then
		\begin{align}\label{hedge_ass_bern_case_2_1}
			\P\left( L_{t}^{(k)}-L_{t}^{(k^\star)} < \frac{\Delta_k t }{2}\right)
			& \leq 
			\exp \left(-\frac{t \Delta_k^2}{16 ({\rho}^{(k)}_t)^2(4 \widetilde{v}_k + \Delta_k)}\right)
			, 
		\end{align}
		where ${\rho}^{(k)}_t := 1 + \sum_{n=1}^t \sqrt{\varphi_n \big(\big\{( \ell^{(k)}_t -  \ell^{(k^\star)}_t )\big\}_{t \in \Int} \big) }$.
		\item[(iii)] If Assumption~\ref{assumption:losses_iid_or_mixing}(iii) holds and for each $k\in [K]$, $\{\ell_t^{(k)} - \ell_t^{(k^\star)}\}_t$ is geometrically $\varphi$-mixing with rate
		$\varphi^{(k)}_n \leq c_k \exp{(-b_k n^{\gamma_k})}$ with $b_k>0$, $c_k\ge0$, and $\gamma_k>0$. Define
		$
		\gamma_{\min}:=\min_{k\in[K]}\gamma_k$,
		$b_{\min}:=\min_{k\in[K]} b_k$,
		$c_{\max}:=\max_{k\in[K]} c_k
		$,
		and
		\begin{equation}\label{t0_hedge}
			t_0 
			:=
			\max \big\{\min \left\{m \geq 3 \:\vert\: m^2 \geq 6464\, c_{\max} 
			\textnormal{ and } 
			m (\log m)^{-2 / \gamma_{\min}} \geq 4\right\}, e^{3 / b_{\min}} \big\}.
		\end{equation}
		Then for all $t\geq t_0$
		\begin{align}\label{eq:hedge_bernstein_part-iii-bound}
			\P\left( L_{t}^{(k)}-L_{t}^{(k^\star)}<\frac{\Delta_k t }{2}\right)
			&\leq  2 \exp \left(-\frac{t^2 \Delta_k^2}{32(\log t)^{2 / \gamma_k} \left( \widetilde{v}_k+\frac{t}{3}\Delta_k\right)}\right) 
		\end{align}
	\end{description}
\end{lemma}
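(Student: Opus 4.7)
The plan is to mirror the proof of the Hoeffding-type Lemma for Hedge, replacing the Hoeffding bounds with Bernstein-type inequalities from Appendix~\ref{appendix:weak_depend}. For each $k \neq k^\star$, I would introduce the centered excess-loss variables
\[
Z_s^{(k)} := \ell_s^{(k^\star)} - \ell_s^{(k)} + \Delta_k, \qquad s \geq 1,
\]
which satisfy $\E[Z_s^{(k)}] = 0$, $|Z_s^{(k)}| \leq 1 + \Delta_k \leq 2$ (using Assumption~\ref{assumption:losses_0-1} and $\Delta_k \in [0,1]$), and $\Var(Z_s^{(k)}) = \widetilde{v}_k$. Then the event of interest rewrites as
\[
\Big\{ L_t^{(k)} - L_t^{(k^\star)} < \tfrac{\Delta_k t}{2} \Big\}
= \Big\{ \sum_{s=1}^t Z_s^{(k)} > \tfrac{\Delta_k t}{2} \Big\},
\]
exactly the form required by a one-sided Bernstein bound with $\xi = \Delta_k t/2$ and $M = 2$.

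The three cases then correspond to applying three different Bernstein-type results. For case (i), under Assumption~\ref{assumption:losses_iid_or_mixing}(i) the $\{Z_s^{(k)}\}_s$ are i.i.d., so I would invoke Lemma~\ref{lemma:iid_bernstein} and check arithmetically that the resulting exponent $\xi^2/(2tv + \tfrac{2}{3}M\xi)$ reduces to $t\Delta_k^2/(8(\widetilde{v}_k + \tfrac{1}{3}\Delta_k))$. For case (ii), under Assumption~\ref{assumption:losses_iid_or_mixing}(iii) the joint process $\{\boldsymbol{\ell}_s\}_s$ is $\varphi$-mixing, and since $Z_s^{(k)}$ is a measurable function of $\boldsymbol{\ell}_s$, the mixing coefficients of $\{Z_s^{(k)}\}_s$ are bounded by those of the joint process; I would then apply Lemma~\ref{lemma:samson_phi_mixing} and observe that the $\theta_n^2$ factor there is precisely $({\rho}^{(k)}_t)^2$, leading to the stated denominator $16({\rho}^{(k)}_t)^2(4\widetilde{v}_k + \Delta_k)$. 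For case (iii), geometric $\varphi$-mixing of $\{\ell_s^{(k)} - \ell_s^{(k^\star)}\}_s$ transfers by the same measurability argument, so Theorem~\ref{theorem:huang_C_mixing} applies with $\sigma^2 = \widetilde{v}_k$ and $M = 2$; substituting $\varepsilon = \Delta_k t/2$ yields the stated bound, while the threshold $t_0$ in \eqref{t0_hedge} is obtained by inserting $M = 2$ into the condition $m^2 \geq 3232\, c M$ (giving $6464\, c_{\max}$) and taking the worst constants $\gamma_{\min}$, $b_{\min}$, $c_{\max}$ across experts so the bound holds uniformly for all $k \neq k^\star$.

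The main obstacle is not conceptual but bookkeeping: making sure the uniform bound $M = 2$ (rather than $M = 1$ as in the individual-loss Bernstein arguments of Lemma~\ref{lem:ftl-bernstein}) propagates correctly into every denominator, and verifying that the two-sided mixing coefficients of $\{Z_s^{(k)}\}_s$ inherit from the joint loss process rather than from the marginal $\{\ell_s^{(k)}\}_s$ alone --- this is why case (ii) relies on Assumption~\ref{assumption:losses_iid_or_mixing}(iii) (joint mixing) instead of Assumption~\ref{assumption:losses_iid_or_mixing}(ii) (marginal mixing), a subtlety that I would flag explicitly. Once these points are addressed, no further technical innovation is required and the remaining steps are direct substitutions.
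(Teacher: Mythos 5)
Your proposal matches the paper's proof essentially step for step: the same centered excess-loss variables $Z_s^{(k)}$ with envelope $M=2$, the same rewriting of the event as the one-sided tail $\P\big(\sum_{s=1}^t Z_s^{(k)} > \Delta_k t/2\big)$, and the same three Bernstein-type inputs (Lemma~\ref{lemma:iid_bernstein} for case (i), Lemma~\ref{lemma:samson_phi_mixing} with $\theta_t^2=(\rho_t^{(k)})^2$ for case (ii), and Theorem~\ref{theorem:huang_C_mixing} with $\varepsilon=\Delta_k t/2$ for case (iii)), including the correct propagation of $M=2$ into every denominator and into the $6464\,c_{\max}$ threshold defining $t_0$ in \eqref{t0_hedge}. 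The subtlety you flag---that case (ii) needs joint mixing under Assumption~\ref{assumption:losses_iid_or_mixing}(iii) so that the excess-loss process $\{Z_t^{(k)}\}_t$ inherits $\varphi$-mixing, rather than marginal mixing of the individual losses---is precisely the justification the paper uses, so no gap remains.
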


\begin{proof}[Proof of Lemma~\ref{lem:hedge-bernstein}] 
	Analogously to the proof of Lemma~\ref{lemma:hedge_concentr_lemma}, we separately analyze \eqref{eq:pz} for cases (i) and (ii).
	
	\noindent{\bf Case (i)}. Under Assumption~\ref{assumption:losses_iid_or_mixing}(i), $\{Z_{t}^{(k)}\}_t$ are i.i.d.~for each $k\in [K]$. We apply the one-sided Bernstein inequality (see Lemma~\ref{lemma:iid_bernstein}) and obtain
	\begin{align}\label{proof-eq:hedge_depend_bernstein_i}
		\P\left( \sum_{s=1}^{t}Z^{(k)}_s > \frac{\Delta_k t}{2} \right) \leq \exp \left(-\frac{ \Delta_k^2 t}{8 (\widetilde{v}_k + \frac{1}{3}\Delta_k)}\right),
	\end{align}
	where we used that $M=2$, since $Z_t^{(k)}\in [-1+\Delta_k, 1+\Delta_k]\subset[-1,2]$. This expression is the inequality in \eqref{hedge_ass_bern_case_1}, as required.
	
	\medskip
	
	\noindent{\bf Case (ii)}. 
	Under Assumption~\ref{assumption:losses_iid_or_mixing}(iii), $\{Z_{t}^{(k)}\}_t$ are $\varphi$-mixing for each $k\in [K]$ with  coefficients $\{\varphi_n \big(\big\{( \ell^{(k)}_t -  \ell^{(k^\star)}_t )\big\}_{t \in \Int} \big)\}_{n\geq 1}$.  The absolute value of random variables $\{Z_{t}^{(k)}\}_t$ is bounded from above by $M=2$ by Assumption~\ref{assumption:losses_0-1}. We apply the one-sided Bernstein inequality (see Lemma~\ref{lemma:samson_phi_mixing}) and obtain
	\begin{align}\label{proof-eq:hedge_depend_bernstein_ii}
		\P\left( \sum_{s=1}^{t}Z^{(k)}_s > \frac{\Delta_k t}{2} \right) \leq 
		\exp \left(-\frac{ \Delta_k^2 t}{16 ({\rho}^{(k)}_t)^2(4 \widetilde{v}_k + \Delta_k)}\right),
	\end{align}
	where we define ${{\rho}^{(k)}_t}$ for all $t\in [T]$ and $k\in [K]$ as in the statement, and \eqref{hedge_ass_bern_case_2_1} follows as required.
	
	\medskip
	
	\noindent{\bf Case (iii)}. The proof is analogous to the proof of case {(i)} but now using a one-sided Bernstein inequality for sums of geometrically $\varphi$-mixing random variables $\{Z_t^{(k)}\}_t$, see Theorem~\ref{theorem:huang_C_mixing}~\citep{hangStatisticalLearningKernelbased2015,hangBernsteintypeInequalityMixing2017}. Taking $\varepsilon=\frac{1}{2} \Delta_k t$ yields, for all $t\geq t_0$ with $t_0$ as in \eqref{t0_hedge}, obtained as the largest among all the experts and taking into account that $M=2$, the following bound:
	\begin{align*}
		\P\left( \sum_{s=1}^{t}Z^{(k)}_s > \frac{\Delta_k t}{2} \right) 
		& \leq
		2\exp \left(
		-\frac{\frac{1}{4}t^2 \Delta_k^2}{8(\log t)^{2 / \gamma_k}\left(\widetilde{v}_k+\frac{ \Delta_k t}{3}\right)}\right)
		=
		2 \exp \left( -\frac{t^2 \Delta_k^2}{32(\log t)^{2 / \gamma_k}\left(\widetilde{v}_k+\frac{ \Delta_k t }{3}\right)} \right) ,
	\end{align*}
	which is \eqref{eq:hedge_bernstein_part-iii-bound}. 
\end{proof}

\begin{corollary}\label{corollary:simple_hedge_conc_bound_bernstein}
	Let the conditions of Lemma~\ref{lem:hedge-bernstein} hold.
	Then the following bounds hold:
	\begin{description}
		\item[(i)] Under the conditions of case (i) of Lemma~\ref{lem:hedge-bernstein}
		\begin{align}
			\P\left( L_{t}^{(k)}-L_{t}^{(k^\star)}<\frac{\Delta_k t }{2}\right)
			&\leq  \exp \left(-\frac{t \Delta^2}{8 (\max_{k \neq k^\star} \widetilde{v}_k + \frac{1}{3}\Delta) }\right) .
		\end{align}
		\item[(ii)] Under the conditions of case (ii) of Lemma~\ref{lem:hedge-bernstein}, let, additionally, Assumption~\ref{ass:uniform summability roots}(ii) holds. Then
		\begin{align}\label{hedge_ass_bern_case_2_2}
			\P\left( L_{t}^{(k)}-L_{t}^{(k^\star)}<\frac{\Delta_k t }{2}\right)
			&\leq \exp \left(-\frac{t \Delta^2}{16 \, \overline{\rho}_{\max}^2 (4 \max_{k \neq k^\star} \widetilde{v}_k + \Delta) }\right) . 
		\end{align}
		where $\overline{\rho}_{\max} = \max_{k \in [K]} \sup_{t \geq 1} \rho^{(k)}_t$.
		\item[(iii)] Under the conditions of case (iii) of Lemma~\ref{lem:hedge-bernstein},
		\begin{align}\label{hedge_ass_bern_case_3}
			\P\left( L_{t}^{(k)}-L_{t}^{(k^\star)}<\frac{\Delta_k t }{2}\right)
			& \leq 2 \exp  \left( -\frac{t (\log t)^{-2 / \gamma_{\min}} \Delta^2}{ 32(\max_{k \neq k^\star} \widetilde{v}_k + \frac{1}{3} \Delta)} \right) 
		\end{align}
		for all $t\geq t_0$.
	\end{description}
\end{corollary}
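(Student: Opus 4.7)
The plan is to obtain each of the three bounds by post-processing the per-expert inequalities in Lemma~\ref{lem:hedge-bernstein} through elementary monotonicity arguments, in direct parallel with the FTL counterpart in Corollary~\ref{corollary:simple_ftl_conc_bound_bernstein}. No fresh probabilistic input is required: every step replaces an expert-specific quantity ($\Delta_{k}$, $\widetilde{v}_{k}$, or in the mixing cases $\rho_{t}^{(k)}$) by a uniform substitute ($\Delta$, $\max_{k\neq k^{\star}}\widetilde{v}_{k}$, $\overline{\rho}_{\max}$), while tracking the sign of the change in the negative exponent.

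For part (i), I would start from \eqref{hedge_ass_bern_case_1} and use that the function $g(x,y)=x^{2}/(y+x/3)$ is strictly increasing in $x>0$ and strictly decreasing in $y\ge 0$ (immediate from $\partial_{x}g = x(2y+x/3)/(y+x/3)^{2}>0$). Since $\Delta_{k}\ge\Delta$ and $\widetilde{v}_{k}\le\max_{k\neq k^{\star}}\widetilde{v}_{k}$, this yields $g(\Delta_{k},\widetilde{v}_{k})\ge g(\Delta,\max_{k\neq k^{\star}}\widetilde{v}_{k})$ and hence the claimed inequality after substitution in the exponent. Part (ii) follows by the same monotonicity applied to \eqref{hedge_ass_bern_case_2_1}, together with the observation that under Assumption~\ref{assumption:losses_iid_or_mixing}(iii) the excess-loss process $\{\ell_{t}^{(k)}-\ell_{t}^{(k^{\star})}\}_{t}$ is a measurable functional of the joint loss vector and hence inherits its $\varphi$-mixing coefficients; Assumption~\ref{ass:uniform summability roots}(ii) then bounds $\rho_{t}^{(k)}\le\overline{\rho}_{\max}\le 1+\widetilde{C}_{2,\varphi}<\infty$ uniformly in $(k,t)$, which lets me replace $(\rho_{t}^{(k)})^{2}$ by $\overline{\rho}_{\max}^{2}$ in the denominator.

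Part (iii) requires one extra algebraic simplification. Starting from \eqref{eq:hedge_bernstein_part-iii-bound}, the exponent takes the form $t^{2}\Delta_{k}^{2}/(32(\log t)^{2/\gamma_{k}}(\widetilde{v}_{k}+t\Delta_{k}/3))$, whereas the target exponent is only linear in $t$. The right tool is the elementary inequality $t^{2}/(a+bt)\ge t/(a+b)$, valid for $a,b>0$ and $t\ge 1$ (the same step used in the proof of Lemma~\ref{lem:ftl-bernstein}(iii)), which removes the extra factor of $t$ and replaces $\widetilde{v}_{k}+t\Delta_{k}/3$ by $\widetilde{v}_{k}+\Delta_{k}/3$ in the denominator. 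I then use $\gamma_{k}\ge\gamma_{\min}$ together with $\log t\ge 1$ (which holds on $t\ge t_{0}\ge 3$) to bound $(\log t)^{2/\gamma_{k}}\le(\log t)^{2/\gamma_{\min}}$, and finish by applying once more the monotonicity of $g$ from part (i) to collapse the remaining expert-specific constants into $\Delta$ and $\max_{k\neq k^{\star}}\widetilde{v}_{k}$.

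The only mildly delicate point is the ordering of the three reductions in part (iii): the $t^{2}/(a+bt)$ step must be applied \emph{before} the monotonicity in $\Delta_{k}$, because the $t\Delta_{k}/3$ summand in the denominator would otherwise require the opposite direction of replacement to stay useful; similarly, the logarithmic-factor comparison must wait until the $t^{2}$ has been reduced to $t$ so that no hidden dependence on $\gamma_{k}$ remains in the numerator. Beyond this bookkeeping and the short derivative check underlying the monotonicity of $g$, the proof is a direct computation with no further concentration ingredient.
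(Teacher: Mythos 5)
Your proof is correct and follows essentially the same route as the paper's: both arguments simply uniformize the per-expert bounds of Lemma~\ref{lem:hedge-bernstein} by replacing $\Delta_k$ with $\Delta$, $\widetilde{v}_k$ with $\max_{k\neq k^\star}\widetilde{v}_k$, and $\rho^{(k)}_t$ with $\overline{\rho}_{\max}$ (finite under Assumption~\ref{ass:uniform summability roots}(ii)) via monotonicity of the exponent, and handle case (iii) with the reduction $t^2/(a+bt)\geq t/(a+b)$ for $t\geq 1$ together with $(\log t)^{2/\gamma_k}\leq(\log t)^{2/\gamma_{\min}}$, which is valid since $t\geq t_0\geq 3>e$. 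Your ordering remark in case (iii) is well taken and in fact slightly more careful than the paper's displayed chain, whose final denominator inequality $\widetilde{v}_k+\tfrac{1}{3}\Delta_k\leq \max_{k\neq k^\star}\widetilde{v}_k+\tfrac{1}{3}\Delta$ is only legitimate when the numerator's $\Delta_k^2$ is replaced simultaneously through the joint monotonicity of $x^2/(y+x/3)$, exactly as you do.
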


\begin{proof}[Proof of Corollary~\ref{corollary:simple_hedge_conc_bound_bernstein}]
	Case (i) is obtained by using that $\Delta_k \geq \Delta $ for all $k\in [K]$, exactly as in the proof of Corollary~\ref{corollary:simple_ftl_conc_bound_bernstein}.
	For case (ii), one notes that on the right-hand side of \eqref{hedge_ass_bern_case_2_1} $\Delta_k\geq \Delta $ for all $k\in [K]$ and $\overline{\rho}_{\max}^2 \geq (\rho_t^{(k)})^2 $ for all $t\in [T]$ and $k\in [K]$. Notice now that, by Assumption~\ref{ass:uniform summability roots}(ii), $\overline{\rho}_{\max}^2 \leq (1+\widetilde{C}_{2,\varphi})^2<\infty $.
	Lastly, to obtain the bound of case (iii), we additionally observe that 
	\begin{align*}
		(\log t)^{2 / \gamma_k} \left( \widetilde{v}_k + \frac{ t}{3}\Delta_k\right)
		& \leq 
		(\log t)^{2 / \gamma_k} t \left( \widetilde{v}_k + \frac{1}{3}\Delta_k\right)
		\leq  
		(\log t)^{2 / \gamma_{\rm min}}t \left( \max_{k \neq k^\star} \widetilde{v}_k + \frac{1}{3}\Delta\right),
	\end{align*}
	for all $k\in [K]$. 
\end{proof}

\subsubsection{Combined Bound}

The following results contain the necessary concentration results together with the bound on the expected leader regret as a function of sub-optimality gap $\Delta$. We present the results for the i.i.d.~setting as well as under the $\varphi$-mixing assumption on losses.

\begin{proposition}\label{proposition:hedge_main_concentration}
	Let Assumptions~\ref{assumption:losses_0-1}-\ref{assumption:losses_iid_or_mixing} hold. Let $\eta_t = 2\sqrt{\log(K)/t}$ for $t \in [T]$ be the learning rate for decreasing Hedge. Let  $k^\star\in [K]$ be the expert with the smallest expected loss among $K$ experts. Define $\widetilde{v}_k := \Var(\ell^{(k)}_t-\ell^{(k^\star)}_t)$, $k\in [K]$, $k\neq k^\star$.  Then for all $t\geq t_0$, 
	\begin{equation*}
		\P\left( L_{t}^{(k)}-L_{t}^{(k^\star)}<\frac{\Delta_k t }{2}\right)
		\leq
		\min\left\{ H(t), B(t) \right\}
	\end{equation*}
	with the following cases:
	\begin{description}
		\item[(i)] If the dependence conditions of Lemma~\ref{lem:hedge-bernstein}(i) hold, then
		\begin{align*}
			H(t)=
			\exp \left(-\frac{\Delta_k^2 t }{8 }\right),\quad
			B(t)= \exp \left(-\frac{ \Delta_k^2 t}{8 (\widetilde{v}_k + \frac{1}{3}\Delta_k)}\right),\enspace {\it with} \enspace t_0=1.
		\end{align*}
		\item[(ii)] If the dependence conditions of Lemma~\ref{lem:hedge-bernstein}(ii) hold, then
		\begin{align*}
			H(t)=
			\sqrt{e}\exp \left(-\frac{\Delta_k^2 t }{8 {\rho^{(k)}_t}}\right),\quad B(t) =  
			\exp \left(-\frac{ \Delta_k^2 t }{16 (\widetilde{\rho}^{(k)}_t)^2(4 \widetilde{v}_k + \Delta_k)}\right) , 
		\end{align*}
		where $\rho^{(k)}_t := 1 + \sum_{s=1}^t \varphi_s \big(\big\{( \ell^{(k)}_t -  \ell^{(k^\star)}_t )\big\}_{t \in \Int} \big) $, and $\widetilde{\rho}^{(k)}_t := 1 + \sum_{n=1}^t \sqrt{\varphi_n \big(\big\{( \ell^{(k)}_t -  \ell^{(k^\star)}_t )\big\}_{t \in \Int} \big) }$, and $t_0=1$.
		\item[(iii)] If the dependence conditions of Lemma~\ref{lem:hedge-bernstein}(iii) hold, then
		\begin{align*}
			H(t)=
			\sqrt{e}\exp \left(-\frac{\Delta_k^2 t }{8 {\rho^{(k)}_t}}\right),\quad B(t) = 2 \exp \left(-\frac{ \Delta_k^2 t^2}{32(\log t)^{2 / \gamma_k} \left( \widetilde{v}_k+\frac{ t}{3}\Delta_k\right)}\right) , 
		\end{align*}
		where $\rho^{(k)}_t := 1 + \sum_{s=1}^t \varphi_s \big(\big\{( \ell^{(k)}_t -  \ell^{(k^\star)}_t )\big\}_{t \in \Int} \big) $, $t_0$ is given explicitly in \eqref{t0_hedge} in Lemma~\ref{lem:hedge-bernstein},  and $\gamma_k > 0$ for all $k \in [K]$.
	\end{description}
\end{proposition}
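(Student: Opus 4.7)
The plan is to combine the Hoeffding-type and Bernstein-type concentration results already established in Lemmas~\ref{lemma:hedge_concentr_lemma} and~\ref{lem:hedge-bernstein}, exactly mirroring the strategy used for Proposition~\ref{proposition:main_concentration} in the FTL setting. Both lemmas bound the same event $\{L_{t}^{(k)} - L_{t}^{(k^\star)} < \Delta_k t/2\}$, so whenever a single assumption set triggers both, the stated $\min\{H(t),B(t)\}$ inequality is obtained by simply taking the tighter of the two upper estimates.

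I would proceed case by case. In case (i), Assumption~\ref{assumption:losses_iid_or_mixing}(i) simultaneously satisfies the hypotheses of Lemma~\ref{lemma:hedge_concentr_lemma}(i) and Lemma~\ref{lem:hedge-bernstein}(i), so $H(t)$ and $B(t)$ are read off directly and both bounds are valid for every $t\geq 1$, giving $t_0=1$. In case (ii), Assumption~\ref{assumption:losses_iid_or_mixing}(iii) --- the joint vector-valued $\varphi$-mixing condition --- is what both Lemma~\ref{lemma:hedge_concentr_lemma}(ii) and Lemma~\ref{lem:hedge-bernstein}(ii) require; note that each scalar excess-loss process $\{\ell^{(k)}_t - \ell^{(k^\star)}_t\}_t$ inherits $\varphi$-mixing as a measurable function of the jointly $\varphi$-mixing vector process, so both lemmas are applicable and again $t_0=1$. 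In case (iii), Lemma~\ref{lemma:hedge_concentr_lemma}(ii) still furnishes $H(t)$ (since geometric $\varphi$-mixing is in particular $\varphi$-mixing), while the sharper Bernstein branch $B(t)$ comes from Lemma~\ref{lem:hedge-bernstein}(iii); this is where the constraint $t\geq t_0$ with $t_0$ as in~\eqref{t0_hedge} enters, since Lemma~\ref{lem:hedge-bernstein}(iii) itself is stated only beyond this threshold.

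The only thing to verify --- and honestly not a real obstacle, since all the heavy lifting was done upstream --- is that in each regime the assumption set underlying $B(t)$ also underlies $H(t)$, so that taking a pointwise minimum is legitimate; this holds by construction of the three cases. I do not anticipate any nontrivial computation: the write-up should read as three short invocations of the preceding lemmas followed by an application of $\min$.
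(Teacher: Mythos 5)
Your proposal is correct and coincides with the paper's own (one-line) proof: the paper likewise obtains the proposition by invoking Lemma~\ref{lemma:hedge_concentr_lemma} for $H(t)$ and Lemma~\ref{lem:hedge-bernstein} for $B(t)$ case by case and taking the pointwise minimum, exactly mirroring the combined-bound argument of Proposition~\ref{proposition:main_concentration} in the FTL setting. Your assumption-alignment checks --- that the excess-loss processes inherit $\varphi$-mixing under Assumption~\ref{assumption:losses_iid_or_mixing}(iii), and that the restriction $t\geq t_0$ in case (iii) enters only through the Bernstein branch --- are precisely what the paper's terse proof implicitly relies on.
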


The proof of Proposition~\ref{proposition:main_concentration} is a direct combination of the concentration results of Appendix~\ref{appendix:proofs_concentration_ftl}.
With this proposition at hand, we can extend the time-splitting approach used by \cite{mourtadaOptimalityHedgeAlgorithm2019} to studying the regret of FTL under dependence.

\begin{proof}[Proof of Proposition~\ref{proposition:hedge_main_concentration}]
	The proof is identical to the proof of Proposition~\ref{proposition:hedge_main_concentration}, where Lemma~\ref{lemma:hedge_concentr_lemma} and Lemma~\ref{lem:hedge-bernstein} are used instead of Lemma~\ref{lemma:ftl_concentr_lemma} and Lemma~\ref{lem:ftl-bernstein}, respectively.
\end{proof}

\subsection{FTL Regret Bounds}
\label{appendix:proofs_ftl_regret}

\begin{proposition}[Hoeffding-type regret bounds]\label{proposition:ftl_hoeffding_regret_bound}
	Under the same setting of Lemma~\ref{lemma:ftl_concentr_lemma}, suppose that almost surely there are no ties in the FTL weights for all $t \geq 1$. 
	\begin{description}
		\item[(i)] If Assumption~\ref{assumption:losses_iid_or_mixing}(i) holds, then 
		\begin{equation*}
			\E[ \overline{R}_{\textsc{ftl},T} ] 
			\leq
			2 + \frac{2 \log K + 4}{\Delta^2} .
		\end{equation*}
		\item[(ii)] Let Assumption~\ref{assumption:losses_iid_or_mixing}(ii) and Assumption~\ref{ass:uniform summability}(i) hold, and define $\overline{\theta}_{\max} := \max_{k \in [K]} \sup_{t \geq 1} \theta^{(k)}_t$. Then
		\begin{equation*}
			\E[ \overline{R}_{\textsc{ftl},T} ] 
			\leq
			3 + \frac{8 \overline{\theta}_{\max} (\log K + 4)}{\Delta^2} .
		\end{equation*}
	\end{description}
	The bounds are uniform in $T$.
\end{proposition}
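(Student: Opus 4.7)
The plan is to bound the expected regret by a sum of ``leader-error'' probabilities and apply the Hoeffding-type concentration inequalities of Corollary~\ref{corollary:simple_ftl_conc_bound}, then optimize a time-splitting threshold to obtain a $\log K$ dependence on the number of experts rather than a polynomial one.

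First I would set up the reduction. Relabel the experts so that $k^\star=1$, i.e.\ $\mu_1<\mu_2\le\ldots\le\mu_K$. Since $L^{(1)}_{T}\ge\min_{k\in[K]} L^{(k)}_{T}$, one has $\overline{R}_{\textsc{ftl},T}\le \sum_{t=1}^{T}(\overline{\ell}_t-\ell^{(1)}_t)$. Under the no-ties assumption, on the event $\{\textsc{ftl}_t=\{1\}\}$ the FTL forecast coincides with expert~1's loss, so the corresponding summand vanishes; on its complement, Assumption~\ref{assumption:losses_0-1} bounds each summand by $1$. Observing that the event $\{\textsc{ftl}_t\neq\{1\}\}$ coincides with $\{L^{(1)}_{t-1}>\min_{k\ge 2} L^{(k)}_{t-1}\}$ for $t\ge 2$, this yields
\begin{equation*}
    \E\bigl[\overline{R}_{\textsc{ftl},T}\bigr] \;\le\; 1 + \sum_{t=2}^{T} \P\bigl(L^{(1)}_{t-1}>\min_{k\ge 2} L^{(k)}_{t-1}\bigr).
\end{equation*}

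The next step is the time-splitting trick used in \cite{mourtadaOptimalityHedgeAlgorithm2019}. Fix an integer $T_0\ge 1$ and split the sum into indices $t\le T_0$ (each term trivially bounded by $1$) and $t>T_0$ (controlled by the concentration inequalities). For case (i) I plug in Corollary~\ref{corollary:simple_ftl_conc_bound}(i), obtaining a tail $\sum_{t>T_0} Ke^{-(t-1)\Delta^2/2}\le (2K/\Delta^2)\,e^{-T_0\Delta^2/2}$ via Lemma~\ref{lemma:exp_series_limit}. Choosing $T_0\simeq 2\log K/\Delta^2$ makes $K e^{-T_0\Delta^2/2}\le 1$ and delivers the bound $2+(2\log K+4)/\Delta^2$ after absorbing the ceiling in $T_0$ and the trivial $t=1$ contribution. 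For case (ii) I invoke instead Corollary~\ref{corollary:simple_ftl_conc_bound}(ii); Assumption~\ref{ass:uniform summability}(i) guarantees $\overline{\theta}_{\max}\le 1+4C_{1,\varphi}<\infty$, so the identical optimization with effective rate $\Delta^2/(8\overline{\theta}_{\max})$ and $T_0\simeq 8\overline{\theta}_{\max}\log K/\Delta^2$ yields the stated $3+8\overline{\theta}_{\max}(\log K+4)/\Delta^2$, with the additional $\sqrt{e}$ prefactor from Lemma~\ref{lemma:rio_hoeffding_uniform_mixing} absorbed into the additive constants. The geometric tails are summable uniformly in $T$, so the final bounds are independent of $T$ as claimed.

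The main obstacle is arranging the constants so that the leading factor is $\log K$ rather than $K$: a naive union bound over the $K-1$ non-leader experts produces a $K$-factor that cannot be removed by summing over $t$. The resolution is precisely the time-splitting with $T_0$ chosen so that $K\exp(-T_0\Delta^2/2)\le 1$, which trades a logarithmic burn-in against an $O(1/\Delta^2)$ constant in the tail. A secondary technical subtlety, specific to case (ii), is that the lag-dependent factor $\theta^{(k)}_t$ appearing in the $\varphi$-mixing Hoeffding inequality of \cite{rioAsymptoticTheoryWeakly2017} must be controlled uniformly in $t$ and $k$; Assumption~\ref{ass:uniform summability}(i) is precisely tailored to this, after which the tail analysis becomes structurally identical to the i.i.d.\ case.
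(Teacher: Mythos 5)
There is a genuine gap at the very first step: your reduction inequality is reversed. From $L^{(1)}_T \geq \min_{k\in[K]} L^{(k)}_T$ it follows that
\begin{equation*}
    \sum_{t=1}^{T}\bigl(\overline{\ell}_t-\ell^{(1)}_t\bigr)
    = \overline{L}_T - L^{(1)}_T
    \leq \overline{L}_T - \min_{k\in[K]} L^{(k)}_T
    = \overline{R}_{\textsc{ftl},T},
\end{equation*}
so the quantity you bound is the \emph{pseudo-regret} against the expert with smallest expected loss, which is a \emph{lower} bound on the regret, not an upper bound. The regret is measured against the empirically best expert in hindsight, which on a given sample path need not be expert $1$. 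Consequently your argument, as written, does not establish the stated proposition. The paper sidesteps this entirely by never comparing to expert $1$ directly: it invokes Lemma~\ref{lemma:rooij_ftl_lemma} ($\overline{R}_{\textsc{ftl},T}\leq S_T C_T \leq C_T$ under bounded losses) and then bounds the expected number of leader changes, $\E[C_T]$, using $c_t \leq \1{\textsc{ftl}_{t-1}\neq\{1\}} + \1{\textsc{ftl}_{t}\neq\{1\}}$ together with the same time-splitting and Corollary~\ref{corollary:simple_ftl_conc_bound} bounds you use; the two indicator terms per round are what produce the $+4/\Delta^2$ in the stated constant.

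The gap is repairable, but not for free. One can write $\overline{R}_{\textsc{ftl},T} = \bigl(\overline{L}_T - L^{(1)}_T\bigr) + \bigl(L^{(1)}_T - \min_k L^{(k)}_T\bigr)$ and control the correction via $L^{(1)}_T - \min_k L^{(k)}_T \leq \sum_{t=1}^T \1{L^{(1)}_t > \min_{k\geq 2} L^{(k)}_t}$ (the nonnegative process $L^{(1)}_t-\min_k L^{(k)}_t$ increases by at most $1$ per round and only on rounds where it is positive), whose expectation is the same sum of leader-error probabilities you already estimate. This is essentially the conversion discussed in Remark~\ref{remark:ftl_bound_iid_sharp} following Remark~14 of \cite{mourtadaOptimalityHedgeAlgorithm2019}. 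But it roughly doubles your tail and burn-in contributions, so the constants you claim ($2+(2\log K+4)/\Delta^2$ and $3+8\overline{\theta}_{\max}(\log K+4)/\Delta^2$) would not emerge from your route without further tightening; the paper's $C_T$-based argument recovers them directly. Everything downstream in your proposal --- the time-splitting with $T_0 \asymp \log K/\Delta^2$ so that $K e^{-T_0\Delta^2/2}\leq 1$, the geometric tail summation via Lemma~\ref{lemma:exp_series_limit}, and the uniform control of $\theta^{(k)}_t$ by $\overline{\theta}_{\max}\leq 1+4C_{1,\varphi}$ under Assumption~\ref{ass:uniform summability}(i) --- matches the paper's mechanics and is sound; only the initial reduction needs to be replaced or supplemented as above.
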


Our proof proceeds by exploiting the time splitting arguments also used by \cite{mourtadaOptimalityHedgeAlgorithm2019} for studying decreasing Hedge. Informally, the key idea is that one may split $[T]$ into ``warm-up'' and ``effective'' periods. By choosing the splitting time, $t_0$, carefully, we can obtain bounds that depend only logarithmically on the number of experts.

\begin{proof}[Proof of Proposition~\ref{proposition:ftl_hoeffding_regret_bound}]
	Recall that for $t\in [T]$ we define $c_t=\1{ \textsc{ftl}_{t-1} \not= \textsc{ftl}_{t} }$ (the leader set changes at $t$). We also define $C_t$ as the total number of times the leader set changes up to time $t\in [T]$, that is $C_t = \sum_{\tau=1}^t c_\tau$, using the fact that almost surely there are no ties in determining the FTL expert during any round.
	Introduce now the ``warm-up time'' $t_0 \geq 2$, so that 
	\begin{equation}\label{CT}
		C_T =\sum_{t=1}^T c_t \leq  t_0 + \sum_{t=t_0+1}^T c_t,
	\end{equation}
	where for periods $1 \leq t \leq t_0$ we use $c_t=\1{ \textsc{ftl}_{t-1} \neq \textsc{ftl}_{t} } \leq 1$.
	We now write
	\begin{align}
		c_t 
		&= \1{ \textsc{ftl}_{t-1} = \{1\} \wedge \textsc{ftl}_{t} \neq \{1\} }
		+ \1{ \textsc{ftl}_{t-1} \neq \{1\} \wedge \textsc{ftl}_{t} = \{1\} } \nonumber \\
		&= \1{ \textsc{ftl}_{t-1} = \{1\}} \cdot \1{ \textsc{ftl}_{t} \neq \{1\}}
		+ \1{ \textsc{ftl}_{t-1} \neq \{1\}} \cdot \1{ \textsc{ftl}_{t-1} = \{1\}} \nonumber\\
		&\leq \1{ \textsc{ftl}_{t-1} \neq \{1\} } + \1{ \textsc{ftl}_{t} \neq \{1\} } .\label{ct}
	\end{align}
	
	\medskip
	
	\noindent \textbf{Case (i).}
	Using \eqref{ct} we write that
	\begin{align*}
		\E[c_t]
		&\leq \P( \textsc{ftl}_{t-1}  \neq \{1\} ) + \P( \textsc{ftl}_{t}  \neq \{1\} )
		\leq K \left( e^{-(t-1) \Delta^2/2} + e^{-t \Delta^2/2} \right) ,
	\end{align*}
	where in the last inequality we used the bound \eqref{eq:simple_ftl_conc_bound} in Corollary~\ref{corollary:simple_ftl_conc_bound}.
	
	Next,  we note that for $t \geq t_0 + 1$
	\begin{align}\label{proof-eq:ftl_exp_bound}
		K \left( e^{-(t-1) \Delta^2/2} + e^{-t \Delta^2/2} \right)
		& =   \left( K e^{-t_0 \Delta^2/2}   \right) e^{- (t -t_0  - 1)\Delta^2/2} +   \left( K e^{-t_0 \Delta^2/2}   \right) e^{- (t -t_0)\Delta^2/2} \nonumber\\
		& \leq e^{- (t -t_0  - 1)\Delta^2/2} + e^{- (t -t_0)\Delta^2/2} .
	\end{align}
	Finally, we choose $t_0 = \lceil \frac{2 \log K}{\Delta^2} \rceil$ using \eqref{eq:simple_ftl_conc_bound} in Corollary~\ref{corollary:simple_ftl_conc_bound}, and, taking expectation on both sides of \eqref{CT}, write
	\begin{equation*}
		\E[ C_T ] 
		\leq 
		1 + \frac{2 \log K}{\Delta^2} + \left( 1 + 2 \sum_{s=1}^\infty e^{- s \Delta^2/2} \right)
		\leq 
		2 + \frac{2 \log K}{\Delta^2} + \frac{4}{\Delta^2},
	\end{equation*}
	where in the last inequality Lemma~\ref{lemma:exp_series_limit} is applied. Finally, by Lemma~\ref{lemma:rooij_ftl_lemma} and noticing  that  $S_T \leq  1$, we obtain the following bound that does not depend on the terminal time $T$:
	\begin{equation*}
		\E[ \overline{R}_{\textsc{ftl},T} ] 
		\leq
		\E[ C_T ]\leq
		2 + \frac{2 \log K + 4}{\Delta^2},
	\end{equation*}
	as required.
	
	\medskip
	
	\noindent \textbf{Case (ii).} 
	We start by noticing that by Assumption~\ref{ass:uniform summability}(i), $\overline{\theta}_{\max} =\max_{k \in [K]} \sup_{t \geq 1}\{1+4 \sum_{n=1}^{t-1} \varphi_n^{(k)}\}\leq 1+4 C_{1,\varphi}<\infty$. Next,
	similarly to case (i), using \eqref{ct} we write that
	\begin{align*}
		\E[c_t]
		&\leq \P( \textsc{ftl}_{t-1}  \neq \{1\} ) + \P( \textsc{ftl}_{t}  \neq \{1\} )
		\leq K \sqrt{e} \left[ \exp\left(- \frac{(t-1) \Delta^2}{8 \overline{\theta}_{\max}} \right) + \exp\left(- \frac{t \Delta^2}{8 \overline{\theta}_{\max}} \right) \right],
	\end{align*}
	where in the last inequality we used the bound \eqref{eq:simple_ftl_mixing_bound} in Corollary~\ref{corollary:simple_ftl_conc_bound}.
	The same bound allows to set $t_0 = \lceil \frac{8 \overline{\theta}_{\max} \log K}{\Delta^2} \rceil$. 
	Hence, \eqref{CT} together with Lemma~\ref{lemma:exp_series_limit} yields
	\begin{equation*}
		\E[ C_T ] 
		\leq
		1 + \frac{8 \overline{\theta}_{\max} \log K}{\Delta^2} + \sqrt{e}\left(1 + 2 \sum_{s=0}^\infty e^{- s \Delta^2/(8 \overline{\theta}_{\max})} \right)
		\leq 
		3 + \frac{8 \overline{\theta}_{\max} \log K}{\Delta^2} + \frac{32 \overline{\theta}_{\max}}{\Delta^2} .
	\end{equation*}
	Finally, utilizing this bound in Lemma~\ref{lemma:rooij_ftl_lemma} with  $S_T \leq  1$ concludes the proof.
\end{proof}

\begin{remark}\label{remark:ftl_bound_iid_sharp}
	Proposition~\ref{proposition:ftl_hoeffding_regret_bound} provides bounds dependent on the squared sub-optimality gap $\Delta^2$, which are less sharp than results for decreasing Hedge with i.i.d.~data, where only a factor $1/\Delta$ is required (see \citealp{mourtadaOptimalityHedgeAlgorithm2019}, Theorem 2). 
	Still, Proposition~\ref{proposition:ftl_hoeffding_regret_bound}(i) can be easily strengthened.
	Focusing on the \textit{expected pseudo-regret} of Follow-the-Leader, 
	\begin{equation*}
		\E[ {R}^{(k^\star)}_{\textsc{ftl},T} ]
		:=
		\E\big[\, \overline{L}_T - L^{(k^\star)}_T \big] ,
	\end{equation*}
	we find that
	\begin{equation}\label{eq:ftl_iid_splitting}
		\E[ r^{(k^\star)}_t ]
		=
		\E\left[ \sum_{j=1}^K \omega^{(k)}_{\textsc{ftl},t}  \left( \ell^{(j)}_t - \ell^{(k^\star)}_t \right) \right] 
		=
		\sum_{j=1}^K \E[\omega^{(k)}_{\textsc{ftl},t}] \Delta_j,
	\end{equation}
	where the last equality follows from sequential independence of losses. For simplicity, let us assume again a no-tie condition, meaning $\lvert \textsc{ftl}_t \rvert = 1$ for all $t \geq 1$. 
	Using the same setting of Lemma~\ref{lemma:ftl_concentr_lemma}, we have that $\Delta_1 = \E[\ell^{(1)}_t - \ell^{(k^\star)}_t] = 0$ and, for $2 \leq k \leq K$, $\E[\omega^{(k)}_{\textsc{ftl},t}] \leq \P(\textsc{ftl}_t \not= \{1\})$, hence $\E[ r^{(k^\star)}_t ] \leq \sum_{j=2}^K \Delta_j \exp(-t\Delta_j^2/2)$.
	By following the same arguments used in the proof of Theorem~2 of \cite{mourtadaOptimalityHedgeAlgorithm2019} and Proposition~\ref{proposition:ftl_hoeffding_regret_bound}, a bound of the form
	\begin{equation*}
		\E[{R}^{(k^\star)}_{\textsc{ftl},T}] 
		\leq 
		1 + \frac{3}{\Delta} + \frac{2 \log(K)}{\Delta^2}
	\end{equation*}
	can be obtained. 
	This can then be straightforwardly extended to a bound for expected regret $\E[ \overline{R}_{\textsc{ftl},T} ]$, as in  Remark~14 of \cite{mourtadaOptimalityHedgeAlgorithm2019}.
	When working with dependent losses, however, the chain of equalities in \eqref{eq:ftl_iid_splitting} cannot be used, leading to the coarser result of Proposition~\ref{proposition:ftl_hoeffding_regret_bound}(ii). 
\end{remark}

\begin{proof}[Proof of Remark~\ref{remark:ftl_bound_iid_sharp}]
	Splitting the expected pseudo regret again at $t_0 = \lceil \frac{2 \log(K)}{\Delta^2} \rceil$, we find
	\begin{equation*}
		\E[{R}^{(k^\star)}_{\textsc{ftl},T}]
		\leq
		\E[{R}^{(k^\star)}_{\textsc{ftl},t_0}] + \sum_{t=t_0+1}^T \E[ r^{(k^\star)}_t ] .
	\end{equation*} 
	As before, a trivial bound on the worst-case regret until $t_0$ yields $\E[{R}^{(k^\star)}_{\textsc{ftl},t_0}] \leq 1 + 2\log(K)/\Delta^2$.
	Since the mapping $x \mapsto x e^{-x^2/2}$ is decreasing over $[1, \infty)$ and $\Delta_j \geq \Delta$, whenever $t \geq 1 + 1/\Delta^2$ (which holds for $t \geq t_0 + 1$) we get
	\begin{equation*}
		\E[{R}^{(k^\star)}_{\textsc{ftl},T}]
		\leq 
		\Delta e^{- (t -t_0  - 1)\Delta^2/2} ,
	\end{equation*}
	where we have followed the derivation steps used for \eqref{proof-eq:ftl_exp_bound} and the fact that $K \exp(-t_0 \Delta^2/2) \leq 1$.
	Applying Lemma~\ref{lemma:exp_series_limit} and collecting terms,
	\begin{equation*}
		\E[{R}^{(k^\star)}_{\textsc{ftl},T}]
		\leq
		1 + \frac{2 \log(K)}{\Delta^2} + \Delta\left( 1 + \frac{2}{\Delta^2} \right)
		\leq 
		1 + \frac{3}{\Delta} + \frac{2 \log(K)}{\Delta^2} .
	\end{equation*}
\end{proof}

\begin{proposition}[Bernstein-type regret bounds]\label{proposition:ftl_bernstein_regret_bound} 
	Let the conditions of Lemma~\ref{lem:ftl-bernstein} hold.
	Assume that there are almost surely no ties in the FTL weights for all $t\ge1$, and define $v_{\max}:=\max_{k\in[K]} v_k$.
	\begin{description}
		\item[(i)] Under the conditions of case (i) of Lemma~\ref{lem:ftl-bernstein} it holds
		\begin{equation*}
			\E\big[\overline{R}_{\textsc{ftl},T}\big]
			\leq\
			2 + \left(\log(2K)+2\right) \frac{8v_{\max}+\tfrac{4}{3}\Delta}{\Delta^2} .
		\end{equation*}
		\item[(ii)] Under the conditions of case (ii) of Lemma~\ref{lem:ftl-bernstein}, and, additionally, under Assumption~\ref{ass:uniform summability roots}(i),  define  $\overline{\theta}_{\max}^2 = \max_{k \in [K]} \sup_{t \geq 1} (\theta^{(k)}_t)^2$. Then
		\begin{equation*}
			\E\big[\overline{R}_{\textsc{ftl},T}\big]
			\leq\
			2 + \left(\log(K)+2 \right) \frac{8 \overline{\theta}_{\max}^2 (8 v_{\max} + \Delta)}{\Delta^2}.
		\end{equation*}
		\item[(iii)] Under the conditions of case (iii) of Lemma~\ref{lem:ftl-bernstein}, it holds 
		\begin{align*}
			\E\big[\overline{R}_{\textsc{ftl},T}\big]
			& \leq \widetilde{t}_0 + 1 + 2\sum_{t=\widetilde{t}_0}^T\exp\left( -\frac{t \widetilde{c}_{\Delta}}{ 2(\log t)^{\alpha} } \right),
		\end{align*}
		where $\alpha = 2/\gamma_{\rm min}$,
		\begin{align*}
			\widetilde{c}_{\Delta} & = 
			\frac{\Delta^2}{32(v_{\max} + \Delta/6)} ,
		\end{align*}
		and $\widetilde{t}_0$ is defined as 
		\begin{equation*}
			\widetilde{t}_0
			=
			\left\lceil  
			\inf \left\{ t  \geq 
			\max\left\{ {e^\alpha}, {t_0 } \right\}  
			\,\bigg\vert\, \frac{t}{(\log t)^\alpha} \geq \frac{2 \log(4K)}{\widetilde{c}_{\Delta}} \right\}
			\right\rceil, 
		\end{equation*}   
		where $t_0$ is given in \eqref{t0}.
	\end{description}
	The bounds are uniform in $T$.
\end{proposition}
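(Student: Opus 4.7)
The plan is to follow the time-splitting strategy used in the proof of Proposition~\ref{proposition:ftl_hoeffding_regret_bound}, replacing the Hoeffding-type concentration bounds with the Bernstein-type bounds of Corollary~\ref{corollary:simple_ftl_conc_bound_bernstein}. By Lemma~\ref{lemma:rooij_ftl_lemma} together with $S_T \leq 1$ (Assumption~\ref{assumption:losses_0-1}), it suffices to bound $\E[C_T]$. One writes $C_T \leq t_0 + \sum_{t=t_0+1}^T c_t$ for an appropriate warm-up horizon $t_0$, and uses \eqref{ct} to bound each $\E[c_t]$ by a sum of two one-sided leader-failure probabilities, which the Bernstein concentration results directly control.

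For case~(i), applying \eqref{eq:simple_ftl_conc_bound_bernstein} gives $\P(\textsc{ftl}_t \neq \{1\}) \leq 2(K-1)\exp(-t\Delta^2/(8v_{\max}+\tfrac{4}{3}\Delta))$. Choosing $t_0 = \lceil (8v_{\max}+\tfrac{4}{3}\Delta)\log(2K)/\Delta^2 \rceil$ reduces the $2K$ prefactor to at most one at $t = t_0$, so the remaining tail collapses to a geometric series bounded by Lemma~\ref{lemma:exp_series_limit}, contributing at most $2(8v_{\max}+\tfrac{4}{3}\Delta)/\Delta^2$. Assembling the warm-up, the tail, and the constant from the ceiling operation yields the claimed form. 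Case~(ii) is structurally identical but uses \eqref{eq:simple_ftl_mixing_bound_bernstein}, with $\overline{\theta}_{\max}^2 < \infty$ ensured by Assumption~\ref{ass:uniform summability roots}(i); the effective exponential rate becomes $\Delta^2/(8\overline{\theta}_{\max}^2(8v_{\max}+\Delta))$, and one repeats the argument with $t_0$ chosen proportional to $\log(K)$ times the reciprocal of this rate.

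The principal obstacle lies in case~(iii), because the Bernstein bound \eqref{eq:simple_ftl_mixing_bound_bernstein_iii} contains an additional $(\log t)^{\alpha}$ factor with $\alpha = 2/\gamma_{\min}$ in the denominator of the exponent. The effective decay rate $t\widetilde{c}_\Delta/(\log t)^\alpha$ grows superlinearly but is not affine in $t$, so neither a direct geometric-series argument nor Lemma~\ref{lemma:exp_series_limit} applies. The key tool is Lemma~\ref{lemma:logtail_t0}, which provides a threshold $\widetilde{t}_0$ past which $t/(\log t)^\alpha \geq 2\log(4K)/\widetilde{c}_\Delta$; this is precisely the condition ensuring that $4K\exp(-t\widetilde{c}_\Delta/(\log t)^\alpha) \leq \exp(-t\widetilde{c}_\Delta/(2(\log t)^\alpha))$ for all $t \geq \widetilde{t}_0$, absorbing the $4K$ prefactor at the cost of halving the rate. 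The additional requirements $\widetilde{t}_0 \geq e^\alpha$ and $\widetilde{t}_0 \geq t_0$ (with $t_0$ from Lemma~\ref{lem:ftl-bernstein}(iii)) guarantee, respectively, monotonicity of $t \mapsto t/(\log t)^\alpha$ past the threshold and validity of the underlying geometrically $\varphi$-mixing Bernstein inequality. The residual tail $\sum_{t \geq \widetilde{t}_0} \exp(-t\widetilde{c}_\Delta/(2(\log t)^\alpha))$ does not admit an elementary closed form, so it is retained as written in the statement; combining it with the warm-up contribution $\widetilde{t}_0$, a $+1$ from boundary effects, and the factor of two from \eqref{ct} produces the bound.
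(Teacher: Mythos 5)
Your proposal is correct and takes essentially the same route as the paper's proof: the identical leader-change decomposition of $\E[C_T]$ with warm-up times $t_0=\lceil \log(2K)/c_\Delta\rceil$ and $\lceil \log(K)/\widetilde c_\Delta\rceil$ in cases (i)--(ii), and in case (iii) the same use of Lemma~\ref{lemma:logtail_t0} to absorb the $4K$ prefactor at the cost of halving the exponential rate past $\widetilde t_0$. The only detail the paper includes that you leave implicit is the verification that the residual series converges (via $\exp(-sC/(\log s)^\alpha)\lesssim \exp(-s^{C'})$ and a Gamma-function bound), which is what licenses the concluding claim that the bound is uniform in $T$.
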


\begin{proof}[Proof of Proposition~\ref{proposition:ftl_bernstein_regret_bound}]
	We proceed analogously to the proof strategy used for Proposition~\ref{proposition:ftl_hoeffding_regret_bound}, but applying instead the  Bernstein-type results from Lemma~\ref{lem:ftl-bernstein}.
	
	\medskip
	
	\noindent \textbf{Case (i).} 
	By Corollary~\ref{corollary:simple_ftl_conc_bound_bernstein}(i)
	\begin{align}\label{eq:Pt}
		\P \left(\textsc{ftl}_t\neq\{k^\ast\} \right)
		\le 2 (K-1) e^{- t c_\Delta}
		\le 2K e^{- t c_\Delta}
		\qquad\text{for all }t \geq 1,
	\end{align}
	with $
	c_\Delta = {\Delta^2}/{(8 v_{\max} + \tfrac{4}{3} \Delta)}
	$. 
	Recall that $C_T = \sum_{t=1}^T \1{ \textsc{ftl}_{t-1} \neq \textsc{ftl}_{t} }$, and let $t_0$ be a warm-up time so that, using \eqref{eq:Pt}, here too one obtains 
	\begin{align*}
		\mathbb{E}[C_T]
		& \leq t_0 + \sum_{t=t_0+1}^T \big( \P(\textsc{ftl}_{t-1}\neq\{k^\ast\})+\P(\textsc{ftl}_{t}\neq\{k^\ast\}) \big) \\
		& \leq t_0 + 2K \sum_{t=t_0+1}^T \left( e^{- (t-1) c_\Delta} + e^{- t c_\Delta} \right) \\
		& \leq t_0 + 2K e^{-t_0 c_\Delta} \sum_{t=t_0+1}^T \left( e^{- (t-t_0-1) c_\Delta} + e^{- (t-t_0) c_\Delta} \right) .
	\end{align*}
	Now, let the warm-up time be
	$t_0 = \left\lceil \frac{\log(2K)}{c_\Delta}\right\rceil$, 
	so that
	\begin{equation*}
		\mathbb{E}[C_T]
		\le t_0 + 1 + 2\sum_{s=1}^{\infty} e^{-s c_\Delta }
		= t_0 + 1 + \frac{2 e^{-c_\Delta}}{1-e^{-c_\Delta}}
		\le t_0 + 1 + \frac{2}{c_\Delta}.
	\end{equation*}
	Finally, $t_0\le \frac{\log(2K)}{c_\Delta}+1$ gives
	\begin{equation*}
		\mathbb{E}[C_T]
		\le \frac{\log(2K)}{c_\Delta} + 2 + \frac{2}{c_\Delta}
		= 2 + \left(\log(2K)+2 \right) \frac{1}{c_\Delta}
		= 2 + \left(\log(2K)+2 \right) \frac{8 v_{\max} +\tfrac{4}{3} \Delta}{\Delta^2}.
	\end{equation*}
	Since $\overline{R}_{\textsc{ftl},T}\le C_T$ under bounded losses ($S_T = 1$), the stated bound follows.
	
	\medskip
	
	\noindent \textbf{Case (ii).} 
	We start by noticing that by Assumption~\ref{ass:uniform summability roots}(i), $\overline{\theta}_{\max}^2 =\max_{k \in [K]} \sup_{t \geq 1}\{(1+\sum_{n=1}^{t-1} \sqrt{\varphi_n^{(k)}})^2\}\leq (1+C_{2,\varphi})^2<\infty$. Next,
	by Corollary~\ref{corollary:simple_ftl_conc_bound_bernstein}(ii), for $t\geq 1$ it holds 
	\begin{equation*}
		\P \left(\textsc{ftl}_t\neq\{k^\ast\} \right)
		\leq 
		(K-1) e^{
			-{t \, \widetilde{c}_{\Delta}}
		}\leq 
		K e^{
			-{t \, \widetilde{c}_{\Delta}}
		},
	\end{equation*}
	where
	\begin{equation*}
		\widetilde{c}_{\Delta} 
		= 
		\frac{\Delta^2}{8 \overline{\theta}_{\max}^2 (8 v_{\max} + \Delta)}.
	\end{equation*}
	Analogously to case (i), let the warm-up time be
	$t_0 = \left\lceil \frac{\log(K)}{\widetilde{c}_{\Delta}} \right\rceil$, then 
	\begin{align*}
		\mathbb{E}[C_T]
		& \leq t_0 + \sum_{t=t_0+1}^T \left(\P(\textsc{ftl}_{t-1}\neq\{k^\ast\})+\P(\textsc{ftl}_{t}\neq\{k^\ast\})\right) \\
		& \leq t_0 + K \sum_{t=t_0+1}^T \left( e^{- (t-1) \widetilde{c}_{\Delta}} + e^{- t \widetilde{c}_{\Delta}} \right) \\
		& \leq t_0 + K e^{-t_0 \widetilde{c}_{\Delta}} \sum_{t=t_0+1}^T \left( e^{- (t-t_0-1) \widetilde{c}_{\Delta}} + e^{- (t-t_0) \widetilde{c}_{\Delta}} \right) \\
		& \leq \left( \frac{\log(K)}{\widetilde{c}_{\Delta}} + 1 \right) + \left( 1 + 2 \sum_{s=1}^{\infty} e^{-s \widetilde{c}_\Delta } \right) \\
		& \le \frac{\log(K)}{\widetilde{c}_{\Delta}} + 2 + \frac{2}{\widetilde{c}_{\Delta}} =2 + \left(\log(K)+2 \right) \frac{8 \overline{\theta}_{\max}^2 (8 v_{\max} + \Delta)}{\Delta^2},
	\end{align*}
	which concludes the proof.
	
	\medskip
	
	\noindent \textbf{Case (iii).} 
	By \eqref{eq:bernstein_part-ii-bound} in case (iii) of Lemma~\ref{lem:ftl-bernstein}, for $t \geq t_0$, with $t_0$ in \eqref{t0} (and also from the simplified bound in \eqref{eq:simple_ftl_mixing_bound_bernstein_iii}), it holds 
	\begin{equation*}
		\P \left(\textsc{ftl}_t\neq\{k^\ast\} \right)
		\leq 
		4K \exp\left(
		-\frac{t \, \widetilde{c}_{\Delta}}{ (\log t)^{\alpha} }
		\right) ,
	\end{equation*}
	where
	\begin{equation*}
		\widetilde{c}_{\Delta} 
		= 
		\frac{\Delta^2}{32(v_{\max} + \Delta/6)}
	\end{equation*}
	and $\alpha = 2 / \gamma_{\min}$ for notational convenience.
	We use Lemma~\ref{lemma:logtail_t0} 
	with $b = 2 \log(4 K) / \widetilde{c}_{\Delta}$, which implies that there exists $\nu^*>0$ such that with $t^*:=\nu^* b$ one has 
	$
	\frac{t^*}{\left(\log t^*\right)^\alpha} \geq b
	$. We now define the warm-up time as $\widetilde{t}_0:=\left\lceil\max \left\{t_0, e^\alpha, t^*\right\}\right\rceil$ with $t_0$ given in \eqref{t0}. 
	Then, for all $t\geq \widetilde{t}_0$ it holds that
	\begin{equation*}
		\frac{t}{(\log t)^\alpha}\geq \frac{2 \log(4 K)}{\widetilde{c}_{\Delta}}
	\end{equation*}
	and hence
	\begin{align*}
		4K\exp\left( -\frac{t \, \widetilde{c}_{\Delta}}{ (\log t)^{\alpha} } \right)
		& =\exp\left( \log(4K)-\frac{t \, \widetilde{c}_{\Delta}}{ (\log t)^{\alpha} } \right) \\
		& \leq \exp\left( \frac{t \, \widetilde{c}_{\Delta}}{2 (\log t)^{\alpha} } -\frac{t \, \widetilde{c}_{\Delta}}{ (\log t)^{\alpha} } \right)\leq \exp\left( -\frac{t \, \widetilde{c}_{\Delta}}{ 2(\log t)^{\alpha} } \right) .
	\end{align*}
	Proceeding as in cases (i) and (ii), we obtain
	\begin{align*}
		\mathbb{E}[C_T]
		& \leq \widetilde{t}_0 + 1 + 4 K \sum_{t=\widetilde{t}_0+1}^T\left\{\exp \left(-\frac{(t-1) \widetilde{c}_{\Delta}}{(\log (t-1))^\alpha}\right)+\exp \left(-\frac{t \widetilde{c}_{\Delta}}{(\log t)^\alpha}\right)\right\}\\
		& \leq \widetilde{t}_0 + 1 + 2\sum_{t=\widetilde{t}_0}^T\exp\left( -\frac{t \widetilde{c}_{\Delta}}{ 2(\log t)^{\alpha} } \right).
	\end{align*}
	Note that the series in the last line converges. Indeed, for any $C > 0$ there exists $C' > 0$ such that $\exp(- s C / (\log s)^\alpha) \lesssim \exp(- s^{C'})$. Hence, it holds that $\sum_{t=\widetilde{t}_0}^T\exp\left( -\frac{t \widetilde{c}_{\Delta}}{ 2(\log t)^{\alpha} } \right)\leq \int_{0}^\infty \exp(- s^{C'}) \textnormal{d}s = {\Gamma}(1/{C'}) / {C'} < \infty$, with ${\Gamma}$ the Gamma function.
	Finally, as $\E\left[\overline R_{\textsc{ftl},T}\right]\le \E[C_T]$, one obtains a uniform in $T$ bound, as required.
\end{proof}

\subsubsection{Combined Bounds for FTL}
\label{appendix:proofs_ftl_combined_bounds}

We can now obtain Theorem~\ref{theorem:ftl_regret_mixing} by incorporating the bounds previously derived.

\begin{proof}[Proof of Theorem~\ref{theorem:ftl_regret_mixing}]
	Part (i) follows immediately by combining Proposition~\ref{proposition:ftl_hoeffding_regret_bound}(i) and Proposition~\ref{proposition:ftl_bernstein_regret_bound}(i).
	Similarly, part (ii) is obtained by combining Proposition~\ref{proposition:ftl_hoeffding_regret_bound}(ii) and Proposition~\ref{proposition:ftl_bernstein_regret_bound}(ii), with the relevant constants and $t_0$ defined in the latter.
\end{proof}

\subsection{Hedge Regret Bounds}
\label{appendix:proofs_hedge_regret}

\begin{proposition}[Hoeffding-type regret bounds]\label{proposition:hedge_hoeffding_regret_bound}
	Let $K \geq 3$ and Assumptions~\ref{assumption:losses_0-1}-\ref{assumption:losses_iid_or_mixing} hold. Let $\eta_t = 2\sqrt{\log(K)/t}$ for $t \in [T]$ be the learning rate for decreasing Hedge.
	\begin{description}
		\item[(i)] If Assumption~\ref{assumption:losses_iid_or_mixing}(i) holds,
		\begin{equation*}
			\E[\overline{R}_{\textsc{hdg},T}] 
			\leq 
			\frac{4 \Delta \log(K) + 25}{\Delta^2} .
		\end{equation*}
		\item[(ii)] If Assumption~\ref{assumption:losses_iid_or_mixing}(ii) holds,
		\begin{equation*}
			\E[\overline{R}_{\textsc{hdg},T}] 
			\leq 
			2 + \frac{(1 + 3 \overline{\rho}_{\rm max})(\Delta \log(K) + 16)}{\Delta^2} ,
		\end{equation*}
		where $\overline{\rho}_{\rm max} := \max_{k \not= k^\star} \sup_{t \geq 1} \rho^{(k)}_t$ with $\rho^{(k)}_t := 1 + \sum_{s=1}^t \varphi_s \big(\big\{( \ell^{(k)}_t -  \ell^{(k^\star)}_t )\big\}_{t \in \Int} \big)$, $k\in [K]$.
	\end{description}
\end{proposition}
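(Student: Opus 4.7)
My plan is to adapt the time-splitting strategy used for Follow-the-Leader in Proposition~\ref{proposition:ftl_hoeffding_regret_bound}, now combining the deterministic Hedge regret bound of Proposition~\ref{proposition:regret_hedge} with the Hoeffding-type concentration from Corollary~\ref{corollary:simple_hedge_conc_bound_hoeffding}. Concretely, I fix a warm-up horizon $t_0 = \lceil 16 \log K / \Delta^2 \rceil$ (tuned to match the leading constant of the target bound) and decompose
\[
\E[\overline{R}_{\textsc{hdg},T}] \leq \E[\overline{R}_{\textsc{hdg},t_0}] + \sum_{t=t_0+1}^{T} \E[r_t] + \E\bigl[L^{(k^\star)}_T - \min_{k\in[K]} L^{(k)}_T\bigr],
\]
where $r_t := \overline{\ell}_t - \ell^{(k^\star)}_t$ is the per-round pseudo-regret with respect to the population-best expert $k^\star$, and the last nonnegative term accounts for the fact that $k^\star$ need not coincide with the hindsight minimiser.

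The warm-up term is handled deterministically: plugging $\eta_t = 2\sqrt{\log K/t}$ into Proposition~\ref{proposition:regret_hedge} and using $\sum_{t=1}^{t_0} t^{-1/2} \leq 2\sqrt{t_0}$ gives $\overline{R}_{\textsc{hdg},t_0} \leq \sqrt{t_0 \log K}$ pathwise, which for my choice of $t_0$ yields $4 \log K/\Delta$, matching the leading term of the stated bound. For the stochastic tail I use $r_t \leq \sum_{k\neq k^\star}\omega^{(k)}_t$ together with the Gibbs estimate $\omega^{(k)}_t \leq \exp(-\eta_t(L^{(k)}_{t-1}-L^{(k^\star)}_{t-1}))$, and split on the event $\{L^{(k)}_{t-1}-L^{(k^\star)}_{t-1} \geq (t-1)\Delta_k/2\}$: on that event $\omega^{(k)}_t \leq \exp(-\eta_t (t-1)\Delta_k/2)$, while on the complement Corollary~\ref{corollary:simple_hedge_conc_bound_hoeffding}(i)--(ii) controls the probability by $\exp(-(t-1)\Delta^2/8)$ (respectively $\sqrt{e}\exp(-(t-1)\Delta^2/(8\overline{\rho}_{\max}))$ in the mixing case). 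Summing over $t>t_0$ via Lemma~\ref{lemma:exp_series_limit}---in particular its second inequality $\sum_t e^{-\alpha\sqrt{t}} \leq 2/\alpha^2$, applied using the elementary lower bound $\eta_t(t-1)/2 \geq \sqrt{t\log K}/2$ valid for $t\geq 2$---yields a contribution of order $1/\Delta^2$, picking up the multiplier $(1 + 3\overline{\rho}_{\max})$ in case (ii). The residual term $\E[L^{(k^\star)}_T - \min_k L^{(k)}_T]$ is handled by one more application of Lemma~\ref{lemma:hedge_concentr_lemma} combined with the elementary bound $\sup_{T\geq 1} T\, e^{-T\Delta^2/8} = O(1/\Delta^2)$, producing the final $O(1/\Delta^2)$ piece.

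The main obstacle is controlling the $K$-dependence: a naive union bound over the $K-1$ sub-optimal experts leaves a factor of $K$ in the tail, whereas the target bound is only logarithmic in $K$. The choice $t_0 \geq 16\log K/\Delta^2$ is calibrated precisely so that $K\cdot \exp(-t_0\Delta^2/8) = O(1)$, absorbing the union-bound factor into the deterministic warm-up; further careful tracking of constants is what produces the ``$25$'' in case (i) and ``$16(1+3\overline{\rho}_{\max})$'' in case (ii). In the mixing case, Assumption~\ref{ass:uniform summability}(ii) is what guarantees $\overline{\rho}_{\max}<\infty$, so the same exponential decay applies, now at the mixing-adjusted rate $\Delta^2/(8\overline{\rho}_{\max})$; this factor propagates linearly through both the warm-up threshold and the tail sum, which is exactly the structure of the bound in part (ii).
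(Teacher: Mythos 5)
Your skeleton is essentially the paper's: split time at $t_0 \asymp \log K/\Delta^2$ (mixing-adjusted in case (ii)), control the warm-up with the anytime bound $\overline{R}_{\textsc{hdg},t_0} \leq \sqrt{t_0 \log K}$ from Proposition~\ref{proposition:regret_hedge}, bound the tail by $\sum_{k\neq k^\star}\E[\omega^{(k)}_{\textsc{hdg},t}]$ via the Gibbs estimate plus Corollary~\ref{corollary:simple_hedge_conc_bound_hoeffding}, sum with Lemma~\ref{lemma:exp_series_limit}, and absorb the union-bound factor $K$ into $t_0$. Your explicit inclusion of the gap term $\E[L^{(k^\star)}_T - \min_{k} L^{(k)}_T]$ is in fact more careful bookkeeping than the paper, which passes from regret to pseudo-regret implicitly when it bounds $\E[r_t] \leq \sum_{k\neq k^\star}\E[\omega^{(k)}_{\textsc{hdg},t}]$. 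However, the step you propose for that residual term would fail as written: ``Lemma~\ref{lemma:hedge_concentr_lemma} plus $\sup_{T\geq 1} T e^{-T\Delta^2/8} = O(1/\Delta^2)$'' silently drops a union bound over the $K-1$ suboptimal experts, since $\E\big[L^{(k^\star)}_T - \min_k L^{(k)}_T\big] \leq \sum_{k\neq k^\star} T\,\P\big(L^{(k)}_T < L^{(k^\star)}_T\big) \leq (K-1)\,T e^{-T\Delta^2/8}$, which is $O(K/\Delta^2)$ uniformly in $T$ — destroying the logarithmic $K$-dependence you flagged as the main obstacle. The fix is the same absorption device you already use for the tail: for $T > t_0$ write $T e^{-T\Delta^2/8} = \big(e^{-T\Delta^2/16}\big)\big(T e^{-T\Delta^2/16}\big)$, use $K e^{-T\Delta^2/16} \leq 1$ for $T \geq 16\log(K)/\Delta^2$, and dispose of $T \leq t_0$ by the worst-case bound $\sqrt{T\log K}$; but this needs to be said, since your calibration remark is applied only to the tail sum.

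Second, the constants do not close with your $t_0$. Taking $t_0 = \lceil 16\log K/\Delta^2 \rceil$ makes the warm-up term alone about $4\log K/\Delta$, i.e.\ the entire budget of the leading term $4\Delta\log(K)/\Delta^2$, leaving no room for the $O(1/\Delta^2)$ contributions (the two tail series already cost roughly $24/\Delta^2$ before the residual term), so you cannot land on exactly $(4\Delta\log K + 25)/\Delta^2$. The paper instead takes $t_0 = \lceil 8\log K/\Delta^2\rceil$ (and $\lceil 8\overline{\rho}_{\max}\log K/\Delta^2\rceil$ in case (ii)) — exactly the threshold at which $K e^{-t_0\Delta^2/8} \leq 1$ — obtaining warm-up $\leq \sqrt{\log K} + \sqrt{8}\log(K)/\Delta$ and tail $\leq 25/\Delta^2$, and then folds $\sqrt{\log K} + \sqrt{8}\log(K)/\Delta \leq 4\log(K)/\Delta$ using $\Delta \leq 1 \leq \log K$ (here $K \geq 3$ is used). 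Your structural conclusion $(C_1\Delta\log K + C_2)/\Delta^2$, the Gibbs/concentration split on the event $\{L^{(k)}_{t-1} - L^{(k^\star)}_{t-1} \geq (t-1)\Delta_k/2\}$, and the linear propagation of $\overline{\rho}_{\max}$ through both $t_0$ and the tail in case (ii) are all correct; only the $t_0$ calibration and the unqualified claims of the constants $25$ and $16(1+3\overline{\rho}_{\max})$ need repair.
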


\begin{proof}[Proof of Proposition~\ref{proposition:hedge_hoeffding_regret_bound}]
	
	As in the proof of Proposition~\ref{proposition:ftl_hoeffding_regret_bound}, time is split using some  $t_0$ so that
	\begin{equation}\label{regret_bound}
		\overline{R}_{\textsc{hdg},T} 
		=\sum_{t=1}^T r_t=
		\overline{R}_{\textsc{hdg},t_0} + \sum_{t=t_0+1}^T r_t ,
	\end{equation}
	where $\overline{R}_{\textsc{hdg},t_0}$ is the average regret up to $t_0$.
	Using the worst-case bound~\eqref{eq:hedge_regret_bound} in Proposition~\ref{proposition:regret_hedge} in \cite{chernovPredictionExpertAdvice2010} 
	yields $\overline{R}_{\textsc{hdg},t_0} \leq \sqrt{t_0 \log(K)}$.
	Then, for any $t \in \{t_0+1, \ldots, T\}$, we study the instantaneous expected Hedge regret
	\begin{equation*}
		\E[ r_t ]
		=
		\E\left[ \sum_{k=1}^K \omega^{(k)}_{\textsc{hdg},t} (\ell^{(k)}_t - \ell^{\dagger}_t )\right]
		\leq
		\E\left[ \sum_{k=1}^K \omega^{(k)}_{\textsc{hdg},t} \left\lvert \ell^{(k)}_t - \ell^{\dagger}_t \right\rvert \right] 
		\leq
		\sum_{k\neq k^\star} \E\left[ \omega^{(k)}_{\textsc{hdg},t} \right] ,
	\end{equation*}
	where $\ell^{\dagger}_t := \min_{k \in [K]} L^{(k)}_t$ is the minimal cumulative loss at time $t$ and is the loss of the empirical best expert at time $t$ (which, in general, is not the same as the best instantaneous $\ell^{*}_t$). 
	We have used the fact that $\sum_{k=1}^K \omega^{(k)}_{\textsc{hdg},t} = 1$ for all $t \geq 1$, and that for any $t \in [T]$ it holds $\max_{k, k' \in [K]} \abs{\ell^{(k)}_t - \ell^{(k')}_t} \leq 1$.
	
	We now proceed to bound $\E[ \omega^{(k)}_{\textsc{hdg},t}]$.
	We mimic the steps in the proof of Theorem~2 in \cite{mourtadaOptimalityHedgeAlgorithm2019}. More explicitly, we first notice that 
	if $L_{t-1}^{(k)}-L_{t-1}^{(k^\star)}\geq\frac{\Delta_k(t-1)}{2}$ then with $\eta_t = 2\sqrt{\log(K)/t}$ one obtains
	\begin{align*}
		\omega^{(k)}_{\textsc{hdg},t} 
		=
		\frac{ \exp( -\eta_t (L^{(k)}_{t-1}-L^{(k^\star)}_{t-1}) ) }
		{ 1+\sum_{j\neq k^\star} \exp( -\eta_t (L^{(j)}_{t-1}-L^{(k^\star)}_{t-1}) ) }
		&\leq 
		\exp\left(
		-2\sqrt{\frac{\log(K)}{t}}
		\frac{\Delta_k (t-1)}{2}\right)\nonumber\\
		&\leq 
		\exp\left(-\Delta_k \sqrt{\frac{(t-1)\log(K)}{2}}\right).
	\end{align*}
	Since $\omega^{(k)}_{\textsc{hdg},t}\leq 1$, one can write, for $t\geq 2$,
	\begin{align*}
		\omega^{(k)}_{\textsc{hdg},t} 
		= \1{ L_{t-1}^{(k)}-L_{t-1}^{(k^\star)}<{\Delta_k(t-1)}/{2} }+
		\exp\left(-\Delta_k \sqrt{\frac{(t-1)\log(K)}{2}}\right),
	\end{align*}
	and, taking the expectation, one has 
	\begin{align}
		\label{eq:omega_bound}
		\E [\omega^{(k)}_{\textsc{hdg},t} ]
		= \P\left( L_{t-1}^{(k)}-L_{t-1}^{(k^\star)}<{\Delta_k(t-1)}/{2} \right)+
		\exp\left(-\Delta_k \sqrt{\frac{(t-1)\log(K)}{2}}\right).
	\end{align}
	We now bound the first summand  under Assumptions~\ref{assumption:losses_iid_or_mixing}(i) and (ii) separately, which allows to complete the proof.
	
	\medskip
	\medskip
	
	\noindent \textbf{Case (i).}
	Set $t_0 = \lceil \frac{8 \log(K)}{\Delta^2} \rceil$, which is chosen using \eqref{hedge_ass_1} together with Remark~\ref{remark:hedge_concentr_uniform}. Using case (i) of Corollary~\ref{corollary:simple_hedge_conc_bound_hoeffding} (see also the proof of Theorem 2 in \cite{mourtadaOptimalityHedgeAlgorithm2019}), we immediately note that, for $t \geq t_0 + 1$,
	\begin{align*}
		\sum_{k\neq k^\star} \E\left[ \omega^{(k)}_{\textsc{hdg},t} \right]
		& \leq
		(K-1)e ^{- (t-1) \Delta^2 / 8 } +  (K-1)e^{- \Delta \sqrt{(t-1) \log(K) / 2}\, }  \\
		& \leq ( (K-1) e^{-t_0 \Delta^2/8}) e^{- (t-t_0-1) \Delta^2/8} + ( (K-1) e^{- \Delta \sqrt{(t - 1) \log(K)/8}}) e^{- \Delta \sqrt{(t - 1) \log(K)/8}} \\
		& \leq e^{- (t-t_0-1) \Delta^2/8} + e^{- \Delta \sqrt{(t - 1) \log(K)/8}} ,
	\end{align*}
	where in the last inequality we used the fact that the choice of $t_0 \geq {8 \log (K)}/{\Delta^2}$ implies that $ (K-1) e^{-t_0 \Delta^2 / 8} \leq 1$. Then, for $t\geq t_0+1$ it holds that $ (K-1) e^{- \Delta \sqrt{(t - 1) \log(K)/8}}\leq 1$.
	Hence,
	\begin{equation}\label{eq:deg_hedge_iid_series}
		\sum_{t=t_0+1}^{\infty} \E[ r_t ]
		\leq
		\sum_{s=0}^\infty e^{-s\Delta^2/8} + \sum_{s=1}^\infty e^{-(\Delta/\sqrt{8})\sqrt{t}} \leq \dfrac{25}{\Delta^2},
	\end{equation}
	where we applied Lemma~\ref{lemma:exp_series_limit} since $\log(K) > 1$ (as we assumed $K \geq 3$).
	
	We now consider again the bound  that  $\overline{R}_{\textsc{hdg},t_0} \leq \sqrt{t_0 \log K}$, and, using our choice of $t_0$, we obtain
	$\overline{R}_{\textsc{hdg},t_0} \leq {\sqrt{8} \log K}/{\Delta}$. Therefore, in \eqref{regret_bound} we obtain
	\begin{equation*}
		\E\left[\overline{R}_{\textsc{hdg},T}\right]
		\leq \sqrt{\log(K)}+\frac{\sqrt{8} \log(K)}{\Delta} + \frac{25}{\Delta^2} 
		\leq
		\frac{4 \Delta \log(K) + 25}{\Delta^2} ,
	\end{equation*}
	which is the required bound.
	
	\medskip
	
	\noindent \textbf{Case (ii).} 
	Set $t_0 = \lceil \frac{ 8 \overline{\rho} \log(K)}{\Delta^2} \rceil$, which is chosen using \eqref{hedge_ass_2} together with Remark~\ref{remark:hedge_concentr_uniform}. Case (ii) of Corollary~\ref{corollary:simple_hedge_conc_bound_hoeffding} implies that for $t \geq t_0 + 1$ it holds that
	\begin{align*}
		\sum_{k\neq k^\star} \E\left[\omega^{(k)}_{\textsc{hdg},t} \right]
		& \leq \sqrt{e}  \left((K-1) e^{-t_0 \Delta^2/(8 \overline{\rho})} \right) e^{- (t-t_0-1) \Delta^2/(8 \overline{\rho})} \\[-5pt]
		& \quad\quad\quad +  \left((K-1) e^{- \Delta \sqrt{(t - 1) \log(K)/8}} \right) e^{- \Delta \sqrt{(t - 1) \log(K)/8}} \\[5pt]
		& \leq 2 e^{- (t-t_0-1) \Delta^2/(8 \overline{\rho})} + e^{- \Delta \sqrt{(t - 1) \log(K)/8}} ,
	\end{align*}
	where the last inequality follows by noting that $\sqrt{e} \leq 2$ and $t_0 \geq 8 \overline{\rho} \log(K) / \Delta^2 $ implies $(K-1) e^{-t_0 \Delta^2/(8 \overline{\rho})} \leq 1$, while $t \geq t_0 + 1$ also yields that $(K-1) e^{- \Delta \sqrt{(t - 1) \log(K)/8}} \leq 1$ since $\overline{\rho} \geq 1$ by definition.
	Therefore, again thanks to the assumption that $K \geq 3$ and by Lemma~\ref{lemma:exp_series_limit},
	\begin{align}\label{proof-eq:hedge_depend_series_sum}
		\sum_{t=t_0+1}^T \E[ r_t ]
		& \leq 2 \sum_{s = 0}^\infty e^{- s \Delta^2/(8 \overline{\rho})} + \sum_{s = 1} ^\infty e^{- \Delta \sqrt{(t - 1) /8}}\leq 2 \left( 1 + \frac{ 8 \overline{\rho} }{\Delta^2} \right) + \frac{16}{\Delta^2} .
	\end{align}
	Once more, using that $\overline{R}_{\textsc{hdg},t_0} \leq \sqrt{t_0 \log K}$ and our choice of $t_0$, we obtain
	\begin{equation*}
		\overline{R}_{\textsc{hdg},t_0} \leq \sqrt{\log(K)} + \frac{\sqrt{8 \overline{\rho}} \log(K)}{\Delta} .
	\end{equation*}
	Using this in \eqref{regret_bound} together with \eqref{proof-eq:hedge_depend_series_sum} yields
	\begin{align*}
		\E\left[\overline{R}_{\textsc{hdg},T}\right]
		& \leq
		\left[ \sqrt{\log(K)} + \frac{\sqrt{8 \overline{\rho}} \log(K)}{\Delta} \right] + \left[ 2 + \frac{16 (1 + \overline{\rho})}{\Delta^2} \right] \\
		& \leq
		2 + \frac{(1 + 3\overline{\rho}) \log(K)}{\Delta} + \frac{16(1 + \overline{\rho})}{\Delta^2} ,
	\end{align*}
	where we use $\sqrt{8} \leq 3$ for simplicity. Collecting the terms renders the required bound.
\end{proof}

\begin{proposition}[Bernstein-type regret bounds]\label{proposition:hedge_bernstein_regret_bound} 
	Let $K\geq 3$. Assume that the conditions of Lemma~\ref{lem:hedge-bernstein} hold.
	Define $\widetilde{v}_{\max}:=\max_{k\neq k^\star} \widetilde{v}_k$.
	\begin{description}
		\item[(i)] Under the conditions of case (i) of Lemma~\ref{lem:hedge-bernstein}, it holds
		\begin{equation*}
			\E\big[\overline{R}_{\textsc{hdg},T}\big]
			\leq
			1 + \sqrt{\log K} 
			+ \frac{ 4 \sqrt{\frac{2}{3}} \Delta \log K + 8 (\widetilde{v}_{\max} + \frac{1}{3} \Delta) + 16 }{\Delta^2}.
		\end{equation*}
		\item[(ii)] Under the conditions of case (ii) of Lemma~\ref{lem:hedge-bernstein}, and additionally, under Assumption~\ref{ass:uniform summability}(ii), it holds 
		\begin{equation*}
			\E\big[\overline{R}_{\textsc{hdg},T}\big]
			\leq
			1 + \sqrt{\log K} 
			+ \frac{ 4 \sqrt{5} \overline{\rho}_{\max} \Delta \log K + 16 \overline{\rho}_{\max}^2 (4 \widetilde{v}_{\max} + \Delta) + 16 }{\Delta^2},
		\end{equation*}
		where $\overline{\rho}_{\max} = \max_{k\neq k^\star} \sup_{t \geq 1} \rho^{(k)}_t$, with ${\rho}^{(k)}_t := 1 + \sum_{n=1}^t \sqrt{\varphi_n \big(\big\{( \ell^{(k)}_t -  \ell^{(k^\star)}_t )\big\}_{t \in \Int} \big) }$.
		\item[(iii)] Under the conditions of case (iii) of Lemma~\ref{lem:hedge-bernstein}, it holds 
		\begin{align*}
			\E[\overline{R}_{\textsc{hdg},T}] 
			& \leq 
			\sqrt{\widetilde{t}^*_0 \log K} + \frac{16}{\Delta^2} + \sum_{s=\widetilde{t}^*_0}^\infty \exp\left( - \frac{ s \widetilde{c}_\Delta}{2 (\log s)^\alpha} \right) ,
		\end{align*}
		where $\alpha = 2/\gamma_{\rm min}$,
		\begin{align*}
			\widetilde{c}_\Delta := \frac{ \Delta^2 }{ 32(\widetilde{v}_{\max} +\tfrac{1}{6} \Delta) } ,
			\qquad
			\widetilde{c}^*_\Delta := \frac{ \Delta^2 }{ 32\, \nu^* } ,
		\end{align*}
		and $\widetilde{t}^*_0$ is defined as 
		\begin{equation*}
			\widetilde{t}^*_0
			=
			\left\lceil  
			\inf \left\{ t  \geq 
			\max\left\{ 3, {e^\alpha}, {t_0 } \right\}  
			\,\bigg\vert\, \frac{t}{(\log t)^\alpha} \geq \frac{ 2 \log(2K) }{ \widetilde{c}^*_\Delta } \right\}
			\right\rceil 
		\end{equation*} 
		where $t_0$ is given in \eqref{t0}.
	\end{description}
	The bounds are uniform in $T$.
\end{proposition}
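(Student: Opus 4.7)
The plan is to mirror the proof of Proposition~\ref{proposition:hedge_hoeffding_regret_bound} almost verbatim, swapping the Hoeffding-type tail estimates from Corollary~\ref{corollary:simple_hedge_conc_bound_hoeffding} for the Bernstein-type estimates from Corollary~\ref{corollary:simple_hedge_conc_bound_bernstein}. First I would split the cumulative regret at a warm-up time $t_0$ (to be chosen case-by-case), bound the initial window by $\overline{R}_{\textsc{hdg},t_0}\le\sqrt{t_0\log K}$ using the anytime bound of Proposition~\ref{proposition:regret_hedge}, and for $t>t_0$ expand $\E[\omega^{(k)}_{\textsc{hdg},t}]$ via the decomposition in~\eqref{eq:omega_bound}, namely
\[
\E[\omega^{(k)}_{\textsc{hdg},t}]
\le
\P\bigl(L_{t-1}^{(k)}-L_{t-1}^{(k^\star)}<\tfrac{\Delta_k(t-1)}{2}\bigr)
+\exp\bigl(-\Delta_k\sqrt{(t-1)\log K/2}\bigr).
\]
Summing over $k\neq k^\star$ and applying Lemma~\ref{lemma:exp_series_limit}, the second (deterministic) term will contribute the universal $16/\Delta^2$ piece present in all three bounds. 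The three cases then diverge only in how the concentration probability is controlled.

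For case (i), Corollary~\ref{corollary:simple_hedge_conc_bound_bernstein}(i) gives a tail of the form $\exp(-t\Delta^2/[8(\widetilde v_{\max}+\Delta/3)])$. I would pick $t_0=\lceil 8(\widetilde v_{\max}+\Delta/3)\log K/\Delta^2\rceil$ so that $(K-1)\exp(-t_0\Delta^2/[8(\widetilde v_{\max}+\Delta/3)])\le 1$. Telescoping collapses the tail sum into a geometric series of ratio $\exp(-\Delta^2/[8(\widetilde v_{\max}+\Delta/3)])$, which sums to $8(\widetilde v_{\max}+\Delta/3)/\Delta^2$ by Lemma~\ref{lemma:exp_series_limit}. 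The warm-up cost satisfies $\sqrt{t_0\log K}\le\sqrt{\log K}+\sqrt{8(\widetilde v_{\max}+\Delta/3)}\,\log K/\Delta$; since $\widetilde v_{\max}\le 1$ and $\Delta\le 1$ by Assumption~\ref{assumption:losses_0-1}, the prefactor is at most $\sqrt{32/3}=4\sqrt{2/3}$, matching the stated constant. Case (ii) is identical once the tail $\exp(-t\Delta^2/[16\overline\rho_{\max}^2(4\widetilde v_{\max}+\Delta)])$ from Corollary~\ref{corollary:simple_hedge_conc_bound_bernstein}(ii) is used, with the analogous choice $t_0=\lceil 16\overline\rho_{\max}^2(4\widetilde v_{\max}+\Delta)\log K/\Delta^2\rceil$; boundedness of $\widetilde v_{\max}$ and $\Delta$ again produces the $4\sqrt 5\,\overline\rho_{\max}$ leading factor.

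The main obstacle is case (iii). Here Corollary~\ref{corollary:simple_hedge_conc_bound_bernstein}(iii) only delivers a sub-exponential decay $\exp(-t\widetilde c_\Delta/(\log t)^\alpha)$ with $\alpha=2/\gamma_{\min}$, so the factor $K$ cannot be absorbed by a warm-up $t_0$ linear in $\log K$. The remedy is the same logarithmic-tail device used in the proof of Proposition~\ref{proposition:ftl_bernstein_regret_bound}(iii): I would invoke Lemma~\ref{lemma:logtail_t0} with $b=2\log(2K)/\widetilde c^*_\Delta$ to obtain $\widetilde t^*_0$ beyond which $t/(\log t)^\alpha\ge 2\log(2K)/\widetilde c^*_\Delta$; then $\log(2K)$ absorbs into half of the exponent and leaves $\exp(-t\widetilde c_\Delta/[2(\log t)^\alpha])$, which is integrable since $t/(\log t)^\alpha\to\infty$. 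The warm-up regret becomes $\sqrt{\widetilde t^*_0\log K}$ via Proposition~\ref{proposition:regret_hedge}. Combining the three pieces — warm-up contribution, the $16/\Delta^2$ deterministic tail from the $\exp(-\Delta\sqrt{(t-1)\log K/2})$ term, and the residual sub-exponential sum $\sum_{s\ge\widetilde t^*_0}\exp(-s\widetilde c_\Delta/[2(\log s)^\alpha])$ — yields the stated uniform-in-$T$ bound. No further difficulty is anticipated: all variance-type Bernstein ingredients are already packaged in Corollary~\ref{corollary:simple_hedge_conc_bound_bernstein}, and the rest of the argument is a direct transcription of the proof of Proposition~\ref{proposition:hedge_hoeffding_regret_bound}.
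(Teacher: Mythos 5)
Your overall architecture is exactly the paper's: split at a warm-up time $t_0$, bound the warm-up by $\sqrt{t_0\log K}$ via Proposition~\ref{proposition:regret_hedge}, control $\E[\omega^{(k)}_{\textsc{hdg},t}]$ through the decomposition \eqref{eq:omega_bound} with the Bernstein tails of Corollary~\ref{corollary:simple_hedge_conc_bound_bernstein}, sum with Lemma~\ref{lemma:exp_series_limit}, and in case (iii) deploy Lemma~\ref{lemma:logtail_t0} to absorb $\log(2K)$ into half the sub-exponential exponent. However, there is one genuine gap in your cases (i) and (ii): your choice $t_0=\lceil 8(\widetilde v_{\max}+\Delta/3)\log K/\Delta^2\rceil$ is calibrated only to the \emph{first} requirement, namely $(K-1)e^{-t_0 c_\Delta}\le 1$. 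But the decomposition produces a second, deterministic term $(K-1)\,e^{-\Delta_k\sqrt{(t-1)\log K/2}}$, and your claim that it "will contribute the universal $16/\Delta^2$ piece" silently requires absorbing its $(K-1)$ prefactor as well, via the split $e^{-\Delta\sqrt{(t-1)\log K/2}}=\bigl(e^{-\Delta\sqrt{(t-1)\log K/8}}\bigr)^2$ and the inequality $(K-1)e^{-\Delta\sqrt{t_0\log K/8}}\le 1$. With your $t_0$ this gives $\Delta\sqrt{t_0\log K/8}\ge\sqrt{\widetilde v_{\max}+\Delta/3}\,\log K$, which is \emph{not} $\ge\log(K-1)$ when $\widetilde v_{\max}+\Delta/3<1$ --- a perfectly admissible regime (e.g.\ $\Delta=0.3$, $\widetilde v_{\max}=0.1$ gives $(K-1)K^{-0.45}\gg 1$), precisely the low-variance regime Bernstein bounds are designed for. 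Without that absorption, summing the second term over $t$ by Lemma~\ref{lemma:exp_series_limit} leaves a factor of order $K/(\Delta^2\log K)$ rather than the $K$-free $16/\Delta^2$.

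The paper repairs this at no cost by flooring the variance proxy: set $\nu^*:=\max(1,\widetilde v_{\max}+\tfrac13\Delta)$ and $t_0=\lceil 8\nu^*\log K/\Delta^2\rceil$ in case (i) (respectively $\nu^*:=\max(1,4\widetilde v_{\max}+\Delta)$ and $t_0=\lceil 16\,\overline\rho_{\max}^2\nu^*\log K/\Delta^2\rceil$ in case (ii)), so that $\Delta\sqrt{t_0\log K/8}\ge\sqrt{\nu^*}\log K\ge\log K$, which kills the $(K-1)$ factor for $K\ge3$. Since losses in $[0,1]$ force $\nu^*\le 4/3$ (resp.\ $\nu^*\le 5$), the warm-up cost is still bounded by $\sqrt{\log K}+4\sqrt{2/3}\log K/\Delta$ (resp.\ $\sqrt{\log K}+4\sqrt5\,\overline\rho_{\max}\log K/\Delta$), so your stated final constants survive unchanged. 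Note also that this same $\nu^*$ is the quantity appearing, undefined in your sketch, inside $\widetilde c^*_\Delta$ in case (iii): the paper's definition $\widetilde c^*_\Delta=\Delta^2/(32\nu^*)\le\widetilde c_\Delta$ is what makes the inequality $\log(2K)\le (t-1)\widetilde c^*_\Delta/\bigl(2(\log(t-1))^\alpha\bigr)\le (t-1)\widetilde c_\Delta/\bigl(2(\log(t-1))^\alpha\bigr)$ legitimate, and the companion bound $\widetilde t^*_0\ge 64\,\nu^*\log(2K)/\Delta^2$ with $\nu^*\ge1$ is again what absorbs $(K-1)$ in the deterministic term there. With the $\max(1,\cdot)$ device inserted, your argument is a faithful transcription of the paper's proof.
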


\begin{proof}[Proof of Proposition~\ref{proposition:hedge_bernstein_regret_bound}]
	We proceed precisely as in the proof of Proposition~\ref{proposition:hedge_hoeffding_regret_bound} using the Bernstein-type results in Corollary~\ref{corollary:simple_hedge_conc_bound_bernstein}. The proof reduces to analyzing \eqref{eq:omega_bound} under the conditions of case (i) and case (ii) as follows.
	
	\medskip
	
	\noindent \textbf{Case (i).} 
	Define $c_{\Delta} := \Delta^2 / (8 (\widetilde{v}_{\max} + \frac{1}{3}\Delta))$, $\nu^* := \max(1, \widetilde{v}_{\max} + \frac{1}{3}\Delta)$ and set $t_0 = \lceil \frac{8 \nu^* \log(K)}{ \Delta^2 } \rceil$. 
	Analogously to the proof of case (i) of Proposition~\ref{proposition:hedge_hoeffding_regret_bound}, we apply case (i) of Corollary~\ref{corollary:simple_hedge_conc_bound_bernstein} and, for $t \geq t_0 + 1$, find
	\begin{align*}
		\sum_{k\neq k^\star} \E\left[ \omega^{(k)}_{\textsc{hdg},t} \right]
		& \leq
		(K-1)e ^{- (t-1) c_{\Delta} } +  (K-1)e^{- \Delta \sqrt{(t-1) \log(K) / 2}\, }  \\
		& \leq \big( (K-1) e^{-t_0 c_{\Delta}} \big) e^{- (t-t_0-1) c_{\Delta}} + \big( (K-1) e^{- \Delta \sqrt{(t - 1) \log(K)/8}} \big) e^{- \Delta \sqrt{(t - 1) \log(K)/8}} \\
		& \leq e^{- (t-t_0-1) c_{\Delta}} + e^{- \Delta \sqrt{(t - 1) \log(K)/8}} .
	\end{align*}
	To obtain the last inequality we used, first, the fact that the choice of $t_0 \geq {8 \nu^* \log(K)}/{ \Delta^2 } \geq { \log (K)} / c_{\Delta}$ implies that $ (K-1) e^{-t_0 c_{\Delta}} \leq 1$ assuming $K\geq 3$. Second, for $t\geq t_0+1$ it holds that 
	\begin{align*}
		(K-1) \exp\left( - \Delta \sqrt{ (t-1)\frac{\log K}{8} } \right)
		&\leq (K-1) \exp\left( - \Delta \sqrt{ t_0\frac{\log K}{8} } \right)\\&
		\leq
		(K-1) \exp\left( - \sqrt{ \nu^* (\log K)^2 } \right)
		\leq 1 .
	\end{align*}
	Hence,
	\begin{equation}
		\sum_{t=t_0+1}^{\infty} \E[ r_t ]
		\leq
		\sum_{s=0}^\infty e^{-s c_{\Delta}} + \sum_{s=1}^\infty e^{-(\Delta/\sqrt{8})\sqrt{s}} 
		\leq 
		1 + \frac{1}{ c_{\Delta}} + \frac{16}{\Delta^2},
	\end{equation}
	where we applied Lemma~\ref{lemma:exp_series_limit} since $\log(K) > 1$ for $K\geq 3$.
	
	We now consider again the bound $\overline{R}_{\textsc{hdg},t_0} \leq \sqrt{t_0 \log K}$. 
	Let us observe that, since losses $\ell_t^{(k)} \in [0, 1]$ for all $k \in [K]$, $v_{\max} \leq \E[ (\ell_t^{(k)})^2 ] \leq 1$, and hence $\nu^* \leq 4/3$.
	Using our choice of $t_0$, we obtain
	$\overline{R}_{\textsc{hdg},t_0} 
	\leq \sqrt{({ \frac{32}{3} \log K}/{\Delta^2} + 1)\log K}
	\leq \sqrt{\log K} + 4 \sqrt{\frac{2}{3}} \log(K) / \Delta $.
	Therefore, using \eqref{regret_bound} we find
	\begin{equation*}
		\E\left[\overline{R}_{\textsc{hdg},T} \right]
		\leq 
		\sqrt{\log K} + \frac{ 4 \sqrt{\frac{2}{3}} \log K }{\Delta} + 1+\dfrac{1}{ {c_{\Delta}}} + \dfrac{16}{\Delta^2},
	\end{equation*}
	which by definition of $c_\Delta$ leads to the desired bound.
	
	\medskip

	\medskip
	
	\noindent \textbf{Case (ii).}
	Similarly to case (i) above, we use case (ii) of Corollary~\ref{corollary:simple_hedge_conc_bound_bernstein} and introduce ${c}_{\Delta} := \Delta^2 / (16 \overline{\rho}_{\max}^2 (4 \widetilde{v}_{\max} + \Delta))$ and ${\nu}^* := \max(1, 4 \widetilde{v}_{\max} + \Delta)$. We set $t_0 := \left\lceil \frac{16\, \overline{\rho}_{\max}^2 {\nu}^* \log K }{ \Delta^2 } \right\rceil$.
	For $t \geq t_0 + 1$, we thus obtain the bound
	\begin{align*}
		\sum_{k\neq k^\star} \E\left[ \omega^{(k)}_{\textsc{hdg},t} \right]
		& \leq \big( (K-1) e^{-t_0 {c}_{\Delta}} \big) e^{- (t-t_0-1) {c}_{\Delta}} + \big( (K-1) e^{- \Delta \sqrt{(t - 1) \log(K)/8}} \big) e^{- \Delta \sqrt{(t - 1) \log(K)/8}} \\
		& \leq e^{- (t-t_0-1) {c}_{\Delta}} + e^{- \Delta \sqrt{(t - 1) \log(K)/8}} .
	\end{align*}
	The first summand follows from $t_0 \geq 16\, \overline{\rho}_{\max}^2 {\nu}^* \log K / \Delta^2 \geq \log K / {c}_{\Delta} $, while the second summand is due to noticing that
	\begin{equation*}
		(K-1) \exp\left( - \Delta \sqrt{ \frac{16\, \overline{\rho}_{\max}^2 {\nu}^* \log K }{ \Delta^2 } \cdot \frac{\log K}{8} } \right)
		\leq
		(K-1) \exp\left( - \overline{\rho}_{\max} \sqrt{ 2 {\nu}^* (\log K)^2 } \right)
		\leq 1 ,
	\end{equation*}
	where we use that $\overline{\rho}_{\max} \geq 1$ by definition and $K\geq 3$.
	
	With similar derivations are before, this time noting that ${\nu}^* \leq 5$ leads to  
	$\overline{R}_{\textsc{hdg},t_0} 
	\leq \sqrt{({ 80 \overline{\rho}_{\max}^2 \log K}/{\Delta^2} + 1)\log K}$ , 
	one finds
	\begin{equation*}
		\E\left[\overline{R}_{\textsc{hdg},T} \right]
		\leq 
		\sqrt{\log K} + \frac{ 4 \sqrt{5} \overline{\rho}_{\max} \log K }{\Delta} + 1+\dfrac{1}{ {{c}_{\Delta}}} + \dfrac{16}{\Delta^2} .
	\end{equation*}
	Plugging in for ${c}_{\Delta}$ and collecting terms leads to the final bound.
	
	\medskip
	
	\noindent \textbf{Case (iii).} 
	We now combine the proof strategy from previous parts with that of Proposition~\ref{proposition:ftl_bernstein_regret_bound}, case (iii). To apply the bound in \eqref{hedge_ass_bern_case_3}, Corollary~\ref{corollary:simple_hedge_conc_bound_bernstein}, for convenience we introduce $\alpha = 2 / \gamma_{\min}$.
	Let again $\nu^* := \max(1, \widetilde{v}_{\max} + \frac{1}{6} \Delta)$, and define
	\begin{equation*}
		\widetilde{c}_\Delta := \frac{ \Delta^2 }{ 32 (\widetilde{v}_{\max} +\tfrac{1}{6} \Delta) } ,
		\qquad
		\widetilde{c}^*_\Delta := \frac{ \Delta^2 }{ 32\, \nu^* } ,
	\end{equation*}
	as well as 
	\begin{equation*}
		\widetilde{t}^*_0
		=
		\left\lceil  
		\inf \left\{ t  \geq 
		\max\left\{ 3, {e^\alpha}, {t_0 } \right\}  
		\,\bigg\vert\, \frac{t}{(\log t)^\alpha} \geq \frac{ 2 \log(2K) }{ \widetilde{c}^*_\Delta } \right\}
		\right\rceil ,
	\end{equation*}
	which is well-defined thanks to Lemma~\ref{lemma:logtail_t0}.
	For $t \geq \widetilde{t}^*_0 + 1$, we obtain
	\begin{align*}
		\sum_{k\neq k^\star} \E\left[ \omega^{(k)}_{\textsc{hdg},t} \right]
		& \leq 2 (K-1) \exp\left( - \frac{(t-1) \widetilde{c}_\Delta}{(\log(t-1))^\alpha} \right) +  (K-1)e^{- \Delta \sqrt{(t-1) \log(K) / 2} } \\
		& = I  + II ,
	\end{align*}
	where we can study terms $I$ and $II$ separately. 
	With regards to the former, note that
	\begin{align*}
		I 
		& \leq \exp\left( \log(2 K) - \frac{(t-1) \widetilde{c}_\Delta}{(\log(t-1))^\alpha} \right) \\
		& \leq \exp\left( \frac{ (t-1) \widetilde{c}^*_\Delta}{2 (\log(t-1))^\alpha} - \frac{ (t-1) \widetilde{c}_\Delta}{(\log(t-1))^\alpha} \right)
		\leq \exp\left( - \frac{ (t-1) \widetilde{c}_\Delta}{2 (\log(t-1))^\alpha} \right) ,
	\end{align*}
	where we have used that $\widetilde{c}^*_\Delta \leq \widetilde{c}_\Delta$ by definition of $\nu^*$.
	Turning to $II$, much like in previous parts, we can split the exponential factor to have 
	\begin{equation*}
		II = \big( (K-1) e^{- \Delta \sqrt{(t - 1) \log(K)/8}} \big) e^{- \Delta \sqrt{(t - 1) \log(K)/8}} .
	\end{equation*}
	We now use that, since $\widetilde{t}^*_0 \geq 3$, it holds 
	\begin{equation*}
		\widetilde{t}^*_0
		\geq 
		(\log \widetilde{t}^*_0)^\alpha \log(2 K) \frac{64 \nu^* }{\Delta^2} 
		\geq
		\log(2 K) \frac{64 \nu^* }{\Delta^2} ,
	\end{equation*}
	and hence
	\begin{align*}
		(K-1) e^{- \Delta \sqrt{(t - 1) \log(K)/8}}
		& \leq 
		(K-1) \exp\left( - \Delta \sqrt{ \log(2 K) \frac{64 \nu^* }{\Delta^2} \cdot \frac{\log K}{8} } \right)
		\leq 1 
	\end{align*}
	when $t \geq \widetilde{t}^*_0 + 1$, since $\log(2 K) \geq \log K$.
	Putting the two bounds together, one finds, for $t$ sufficiently large, 
	\begin{equation}\label{eq:bound_hedge_weights_bern_iii}
		\sum_{k\neq k^\star} \E\left[ \omega^{(k)}_{\textsc{hdg},t} \right]
		\leq
		\exp\left( - \frac{ (t-1) \widetilde{c}_\Delta}{2 (\log(t-1))^\alpha} \right) + \exp\left( - \Delta \sqrt{(t - 1) \frac{\log K}{8} } \right) .
	\end{equation}
	
	Once more, we use the result that $\overline{R}_{\textsc{hdg},t_0} \leq \sqrt{\widetilde{t}^*_0 \log K}$ to control the warm-up period and Lemma~\ref{lemma:exp_series_limit}. We obtain an overall regret bound
	\begin{equation*}
		\sqrt{\widetilde{t}^*_0 \log K} + \frac{16}{\Delta^2} + \sum_{s=\widetilde{t}^*_0}^\infty \exp\left( - \frac{ s \widetilde{c}_\Delta}{2 (\log s ))^\alpha} \right) .
	\end{equation*}

\end{proof}

\subsubsection{Combined Bounds for Hedge}

We can now obtain Theorem~\ref{theorem:hedge_regret_mixing} by incorporating the bounds previously derived.

\begin{proof}[Proof of Theorem~\ref{theorem:hedge_regret_mixing}]
	Part (i) follows immediately by combining Proposition~\ref{proposition:hedge_hoeffding_regret_bound}(i) and Proposition~\ref{proposition:hedge_bernstein_regret_bound}(i).
	Similarly, part (ii) is obtained by combining Proposition~\ref{proposition:hedge_hoeffding_regret_bound}(ii) and Proposition~\ref{proposition:hedge_bernstein_regret_bound}(ii), with the relevant constants and $t_0$ defined in the latter.
\end{proof}

\newpage

\section{Doubling-trick Hedge}
\label{appendix:Doubling-trick Hedge}
Constant Hedge can be modified to work in setups where the time horizon $T$ is not known in advance, or when the learning process is intended to continue, by using the so-called \emph{doubling trick} \citep{Cesa-Bianchi2006,shalev_online}. 
The idea is to partition time into phases of exponentially increasing length, and to restart Hedge at the beginning of each phase with a learning rate tuned for that phase length. 

Specifically, let phase $r$ cover rounds $t \in [2^{r-1}, 2^{r}-1]$, and set 
\begin{equation*}
	\eta_r
	=
	\sqrt{\frac{8\log K}{S_T^2 2^{r-1}}}.
\end{equation*}
At the start of phase $r$, the forecaster resets the cumulative losses to zero and applies Hedge with constant rate $\eta_r$ until the end of the phase. 
Each phase therefore behaves as if the horizon were $2^{r-1}$ and the total number of phases up to round $T$ is at most $\lfloor \log_2 T \rfloor + 1$. Summing the regret bounds in \eqref{eq:hedge_regret_bound} over all completed phases yields
$ %
\overline{R}_T
\le
S_T \sqrt{2T\log K}
+
\mathcal{O}(S_T\log K)
$, %
which matches the $\mathcal{O}(\sqrt{T\log K})$ worst-case rate of Hedge with known $T$.

The doubling trick thus provides a simple, parameter-free procedure for adaptively restarting Hedge as time progresses, while maintaining the same asymptotic performance guarantees as the optimal constant-rate scheme with a known horizon.

\section{Direct Combination Methods}
\label{appendix:Direct Combination Methods}
We focus on linear weighting schemes, known to be parsimonious and interpretable. We consider two ``direct'' approaches, the so-called \textit{Simple Averaging} (SA) and the \textit{rolling Mean Squared Error} (rollMSE) methods, which are readily implementable online and serve as benchmarks in our empirical experiments in Section~\ref{section:application}.

\paragraph{Simple Averaging (SA).}
One can assign equal time-invariant weights to all $K$ experts
\begin{equation}
	\omega^{(k)}_{\mathrm{SA},t}= \frac{1}{K} 
	\quad\text{for all }
	k \in [K], \enspace 
	t \in [T].
\end{equation}

SA is straightforward and can reduce variance when many experts are accurate. However, it is inflexible, as it does not adapt to variability in time in relative performance across experts and implicitly relies on a subset of experts performing well at all times. Moreover, it also requires that no expert faces catastrophically high losses over the forecasting window, as this can increase the forecaster's losses unless additional tools are used (for example, trimming or winsorization).

\paragraph{Rolling MSE (rollMSE).} The rolling MSE scheme is specifically tailored to the setting of squared losses for experts. At decision time $t \in [T]$ and forecasting horizon $h\in [H]$, only losses from decisions
$\tau \le t-h$ have realized (their outcomes are revealed at $\tau+h \le t$). Fix a window length  $r\ge1$ and define the effective window length
\begin{equation*}
	r(t,h):=\min\{ r, (t-h)_+\},\quad (x)_+:=\max\{x,0\}.
\end{equation*}
If $r(t,h)=0$ ($t\leq h$), use an initialization rule (for example, equal weights).
Otherwise, for each expert $k\in[K]$ set
\begin{equation*}
	\mathrm{MSE}^{(k)}_{t,h,r} 
	=\frac{1}{r(t,h)}
	\sum_{\tau=t-h-r(t,h)+1}^{ t-h}
	\ell^{(k)}_{\tau,h}
	=\frac{1}{r(t,h)}
	\sum_{\tau=t-h-r(t,h)+1}^{ t-h}
	\left(Y_{\tau+h}-\widehat Y^{(k)}_{\tau+h}  \right)^2,
\end{equation*}
and construct the rolling MSE weights according to
\begin{equation*}
	\omega^{(k)}_{\mathrm{rollMSE},t}
	= 
	\frac{  \left({\mathrm{MSE}}^{(k)}_{t,h}+\varepsilon   \right)^{-1}} {\sum_{j=1}^K   \left({\mathrm{MSE}}^{(j)}_{t,h}+\varepsilon   \right)^{-1}},
\end{equation*}
with $\varepsilon>0$ a small constant.

This approach has proven effective in practice, and does not require estimating forecast error correlations~\citep{timmermann2006forecast}.
The key practical issue is the selection of the window size $r$. Selecting a window size too large can produce weights that react too slowly to sudden changes in loss values across experts, while a small $r$ can overreact to recent performance. Moreover, the weighting of MSEs within the window is imposed to be uniform. For the extreme case $r = 1$ ($t > h$ and $r(t,h) = 1$), the weights depend only on the most recent realized $h$-ahead error, that is
\begin{align*}
	\omega^{(k)}_{\mathrm{rollMSE},t}
	& = \dfrac{\left(\ell^{(k)}_{t-h}+\varepsilon \right)^{-1}}
	{\sum_{j=1}^K \left(\ell^{(j)}_{t-h}+\varepsilon \right)^{-1}} \\
	& = \dfrac{\left(\left(Y_{(t-h)+h}-\widehat Y^{(k)}_{(t-h)+h}\right)^2 +\varepsilon \right)^{-1}}
	{\sum_{j=1}^K \left(\left(Y_{(t-h)+h}-\widehat Y^{(j)}_{(t-h)+h}\right)^2 +\varepsilon \right)^{-1}}
	= \dfrac{\left(\left(Y_t-\widehat Y^{(k)}_{t}\right)^2 +\varepsilon \right)^{-1}}
	{\sum_{j=1}^K \left(\left(Y_t-\widehat Y^{(j)}_{t}\right)^2 +\varepsilon \right)^{-1}}.
\end{align*}

\newpage

\section{Algorithms}
\label{appendix:algorithms}

In this section, we provide pseudo-codes for the algorithm implementation of the ensemble combination schemes discussed in Section~\ref{section:expert_ensembles} and then used in Section~\ref{section:application}.
As we only evalute one-step-ahead forecasts ($h=1$), we simplify the notation outlined in Section~\ref{section:preliminaries} as follows:
\begin{equation*}
	\omega^{(1)}_{t+1} \equiv \omega^{(k)}_{t,1} ,
	\quad
	\bm{\omega}_{t+1} \equiv \bm{\omega}_{t,1} ,
	\quad 
	\ell^{(k)}_{t+1} \equiv \ell^{(k)}_{t,1} ,
	\quad
	L^{(k)}_{t+1} \equiv L^{(k)}_{t,1} 
	.
\end{equation*}
An explicit subscript for each combination method is also added (see the notation of Section~\ref{section:expert_ensembles}).

\bigskip

\begin{algorithm}
	\setstretch{1.25}
	\caption{-- Follow-the-Leader (FTL)}\label{algortihm:ftl}
	\begin{algorithmic}
		\Require Ensemble of experts indexed by $k \in [K]$
		\State Initialize cumulative losses as $L^{(k)}_1 = 0$ for all $k \in [K]$ 
		\State Initialize weights as $ w_{\textsc{ftl},2}^{(k)} = 1/K $ for all $k \in [K]$ \Comment{Equal weighting init}
		\For{$t = 1, 2, \ldots, T$}
		\LComment{Prediction step}
		\State Experts make predictions $\{ \widehat{Y}^{(k)}_{t+1} \}_{k=1}^K$
		\State Ensemble prediction is 
		$ \widehat{Y}_{t+1} := \sum_{k=1}^K \omega_{\textsc{ftl},t+1}^{(k)} \widehat{Y}_{t+1}^{(k)} $
		\State Outcome $Y_{t+1}$ is realized %
		\LComment{Weights update step}
		\State Each $k$th expert, $k\in [K]$, updates its cumulative loss $L^{(k)}_{t+1} = L^{(k)}_{t} + \ell(Y_{t+1}, \widehat{Y}_{t+1}^{(k)})$ 
		\State Compute set of experts with minimal cumulative loss, $\textsc{ftl}_{t+1} = \arg\min_{k \in [K]} L^{(k)}_{t+1}$
		\State Set $\omega_{\textsc{ftl},t+2}^{(k)} = \1{ k \in \textsc{ftl}_{t+1} } / \abs{\textsc{ftl}_{t+1}}$  for all $k \in [K]$
		\EndFor
		\State \Output Ensemble predictions $\{ \widehat{Y}_{t+1} \}_{t=1}^T$, ensemble weights $\{\bm{\omega}_{\textsc{ftl},{t+1}}\}_{t=1}^T$
	\end{algorithmic}
\end{algorithm}

\newpage

\begin{algorithm}[H]
	\setstretch{1.25}
	\caption{-- Constant Hedge}\label{algortihm:const_hedge}
	\begin{algorithmic}
		\Require An ensemble of experts indexed by $k \in [K]$, constant learning rate $\eta > 0$
		\State Initialize weights as $ w_{\textsc{hdg},2}^{(k)} = 1/K $ for all $k \in [K]$ \Comment{Equal weighting init}
		\For{$t = 1, 2, \ldots, T$}
		\LComment{Prediction step}
		\State Experts make predictions $\{ \widehat{Y}^{(k)}_{t+1} \}_{k=1}^K$
		\State Ensemble prediction is 
		$ \widehat{Y}_{t+1} := \sum_{k=1}^K \omega_{\textsc{hdg},t+1}^{(k)} \widehat{Y}_{t+1}^{(k)} $
		\State Outcome $Y_{t+1}$ is realized 
		\LComment{Weights update step}
		\State Each $k$th expert, $k\in [K]$, faces instantaneous loss $\ell^{(k)}_{t+1} = \ell(Y_{t+1}, \widehat{Y}_{t+1}^{(k)})$
		\State Compute $v^{(k)}_{t+1} := \omega^{(k)}_{\textsc{hdg},t+1} \exp( -\eta \ell^{(k)}_{t+1} )$   for all $k \in [K]$ \Comment{Pre-normalization weights}
		\State Set $\omega_{\textsc{hdg},t+2}^{(k)} = v^{(k)}_{t+1} / \sum_{k=1}^K v^{(k)}_{t+1} $   for all $k \in [K]$
		\EndFor
		\State \Output Ensemble predictions $\{ \widehat{Y}_{t+1} \}_{t=1}^T$, ensemble weights $\{\bm{\omega}_{\textsc{hdg},t+1}\}_{t=1}^T$
	\end{algorithmic}
	\vspace{5pt}
	\setstretch{1.}
	\textbf{Note:} Online weight updates using instantaneous losses $\ell^{(k)}_{t+1}$ are more numerically stable compared to calculating Hedge weights based on cumulative losses.
\end{algorithm}

\newpage

\begin{algorithm}[H]
	\setstretch{1.25}
	\caption{-- Decreasing Hedge}\label{algortihm:decreasing_hedge}
	\begin{algorithmic}
		\Require An ensemble of experts indexed by $k \in [K]$, learning rate scale $c_0 > 0$ \\(for example $c_0 = 2$, based on worst-case bounds in  \citealp{mourtadaOptimalityHedgeAlgorithm2019})
		\State Initialize weights as $ w_{\textsc{dh},2}^{(k)} = 1/K $ for all $k \in [K]$ \Comment{Equal weighting init}
		\For{$t = 1, 2, \ldots, T$}
		\LComment{Prediction step}
		\State Experts make predictions $\{ \widehat{Y}^{(k)}_{t+1} \}_{k=1}^K$
		\State Ensemble prediction is 
		$ \widehat{Y}_{t+1} := \sum_{k=1}^K \omega_{\textsc{dh},t+1}^{(k)} \widehat{Y}_{t+1}^{(k)} $
		\State Outcome $Y_{t+1}$ is realized 
		\LComment{Weights update step}
		\State Each $k$th expert, $k\in [K]$, faces instantaneous loss $\ell^{(k)}_{t+1} = \ell(Y_{t+1}, \widehat{Y}_{t+1}^{(k)})$
		\State Each $k$th expert, $k\in [K]$, updates its cumulative loss $L^{(k)}_{t+1} = L^{(k)}_{t} + \ell(Y_{t+1}, \widehat{Y}_{t+1}^{(k)})$
		\State Find $L^{(k^\star)}_{t+1} = \min_{k \in [K]} L^{(k)}_{t+1}$
		\State Set $\eta_{t+1} = c_0 \sqrt{\log(K) / (t+1)}$ \Comment{Decreasing learning rate}
		\State Compute $v^{(k)}_{t+1} := \exp( -\eta_{t+1} (L^{(k)}_{t+1} - L^{(k^\star)}_{t+1}) )$   for all $k \in [K]$ \Comment{Pre-normalization weights}
		\State Set $\omega_{\textsc{dh},t+2}^{(k)} = v^{(k)}_{t+1} / \sum_{k=1}^K v^{(k)}_{t+1} $    for all $k \in [K]$
		\EndFor
		\State \Output Ensemble predictions $\{ \widehat{Y}_{t+1} \}_{t=1}^T$, ensemble weights $\{\bm{\omega}_{\textsc{dh},t+1}\}_{t=1}^T$
	\end{algorithmic}
	\vspace{5pt}
	\setstretch{1.}
	\textbf{Note:} Hedge weights are calculated by centering with the minimum expert cumulative loss $L^{(k^\star)}_{t+1}$ to avoid numerical issues.
\end{algorithm}

\newpage

\begin{algorithm}[H]
	\setstretch{1.25}
	\caption{-- AdaHedge}\label{algortihm:adahedge}
	\begin{algorithmic}
		\Require An ensemble of experts indexed by $k \in [K]$
		\State Initialize cumulative losses as $L^{(k)}_1 = 0$ for all $k \in [K]$ 
		\State Initialize cumulative mixability gab $\nabla_1 = 0$ and ${M}_1 = 0$
		\For{$t = 1, 2, \ldots, T$}
		\LComment{Weight computation}
		\State Find $L^{(k^\star)}_{t} = \min_{k \in [K]} L^{(k)}_{t}$
		\If{$\nabla_t = 0$} \Comment{Pre-normalization weights}
		\State Set $\eta_t = \infty$ and $v^{(k)}_{t} = \1{  L^{(k)}_{t} = L^{(k^\star)}_{t} }$
		\Else
		\State Set $\eta_t = \log(K) / \nabla_t$ and $v^{(k)}_{t} = \exp\left( -\eta_t (L^{(k)}_{t} - L^{(k^\star)}_{t}) \right)$ for all $k \in [K]$
		\EndIf
		\State Compute $\overline{v}_{t} = \sum_{k=1}^K v^{(k)}_{t}$
		\State Set $\omega_{\textsc{ah},t+1}^{(k)} = v^{(k)}_{t} / \overline{v}_{t}$ for all $k \in [K]$ \Comment{Current period weights}
		\LComment{Prediction step}
		\State Experts make predictions $\{ \widehat{Y}^{(k)}_{t+1} \}_{k=1}^K$
		\State Ensemble prediction is 
		$ \widehat{Y}_{t+1} := \sum_{k=1}^K \omega_{\textsc{ah},t+1}^{(k)} \widehat{Y}_{t+1}^{(k)} $
		\State Outcome $Y_{t+1}$ is realized 
		\LComment{Mixability gap update step}
		\State Each $k$th expert, $k\in [K]$, faces instantaneous loss $\ell^{(k)}_{t+1} = \ell(Y_{t+1}, \widehat{Y}_{t+1}^{(k)})$
		\State Each $k$th expert, $k\in [K]$, updates its cumulative loss $L^{(k)}_{t+1} = L^{(k)}_{t} + \ell^{(k)}_{t+1}$
		\State Find %
		$L^{(k^\star)}_{t+1} = \min_{k \in [K]} L^{(k)}_{t+1}$
		\State Compute forecaster's loss $\overline{\ell}_{t+1} = \bm{\omega}_{\textsc{AH},t+1}^\top \bm{\ell}_{t+1}$
		\State Compute $M_{t+1} = L^{(k^\star)}_{t+1} - \eta_t^{-1} \log ( \overline{v}_{t+1} / K )$ \Comment{Numerically stable mixing}
		\State Update mixability gap $\nabla_{t+1} = \nabla_t + \max(0, \overline{\ell}_{t+1} - (M_{t+1} - M_t))$ %
		\EndFor
		\State \Output Ensemble predictions $\{ \widehat{Y}_{t+1} \}_{t=1}^T$, ensemble weights $\{\bm{\omega}_{\textsc{ah},t+1}\}_{t=1}^T$
	\end{algorithmic}
	\vspace{5pt}
	\setstretch{1.}
	\textbf{Note:} The update of the mixability gap includes a $\max(0, \cdot)$ operation to avoid numerical violations of Jensen's inequality. The mixability gap update used is based on the upper bound from Lemma~2 in \cite{rooijFollowLeaderIf2014}.
\end{algorithm}

\newpage

\section{Summary of Data}\label{appendix:data}

The dataset is collected from various sources, including the Federal Reserve Bank of St.~Louis Monthly (FRED-MD) and Quarterly (FRED-QD) Macroeconomic Databases (see \citealp{Stock1996, Stock2002} for more details), as well as daily series sourced from Refinitiv Datastream. 
The selection of predictors in the medium-MD dataset includes indicators from ten macroeconomic and financial categories (see \citealp{Stock1996, Stock2002}). 
The preprocessing of macroeconomic indicators follows the standard guidelines outlined in \cite{McCracken2016, McCracken2021}. 
The full data sample spans a period from January 1st, 1990, to December 31st, 2019. We concentrate on the evaluation up until the Great Financial Crisis, covering the period from 1990Q1 to 2007Q4. The testing and online learning sample spans 2008Q1-2019Q4, encompassing a total of 48 forecasting rounds.
\begin{table}[ht!]
	\begingroup
	\centering
	\footnotesize
		\begin{tabularx}{\linewidth}{lcllX}
			\textbf{Start Date} & \textbf{T} & \textbf{Code} & \textbf{Name} & \textbf{Description} \\
			\midrule
			\multicolumn{5}{l}{\footnotesize%
				Quarterly~ 
				\raisedrule[0.2em]{0.2pt}} \\[2pt]
			31/03/1959 & 5 & GDPC1 & Y & Real Gross Domestic Produce\\[5pt]
			\multicolumn{5}{l}{\footnotesize%
				Monthly~ 
				\raisedrule[0.2em]{0.2pt}} \\[2pt]
			30/01/1959 & 5 & INDPRO & XM1 & Industrial Production Index\\
			30/01/1959 & 5 & PAYEMS & XM4 & Payroll All Employees: Total nonfarm\\
			30/01/1959 & 4 & HOUST & XM5 & Housing Starts: Total New Privately Owned\\
			30/01/1959 & 5 & RETAILx & XM7 & Retail and Food Services Sales\\
			31/01/1973 & 5 & TWEXMMTH & XM11 & Nominal effective exchange rate US\\
			30/01/1959 & 2 & FEDFUNDS & XM12 & Effective Federal Funds Rate\\
			30/01/1959 & 1 & BAAFFM & XM14 & Moody’s Baa Corporate Bond Minus FEDFUNDS\\
			30/01/1959 & 1 & COMPAPFFx & XM15 & 3-Month Commercial Paper Minus FEDFUNDS\\
			30/01/1959 & 2 & CUMFNS & XM2 & Capacity Utilization: Manufacturing\\
			30/01/1959 & 2 & UNRATE & XM3 & Civilian Unemployment Rate\\
			30/01/1959 & 5 & DPCERA3M086SBEA & XM6 & Real personal consumption expenditures\\
			30/01/1959 & 5 & AMDMNOx & XM8 & New Orders for Durable Goods\\
			31/01/1978 & 2 & UMCSENTx & XM9 & Consumer Sentiment Index\\
			30/01/1959 & 6 & WPSFD49207 & XM10 & PPI: Finished Goods\\
			30/01/1959 & 1 & AAAFFM & XM13 & Moody’s Aaa Corporate Bond Minus FEDFUNDS\\
			30/01/1959 & 1 & TB3SMFFM & XM16 & 3-Month Treasury C Minus FEDFUNDS\\
			30/01/1959 & 1 & T10YFFM & XM17 & 10-Year Treasury C Minus FEDFUNDS\\
			30/01/1959 & 2 & GS1 & XM18 & 1-Year Treasury Rate\\
			30/01/1959 & 2 & GS10 & XM19 & 10-Year Treasury Rate\\
			30/01/1959 & 1 & GS10-TB3MS & XM20 & 10-Year Treasury Rate - 3-Month Treasury Bill\\[5pt]
			\multicolumn{5}{l}{\footnotesize%
				Daily~ 
				\raisedrule[0.2em]{0.2pt}} \\[2pt]
			30/01/1959 & 8 & DJINDUS & XD3 & DJ Industrial price index\\
			31/12/1963 & 8 & S\&PCOMP & XD1 & S\&P500 price index\\
			01/05/1982 & 1 & ISPCS00-S\&PCOMP* & XD2 & S\&P500 basis spread\\
			11/09/1989 & 8 & SP5EIND & XD4 & S\&P Industrial price index\\
			31/12/1969 & 8 & GSCITOT & XD5 & Spot commodity price index\\
			10/01/1983 & 8 & CRUDOIL & XD6 & Spot price oil\\
			02/01/1979 & 8 & GOLDHAR & XD7 & Spot price gold\\
			30/03/1982 & 8 & WHEATSF & XD8 & Spot price wheat\\
			01/11/1983 & 8 & COCOAIC,COCINU** & XD9 & Spot price cocoa\\
			30/03/1983 & 1 & NCLC.03-NCLC.01 & XD10 & Futures price oil term structure\\
			30/10/1978 & 1 & NGCC.03-NGCC.01 & XD11 & Futures price gold term structure\\
			02/01/1975 & 1 & CWFC.03-CWFC.01 & XD12 & Futures price wheat term structure\\
			02/01/1973 & 1 & NCCC.03-NCCC.01 & XD13 & Futures price cocoa term structure\\
			\bottomrule
		\end{tabularx}
		\endgroup
		
		\vspace{-0.5em}
		\footnotesize
		\singlespacing
		Notes: 
		``Start Date'' is the date for which the series is first available (before data transformations). Following \cite{McCracken2016} and \cite{McCracken2021}, the transformation codes in column ``T'' indicate with D for difference and log for natural logarithm, 1: none, 2: D, 3: DD, 4: Log, 5: Dlog, 6: DDlog, 7: percentage change, 8: GARCH volatility. ``Code'' is the code in the FRED-QD and FRED-MD datasets for quarterly and monthly data, and the Datastream mnemonic for the remaining frequencies. Missing values due to public holidays are interpolated by averaging over the previous five observations. *: Available until 20/09/2021. **: Average before 29/12/2017, COCINUS mean adjusted thereafter. 
		\vspace{1em}
		\caption{Input and output variables, frequencies, and transformations (adapted from \cite{ballarin2022reservoir}).}
		\label{table:dataset}
	\end{table}
	
	\newpage
	
	\section{Summary of Models}
	\label{appendix:models_summary}

	\begin{table}[H]
		\renewcommand{\baselinestretch}{1.2}
		\centering
		\small %
		\begin{tabularx}{\textwidth}{p{2.7cm} p{7cm} p{3.5cm}} %
			\textbf{Model Name} & \textbf{Description} & \textbf{Specification} \\ 
			\toprule
			Mean &  \begin{tabular}[t]{@{}l@{}}Unconditional mean of outcome series\\ over the estimation sample.\end{tabular} & -- \\
			\Xhline{0.1pt}
			AR(1) &  \begin{tabular}[t]{@{}l@{}}Autoregressive model of the output \\series estimated using OLS.\end{tabular} & -- \\
			\Xhline{0.1pt}
			DFM A & \begin{tabular}[t]{@{}l@{}}Stock aggregation, \\ VAR(1) factor process.\end{tabular} & \begin{tabular}[t]{@{}l@{}}Factors: 10\end{tabular} \\
			\Xhline{0.1pt}
			DFM B & \begin{tabular}[t]{@{}l@{}}Almon aggregation, \\VAR(1) factor process.\end{tabular} & Same as DFM A \\
			\Xhline{0.1pt}
			S-MFESN A & \begin{tabular}[t]{@{}l@{}}S-MFESN model:\\ Sparse-normal $\widetilde{A}$,\\ sparse-uniform $\widetilde{C}$, $\widetilde{\bm{\zeta}}=0$.\\ Isotropic ridge regression fit.\end{tabular} & \begin{tabular}[t]{@{}l@{}}Reservoir dimension: 30 \\ Sparsity: 33.3\% \\ $\rho = 0.5$, $\gamma = 1$, $\alpha = 0.1$\end{tabular} \\
			\Xhline{0.1pt}
			S-MFESN B & \begin{tabular}[t]{@{}l@{}}Same as S-MFESN A except for\\ larger reservoir dimension and lower\\ sparsity ratio.\end{tabular} & \begin{tabular}[t]{@{}l@{}}Reservoir dimension: 120 \\ Sparsity: 8.3\% \\ $\rho = 0.5$, $\gamma = 1$, $\alpha = 0.1$\end{tabular} \\
			\Xhline{0.1pt}
			M-MFESN A & \begin{tabular}[t]{@{}l@{}}M-MFESN model:\\ Monthly and daily freq. reservoirs. \\ Sparse-normal $\widetilde{A}_{1}$, $\widetilde{A}_{2}$,\\ sparse-uniform $\widetilde{C}_{1}$, $\widetilde{C}_{2}$,  $\widetilde{\bm{\zeta}}_{1}$, $\widetilde{\bm{\zeta}}_{2}=0$.\\ Isotropic ridge regression fit.\end{tabular} & \begin{tabular}[t]{@{}l@{}}Res. dim.: Month$=100$, Day$=20$ \\ Sparsity: Month$=10\%$, Day$=50\%$ \\ Month: $\rho = 0.5$, $\gamma = 1.5$, $\alpha = 0$\\ Day: $\rho = 0.5$, $\gamma = 0.5$, $\alpha = 0.1$ \end{tabular} \\
			\Xhline{0.1pt}
			M-MFESN B &  \begin{tabular}[t]{@{}l@{}}Same as M-MFESN A with different \\ values of hyperparameters.\end{tabular} & \begin{tabular}[t]{@{}l@{}}Res. dim.: Month$=100$, Day$=20$ \\ Sparsity: Month$=10\%$, Day$=50\%$ \\ Month: $\rho = 0.08$, $\gamma = 0.25$, $\alpha = 0.3$\\ Day: $\rho = 0.01$, $\gamma = 0.01$, $\alpha = 0.99$ \end{tabular} \\
			\midrule
			\midrule
			EN-MFESN-RP & \begin{tabular}[t]{@{}l@{}} Applied to all types of MFESN:\\ 1000 distinct models generated with\\ independently randomly drawn\\ reservoir state coefficients.\end{tabular} & \begin{tabular}[t]{@{}l@{}}Same specification as the corresponding  \\baseline MFESN model.\end{tabular} \\
			\Xhline{0.1pt}
			EN-MFESN-$\alpha$RP & \begin{tabular}[t]{@{}l@{}} Applied to all types of MFESN:\\ 1000 distinct models with 5 different\\ leak rates, 200 draws of reservoir state\\ coefficients per $\alpha$ value.\end{tabular} & \begin{tabular}[t]{@{}l@{}}Same specification as the corresponding \\ baseline MFESN model, except for leak \\ rate: $\alpha \in \left\{0.1, 0.3, 0.5, 0.7, 0.9 \right\}$. \end{tabular} \\
			\bottomrule
		\end{tabularx}
		\caption{%
			Models and ensembles applied in empirical forecasting exercises (for non-ensemble models see Table~4.1 in \cite{ballarin2022reservoir}). MFESN hyperparameters are defined for normalized state equations \eqref{eq:esn_hyper_normalized}-\eqref{eq:esn_state_new}.
		}
		\label{tab:model_list}
	\end{table}
	
	\newpage
	
	\subsection{Combination Weights Dynamics}
	\label{appendix:Combination Weights Dynamics}
	
	An important question regarding our ensembles is whether the weight vectors associated to different combination strategies evolve in a stable manner over time, or instead show significant variation throughout the online learning run.
	In other words, we pose the question of whether the ensembles converge to some stable weighting scheme. Here, we focus on type-A S-MFESN and M-MFESN models within the $\alpha$RP ensembles, since Figure~\ref{fig:plot_ecdf_msfe_medium_seed_leak} shows that these specifications have the most curvature in their MSFE ECDFs. Figure~\ref{fig:plot_weights_type_B} provides similar weight plots for the B-type S- and M-MFESN $\alpha$RP ensembles.
	
	Our findings indicate that weight dynamics can differ significantly across different classes of MFESNs ensembles.
	In the upper row of Figure~\ref{fig:plot_weights_type_A}, both AdaHedge and FTL weights for the S-MFESN A ensembles change substantially over the forecasting exercise. For AdaHedge (panel~(a) in Figure~\ref{fig:plot_weights_type_A}) the weight vector $\bm{\omega}_{\textsc{hdg},t}$ remains very close to its equal-weight initialization until the second quarter of 2010. After that, a clearer ranking starts to emerge slowly, but ceases to stabilize. A similar pattern is visible for the Follow-the-Leader (panel~(b) of Figure~\ref{fig:plot_weights_type_A}), whose weights $\bm{\omega}_{\textsc{FTL},t}$ fluctuate heavily until 2012Q2, after which the leader remains unchanged.
	By contrast, the dynamics of M-MFESN A ensembles' weights are more stable. Panels~(c) and~(d) of Figure~\ref{fig:plot_weights_type_A} demonstrate that AdaHedge and FTL weights settle quickly and change slowly over time. For AdaHedge, the weights $\bm{\omega}_{\textsc{hdg},t}$ stabilize after the end of the Great Financial Crisis downturn. Over 80\% of the cumulative weight concentrates on just 5 experts, and the top-ranked expert receives over 30\% of the weight as early as 2009Q1. Furthermore, the weights  $\bm{\omega}_{\textsc{FTL},t}$ of the FTL demonstrate quick convergence to a stable leader within less than one year from the start of the prediction exercise, signaling that the ensemble contains an M-MFESN A model which achieves consistent optimality in terms of MSFE.
	
	It is important to emphasize that these marked differences in weight dynamics do not imply large differences in overall forecasting performance. Table~\ref{table:msfe_medium_results} reports that MFESN A-type results for FTL and AdaHedge across single- and multi-reservoir architectures show much smaller MSFE gaps than, for example, across A- and B-type architectures.
	From an interpretability perspective, more stable weights provide clearer insight into the mechanism of the ensemble: Figure~\ref{fig:plot_weights_type_A}(d) strongly suggests an online form of model selection carried out iteratively during the forecasting exercise. 
	An open question is hence whether there exist inherent trade-offs between ensemble ``interpretability'' (e.g., provably simple weights dynamics for model instances from given classes of nonlinear models) and predictive performance. We defer the treatment of this  question to future work.
	
	\begin{figure}[p]
		\centering
		\begin{subfigure}[b]{0.49\textwidth}
			\includegraphics[width=\textwidth]{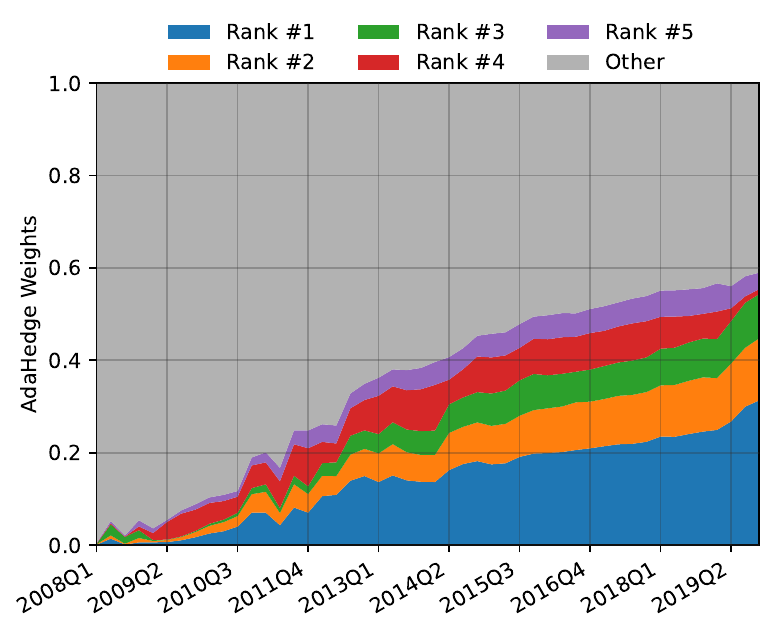}
			\caption{S-MFESN A -- AdaHedge}
		\end{subfigure}
		\hfill
		\begin{subfigure}[b]{0.49\textwidth}
			\includegraphics[width=\textwidth]{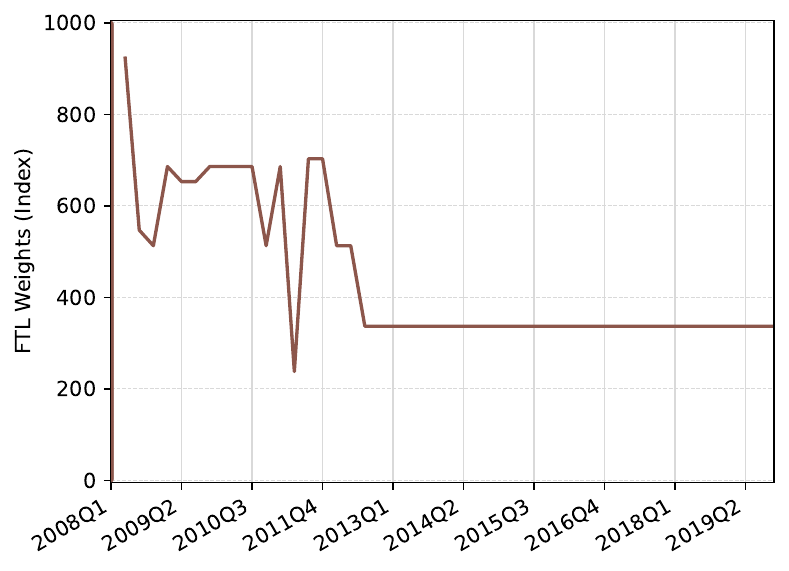}
			\caption{S-MFESN A -- Follow-the-Leader}
		\end{subfigure}
		
		\vspace{0.5cm} %
		
		\begin{subfigure}[b]{0.49\textwidth}
			\includegraphics[width=\textwidth]{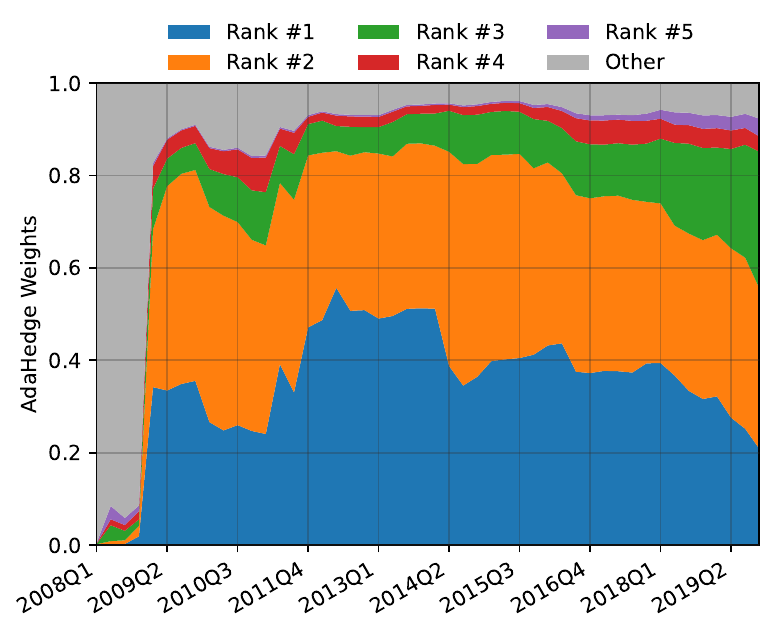}
			\caption{M-MFESN A -- AdaHedge}
		\end{subfigure}
		\hfill
		\begin{subfigure}[b]{0.49\textwidth}
			\includegraphics[width=\textwidth]{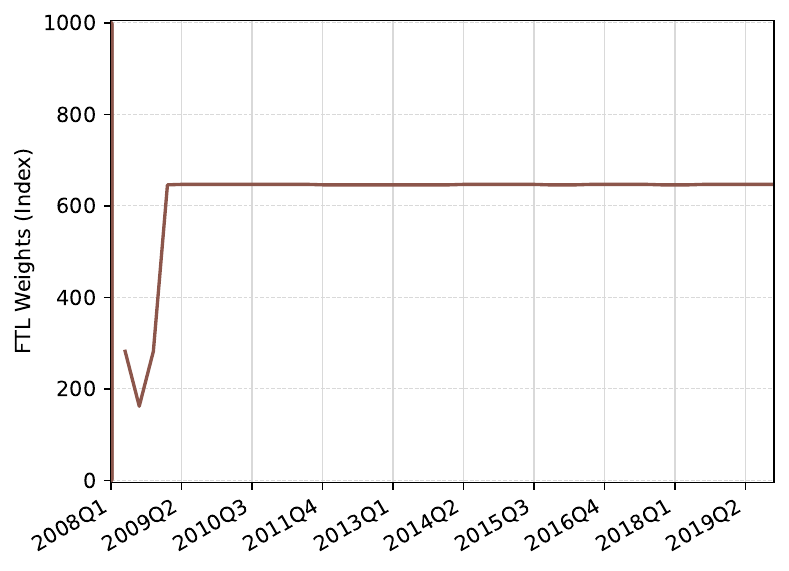}
			\caption{M-MFESN A -- Follow-the-Leader}
		\end{subfigure}
		\caption{EN-MFESN-$\alpha$RP ensemble: AdaHedge (left column) and FTL (right column) weights over the forecasting interval. For FTL, the index $\textsc{ftl}_t$ of the leader model is shown. }
		\label{fig:plot_weights_type_A}
	\end{figure}
	
	\begin{figure}[p]
		\centering
		\begin{subfigure}[b]{0.49\textwidth}
			\includegraphics[width=\textwidth]{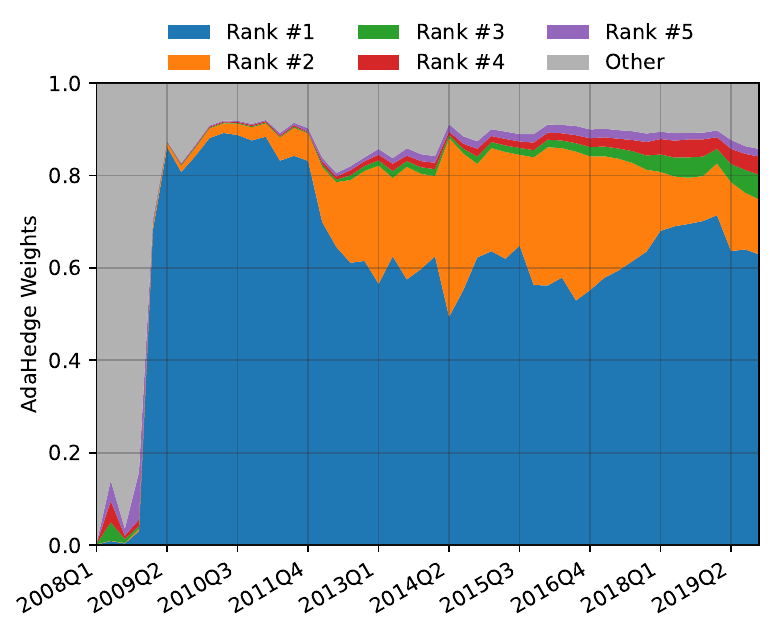}
			\caption{S-MFESN B -- AdaHedge}
		\end{subfigure}
		\hfill
		\begin{subfigure}[b]{0.49\textwidth}
			\includegraphics[width=\textwidth]{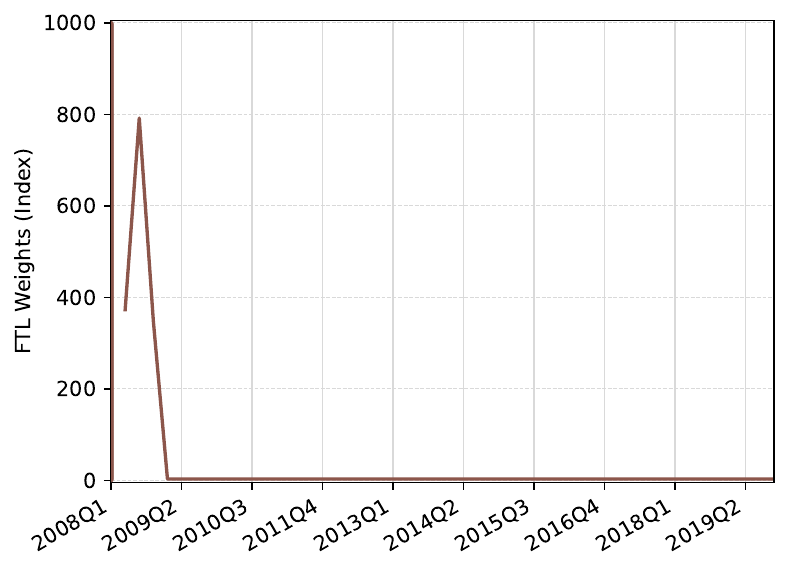}
			\caption{S-MFESN B -- Follow-the-Leader}
		\end{subfigure}
		
		\vspace{0.5cm} %
		
		\begin{subfigure}[b]{0.49\textwidth}
			\includegraphics[width=\textwidth]{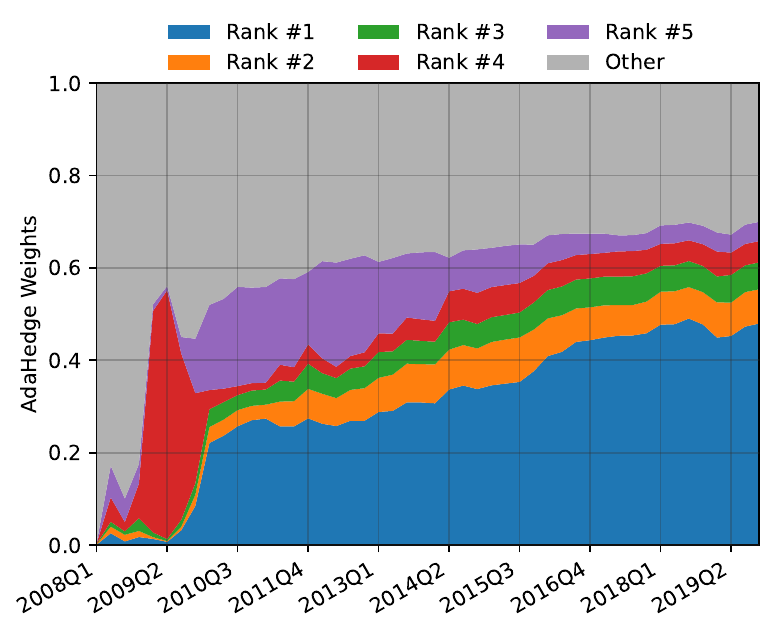}
			\caption{M-MFESN B -- AdaHedge}
		\end{subfigure}
		\hfill
		\begin{subfigure}[b]{0.49\textwidth}
			\includegraphics[width=\textwidth]{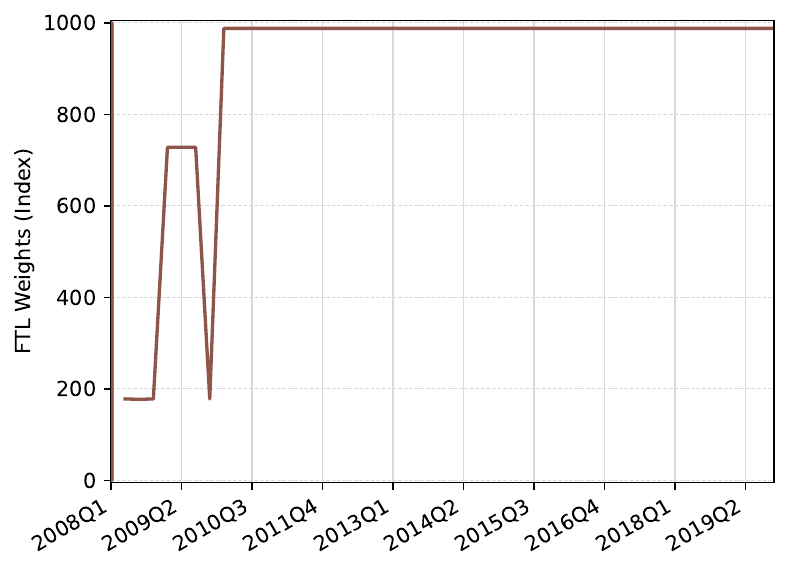}
			\caption{M-MFESN B -- Follow-the-Leader}
		\end{subfigure}
		\caption{EN-MFESN-$\alpha$RP ensemble: AdaHedge (left column) and FTL (right column) weights over the forecasting interval. For FTL, the index $\textsc{ftl}_t$ of the leader model is shown. }
		\label{fig:plot_weights_type_B}
	\end{figure}

	\newpage
	
	\section{Additional Tables}
	\label{appendix:more_tables}
	
	In this appendix, we provide results for the additional empirical experiments conducted under multistep forecasting scenarios. Tables~\ref{table:msfe_medium_results_h=2}-\ref{table:msfe_medium_results_h=8} report the relative MSE of the quarterly U.S. GDP growth predictions with respect to the sample mean for the exercise with the forecasting horizons taken as $h=2$, $h=4$, and $h=8$ quarters ahead, respectively. 
	All competing models were implemented to produce the iterative $h$-steps-ahead forecasts (see~\cite{ballarin2022reservoir} for more details regarding the ESN-based models). It is important to emphasize that the hyperparameters for all the baseline MFESN models are tuned with cross-validation only for the prediction at $h=1$. This implies that baseline models are not expected to produce their best results with respect to the unconditional mean, AR(1), and DFM benchmarks. Our goal is to inspect the changes in the accuracy of the combinations of expert forecasts with respect to the performance offered by the standalone baseline models. We emphasize that in all combination schemes, the weights of the expert models in an ensemble are based on the past behavior in the $h$ steps ahead forecasting exercise.
	Our findings highlight that there are meaningful changes in relative improvements resulting from the use of combination strategies as the forecast horizon increases.
	
	More specifically, for $h=2$, similar to the case of $h=1$ in Table~\ref{table:msfe_medium_results}, the FTL combination strategy together with the AdaHedge algorithm again appear to dominate all other combination schemes for the corresponding baseline models, significantly improving the forecasting accuracy of the latter once used in an ensemble. 
	Starting from $h=4$ onward, several differences are noticeable. First, the expert selected by FTL as the leader (that is, the one that produces the most accurate predictions measured in terms of the cumulative loss) ceases to be the best predictor in the subsequent period. While FTL can adapt quickly to changes in the best expert for one-step-ahead predictions, its ability to track the best expert over longer horizons diminishes. Second, the AdaHedge and DecHedge strategies maintain strong performance for longer prediction horizons, especially for the case of EN-MFESN-RP ensembles, where only the parameters of the state equation are resampled. 
	
	We conclude by noting that further refinements could be potentially explored for $h>1$: average losses for all steps up to $h$ could be used in the weights update rule; discounted cumulative losses up to step $h$ may be employed; hyperparameters of the baseline models could be subject to cross-validation, and others. We leave these avenues for potential improvement for future work.

	\begin{table}[p!]
		\centering
		\setlength\tabcolsep{0pt}
		\setlength\extrarowheight{1pt}
		\linespread{1.2}\selectfont\centering
		---~ Horizon $h=2$ ~--- \\[10pt]
		\begin{tabular*}{\textwidth}{@{\extracolsep{\fill}}*{9}{c}}
			& & & \multicolumn{5}{c}{Ensemble} \\
			\cline{4-9}
			Model & Baseline & Median & Average & RollMSE & FTL & Hedge & DecHedge & AdaHedge \\
			\toprule
			Mean & 1.000 & -- & -- & -- & -- & -- & -- & -- \\
			AR(1) & 0.931 & -- & -- & -- & -- & -- & -- & -- \\
			DFM A & 0.896 & -- & -- & -- & -- & -- & -- & -- \\
			DFM B & 1.450 & -- & -- & -- & -- & -- & -- & -- \\[2pt]
			\multicolumn{9}{l}{\footnotesize%
				EN-MFESN-RP (random parameters resampling)~ 
				\raisedrule[0.2em]{0.2pt}} \\[2pt]
			S-MFESN A & 0.991 & 0.988 & 0.981 & 0.981 & \ul{ 0.911 } & 0.978 & 0.972  & \bf{ 0.896 } \\[-6pt]
			& -- & \small \textit{-0.32\%} & \small \textit{-1.00\%} & \small \textit{-1.00\%} & \small \ul{ \textit{-8.07\%} } & \small \textit{-1.31\%} & \small \textit{-1.94\%} & \small \bf{ \textit{-9.61\%} } \\
			S-MFESN B & 0.971 & 0.984 & 0.981 & 0.981 & \bf{ 0.722 } & 0.976 & 0.972 & \ul{ 0.800 } \\[-6pt]
			& -- & \small \textit{+1.38\%} & \small \textit{+1.06\%} & \small \textit{+1.06\%} & \small \bf{ \textit{-25.66\%} } & \small \textit{+0.06\%} & \small \textit{+0.11\%} & \small \ul{ \textit{-17.57\%} } \\
			M-MFESN A & 0.987 & 0.988 & 0.986 & 0.986 & \bf{ 0.946 } & 0.986 & 0.985 & \ul{ 0.955 } \\[-6pt]
			& -- & \small \textit{+0.15\%} & \small \textit{-0.04\%} & \small \textit{-0.04\%} & \small \bf \textit{-4.07\%} & \small \textit{-0.08\%} & \small \textit{-0.12\%} & \small \ul{ \textit{-3.21\%} } \\
			M-MFESN B & 0.946 & 0.970 & 0.967 & 0.967 & \bf{ 0.842 } & 0.965 & 0.963 & \ul{ 0.880 } \\[-6pt]
			& -- & \small \textit{+2.59\%} & \small \textit{+2.23\%} & \small \textit{+2.23\%} & \small \bf{ \textit{-9.93\%} } & \small \textit{+2.04\%} & \small \textit{+1.87\%} & \small \ul{ \textit{-6.90\%} } \\
			\multicolumn{9}{l}{\footnotesize%
				EN-MFESN-$\alpha$RP (random parameters resampling \& varying leak rates)~ \raisedrule[0.2em]{0.2pt}} \\[2pt]
			S-MFESN A & 0.991 & 0.981 & 0.943 & 0.942 & \bf{ 0.710 } & 0.918 & 0.893 & \ul{ 0.770 } \\[-6pt]
			& -- & \small \textit{-1.04\%} & \small \textit{-4.86\%} & \small \textit{-4.95\%} & \small \bf{ \textit{-28.40\%} } & \small \textit{-7.41\%} & \small \ul{ \textit{-9.92\%} } & \small \ul{ \textit{-22.28\%} } \\
			S-MFESN B & 0.971 & 0.971 & 0.935 & 0.933 & \bf{ 0.737 } & 0.903 & 0.875 & \ul{ 0.747 } \\[-6pt]
			& -- & \small \textit{+0.01\%} & \small \textit{-3.72\%} & \small \textit{-3.87\%} & \small \bf{ \textit{-24.03\%} } & \small \textit{-6.93\%} & \small \textit{-9.87\%} & \small \ul{ \textit{-23.04\%} } \\
			M-MFESN A & 0.987 & 0.975 & 0.961 & 0.961 & \ul{ 0.918 } & 0.943 & 0.932 & \bf{ 0.855 } \\[-6pt]
			& -- & \small \textit{-1.17\%} & \small \textit{-2.64\%} & \small \textit{-2.64\%} & \small \ul{ \textit{-6.99\%} } & \small \textit{-2.38\%} & \small \textit{-5.54\%} & \small \bf{ \textit{-13.30\%} } \\
			M-MFESN B & 0.946 & 0.964 & 0.938 & 0.938 & \bf{ 0.755 } & 0.916 & 0.926 & \ul{ 0.818 } \\[-6pt]
			& -- & \small \textit{+1.95\%} & \small \textit{-0.82\%} & \small \textit{-0.84\%} & \small \bf{ \textit{-20.13\%} } & \small \textit{-3.15\%} & \small \textit{-2.13\%} & \small \ul{ \textit{-13.53\%} } \\
			\bottomrule
		\end{tabular*}
		\caption{%
			Relative MSFE of quarterly U.S. GDP growth predictions with respect to the in-sample mean. Baselines are benchmarks and models in \cite{ballarin2022reservoir}. 
			Ensemble size is $K=1000$ for each MFESN specification.
			Performance changes in percentage with respect to baseline are shown in italic. Best performing model combinations are highlighted in bold.
		}
		\label{table:msfe_medium_results_h=2}
	\end{table}

	\begin{table}[p!]
		\centering
		\setlength\tabcolsep{0pt}
		\setlength\extrarowheight{1pt}
		\linespread{1.2}\selectfont\centering
		---~ Horizon $h=4$ ~--- \\[10pt]
		\begin{tabular*}{\textwidth}{@{\extracolsep{\fill}}*{9}{c}}
			& & & \multicolumn{5}{c}{Ensemble} \\
			\cline{4-9}
			Model & Baseline & Median & Average & RollMSE & FTL & Hedge & DecHedge & AdaHedge \\
			\toprule
			Mean & 1.000 & -- & -- & -- & -- & -- & -- & -- \\
			AR(1) & 0.989 & -- & -- & -- & -- & -- & -- & -- \\
			DFM A & 0.976 & -- & -- & -- & -- & -- & -- & -- \\
			DFM B & 1.797 & -- & -- & -- & -- & -- & -- & -- \\[2pt]
			\multicolumn{9}{l}{\footnotesize%
				EN-MFESN-RP (random parameters resampling)~ 
				\raisedrule[0.2em]{0.2pt}} \\[2pt]
			S-MFESN A & 0.991 & 0.987 & 0.981 & 0.980 & 1.099 & \ul{ 0.978 } & 0.994  & \bf{ 0.854 } \\[-6pt]
			& -- & \small \textit{-0.35\%} & \small \textit{-1.02\%} & \small \textit{-1.1\%} & \small \textit{+10.87\%} & \small \ul{ \textit{-1.06\%} } & \small \textit{+0.30\%} & \small \bf{ \textit{-3.74\%} } \\
			S-MFESN B & 0.971 & 0.984 & 0.981 & 0.981 & \bf{ 0.944 } & 0.980 & 0.977 & \ul{ 0.952 } \\[-6pt]
			& -- & \small \textit{+1.33\%} & \small \textit{+1.05\%} & \small \textit{+1.05\%} & \small \bf{ \textit{-2.71\%} } & \small \textit{+0.98\%} & \small \textit{+0.62\%} & \small \ul{ \textit{-1.96\%} } \\
			M-MFESN A & 0.988 & 0.988 & \ul{ 0.987 } & \ul{ 0.987 } & 1.005 & \ul{ 0.987 } & \bf{ 0.987 } & 0.991 \\[-6pt]
			& -- & \small \textit{-0.05\%} & \small \ul{ \textit{-0.12\%} } & \small \ul{ \textit{-0.12\%} } & \small \textit{+1.72\%} & \small \ul{ \textit{-0.12\%} } & \small \bf{ \textit{-0.13\%} } & \small \textit{+0.22\%} \\
			M-MFESN B & \ul{ 0.981 } & 0.987 & 0.986 & 0.986 & \bf{ 0.980 } & 0.986 & 0.986 & 0.982 \\[-6pt]
			& -- & \small \textit{+0.59\%} & \small \textit{+0.51\%} & \small \textit{+0.50\%} & \small \bf{ \textit{-0.13\%} } & \small \textit{+0.50\%} & \small \textit{+0.50\%} & \small \textit{+0.12\%} \\
			\multicolumn{9}{l}{\footnotesize%
				EN-MFESN-$\alpha$RP (random parameters resampling \& varying leak rates)~ \raisedrule[0.2em]{0.2pt}} \\[2pt]
			S-MFESN A & 0.991 & 0.985 & 0.962 & 0.961 & 1.062 & 0.969 & \ul{ 0.961 } & \bf{ 0.944 } \\[-6pt]
			& -- & \small \textit{-0.58\%} & \small \textit{-2.88\%} & \small \textit{-2.99\%} & \small \textit{+7.22\%} & \small \textit{-2.23\%} & \small \ul{ \textit{-3.01\%} } & \small \bf{ \textit{-4.72\%} } \\
			S-MFESN B & 0.971 & 0.980 & \ul{ 0.957 } & 0.957 & 0.961 & 0.968 & 0.970 & \bf{ 0.948 } \\[-6pt]
			& -- & \small \textit{+0.94\%} & \small \ul{ \textit{-1.47\%} } & \small \textit{-1.46\%} & \small \textit{-1.02\%} & \small \textit{-0.30\%} & \small \textit{-0.13\%} & \small \bf{ \textit{-2.37\%} } \\
			M-MFESN A & 0.988 & \bf{ 0.986 } & \ul{ 0.988 } & \ul{ 0.988 } & 1.022 & 0.989 & 0.989 & 0.989 \\[-6pt]
			& -- & \small \bf{ \textit{-0.21\%} } & \small \ul{ \textit{-0.05\%} } & \small \ul{ \textit{-0.05\%} } & \small \textit{+3.42\%} & \small \textit{+0.01\%} & \small \textit{+0.06\%} & \small \textit{+0.02\%} \\
			M-MFESN B & 0.981 & 0.984 & 0.979 & \ul{ 0.979 } & 0.985 & 0.979 & \ul{ 0.979 } & \bf{ 0.966 } \\[-6pt]
			& -- & \small \textit{+0.35\%} & \small \textit{-0.20\%} & \small \ul{ \textit{-0.21\%} } & \small \textit{+0.42\%} & \small \textit{-0.19\%} & \small \ul{ \textit{-0.21\%} } & \small \bf{ \textit{-1.57\%} } \\
			\bottomrule
		\end{tabular*}
		\caption{%
			Relative MSFE of quarterly U.S. GDP growth predictions with respect to the in-sample mean. Baselines are benchmarks and models in \cite{ballarin2022reservoir}. 
			Ensemble size is $K=1000$ for each MFESN specification.
			Performance changes in percentage with respect to baseline are shown in italic. Best performing model combinations are highlighted in bold.
		}
		\label{table:msfe_medium_results_h=4}
	\end{table}

	\begin{table}[p!]
		\centering
		\setlength\tabcolsep{0pt}
		\setlength\extrarowheight{1pt}
		\linespread{1.2}\selectfont\centering
		---~ Horizon $h=8$ ~--- \\[10pt]
		\begin{tabular*}{\textwidth}{@{\extracolsep{\fill}}*{9}{c}}
			& & & \multicolumn{5}{c}{Ensemble} \\
			\cline{4-9}
			Model & Baseline & Median & Average & RollMSE & FTL & Hedge & DecHedge & AdaHedge \\
			\toprule
			Mean & 1.000 & -- & -- & -- & -- & -- & -- & -- \\
			AR(1) & 0.983 & -- & -- & -- & -- & -- & -- & -- \\
			DFM A & 0.990 & -- & -- & -- & -- & -- & -- & -- \\
			DFM B & 4.284 & -- & -- & -- & -- & -- & -- & -- \\[2pt]
			\multicolumn{9}{l}{\footnotesize%
				EN-MFESN-RP (random parameters resampling)~ 
				\raisedrule[0.2em]{0.2pt}} \\[2pt]
			S-MFESN A & 0.985 & 0.979 & 0.970 & 0.969 & 1.626 & 0.966 & \bf{ 0.857 } & \ul{ 0.911 } \\[-6pt]
			& -- & \small \textit{-0.56\%} & \small \textit{-1.49\%} & \small \textit{-1.59\%} & \small \textit{+65.10\%} & \small \textit{-1.92\%} & \small \bf{ \textit{-12.98\%} } & \small \ul{ \textit{-7.50\%} } \\
			S-MFESN B & 0.952 & 0.973 & 0.968 & 0.968 & 1.007 & 0.966 & \ul{ 0.921 } & \bf{ 0.851 } \\[-6pt]
			& -- & \small \textit{+2.19\%} & \small \textit{+1.72\%} & \small \textit{+1.69\%} & \small \textit{+5.81\%} & \small \textit{+1.47\%} & \small \ul{ \textit{-3.21\%} } & \small \bf{ \textit{-10.61\%} } \\
			M-MFESN A & 0.981 & 0.979 & \ul{ 0.979 } & \ul{ 0.979 } & 1.036 & \ul{ 0.979 } & \bf{ 0.978 } & 1.003 \\[-6pt]
			& -- & \small \textit{-0.12\%} & \small \ul{ \textit{-0.20\%} } & \small \ul{ \textit{-0.20\%} } & \small \textit{+5.71\%} & \small \ul{ \textit{-0.20\%} } & \small \bf{ \textit{-0.22\%} } & \small \bf{ \textit{+2.25\%} } \\
			M-MFESN B & \bf{ 0.970 } & 0.978 & 0.978 & 0.978 & 0.985 & 0.978 & 0.978 & \ul{ 0.977 } \\[-6pt]
			& -- & \small \textit{+0.81\%} & \small \textit{+0.77\%} & \small \textit{+0.77\%} & \small \textit{+1.51\%} & \small \textit{+0.77\%} & \small \textit{+0.77\%} & \small \ul{ \textit{+0.66\%} } \\
			\multicolumn{9}{l}{\footnotesize%
				EN-MFESN-$\alpha$RP (random parameters resampling \& varying leak rates)~ \raisedrule[0.2em]{0.2pt}} \\[2pt]
			S-MFESN A & 0.985 & \ul{ 0.979 } & 0.981 & \bf{ 0.971 } & 1.760 & 0.969 & 1.099 & 1.072 \\[-6pt]
			& -- & \small \ul{ \textit{-0.62\%} } & \small \textit{-0.42\%} & \small \bf{ \textit{-1.38\%} } & \small \textit{+78.71\%} & \small \textit{-1.59\%} & \small \textit{+11.59\%} & \small \textit{+8.87\%} \\
			S-MFESN B & \bf{ 0.952 } & \ul{ 0.973 } & 0.990 & 0.986 & 1.531 & 0.992 & 1.043 & 1.088 \\[-6pt]
			& -- & \small \ul{ \textit{+2.18\%} } & \small \textit{+4.04\%} & \small \textit{+3.54\%} & \small \textit{+60.81\%} & \small \textit{+4.20\%} & \small \textit{+9.58\%} & \small \textit{+14.28\%} \\
			M-MFESN A & \bf{ 0.981 } & \ul{ 0.984 } & 1.012 & 1.005 & 2.167 & 1.037 & 1.240 & 1.441 \\[-6pt]
			& -- & \small \ul{ \textit{+0.30\%} } & \small \textit{+3.17\%} & \small \textit{+2.45\%} & \small \textit{+121.020\%} & \small \textit{+5.78\%} & \small \textit{+26.43\%} & \small \textit{+46.95\%} \\
			M-MFESN B & \bf{ 0.970 } & \ul{ 0.983 } & 0.996 & 0.996 & 1.211 & 1.000 & 1.026 & 1.174 \\[-6pt]
			& -- & \small \ul{ \textit{+1.35\%} } & \small \textit{+2.68\%} & \small \textit{+2.61\%} & \small \textit{+24.83\%} & \small \textit{+3.05\%} & \small \textit{+5.72\%} & \small \textit{+21.02\%} \\
			\bottomrule
		\end{tabular*}
		\caption{%
			Relative MSFE of quarterly U.S. GDP growth predictions with respect to the in-sample mean. Baselines are benchmarks and models in \cite{ballarin2022reservoir}. 
			Ensemble size is $K=1000$ for each MFESN specification.
			Performance changes in percentage with respect to baseline are shown in italic. Best performing model combinations are highlighted in bold.
		}
		\label{table:msfe_medium_results_h=8}
	\end{table}

\begin{table}[h!]
\begingroup
\centering
\footnotesize
\begin{tabularx}{\linewidth}{lcllX}
    \textbf{Start Date} & \textbf{T} & \textbf{Code} & \textbf{Name} & \textbf{Description} \\
    \midrule
    \multicolumn{5}{l}{\footnotesize%
        Quarterly~ 
        \raisedrule[0.2em]{0.2pt}} \\[2pt]
    31/03/1959 & 5 & GDPC1 & Y & Real Gross Domestic Produce\\[5pt]
    \multicolumn{5}{l}{\footnotesize%
        Monthly~ 
        \raisedrule[0.2em]{0.2pt}} \\[2pt]
    30/01/1959 & 5 & INDPRO & XM1 & Industrial Production Index\\
    30/01/1959 & 5 & PAYEMS & XM4 & Payroll All Employees: Total nonfarm\\
    30/01/1959 & 4 & HOUST & XM5 & Housing Starts: Total New Privately Owned\\
    30/01/1959 & 5 & RETAILx & XM7 & Retail and Food Services Sales\\
    31/01/1973 & 5 & TWEXMMTH & XM11 & Nominal effective exchange rate US\\
    30/01/1959 & 2 & FEDFUNDS & XM12 & Effective Federal Funds Rate\\
    30/01/1959 & 1 & BAAFFM & XM14 & Moody’s Baa Corporate Bond Minus FEDFUNDS\\
    30/01/1959 & 1 & COMPAPFFx & XM15 & 3-Month Commercial Paper Minus FEDFUNDS\\
    30/01/1959 & 2 & CUMFNS & XM2 & Capacity Utilization: Manufacturing\\
    30/01/1959 & 2 & UNRATE & XM3 & Civilian Unemployment Rate\\
    30/01/1959 & 5 & DPCERA3M086SBEA & XM6 & Real personal consumption expenditures\\
    30/01/1959 & 5 & AMDMNOx & XM8 & New Orders for Durable Goods\\
    31/01/1978 & 2 & UMCSENTx & XM9 & Consumer Sentiment Index\\
    30/01/1959 & 6 & WPSFD49207 & XM10 & PPI: Finished Goods\\
    30/01/1959 & 1 & AAAFFM & XM13 & Moody’s Aaa Corporate Bond Minus FEDFUNDS\\
    30/01/1959 & 1 & TB3SMFFM & XM16 & 3-Month Treasury C Minus FEDFUNDS\\
    30/01/1959 & 1 & T10YFFM & XM17 & 10-Year Treasury C Minus FEDFUNDS\\
    30/01/1959 & 2 & GS1 & XM18 & 1-Year Treasury Rate\\
    30/01/1959 & 2 & GS10 & XM19 & 10-Year Treasury Rate\\
    30/01/1959 & 1 & GS10-TB3MS & XM20 & 10-Year Treasury Rate - 3-Month Treasury Bill\\[5pt]
    \multicolumn{5}{l}{\footnotesize%
        Daily~ 
        \raisedrule[0.2em]{0.2pt}} \\[2pt]
    30/01/1959 & 8 & DJINDUS & XD3 & DJ Industrial price index\\
    31/12/1963 & 8 & S\&PCOMP & XD1 & S\&P500 price index\\
    01/05/1982 & 1 & ISPCS00-S\&PCOMP* & XD2 & S\&P500 basis spread\\
    11/09/1989 & 8 & SP5EIND & XD4 & S\&P Industrial price index\\
    31/12/1969 & 8 & GSCITOT & XD5 & Spot commodity price index\\
    10/01/1983 & 8 & CRUDOIL & XD6 & Spot price oil\\
    02/01/1979 & 8 & GOLDHAR & XD7 & Spot price gold\\
    30/03/1982 & 8 & WHEATSF & XD8 & Spot price wheat\\
    01/11/1983 & 8 & COCOAIC,COCINU** & XD9 & Spot price cocoa\\
    30/03/1983 & 1 & NCLC.03-NCLC.01 & XD10 & Futures price oil term structure\\
    30/10/1978 & 1 & NGCC.03-NGCC.01 & XD11 & Futures price gold term structure\\
    02/01/1975 & 1 & CWFC.03-CWFC.01 & XD12 & Futures price wheat term structure\\
    02/01/1973 & 1 & NCCC.03-NCCC.01 & XD13 & Futures price cocoa term structure\\
    \bottomrule
\end{tabularx}
\endgroup

\vspace{-0.5em}
\footnotesize
\singlespacing
    Notes: 
    ``Start Date'' is the date for which the series is first available (before data transformations). Following \cite{McCracken2016} and \cite{McCracken2021}, the transformation codes in column ``T'' indicate with D for difference and log for natural logarithm, 1: none, 2: D, 3: DD, 4: Log, 5: Dlog, 6: DDlog, 7: percentage change, 8: GARCH volatility. ``Code'' is the code in the FRED-QD and FRED-MD datasets for quarterly and monthly data, and the Datastream mnemonic for the remaining frequencies. Missing values due to public holidays are interpolated by averaging over the previous five observations. *: Available until 20/09/2021. **: Average before 29/12/2017, COCINUS mean adjusted thereafter. 
\vspace{1em}
\caption{Input and output variables, frequencies, and transformations (adapted from \cite{ballarin2022reservoir}).}
\label{table:dataset}
\end{table}

\newpage

\section{Summary of Models}
\label{appendix:models_summary}

\begin{table}[H]
\renewcommand{\baselinestretch}{1.2}
\centering
\small %
\begin{tabularx}{\textwidth}{p{2.7cm} p{7cm} p{3.5cm}} %
    \textbf{Model Name} & \textbf{Description} & \textbf{Specification} \\ 
    \toprule
    Mean &  \begin{tabular}[t]{@{}l@{}}Unconditional mean of outcome series\\ over the estimation sample.\end{tabular} & -- \\
    \Xhline{0.1pt}
    AR(1) &  \begin{tabular}[t]{@{}l@{}}Autoregressive model of the output \\series estimated using OLS.\end{tabular} & -- \\
    \Xhline{0.1pt}
    DFM A & \begin{tabular}[t]{@{}l@{}}Stock aggregation, \\ VAR(1) factor process.\end{tabular} & \begin{tabular}[t]{@{}l@{}}Factors: 10\end{tabular} \\
    \Xhline{0.1pt}
    DFM B & \begin{tabular}[t]{@{}l@{}}Almon aggregation, \\VAR(1) factor process.\end{tabular} & Same as DFM A \\
    \Xhline{0.1pt}
    S-MFESN A & \begin{tabular}[t]{@{}l@{}}S-MFESN model:\\ Sparse-normal $\widetilde{A}$,\\ sparse-uniform $\widetilde{C}$, $\widetilde{\bm{\zeta}}=0$.\\ Isotropic ridge regression fit.\end{tabular} & \begin{tabular}[t]{@{}l@{}}Reservoir dimension: 30 \\ Sparsity: 33.3\% \\ $\rho = 0.5$, $\gamma = 1$, $\alpha = 0.1$\end{tabular} \\
    \Xhline{0.1pt}
    S-MFESN B & \begin{tabular}[t]{@{}l@{}}Same as S-MFESN A except for\\ larger reservoir dimension and lower\\ sparsity ratio.\end{tabular} & \begin{tabular}[t]{@{}l@{}}Reservoir dimension: 120 \\ Sparsity: 8.3\% \\ $\rho = 0.5$, $\gamma = 1$, $\alpha = 0.1$\end{tabular} \\
    \Xhline{0.1pt}
    M-MFESN A & \begin{tabular}[t]{@{}l@{}}M-MFESN model:\\ Monthly and daily freq. reservoirs. \\ Sparse-normal $\widetilde{A}_{1}$, $\widetilde{A}_{2}$,\\ sparse-uniform $\widetilde{C}_{1}$, $\widetilde{C}_{2}$,  $\widetilde{\bm{\zeta}}_{1}$, $\widetilde{\bm{\zeta}}_{2}=0$.\\ Isotropic ridge regression fit.\end{tabular} & \begin{tabular}[t]{@{}l@{}}Res. dim.: Month$=100$, Day$=20$ \\ Sparsity: Month$=10\%$, Day$=50\%$ \\ Month: $\rho = 0.5$, $\gamma = 1.5$, $\alpha = 0$\\ Day: $\rho = 0.5$, $\gamma = 0.5$, $\alpha = 0.1$ \end{tabular} \\
    \Xhline{0.1pt}
    M-MFESN B &  \begin{tabular}[t]{@{}l@{}}Same as M-MFESN A with different \\ values of hyperparameters.\end{tabular} & \begin{tabular}[t]{@{}l@{}}Res. dim.: Month$=100$, Day$=20$ \\ Sparsity: Month$=10\%$, Day$=50\%$ \\ Month: $\rho = 0.08$, $\gamma = 0.25$, $\alpha = 0.3$\\ Day: $\rho = 0.01$, $\gamma = 0.01$, $\alpha = 0.99$ \end{tabular} \\
    \midrule
    \midrule
    EN-RP-ESN & \begin{tabular}[t]{@{}l@{}} Applied to all types of MFESN:\\ 1000 distinct models generated with\\ independently randomly drawn\\ reservoir state coefficients.\end{tabular} & \begin{tabular}[t]{@{}l@{}}Same specification as the corresponding  \\baseline MFESN model.\end{tabular} \\
    \Xhline{0.1pt}
    EN-$\alpha$RP-ESN & \begin{tabular}[t]{@{}l@{}} Applied to all types of MFESN:\\ 1000 distinct models with 5 different\\ leak rates, 200 draws of reservoir state\\ coefficients per $\alpha$ value.\end{tabular} & \begin{tabular}[t]{@{}l@{}}Same specification as the corresponding \\ baseline MFESN model, except for leak \\ rate: $\alpha \in \left\{0.1, 0.3, 0.5, 0.7, 0.9 \right\}$. \end{tabular} \\
    \bottomrule
\end{tabularx}
\caption{%
    Models and ensembles applied in empirical forecasting exercises (for non-ensemble models see Table~4.1 in \cite{ballarin2022reservoir}). MFESN hyperparameters are defined for normalized state equations \eqref{eq:esn_hyper_normalized}-\eqref{eq:esn_state_new}.
}
\label{tab:model_list}
\end{table}

\newpage

\section{Additional Tables}
\label{appendix:more_tables}

In this appendix, we provide results for the additional empirical experiments conducted under multistep forecasting scenarios. Tables~\ref{table:msfe_medium_results_h=2}-\ref{table:msfe_medium_results_h=8} report the relative MSE of the quarterly U.S. GDP growth predictions with respect to the sample mean for the exercise with the forecasting horizons taken as $h=2$, $h=4$, and $h=8$ quarters ahead, respectively. 
All competing models were implemented to produce the iterative $h$-steps-ahead forecasts (see~\cite{ballarin2022reservoir} for more details regarding the ESN-based models). It is important to emphasize that the hyperparameters for all the baseline MFESN models are tuned with cross-validation only for the prediction at $h=1$. This implies that baseline models are not expected to produce their best results with respect to the unconditional mean, AR(1), and DFM benchmarks. Our goal is to inspect the changes in the accuracy of the combinations of expert forecasts with respect to the performance offered by the standalone baseline models. We emphasize that in all combination schemes, the weights of the expert models in an ensemble are based on the past behavior in the $h$ steps ahead forecasting exercise.
Our findings highlight that there are meaningful changes in relative improvements resulting from the use of combination strategies as the forecast horizon increases.

More specifically, for $h=2$, similar to the case of $h=1$ in Table~\ref{table:msfe_medium_results}, the FTL combination strategy together with the AdaHedge algorithm again appear to dominate all other combination schemes for the corresponding baseline models, significantly improving the forecasting accuracy of the latter once used in an ensemble. 
Starting from $h=4$ onward, several differences are noticeable. First, the expert selected by FTL as the leader (that is, the one that produces the most accurate predictions measured in terms of the cumulative loss) ceases to be the best predictor in the subsequent period. While FTL can adapt quickly to changes in the best expert for one-step-ahead predictions, its ability to track the best expert over longer horizons diminishes. Second, the AdaHedge and DecHedge strategies maintain strong performance for longer prediction horizons, especially for the case of EN-RP-ESN ensembles, where only the parameters of the state equation are resampled. 

We conclude by noting that further refinements could be potentially explored for $h>1$: average losses for all steps up to $h$ could be used in the weights update rule; discounted cumulative losses up to step $h$ may be employed; hyperparameters of the baseline models could be subject to cross-validation, and others. We leave these avenues for potential improvement for future work.

\begin{table}[p!]
\centering
\setlength\tabcolsep{0pt}
\setlength\extrarowheight{1pt}
\linespread{1.2}\selectfont\centering
---~ Horizon $h=2$ ~--- \\[10pt]
\begin{tabular*}{\textwidth}{@{\extracolsep{\fill}}*{9}{c}}
    & & & \multicolumn{5}{c}{Ensemble} \\
            \cline{4-9}
    Model & Baseline & Median & Average & RollMSE & FTL & Hedge & DecHedge & AdaHedge \\
    \toprule
    Mean & 1.000 & -- & -- & -- & -- & -- & -- & -- \\
    AR(1) & 0.931 & -- & -- & -- & -- & -- & -- & -- \\
    DFM A & 0.896 & -- & -- & -- & -- & -- & -- & -- \\
    DFM B & 1.450 & -- & -- & -- & -- & -- & -- & -- \\[2pt]
    \multicolumn{9}{l}{\footnotesize%
        EN-RP-ESN (random parameters resampling)~ 
        \raisedrule[0.2em]{0.2pt}} \\[2pt]
    S-MFESN A & 0.991 & 0.988 & 0.981 & 0.981 & \ul{ 0.911 } & 0.978 & 0.972  & \bf{ 0.896 } \\[-6pt]
    & -- & \small \textit{-0.32\%} & \small \textit{-1.00\%} & \small \textit{-1.00\%} & \small \ul{ \textit{-8.07\%} } & \small \textit{-1.31\%} & \small \textit{-1.94\%} & \small \bf{ \textit{-9.61\%} } \\
    S-MFESN B & 0.971 & 0.984 & 0.981 & 0.981 & \bf{ 0.722 } & 0.976 & 0.972 & \ul{ 0.800 } \\[-6pt]
    & -- & \small \textit{+1.38\%} & \small \textit{+1.06\%} & \small \textit{+1.06\%} & \small \bf{ \textit{-25.66\%} } & \small \textit{+0.06\%} & \small \textit{+0.11\%} & \small \ul{ \textit{-17.57\%} } \\
    M-MFESN A & 0.987 & 0.988 & 0.986 & 0.986 & \bf{ 0.946 } & 0.986 & 0.985 & \ul{ 0.955 } \\[-6pt]
    & -- & \small \textit{+0.15\%} & \small \textit{-0.04\%} & \small \textit{-0.04\%} & \small \bf \textit{-4.07\%} & \small \textit{-0.08\%} & \small \textit{-0.12\%} & \small \ul{ \textit{-3.21\%} } \\
    M-MFESN B & 0.946 & 0.970 & 0.967 & 0.967 & \bf{ 0.842 } & 0.965 & 0.963 & \ul{ 0.880 } \\[-6pt]
    & -- & \small \textit{+2.59\%} & \small \textit{+2.23\%} & \small \textit{+2.23\%} & \small \bf{ \textit{-9.93\%} } & \small \textit{+2.04\%} & \small \textit{+1.87\%} & \small \ul{ \textit{-6.90\%} } \\
    \multicolumn{9}{l}{\footnotesize%
        EN-$\alpha$RP-ESN (random parameters resampling \& varying leak rates)~ \raisedrule[0.2em]{0.2pt}} \\[2pt]
    S-MFESN A & 0.991 & 0.981 & 0.943 & 0.942 & \bf{ 0.710 } & 0.918 & 0.893 & \ul{ 0.770 } \\[-6pt]
    & -- & \small \textit{-1.04\%} & \small \textit{-4.86\%} & \small \textit{-4.95\%} & \small \bf{ \textit{-28.40\%} } & \small \textit{-7.41\%} & \small \ul{ \textit{-9.92\%} } & \small \ul{ \textit{-22.28\%} } \\
    S-MFESN B & 0.971 & 0.971 & 0.935 & 0.933 & \bf{ 0.737 } & 0.903 & 0.875 & \ul{ 0.747 } \\[-6pt]
    & -- & \small \textit{+0.01\%} & \small \textit{-3.72\%} & \small \textit{-3.87\%} & \small \bf{ \textit{-24.03\%} } & \small \textit{-6.93\%} & \small \textit{-9.87\%} & \small \ul{ \textit{-23.04\%} } \\
    M-MFESN A & 0.987 & 0.975 & 0.961 & 0.961 & \ul{ 0.918 } & 0.943 & 0.932 & \bf{ 0.855 } \\[-6pt]
    & -- & \small \textit{-1.17\%} & \small \textit{-2.64\%} & \small \textit{-2.64\%} & \small \ul{ \textit{-6.99\%} } & \small \textit{-2.38\%} & \small \textit{-5.54\%} & \small \bf{ \textit{-13.30\%} } \\
    M-MFESN B & 0.946 & 0.964 & 0.938 & 0.938 & \bf{ 0.755 } & 0.916 & 0.926 & \ul{ 0.818 } \\[-6pt]
    & -- & \small \textit{+1.95\%} & \small \textit{-0.82\%} & \small \textit{-0.84\%} & \small \bf{ \textit{-20.13\%} } & \small \textit{-3.15\%} & \small \textit{-2.13\%} & \small \ul{ \textit{-13.53\%} } \\
    \bottomrule
\end{tabular*}
\caption{%
    Relative MSFE of quarterly U.S. GDP growth predictions with respect to the in-sample mean. Baselines are benchmarks and models in \cite{ballarin2022reservoir}. 
    Ensemble size is $K=1000$ for each MFESN specification.
    Performance changes in percentage with respect to baseline are shown in italic. Best performing model combinations are highlighted in bold.
}
\label{table:msfe_medium_results_h=2}
\end{table}

\begin{table}[p!]
\centering
\setlength\tabcolsep{0pt}
\setlength\extrarowheight{1pt}
\linespread{1.2}\selectfont\centering
---~ Horizon $h=4$ ~--- \\[10pt]
\begin{tabular*}{\textwidth}{@{\extracolsep{\fill}}*{9}{c}}
    & & & \multicolumn{5}{c}{Ensemble} \\
            \cline{4-9}
    Model & Baseline & Median & Average & RollMSE & FTL & Hedge & DecHedge & AdaHedge \\
    \toprule
    Mean & 1.000 & -- & -- & -- & -- & -- & -- & -- \\
    AR(1) & 0.989 & -- & -- & -- & -- & -- & -- & -- \\
    DFM A & 0.976 & -- & -- & -- & -- & -- & -- & -- \\
    DFM B & 1.797 & -- & -- & -- & -- & -- & -- & -- \\[2pt]
    \multicolumn{9}{l}{\footnotesize%
        EN-RP-ESN (random parameters resampling)~ 
        \raisedrule[0.2em]{0.2pt}} \\[2pt]
    S-MFESN A & 0.991 & 0.987 & 0.981 & 0.980 & 1.099 & \ul{ 0.978 } & 0.994  & \bf{ 0.854 } \\[-6pt]
    & -- & \small \textit{-0.35\%} & \small \textit{-1.02\%} & \small \textit{-1.1\%} & \small \textit{+10.87\%} & \small \ul{ \textit{-1.06\%} } & \small \textit{+0.30\%} & \small \bf{ \textit{-3.74\%} } \\
    S-MFESN B & 0.971 & 0.984 & 0.981 & 0.981 & \bf{ 0.944 } & 0.980 & 0.977 & \ul{ 0.952 } \\[-6pt]
    & -- & \small \textit{+1.33\%} & \small \textit{+1.05\%} & \small \textit{+1.05\%} & \small \bf{ \textit{-2.71\%} } & \small \textit{+0.98\%} & \small \textit{+0.62\%} & \small \ul{ \textit{-1.96\%} } \\
    M-MFESN A & 0.988 & 0.988 & \ul{ 0.987 } & \ul{ 0.987 } & 1.005 & \ul{ 0.987 } & \bf{ 0.987 } & 0.991 \\[-6pt]
    & -- & \small \textit{-0.05\%} & \small \ul{ \textit{-0.12\%} } & \small \ul{ \textit{-0.12\%} } & \small \textit{+1.72\%} & \small \ul{ \textit{-0.12\%} } & \small \bf{ \textit{-0.13\%} } & \small \textit{+0.22\%} \\
    M-MFESN B & \ul{ 0.981 } & 0.987 & 0.986 & 0.986 & \bf{ 0.980 } & 0.986 & 0.986 & 0.982 \\[-6pt]
    & -- & \small \textit{+0.59\%} & \small \textit{+0.51\%} & \small \textit{+0.50\%} & \small \bf{ \textit{-0.13\%} } & \small \textit{+0.50\%} & \small \textit{+0.50\%} & \small \textit{+0.12\%} \\
    \multicolumn{9}{l}{\footnotesize%
        EN-$\alpha$RP-ESN (random parameters resampling \& varying leak rates)~ \raisedrule[0.2em]{0.2pt}} \\[2pt]
    S-MFESN A & 0.991 & 0.985 & 0.962 & 0.961 & 1.062 & 0.969 & \ul{ 0.961 } & \bf{ 0.944 } \\[-6pt]
    & -- & \small \textit{-0.58\%} & \small \textit{-2.88\%} & \small \textit{-2.99\%} & \small \textit{+7.22\%} & \small \textit{-2.23\%} & \small \ul{ \textit{-3.01\%} } & \small \bf{ \textit{-4.72\%} } \\
    S-MFESN B & 0.971 & 0.980 & \ul{ 0.957 } & 0.957 & 0.961 & 0.968 & 0.970 & \bf{ 0.948 } \\[-6pt]
    & -- & \small \textit{+0.94\%} & \small \ul{ \textit{-1.47\%} } & \small \textit{-1.46\%} & \small \textit{-1.02\%} & \small \textit{-0.30\%} & \small \textit{-0.13\%} & \small \bf{ \textit{-2.37\%} } \\
    M-MFESN A & 0.988 & \bf{ 0.986 } & \ul{ 0.988 } & \ul{ 0.988 } & 1.022 & 0.989 & 0.989 & 0.989 \\[-6pt]
    & -- & \small \bf{ \textit{-0.21\%} } & \small \ul{ \textit{-0.05\%} } & \small \ul{ \textit{-0.05\%} } & \small \textit{+3.42\%} & \small \textit{+0.01\%} & \small \textit{+0.06\%} & \small \textit{+0.02\%} \\
    M-MFESN B & 0.981 & 0.984 & 0.979 & \ul{ 0.979 } & 0.985 & 0.979 & \ul{ 0.979 } & \bf{ 0.966 } \\[-6pt]
    & -- & \small \textit{+0.35\%} & \small \textit{-0.20\%} & \small \ul{ \textit{-0.21\%} } & \small \textit{+0.42\%} & \small \textit{-0.19\%} & \small \ul{ \textit{-0.21\%} } & \small \bf{ \textit{-1.57\%} } \\
    \bottomrule
\end{tabular*}
\caption{%
    Relative MSFE of quarterly U.S. GDP growth predictions with respect to the in-sample mean. Baselines are benchmarks and models in \cite{ballarin2022reservoir}. 
    Ensemble size is $K=1000$ for each MFESN specification.
    Performance changes in percentage with respect to baseline are shown in italic. Best performing model combinations are highlighted in bold.
}
\label{table:msfe_medium_results_h=4}
\end{table}

\begin{table}[p!]
\centering
\setlength\tabcolsep{0pt}
\setlength\extrarowheight{1pt}
\linespread{1.2}\selectfont\centering
---~ Horizon $h=8$ ~--- \\[10pt]
\begin{tabular*}{\textwidth}{@{\extracolsep{\fill}}*{9}{c}}
    & & & \multicolumn{5}{c}{Ensemble} \\
            \cline{4-9}
    Model & Baseline & Median & Average & RollMSE & FTL & Hedge & DecHedge & AdaHedge \\
    \toprule
    Mean & 1.000 & -- & -- & -- & -- & -- & -- & -- \\
    AR(1) & 0.983 & -- & -- & -- & -- & -- & -- & -- \\
    DFM A & 0.990 & -- & -- & -- & -- & -- & -- & -- \\
    DFM B & 4.284 & -- & -- & -- & -- & -- & -- & -- \\[2pt]
    \multicolumn{9}{l}{\footnotesize%
        EN-RP-ESN (random parameters resampling)~ 
        \raisedrule[0.2em]{0.2pt}} \\[2pt]
    S-MFESN A & 0.985 & 0.979 & 0.970 & 0.969 & 1.626 & 0.966 & \bf{ 0.857 } & \ul{ 0.911 } \\[-6pt]
    & -- & \small \textit{-0.56\%} & \small \textit{-1.49\%} & \small \textit{-1.59\%} & \small \textit{+65.10\%} & \small \textit{-1.92\%} & \small \bf{ \textit{-12.98\%} } & \small \ul{ \textit{-7.50\%} } \\
    S-MFESN B & 0.952 & 0.973 & 0.968 & 0.968 & 1.007 & 0.966 & \ul{ 0.921 } & \bf{ 0.851 } \\[-6pt]
    & -- & \small \textit{+2.19\%} & \small \textit{+1.72\%} & \small \textit{+1.69\%} & \small \textit{+5.81\%} & \small \textit{+1.47\%} & \small \ul{ \textit{-3.21\%} } & \small \bf{ \textit{-10.61\%} } \\
    M-MFESN A & 0.981 & 0.979 & \ul{ 0.979 } & \ul{ 0.979 } & 1.036 & \ul{ 0.979 } & \bf{ 0.978 } & 1.003 \\[-6pt]
    & -- & \small \textit{-0.12\%} & \small \ul{ \textit{-0.20\%} } & \small \ul{ \textit{-0.20\%} } & \small \textit{+5.71\%} & \small \ul{ \textit{-0.20\%} } & \small \bf{ \textit{-0.22\%} } & \small \bf{ \textit{+2.25\%} } \\
    M-MFESN B & \bf{ 0.970 } & 0.978 & 0.978 & 0.978 & 0.985 & 0.978 & 0.978 & \ul{ 0.977 } \\[-6pt]
    & -- & \small \textit{+0.81\%} & \small \textit{+0.77\%} & \small \textit{+0.77\%} & \small \textit{+1.51\%} & \small \textit{+0.77\%} & \small \textit{+0.77\%} & \small \ul{ \textit{+0.66\%} } \\
    \multicolumn{9}{l}{\footnotesize%
        EN-$\alpha$RP-ESN (random parameters resampling \& varying leak rates)~ \raisedrule[0.2em]{0.2pt}} \\[2pt]
    S-MFESN A & 0.985 & \ul{ 0.979 } & 0.981 & \bf{ 0.971 } & 1.760 & 0.969 & 1.099 & 1.072 \\[-6pt]
    & -- & \small \ul{ \textit{-0.62\%} } & \small \textit{-0.42\%} & \small \bf{ \textit{-1.38\%} } & \small \textit{+78.71\%} & \small \textit{-1.59\%} & \small \textit{+11.59\%} & \small \textit{+8.87\%} \\
    S-MFESN B & \bf{ 0.952 } & \ul{ 0.973 } & 0.990 & 0.986 & 1.531 & 0.992 & 1.043 & 1.088 \\[-6pt]
    & -- & \small \ul{ \textit{+2.18\%} } & \small \textit{+4.04\%} & \small \textit{+3.54\%} & \small \textit{+60.81\%} & \small \textit{+4.20\%} & \small \textit{+9.58\%} & \small \textit{+14.28\%} \\
    M-MFESN A & \bf{ 0.981 } & \ul{ 0.984 } & 1.012 & 1.005 & 2.167 & 1.037 & 1.240 & 1.441 \\[-6pt]
    & -- & \small \ul{ \textit{+0.30\%} } & \small \textit{+3.17\%} & \small \textit{+2.45\%} & \small \textit{+121.020\%} & \small \textit{+5.78\%} & \small \textit{+26.43\%} & \small \textit{+46.95\%} \\
    M-MFESN B & \bf{ 0.970 } & \ul{ 0.983 } & 0.996 & 0.996 & 1.211 & 1.000 & 1.026 & 1.174 \\[-6pt]
    & -- & \small \ul{ \textit{+1.35\%} } & \small \textit{+2.68\%} & \small \textit{+2.61\%} & \small \textit{+24.83\%} & \small \textit{+3.05\%} & \small \textit{+5.72\%} & \small \textit{+21.02\%} \\
    \bottomrule
\end{tabular*}
\caption{%
    Relative MSFE of quarterly U.S. GDP growth predictions with respect to the in-sample mean. Baselines are benchmarks and models in \cite{ballarin2022reservoir}. 
    Ensemble size is $K=1000$ for each MFESN specification.
    Performance changes in percentage with respect to baseline are shown in italic. Best performing model combinations are highlighted in bold.
}
\label{table:msfe_medium_results_h=8}
\end{table}

\end{document}